\newtheorem{thm}{Theorem}[chapter]
\newtheorem{cor}[thm]{Corollary}
\newtheorem{lemma}[thm]{Lemma}
\newtheorem{prop}[thm]{Proposition}
\theoremstyle{definition}
\newtheorem{defn}[thm]{Definition}
\theoremstyle{remark}
\newtheorem{remark}[thm]{Remark}
\newcommand{\norm}[1]{\left\Vert {#1} \right\Vert}
\newcommand{\abs}[1]{\left\vert {#1} \right\vert}
\newcommand{\set}[1]{\left\{ {#1} \right\}}
\newcommand{\prt}[1]{\left( {#1} \right)}
\newcommand{\parenthese}[1]{\left( {#1} \right)}
\newcommand{\scal}[1]{\left< {#1} \right>}
\newcommand{\dv}[2]{\frac{\mathrm d{#1}}{\mathrm d{#2}}}
\newcommand{\pd}[2]{\frac{\partial{#1}}{\partial{#2}}}
\newcommand{\eq}{\ =\ }
\newcommand{\setN}{{\mathbb N}}              
\newcommand{\setZ}{{\mathbb Z}}
\newcommand{\setR}{{\mathbb R}}
\newcommand{\setC}{{\mathbb C}}
\newcommand{\sR}{{\mathbb R}}
\newcommand{\sC}{{\mathbb C}}
\newcommand{\implie}{\Rightarrow}
\newcommand{\limplie}{\ \ \Longrightarrow\ \ }
\newcommand{\equi}{\Leftrightarrow}
\newcommand{\lequi}{\ \ \Longleftrightarrow\ \ }
\newcommand{\Lequi}{\Longleftrightarrow}
\newcommand{\precc}{\prec\!\!\prec}
\newcommand{\A}{\mathcal{A}}
\newcommand{\B}{\mathcal{B}}
\newcommand{\D}{\mathcal{D}}
\newcommand{\F}{\mathcal{F}}
\newcommand{\G}{\mathcal{G}}
\renewcommand{\H}{\mathcal{H}}
\newcommand{\I}{\mathcal{I}}
\newcommand{\J}{\mathcal{J}}
\renewcommand{\L}{\mathcal{L}}
\newcommand{\M}{\mathcal{M}}
\newcommand{\N}{\mathcal{N}}
\renewcommand{\P}{\mathcal{P}}
\newcommand{\K}{\mathcal{K}}
\newcommand{\T}{\mathcal{T}}
\newcommand{\C}{\mathcal{C}}
\newcommand{\R}{\mathcal{R}}
\newcommand{\X}{\mathcal{X}}
\newcommand{\U}{\mathcal{U}}
\newcommand{\Cl}{\mathbb{C}\text{l}}
\newcommand{\bA}{{\bf A}}
\DeclareMathOperator{\tr}{tr}
\def\cleardoublepage{\clearpage\if@twoside \ifodd\c@page\else
\hbox{} \vspace*{\fill}  \thispagestyle{empty}
\newpage
\if@twocolumn\hbox{}\newpage\fi\fi\fi} \makeatother
\begin{document}
\frontmatter

\begin{titlepage}

\noindent
\makeatletter
\includegraphics[height=.2\textheight]{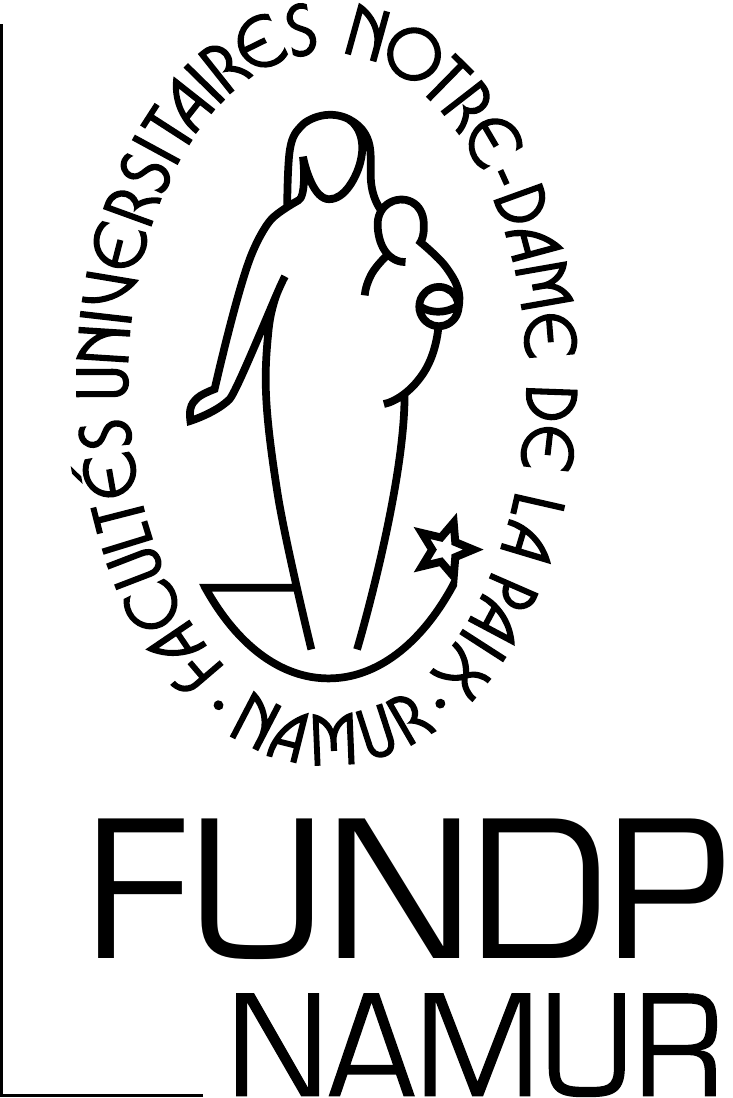}

\makeatother

\vspace{\stretch{1}}

\begin{center}
\large Facult\'{e}s Universitaires Notre-Dame de la Paix\\
\large Facult\'{e} des Sciences -- D\'{e}partement de Math\'{e}matique
\end{center}

\vspace{\stretch{2}}

\begin{center}
\usefont{T1}{ppl}{b}{n} \Huge Lorentzian approach to noncommutative geometry
\end{center}

\vspace{\stretch{3}}

\begin{flushright}
  Thèse pr\'{e}sent\'{e}e par\\
  \textbf{
Nicolas Franco
  }\\
  en vue de l'obtention du grade\\
  de Docteur en Sciences
\end{flushright}

\vspace{\stretch{3}}

\noindent%
\textbf{Composition du jury} :

\smallskip

\noindent%
Timoteo {Carletti} (pr\'esident)\\
Andr\'{e} {F\"uzfa}\\
Marc {Lachièze-Rey}\\
Dominique {Lambert} (promoteur)\\
Anne {Lema\^{i}tre} (promoteur)\\
Pierre {Martinetti}\\

\vspace{\stretch{1}}

\begin{center}
Août 2011
\end{center}

\vspace{\stretch{0}}

\end{titlepage}


\clearpage

\thispagestyle{empty}

\selectlanguage{frenchb}

\vfill
\vspace*{\stretch{5}}

\begin{center}
Facult\'{e}s Universitaires Notre-Dame de la Paix\\
Facult\'{e} des Sciences -- D\'epartement de Math\'ematique\\
Rue de Bruxelles 61, B-5000 Namur, Belgium
\end{center}

\selectlanguage{english}



\cleardoublepage

\selectlanguage{english}
\addcontentsline{toc}{chapter}{Abstract}
\thispagestyle{empty} 

\vspace*{-2cm}

\begin{center}
  { \usefont{T1}{ppl}{b}{n} \Large  Lorentzian approach to noncommutative geometry}\\  \large Nicolas Franco
\end{center}

\paragraph*{Abstract}
This thesis concerns the research on a Lorentzian generalization of Alain Connes' noncommutative geometry. In the first chapter, we present an introduction to noncommutative geometry within the context of unification theories. The second chapter is dedicated to the basic elements of noncommutative geometry as the noncommutative integral, the Riemannian distance function and spectral triples. In the last chapter, we investigate the problem of the generalization to Lorentzian manifolds. We present a first step of generalization of the distance function with the use of a global timelike eikonal condition. Then we set the first axioms of a temporal Lorentzian spectral triple as a generalization of a pseudo-Riemannian spectral triple together with a notion of global time in noncommutative geometry.

\bigskip

\noindent%
Ph.D. thesis in Mathematics
\bigskip

\selectlanguage{frenchb}

\begin{center}
  {\usefont{T1}{ppl}{b}{n} \Large Approche Lorentzienne en g\'eom\'etrie noncommutative}\\  \large  Nicolas Franco
\end{center}

\paragraph*{R\'{e}sum\'{e}}
Le sujet de cette thèse est la recherche d'une généralisation Lorentzienne de la géométrie noncommutative d'Alain Connes. Dans le premier chapitre, nous présentons une introduction à la géométrie noncommutative dans le contexte des théories d'unification. Le second chapitre est dédié aux éléments de base de la géométrie noncommutative, comme l'intégrale noncommutative, la fonction de distance Riemannienne et les triplets spectraux. Dans le dernier chapitre, nous explorons le problème de la généralisation aux variétés Lorentziennes. Nous présentons une première étape de généralisation de la fonction de distance basée sur une condition eikonale globale de type temps. Ensuite, nous fixons les premiers axiomes d'un triplet spectral Lorentzien temporel, représentant une généralisation d'un triplet spectral pseudo-Riemannien muni d'une notion de temps global en géométrie noncommutative.


\bigskip

\noindent%
Dissertation doctorale en Sciences (orientation math\'ematique)

\bigskip

\noindent
Date: 31 août 2011\\
Promoteurs (advisors): Pr D. Lambert et Pr A. Lemaître

\selectlanguage{english}

\vfill


\cleardoublepage
\vspace*{3cm}
\section*{Remerciements\markboth{Remerciements}{Remerciements}}
\addcontentsline{toc}{chapter}{Remerciements}
\selectlanguage{frenchb}
\thispagestyle{empty} 

Je remercie chaleureusement mes promoteurs, Dominique Lambert et Anne Lemaître, qui m'ont fortement soutenu pendant toute la durée de ma thèse. Tout particulièrement, je leur suis reconnaissant de m'avoir laissé une très grande liberté dans mes pérégrinations mathématiques m'aillant mené jusqu'à cette magnifique théorie qu'est la géométrie noncommutative, ainsi que de m'avoir laissé la possibilité de mener une carrière scientifique simultanément à une carrière musicale.\\

Je remercie mes différents collègues des facultés, et en particulier André Füzfa pour les nombreuses discussions très enrichissantes et pour m'avoir bien souvent ouvert les portes du monde des physiciens.\\

Je n'oublie pas ma famille, et surtout mon épouse qui a dû supporter mes longues absences durant les mois de rédaction, de même que mes deux petits bouts qui ont vu le jour durant cette expérience.\\

Ce travail a pu être effectué grâce à un mandat du F.R.S.-FNRS ainsi que d'un financement complémentaire des FUNDP.

\selectlanguage{english}


\tableofcontents
\addcontentsline{toc}{chapter}{Contents}


\chapter*{Introduction\markboth{Introduction}{Introduction}}
\addcontentsline{toc}{chapter}{Introduction}

Geometry and algebra are often considered as two distinct branches of mathematics. If one needs a proof, it is just sufficient to read the titles of general books or mathematical lessons on these topics. However, this common interpretation is incorrect, since those domains are strongly related to each other. One of the first fusion between geometry and algebra came from Ren\'e Descartes, who can be considered as the father of the algebraization of geometry. From that time onwards, many correspondences and influences between them have been developed, with a quite complete duality between geometrical spaces and commutative algebras among the outcomes. Noncommutative geometry is the extension of this duality to the noncommutative world.\\

We can recover a similar distinction in physics of fundamental interactions. The oldest known fundamental interaction is gravitation, while the three others -- electromagnetism, strong and weak interactions -- are far more recent. Each of those physical interactions has a mathematical background in which it can be expressed. Gravitation is clearly based on geometrical elements while the others need the introduction of algebraic theories. Nevertheless, despite the strong relations between geometry and algebra, the discovery of a common mathematical background to all those interactions is a puzzle, and this constitutes one of the great challenges of the current research in physics.\\

Alain Connes has developed from many years now a theory of noncommutative geometry combining geometrical and algebraic aspects which could be a good candidate for such common background. From this point of view, noncommutative geometry can be seen as an outsider to string theories, but which is unfortunately far less widespread than the last ones. Moreover, this theory is still at an early stage of development, with many unanswered questions and remaining problems.\\

This theory provides a mathematical structure supporting at the same time Euclidean gravity and a classical standard model. This result is very interesting on its own and it deserves to be developed and studied in details. However, this model concerns only at this time Euclidean gravity, which means gravity with a positive signature, based on Riemannian geometry. Since gravitation is entirely based on Lorentzian gravity, with a signature of type \mbox{$(-,+++)$}, such model does not correspond to any physical reality. So the theory of noncommutative geometry should be considered mainly at a mathematical level, at least for its gravitational part, and for which further important developments are still needed in order to make it a physical one.\\

The question of the lack of a complete Lorentzian formulation of the theory is too often laid on the table, mainly for the reason of favoring the development of the still complicated Riemannian case, and this is why we have dedicated our research to this problem \cite{F1,F2,F3}. This dissertation consists of a general introduction to noncommutative geometry together with a presentation of the current process of generalization to the Lorentzian case. It is divided into three important chapters with the following structure.\\

The Chapter \ref{chapterintro} is a general introduction to some mathematical frameworks developed in the purpose of unifying physics theories. We talk about noncommutative geometry in general and also quantum gravity. The main idea of this chapter is to present different ways to merge geometry and algebra in the context of unification theories. Instead of a technical prolegomenon, we propose a walk among mathematical theories, scattered with useful definitions and some highlighted developments. In particular, we take the time to present the complete proof of Gel'fand theorem, which is the cornerstone of noncommutative geometry. This introduction constitutes actually our personal approach to the fundamental question of finding a suitable mathematical background for unification theories.\\

The Chapter \ref{chaprie} is the presentation of Alain Connes' theory of noncommutative geometry. We develop the construction of the differential structure of noncommutative geometry in the case of a compact Riemannian manifold, with a special attention to the distance function. The main axioms of spectral triples, the basic structures of noncommutative geometry, are given. We conclude this chapter with a review of the construction of the standard model of particle physics in the framework of noncommutative geometry.\\

The Chapter \ref{chaplor} is the longest chapter of this dissertation, where we consider the question of the generalization of the theory to Lorentzian manifolds. An almost complete review of the existing literature on the subject is presented and discussed, while new problems are analyzed. Then a detailed presentation of our contributions is given. In a first time, we present a first step of generalization of the distance function to the Lorentzian case. This construction was presented in \cite{F3}, also with a conceptual presentation in \cite{F2}. In a second time, we present some unpublished works about the research of a causal counterpart to spectral triples. In particular, the first axioms of temporal Lorentzian spectral triples are given and discussed, with the introduction of a notion of global time in noncommutative geometry.\\

Although we paid attention to be self-consistent while giving definitions and properties, it was not possible to provide an introduction on every mathematical or physical concepts. In particular, usual concepts about topology and functional analysis, especially those concerning Hilbert spaces, are assumed to be known. A good introduction to these topics can be found in \cite{DM}.\\

\cleardoublepage
 \vspace*{\fill}
 \selectlanguage{frenchb}
\begin{center}
\noindent { \Large\calligra "Mais je ne m'areste point a expliquer cecy plus en detail, a cause que je vous osterois le plaisir de l'apprendre de vous mesme, \& l'utilit\'e de cultiver vostre esprit en vous y exerceant, qui est a mon avis la principale, qu'on puisse tirer de cete science.}\\
\end{center}
\begin{center}
\noindent {\Large\calligra Aussy que je n'y remarque rien de si difficile, que ceux qui seront un peu vers\'es en la Geometrie commune, \& en l'Algebre, \& qui prendront garde a tout ce qui est en ce trait\'e, ne puissent trouver."}\\ 
\end{center}
  \vspace{2cm}
\noindent {R. Descartes}, {\it La Géométrie}
   \vspace{2cm}
   \selectlanguage{english}
   \thispagestyle{empty}  \vspace*{\fill}


\cleardoublepage
\hbox{} \vspace*{\fill} \thispagestyle{empty}
\mainmatter
\chapter[Algebraization of geometry]{Unification theories as algebraization of geometry}\label{chapterintro}

We begin our dissertation by introducing noncommutative geometry from a large point of view, and positioning the theory in the framework of unification theories. We will start by considering the problem of unifying current physics theories as a motivation to the development of new mathematical tools. The main point of this first chapter will be the correlation between geometrical theories and algebraic ones.\\


\section[Combining geometry and algebra]{Combining geometry and algebra\vspace{0.1cm}\\ (a quick review of current physical theories and mathematics behind)}

The main goal of any physicist is to discover and test abstract theories that can describe and predict natural phenomena. Those theories are based on mathematical formalisms, so mathematicians and physicists meet each other in the sense that the first ones must produce mathematical tools and theories which can be useful for the second ones. Of course this intersection is not the only possibility of research, since there are so many fields of research in mathematics which will probably never find any application in physics, and in the same way there exist some research fields in physics for which no suitable mathematical tools are available at least at the present time. So we can see that researches in physics and mathematics are strongly dependent on each other, and it is not a waste of time to think about which mathematical fields could be developed further in order to meet the research interests of physicists.\\

We have mentioned that physicists are interested in the description of natural phenomena, but the physicists are more ambitious than that. Their dream is not to set many theories describing all phenomena in the universe but to set {\it one} theory which could explain those phenomena. So when they have two different theories working in two different fields, the next step is nothing but to find a new theory that combines both.\\

This final dream has a name, the {\it theory of everything}, a unique theory which could involve all known fundamental interactions. And so far these interactions are in number of four:
\begin{itemize}
\item Gravitation, the weakest of all interactions which is only attractive and depends on massive elements
\item Electromagnetism, acting between charged particles
\item The strong interaction, which insures atomic and nuclear cohesion
\item The weak interaction, a weaker nuclear interaction between neutrinos, leptons and quarks\\
\end{itemize}

Three of these interactions -- electromagnetism, strong and week interaction -- can at this time be described in a single unified way, thanks to gauge theories, but gravitation remains the worst student.\\

We will devote this section to a quick overview of current physics theories about fundamental interactions, and we will underline mathematical tools that are used behind. This section will give us the opportunity to introduce a good number of mathematical notions which will be very useful for the remaining of our dissertation.\\

\subsection{General relativity}

We first begin with the gravitational force. Up this day, the best physical theory describing gravitation is Einstein's general relativity, which is mainly based mathematically on pseudo-Riemannian geometry. We need to introduce some preliminary notions.

\begin{defn}
A ($n$-dimensional) {\bf\index{manifold} manifold} $\M$ is a second countable Hausdorff space which is locally homeomorphic to $\setR^n$, i.e.~for each point $p\in\M$ there exists an open $U_p\subset\M$ and a homeomorphism \mbox{$\varphi_p : U_p \rightarrow \setR^n$} called chart. A collection $\set{(U_\alpha,\varphi_\alpha)}$ of charts covering $\M$ is an atlas and the functions between euclidian spaces \mbox{$\varphi_{\alpha\beta} = \varphi_\alpha \circ \varphi_\beta^{-1}$} are the transition maps. The manifold is said {\bf\index{manifold!differentiable} differentiable} if the transition maps are $C^k$, and by this way allow to define $C^k$ functions on $\M$.
\end{defn}

For more simplicity, we will only use $C^\infty$ differential manifolds.

\begin{defn}
A {\bf\index{tangent vector} tangent vector} at the point $p \in \M$ is a map \mbox{$v_p : C^\infty(\M) \rightarrow \sR$} such that:
\begin{itemize}
\item $v_p(f+g) =v_p(f) + v_p(g)$
\item $v_p(\alpha f) =\alpha\,v_p(f) \qquad \qquad \qquad \qquad \forall\alpha\in\setR,\; \forall f,g\in C^\infty(\M)$
\item $v_p(fg) =v_p(f)\,g(p) + f(p)\,v_p(g)$
\end{itemize}
The vector space of all tangent vectors at one point is the tangent space $T_p(\M)$, and the assignments of a tangent vector to each point of $\M$ are {\bf\index{vector field} vector fields}, so smooth vector fields are actually operators \mbox{$C^\infty(\M) \rightarrow C^\infty(\M)$}. The products of vector fields (in sense of composition) are not vector fields, but the commutator \mbox{$[u,v] = uv - vu$} is a vector field.\end{defn}

Tangent vectors can be seen as directional derivatives, so if $\set{x^\mu}$, $\mu = 1,...,n$, are local coordinates on $\M$, a vector field can be written as \mbox{$X = X^\mu \pd{}{x^\mu}$} with $X^\mu \in C^\infty(\M)$ and with $\set{\pd{}{x^\mu}}$ the natural basis usually noted $\set{\partial_{\mu}}$, where we use Einstein's summation convention on pairs of indices.

\begin{defn}\label{defkform}
A {\bf\index{differential form} differential k-form} is a field giving at each point $p\in\M$ an antisymmetric multilinear map:
$$ \omega_p : \bigotimes_k T_p(\M) \rightarrow \setR.$$
The space of differential k-forms is noted $\Omega^k(\M)$ and is a vector space. The space of all differential forms \mbox{$\Omega(\M) = \bigoplus_{k=0}^\infty \Omega^k(\M)$} is endowed with an anticommutative exterior product \mbox{$\wedge : \Omega^p(\M) \times \Omega^q(\M) \rightarrow \Omega^{p+q}(\M)$} and an exterior derivative \mbox{$d : \Omega^p(\M) \rightarrow \Omega^{p+1}(\M)$}.
\end{defn}

The elements of $\Omega^1(\M)$ are fields of linear forms defined on tangent spaces $T_p(\M)$, and are called {\bf\index{vector field!covariant} covariant vector fields}. The vector space of covariant vectors on one point $p$ is the cotangent space $T^*_p(\M)$, which is dual to the tangent space $T_p(\M)$. Its natural base is composed  of the differentials $\set{d{x^\mu}}$, so a covariant vector field is of the form \mbox{$w = X_\mu dx^\mu$, $X_\mu \in C^\infty(\M)$}. By use of the exterior product we can construct a base for $\Omega^k(\M)$ with elements \mbox{$dx^\mu \wedge dx^\nu \wedge ... \wedge dx^\lambda$}.

\begin{defn}
A {\bf\index{tensor} tensor} at the point $p \in \M$ is a multilinear map
$$T : \prt{\bigotimes_r T_p(\M)} \otimes \prt{\bigotimes_s T^*_p(\M)} \rightarrow \setR.$$
Tensor fields can be represented in terms of coordinates:
$$T = T_{\nu_1...\nu_s}^{\mu_1...\mu_r}\,\partial_{x^{\mu_1}}\otimes...\otimes\partial_{x^{\mu_r}}\otimes dx^{\nu_1}\otimes...\otimes dx^{\nu_s}$$
where $\otimes$ is the tensor product and $T_{\nu_1...\nu_s}^{\mu_1...\mu_r} \in C^\infty(\M)$, and the application of a tensor fields to any vector and covariant vector fields can be expressed in terms of the coordinates only:
$$T(v_1,...,v_s,w_1,...,w_r) = T_{\nu_1...\nu_s}^{\mu_1...\mu_r}\, {v_1}^{\nu_1}...{v_s}^{\nu_s}{w_1}_{\mu_1}...{w_r}_{\mu_r}.$$
\end{defn}

\begin{defn}
A {\bf\index{metric} metric} is a specific symmetric tensor field $g$ of rank two  with components $g_{\mu\nu}$ giving at each point $p\in\M$ a scalar product between two tangent vectors $g_p : T_p(\M) \times T_p(\M) \rightarrow \setR$.  The metric must be non-degenerate, in the sense that all proper values must not be zero. The dual metric is an inner product between covariant vectors given by $g^{\mu\nu}g_{\nu\lambda} = \delta^\mu_\lambda$. The {\bf\index{metric!signature} signature} of the metric is the number $m-q$ (or the couple $(m,q)$) where $m$ is the number of positive proper values and $q$ the number of negative proper values. If the signature is equal to the dimension $n$, the metric is said to be {\bf\index{metric!Riemannian}\index{manifold!Riemannian} Riemannian}, and otherwise {\bf\index{metric!pseudo-Riemannian}\index{manifold!pseudo-Riemannian} pseudo-Riemannian}. In the particular case where the signature is $n-2$ (only one negative proper value), the metric is said to be {\bf\index{metric!Lorentzian}\index{manifold!Lorentzian} Lorentzian}.
\end{defn}

The mathematical framework of general relativity is a Lorentzian 4-dimensional manifold, so a smooth 4-dimensional manifold with a Lorentzian metric of signature $(3,1)$. This kind of manifold is called {\bf\index{spacetime} spacetime} and the points on it are called {\bf\index{event} events}, and represent a unique spatial position at a unique local time.\\

The metric of the spacetime fixes the dynamics of the system, as particles have their inertial motion given by geodesics. The {\bf\index{geodesic} geodesics} are curves on $\M$ with extremum length, so they are functions \mbox{$x : I \subset\setR \rightarrow \M$} that are extrema of the action:
$$l(x) = \int_I \sqrt{\abs{g_{\mu\nu} \dot x^\mu \dot x^\nu}}\; dt$$
and can be determined by the Euler--Lagrange equation:
$$\ddot x^\lambda + \Gamma^\lambda_{\mu\nu} \,\dot x^\mu \,\dot x^\nu = 0$$
with
$$\Gamma^\lambda_{\mu\nu}  = \frac 12 \,g^{\lambda\kappa} \prt{\partial_{x^\mu} g_{\nu\kappa} + \partial_{x^\nu} g_{\mu\kappa} - \partial_{x^\kappa} g_{\mu\nu}}$$
called the Levi--Civita connection.\\

Of course the choice of the metric cannot be free, as there are some constraints on the curvature of the spacetime.

\begin{defn}
The {\bf\index{tensor!Riemann} Riemann tensor} is the tensor of rank four given by:
$$R^\kappa_{\lambda\mu\nu} = \partial_{x^\mu} \Gamma^{\kappa}_{\nu\lambda} - \partial_{x^\nu} \Gamma^{\kappa}_{\mu\lambda} + \Gamma^{\eta}_{\nu\lambda} \Gamma^{\kappa}_{\mu\eta} - \Gamma^{\eta}_{\mu\lambda} \Gamma^{\kappa}_{\nu\eta}.$$
The successive contractions of the Riemann tensor give the {\bf\index{tensor!Ricci} Ricci tensor}:
$$\text{Ric}_{\mu\nu} = R^\lambda_{\mu\lambda\nu}$$
and the {\bf\index{curvature!scalar} scalar curvature}:
$$R = g^{\mu\nu}\text{Ric}_{\mu\nu}.$$
\end{defn}

\begin{defn}
The {\bf\index{tensor!Einstein} Einstein tensor} is the symmetric and divergenceless tensor of rank two:
$$G_{\mu\nu} = \text{Ric}_{\mu\nu} - \frac 12 g_{\mu\nu} R.$$
\end{defn}

The equations of general relativity giving constraints on the metric are, in case of absence of matter:
\begin{equation}\label{einstvide}
G_{\mu\nu} = 0
\end{equation}
and with presence of matter:
\begin{equation}\label{einstnonvide}
G_{\mu\nu} + \Lambda g_{\mu\nu}= \frac{8\pi G}{c^4} T_{\mu\nu}
\end{equation}
where $T_{\mu\nu}$ is the energy-momentum tensor describing the density and flux of energy and momentum, and where $\Lambda$ is the cosmological constant.\\

The equations (\ref{einstvide}) can be expressed in a Lagrangian formalism as the extremum of the Einstein--Hilbert action:
$$S = \frac{c^4}{16\pi G} \int  R \sqrt{\abs{\det g}} \;  d^4x$$
and similarly for (\ref{einstnonvide}) by adding a matter action.\\

Einstein's equations (\ref{einstnonvide}) show us that the gravitational force is strongly related to geometry. The meaning of these equations is that the mass, or more generally the energy, influences the geometry of spacetime, while in the same time the geometry of spacetime dictates the inertial movement of massive and even massless particles. The deflection of light is one of the best examples of the geometrical aspect of gravitation.\\

\subsection{Quantum theories}

Now let us switch to the other current physical theories. While gravitation is a really good theory to describe interactions at large scales, physics at small scales is the domain of quantum theories.\\

Quantum theories arise from the quantization of classical theories as classical mechanics or fields theories. The usual variables (coordinates, fields) are replaced by operators acting on a particular Hilbert space, whose elements are the states of the system. The easiest way to introduce a procedure of quantization is by the method of canonical quantization, whose we give a quick review here.

\begin{defn}
An {\bf\index{algebra} associative algebra $\A$} is a vector space with an associative and distributive product \mbox{$\A \times \A \rightarrow \A : (a,b) \leadsto ab$} (a ring structure). If the product is commutative, the algebra is said to be commutative or abelian. \end{defn}

\begin{defn}
A {\bf\index{algebra!Lie} Lie algebra $\A$} is a vector space with a binary operation \mbox{$\A \times \A \rightarrow \A : (a,b) \leadsto [a,b]$} which is bilinear and antisymmetric and respects the Jacobi identity \mbox{$[x,[y,z]] + [y,[z,x]] + [z,[x,y]] = 0$} $\forall x, y, z \in\A$.
\end{defn}

Any (noncommutative) algebra defines trivially a Lie algebra by use of the commutator \mbox{$[a,b] = ab - ba$}. As we have seen, the commutator of vector fields also defines a Lie algebra structure. Likewise, the classical mechanics in its Hamiltonian form can be expressed in term of a Lie algebra.\\

Hamiltonian mechanics uses a phase space $\Gamma = (q^i,p_i)$ where $q^i$ are (generalized) coordinates and $p_i = \pd{L}{\dot q^i}$ are (generalized) momenta derived from the Lagrangian function $L$. The Hamiltonian function $H = \dot q^i p_i - L$ introduces the relations $\dot q = \pd{H}{p}$ and $\dot p = - \pd{H}{q}$.

\begin{defn}
We define the {\bf\index{Poisson bracket} Poisson bracket} between two functions $f$ and $g$ by:
$$\set{f,g}=\sum \pd{f}{q^i}\pd{g}{p_i} - \pd{f}{p_i}\pd{g}{q^i}.\vspace{0.3cm}$$
\end{defn}

The Poisson bracket gives a structure of Lie algebra with an additional Leibniz rule $\set{a,bc} = \set{a,b}c + b\set{a,c}$, which is called a \mbox{\bf\index{algebra!Poisson}Poisson algebra}. The whole dynamics of the system can be expressed in terms of the Poisson bracket. The canonical relations between the variables are given by:
$$
\set{q^i,q^j} \eq 0, \quad \set{p_i,p_j} \eq 0, \quad \set{q^i,p_j} \eq \delta^i_j
$$
and the evolution of any function $f$ not explicitly time-dependent is given by:
$$\dv{}{t} f \eq \set{f,H}.$$

The main idea of canonical quantization is to replace the phase space $\Gamma$ with the Poisson algebra structure by a Hilbert space $\H$ with a Lie algebra structure given by the commutator of operators on $\H$. Observables are represented by Hermitian operators on $\H$ (the old variables $q^i$ and $p_i$ become some position and momentum operators $\hat x^i$ and $\hat p_i$). The possible values of an observable are given by eigenvalues, with eigenvectors being the possible states of the system.\\

To summary, we have the following correspondences:
$$
\begin{array}{ccc}
\Gamma &\leadsto &\H\\[4pt]
H &\leadsto &\hat H\\[4pt]
q^i &\leadsto &\hat{x}^i\\[4pt]
p_i &\leadsto &\hat{p}_i\\[4pt]
\set{\ ,\ } &\leadsto &-\frac{i}{\hbar}[\ \,,\ ]
\end{array}
$$
with $\hbar$ being the reduced Planck constant, and the canonical relations:
$$
[\hat{x}^i,\hat{x}^j] \eq 0, \quad [\hat{p}_i,\hat{p}_j] \eq 0, \quad [\hat{x}^i,\hat{p}_j] \eq i \hbar\; \delta^i_j.
$$
The evolution of any operator $A$ not explicitly time dependent is given by the Heisenberg equation:
$$
\dv{}{t} A \eq -\frac{i}{\hbar} [A, \hat H].\vspace{0.3cm}
$$

We can notice that the noncommutativity of the algebra is mandatory in order to guarantee nontrivial evolution equations.\\

The same procedure can be applied to quantify fields theory (which is sometimes called second quantization) in order to lead to quantum field theory, and especially to quantum electrodynamics, which gives a quantum theory of the electromagnetic interaction. Quantum electrodynamics is a particular case of gauge theories, with a commutative gauge group. We will now spend some time to introduce notions and concepts of gauge theories.\\

\subsection{Gauge theories}

Quantum mechanics is a theory which is invariant under a global symmetry, a phase change $e^{i\theta}$ on states. Gauge theories extend this concept to local invariances (given by gauge transformations), and to more complex symmetries given by different Lie groups.

\begin{defn}
A {\bf\index{Lie group} Lie group $G$} is a group which is also a smooth manifold and such that multiplication and inversion are smooth maps.
\end{defn}

Typical Lie groups are:
\begin{itemize}
\item $O(n)$ the group of $n\times n$ real orthogonal matrices
\item $SO(n)$ the subgroup of  $O(n)$ of matrices with determinant $1$
\item $U(n)$ the group of $n\times n$ complex unitary matrices
\item $SU(n)$ the subgroup of  $U(n)$ of matrices with determinant $1$
\end{itemize}

Since Lie groups are manifolds, we can consider the tangent space at the identity element, with a natural structure of Lie algebra. This is the Lie algebra associated to the Lie group.

\begin{defn}
A {\bf\index{fiber bundle} fiber bundle} is a quadruple \mbox{$(E,\M,\pi,F)$} where $E$ is a manifold called the total space, $\M$ a manifold called the base, $\pi : E \rightarrow \M$ a projection which is surjective and continuous and $F$ a fiber such that $\forall p\in \M$ $\pi^{-1}(p) \cong F$ (where $\cong$ denotes an isomorphism).
\end{defn}

A {\bf\index{section} section} on a fiber bundle is a continuous $C^\infty$ function $s : \M \rightarrow E $ such that $\forall p\in \M$, $s(p)\in \pi^{-1}(p)$. The space of sections on the fiber bundle $E$  is often denoted by $\Gamma(E)$, and a fiber localized at a particular point is denoted by $F_p = \pi^{-1}(p)$.\\

There are particular cases of fiber bundles:
\begin{itemize}
\item Whenever the fiber $F$ is the tangent space at each point of $\M$, the fiber bundle is called the {\bf\index{fiber bundle!tangent} tangent bundle} $T\M$. Sections of a tangent bundle are just vector fields. In the same way we have the cotangent bundle $T^*\M$. More generally, a {\bf\index{fiber bundle!vector} vector bundle} is a fiber bundle whose fiber is a vector space.
\item Whenever the fiber is a Lie group $G$ with $G$ acting transitively on each fiber, we have a {\bf\index{fiber bundle!principal} principal bundle} or {\bf $G$-bundle}. A gauge is the choice of a particular section of a principal bundle, and a gauge transformation is a transformation between two sections. 
\end{itemize}

\begin{defn}
 Given a $G$-bundle $(E,\M,\pi,G)$, a {\bf\index{connection} $G$-connection} $A$ is a $1$-form with value in the Lie algebra of $G$, and gives rise to a covariant derivative defined by $\nabla_\mu X = \partial_\mu X + A_\mu X$.\footnote{The operator $\nabla$ is  also called itself a connection.} The {\bf\index{curvature} curvature} is the $2$-form given by \mbox{$F = dA + A \wedge A$}.
\end{defn}

From the $G$-connection, we can introduce the Yang-Mills Lagrangian
$$\L_{YM} = \frac 1{4 g^2} F_{\mu\nu} F^{\mu\nu}$$
where $g$ is a coupling constant, and whose action gives a restriction on gauge fields $A_\mu$. So we have here a field theory which can be quantized, and quantized fields give rise to gauge bosons, particles carrying the fundamental forces. The number of gauge bosons depends directly on the dimension of the Lie algebra of $G$.\\

The first gauge theory to be discovered was quantum electrodynamics with the abelian gauge group $U(1)$ and one gauge boson (photon). The theory was extended to the non-abelian groups $SU(2)$ and $SU(3)$, respectively describing the weak interaction with three weak bosons ($W^+$, $W^-$, $Z$) and the strong interaction with height bosons (gluons).\\

\subsection{Standard model}

The standard model of particle physics is currently the most complete and experimented model which includes description of matter and all the fundamental interactions except gravitation. It is a quantized non-abelian gauge theory with gauge group $U(1) \times SU(2) \times SU(3)$.\\

The standard model contains two types of particles, fermions (particles representing matter) and bosons (particles carrying interactions). There are $12$ fermions divided in two groups (leptons and quarks) and in three generations. The Bosons are the $12$ gauges bosons we have described before, plus an additional hypothetical massive particle called the Higgs boson -- never observed at the current time -- whose role is to explain the mass of some other particles.\\

The standard model is a quite successful unification theory, in the sense that it includes in a quantized single model three of the four interactions. Its mathematical background is clearly algebraic since it is a gauge theory based on the noncommutative group \mbox{$U(1) \times SU(2) \times SU(3)$}. However, the standard model cannot explain any of the gravitational aspects of physics, neither the usual Einstein's theory of gravitation nor the more recent cosmological elements as dark matter and dark energy.\\

So the current {\it dream} about physics of fundamental interactions is to find a way to combine these two problems: gravitation and quantum theory of elementary particles. From a mathematical point of view, it can be seen as finding a way to combine in a single formalism geometrical aspects from gravitation and algebraic aspects from quantized gauge theories. So an important key could be the construction of new mathematical tools which could give a complete and well defined framework to support such a theory.


\newpage

\section{Einstein algebras}\label{einsteinalg}

We have seen in the last section that the construction of any kind of unification theory is  related to the construction of a way to deal with both geometrical structures (for gravitation) and algebraic structures (for quantization and gauge theories). The first chapter of our dissertation will be completely devoted on this idea: finding a way to combine those different aspects of mathematics. We will go through diverse possibilities -- passing from the well-known and unavoidable Gel'fand theorem to the theories of quantum gravity and of course to noncommutative geometry -- but we start with a naive idea that came from R. Geroch in 1972 \cite{GerEA}. This approach itself has never been really developed, except for some extensions by M. Heller \cite{HellerEA}, but the initial idea seems for us very important and we will use it as a kind of Ariane's thread.\\

We will first start by resetting the basic notions of pseudo-Riemannian geometry from a more abstract point of view.

\begin{defn}
A {\bf\index{module} left (right) $\A$-module} over an algebra $\A$ (or a ring) is an abelian group $(M, +)$ with a left (right) operation \mbox{$\A × M \rightarrow M : (a,x) \leadsto ax$} (resp. \mbox{$M × \A \rightarrow M : (x,a) \leadsto xa$}) such that for all $a,b \in\A$, $x,y\in M$ we have:
\begin{itemize}
\item $a(x + y) = ax + ay$
\item $(a + b)x = ax + bx$
\item $(ab)x = a(bx)$
\item $1.x = x $\quad if $\A$ has a unity
\end{itemize}
and similarly for a right module. A module which is both left and right with compatible multiplication is called a {\bf\index{module!bimodule} bimodule}.\end{defn}

Now let us take a manifold $\M$. We  consider any manifold to be smooth, i.e.~with $C^\infty$ functions and transition maps. Let $\A$ be the collection of real-valued functions $C^\infty(\M,\setR)$. Then $\A$ is a vector space with pointwise sum and product by scalars, and an algebra with pointwise multiplication. From $\A$ we can extract the subalgebra $\R$ of constant functions, which is isomorphic to $\setR$ if seen as a ring.

\begin{defn}
A {\bf\index{derivation} derivation} on $\A$ with the subring $\R \cong \setR$ is a mapping \mbox{$\xi : \A \rightarrow \A$} with the properties:
\begin{itemize}
\item $\xi(\alpha\, a+b) = \alpha\, \xi(a) + \xi(b)$
\item $\xi(ab) = \xi(a)\, b + a\, \xi(b)$
\item $a\in\R \implies \xi(a) = 0$
\end{itemize}
for all $a,b \in\A$, $\alpha\in\setR$.
\end{defn}

Let us note the collection of all derivations by $\D$. Then $\D$ is a left $\A$-module since $a\xi$ is still a derivation for $a\in\A$ and $\xi\in\D$. Moreover, the commutator \mbox{$[\xi,\eta] = \xi \eta - \eta \xi$} is also a derivation, so we have a Lie algebra structure. Since $\A$ is the set of continuous real functions on the manifold $\M$, we can identify the set $\D$ with the collection of all smooth vector fields on $\M$. Then the dual module $\D^*$, the set of all linear functionals on $\D$, is nothing but the space of covariant vector fields (fields of differential 1-forms). In the end, every tensor field is just a multilinear mapping \mbox{$\D \times \cdots \times \D \times \D^* \times \cdots \times \D^* \rightarrow \A$}.\\

A {\bf\index{metric} metric} can be defined as a symmetric isomorphism \mbox{$g : \D \rightarrow \D^*$}, symmetric in the sense of \mbox{$g(\xi,\eta) = g(\eta,\xi)$} if we define it as a tensor field \mbox{$g : \D \times \D \rightarrow \A : (\xi,\eta) \leadsto g(\xi) \eta$}. $g$ will be a Riemannian metric if there exists a basis \mbox{$(\xi_0, \xi_1, ... , \xi_{n-1})$} of $\D$ such that $g(\xi_i, \xi_j) = \delta_{ij}$, or a Lorentzian metric if  $g(\xi_0,\xi_0) = -1$ with the other relations being unchanged.\\

By this manner, we have reset the basis of pseudo-Riemannian geometry with continuous functions and derivations, but what is interesting in that approach is that we have  never used the manifold $\M$ itself, except for fixing the algebra of continuous functions $\A$. In other words, it is possible to construct the same elements starting with an arbitrary algebra $\A$, with the respect of the condition that $\A$ is a commutative algebra with a unity and with a subring $\R$ containing the unity and isomorphic to $\setR$. So from now we will remove the manifold $\M$ and we will only work  with our sets $\A$, $\D$ and a metric $g$ in order to construct the other fundamental elements of pseudo-Riemannian geometry.\\

A {\bf\index{covariant derivative} covariant derivative} in the $\A$-module $\D$ is any mapping 
$$\nabla : \D \times \D \rightarrow \D : (\xi,\eta) \leadsto \nabla_\xi (\eta)$$
 with linearity for the first argument, additivity for the second, and
$$\nabla_\xi (a \eta) = \xi(a) \eta + a \nabla_\xi \eta \qquad \forall a\in\A,\ \forall \xi, \eta \in\D.$$
Then the covariant derivative can be extended to any tensor field (for the second argument) by imposing the Leibniz rule on tensor product. It is well known that there exists a unique symmetric covariant derivative (the Levi--Civita connection) compatible with a given metric $g$, i.e.~such that $\nabla_\xi\, g = 0\ \forall\xi\in\D$.\\

\vspace{0.2cm}                                                     

The {\bf\index{tensor!Riemann} Riemann tensor} is the mapping 
$$R : \D^* \times \D \times \D \times \D \rightarrow \A$$
 defined by \mbox{$R(\varphi,\xi,\eta,\gamma) = \varphi(R_{\xi,\eta} \gamma)$} where \mbox{$R_{\xi,\eta} = \left[ \nabla_\xi, \nabla_\eta \right] - \nabla_{[\xi,\eta]}$}.\\

At the end, we need a way to construct the Ricci tensor, and especially a way to take a trace of tensor fields. Since the extension to any tensor is obvious, we can restrict the definition to tensors of rank two.\\

\begin{defn}
A {\bf\index{contraction} contraction} on the set of tensor fields \mbox{$\D^* \times \D \rightarrow \A$} is an operation "$\tr$" such that $\tr(\alpha)$ is an element of $\A$ for every rank two tensor $\alpha$, and with the following properties:
\begin{itemize}
\item $\tr(\alpha + a \beta) = \tr(\alpha) + a \tr(\beta)$ for $a\in\A$
\item $\tr(\xi\otimes \eta) =\xi(\eta)$ for $\xi \in \D^*$, $\eta \in \D$ and $\xi\otimes\eta$ being the tensor product.\\
\end{itemize}
\end{defn}

These properties imply the existence and unicity of the contraction operation. So we can define the {\bf\index{tensor!Ricci} Ricci tensor} to be \mbox{$\text{Ric} : \D \times \D \rightarrow \A : (\xi,\gamma) \leadsto \tr R(\,\cdot\,, \xi, \,\cdot\,, \gamma)$}, with the trace taken over the dot arguments.\\

\vspace{0.2cm}                                                     

So what have we now? We have a way, or more precisely an idea to construct the fundamental elements of Einstein general relativity based on a purely algebraic framework, with no more reference to manifold, so no reference on particular events. We have of course to put some constraints on this algebraic system in order to correspond to a geometrical one. Here are the constraints derived from Geroch's ones~\cite{GerEA}:\\
\begin{defn}
An {\bf\index{algebra!Einstein} Einstein algebra} is an algebra $\A$ over $\setR$ with derivations $\D$ and a symmetric isomorphism $g : \D \rightarrow \D^*$ such that:
\begin{itemize}
\item $\A$ is commutative with a unity
\item There exists a subspace $\R\subset\A$ with ring structure and containing the unity which is isomorphic to $\setR$
\item $\D$ vanishes on $\R$
\item $g$ is a Lorentzian metric
\item The Ricci tensor vanishes\\
\end{itemize}
\end{defn}

The last condition must be relaxed to obtain Einstein algebras with sources, by instead imposing that the Einstein tensor is equal to a suitable energy-momentum tensor. It is then obvious that every spacetime which is a solution of Einstein's equation of general relativity has a corresponding Einstein algebra, but the reverse is not guaranteed. Actually, general relativity can be seen as a special case of Einstein algebras.\\

\vspace{0.2cm}                                                     

What we can hold from this is the fact that, in general relativity, reference to a geometrical manifold can be replaced by an algebraic structure based on continuous functions, and that this algebraization can conduct to more general spaces. The questions now are the following ones:
\begin{itemize}
\item Can all the information be recovered at a geometrical level from the algebraic one (so is there any loss of information)?
\item Can all concepts of pseudo-Riemannian geometry be translated in an algebraic framework?
\item Can this kind of algebraization be used to create more general spaces which could include quantization and/or the other fundamental interactions?
\end{itemize}

The answer of the first question will be mainly presented in the next section, with the establishment of the Gel'fand transform and the Gel'fand--Naimark theorem. The other ones will be discussed in the sequel.


\newpage

\section{Gel'fand's theory}\label{gelfsect}

This section is completely devoted to the important theorem proved by I. Gel'fand and M. Naimark, but we need first to introduce the background of $C^*$-algebras. Most of these notions can be found in the books of J. M. Gracia--Bondia, C. Varilly and H. Figueroa \cite{Var} or of J. Dixmier \cite{DixmierALG}.\\

A {\bf\index{algebra!normed} normed algebra} is an algebra equipped with a norm $\norm{\,\cdot\,}$ such that \mbox{$\norm{ab} \leq \norm{a} \norm{b}$} for all elements $a,b$ of the algebra. If this norm is complete, then we have a {\bf\index{algebra!Banach} Banach algebra}.\\

A vector subspace $\B\subset\A$ is a subalgebra if it is closed under multiplication. A left (right) {\bf\index{ideal} ideal} $\I$ is a subalgebra of $\A$ where $ab\in\I$ ($ba\in\I$) $\forall a\in\A, \forall b\in\I$ and is called two-sided if both left and right. An ideal is maximal if there is no other ideal containing it and distinct from $\A$.\\

A {\bf\index{algebra!*-algebra} *-algebra} is an algebra with an involution map $a \leadsto a^*$ satisfying $a^{**} = a$, \mbox{$(\lambda a + b)^* = \bar\lambda a^* + b^*$} and \mbox{$(ab)^* = b^*a^*$}.  A *-homomorphism between two *-algebras is a group homomorphism $f$ such that \mbox{$f(ab) = f(a)f(b)$} and $f(a^*)={f(a)}^*$. In this section, we will assume all algebras to be over the field $\setC$ of complex numbers.\\

If a Banach algebra contains a unity, i.e.~an element $1\in\A$ such that \mbox{$1 a = a 1 = a,\ \forall a \in \A$} and $\norm{1}=1$, the algebra is said {\bf\index{algebra!unital} unital}. If it does not, it is always possible to construct an {\bf\index{algebra!unitization} unitized Banach algebra} \mbox{$\A^+= \A \oplus  \setC$} with the trivial sum, the product 
$$(a,\lambda)(b,\mu) = (ab+\lambda b + \mu a, \lambda \mu),$$
 the extended norm 
 $$\norm{(a,\lambda)} = \sup{\set{\norm{a b + \lambda b} : \norm{b} \leq 1}}$$
 and the unity \mbox{$1 = (0,1)$}.

\begin{defn}
A {\bf\index{algebra!$C^*$-algebra} $C^*$-algebra $\A$} is a Banach *-algebra, unital or not, that satisfies the equality:
\begin{equation}\label{Cstarsprop} \norm{a^*a} = \norm{a}^2 \quad\text{or equivalently}\quad \norm{aa^*} = \norm{a}^2,\quad\forall a\in\A.\end{equation}
\end{defn}

A best example of $C^*$-algebra is the algebra $C_0(X)$ of complex continuous functions {\bf\index{vanishing at infinity} vanishing at infinity} on a locally compact Hausdorff space $X$ with the supremum norm \mbox{$\norm{f}_\infty = \sup_{x\in X}\abs{f(x)}$} and the pointwise product. The space $C_0(X)$ can be mathematically defined as the space of continuous functions $f\in C(X)$ such that $\forall \epsilon >0$, $\abs{f} < \epsilon$ outside a compact set. If the space $X$ is compact, then $C_0(X) = C(X)$ and is a unital $C^*$-algebra. Another example is the $C^*$-algebra $\B(\H)$ of bounded linear operators on a Hilbert space $\H$ with composition of operators, where the closed ideal $\K(\H)$ of compact operators  is also a $C^*$-algebra. We can note that if $\A$ is a non-unital $C^*$-algebra, its unitization $\A^+$ is automatically a $C^*$-algebra.

\begin{defn}
Let $\A$ be a unital Banach algebra. The {\bf\index{spectrum} spectrum $\sigma(a)$} of $a\in\A$ is the complement of the {\bf\index{resolvent!set} resolvent set} of $a$, i.e.~the set \mbox{$\setC\setminus\rho(a)$} where \mbox{$\rho(a)=\set{\lambda\in\setC : \prt{a-\lambda 1}^{-1}\in\A}$}. If $\A$ is non-unital, the spectrum is taken in $A^+$. Moreover, the {\bf\index{spectral radius} spectral radius} is the real number \mbox{$r(a) = \sup\set{\abs{\lambda} : \lambda\in\sigma(a)}$}.
\end{defn}

\begin{defn}\label{defspectrum1}
Let $\A$ be a commutative Banach algebra. The {\bf\index{spectrum!of a commutative algebra} spectrum $\Delta(\A)$} of $\A$ is the set of all non-zero *-homomorphisms \mbox{$\chi :\A \rightarrow \setC : a \leadsto \chi(a)$}, where each *-homomorphism is called a {\bf\index{character} character}. Any character can be extended to $\A^+$ by setting \mbox{$\chi(0,1)=1$}. We can note that every character is automatically continuous (because they are norm decreasing).
\end{defn}

These two definitions of "spectrum" are  related, as we will show latter. We go now directly into the main theorem of this section -- the Gel'fand--Naimark theorem based on the Gel'fand transform -- and we will devote the rest of this section to the proof and consequences of this result.

\begin{defn}\label{geltransf}
The {\bf\index{Gel'fand transform} Gel'fand transform} is defined by
$$ \bigvee : \A \rightarrow C_0\prt{\Delta(\A)} : a \leadsto \hat a \ \text{ where }\ \hat a(\chi) = \chi(a).$$
\end{defn}

\begin{thm}[Gel'fand--Naimark]\label{GelNai}
Let $\A$ be a commutative $C^*$-algebra. Then the Gel'fand transform is an isometric *-isomorphism between $\A$ and $C_0\prt{\Delta(\A)}$.
\end{thm}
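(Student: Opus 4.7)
The plan is to handle the unital case first and deduce the non-unital case by unitization. Assume $\A$ is unital. Endow $\Delta(\A)$ with the relative weak-$*$ topology from $\A^*$; since every character satisfies $\chi(1)=1$ and $\abs{\chi(a)}\le\norm{a}$ (otherwise $a-\chi(a)1$ would be invertible via a Neumann series, contradicting $a-\chi(a)1\in\ker\chi$), $\Delta(\A)$ sits as a weak-$*$ closed subset of the closed unit ball of $\A^*$ and Banach--Alaoglu makes it compact Hausdorff. By construction each $\hat a$ is an evaluation map and thus weak-$*$ continuous, so $\bigvee$ takes values in $C(\Delta(\A))$. The first substantive fact I would prove is the bridge between the two notions of spectrum, $\sigma(a)=\hat a(\Delta(\A))$: one inclusion is immediate since $a-\chi(a)1\in\ker\chi$ is a proper ideal; the other uses that if $\lambda\in\sigma(a)$ then $(a-\lambda 1)\A$ lies in some maximal ideal $\mathfrak m$, and Gel'fand--Mazur identifies the commutative unital Banach field $\A/\mathfrak m$ with $\setC$, yielding a character $\chi$ with $\chi(a)=\lambda$.

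Next I would verify that $\bigvee$ is a $*$-homomorphism. The algebraic part is immediate from the definition of a character. The $*$-preservation rests on the lemma that $\chi(h)\in\setR$ for every self-adjoint $h$. I would prove this by considering the norm-convergent exponential $u=\sum_{n\ge 0}(ih)^n/n!$, checking $u^*u=uu^*=1$ so that $u$ is unitary with $\norm{u}=\norm{u^{-1}}=1$, and then noting that $\abs{\chi(u)}\le 1$ and $\abs{\chi(u)^{-1}}=\abs{\chi(u^{-1})}\le 1$ force $\abs{\chi(u)}=1$; since $\chi(u)=e^{i\chi(h)}$, this forces $\chi(h)\in\setR$. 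Splitting $a=\frac{a+a^*}{2}+i\frac{a-a^*}{2i}$ into self-adjoint parts then gives $\widehat{a^*}=\overline{\hat a}$.

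The isometry is where the $C^*$-identity is decisive. For self-adjoint $h$, iterating $\norm{h^2}=\norm{h^*h}=\norm{h}^2$ yields $\norm{h^{2^n}}=\norm{h}^{2^n}$, so the spectral radius formula $r(h)=\lim\norm{h^n}^{1/n}$ gives $r(h)=\norm{h}$. For general $a$, apply this to the self-adjoint $a^*a$ and use the spectrum identification together with $\chi(a^*a)=\abs{\chi(a)}^2$:
\[
\norm{a}^2=\norm{a^*a}=r(a^*a)=\sup_{\chi\in\Delta(\A)}\abs{\chi(a^*a)}=\norm{\hat a}_\infty^2,
\]
so $\bigvee$ is isometric, hence injective with closed range. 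Surjectivity then comes from Stone--Weierstrass: the range is a $*$-subalgebra of $C(\Delta(\A))$ containing the constants ($\hat 1\equiv 1$) and separating points (by the very definition of $\Delta(\A)$), thus dense, thus all of $C(\Delta(\A))$. For non-unital $\A$, apply the unital result to $\A^+$; one checks that $\Delta(\A^+)=\Delta(\A)\cup\set{\chi_\infty}$ is the one-point compactification, with $\chi_\infty$ the character killing $\A$, so the transform restricted to $\A$ lands exactly in the ideal $C_0(\Delta(\A))$ of functions vanishing at infinity, and an isometric $*$-isomorphism onto this ideal results. The hardest aspect is less any single calculation than the orchestration of three external pillars -- Banach--Alaoglu, Gel'fand--Mazur, and Stone--Weierstrass -- together with the self-adjoint-has-real-spectrum lemma, all of which must be firmly in place before the transform can be dissected.
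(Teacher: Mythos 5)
Your proof is correct and follows essentially the same route as the paper: weak-$*$ compactness of $\Delta(\A)$ via Banach--Alaoglu, the identification $\sigma(a)=\hat a(\Delta(\A))$ through maximal ideals and Gel'fand--Mazur, the $C^*$-identity combined with the spectral radius formula for the isometry, Stone--Weierstrass plus closedness of the range for surjectivity, and unitization for the non-unital case. The only substantive difference is that the paper's characters are by definition *-homomorphisms, so $\widehat{a^*}=\overline{\hat a}$ is automatic there, whereas you treat characters as mere multiplicative functionals and prove the *-compatibility yourself via the exponential-unitary argument (forcing $\chi(h)\in\setR$ for self-adjoint $h$), routing the isometry through $a^*a$ instead of the paper's direct proof that $r(a)=\norm{a}$ for every element of a commutative $C^*$-algebra; this is a correct and in fact slightly stronger variant.
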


In order to prove this theorem, we need some technical lemmas. For the following we will assume $\A$ to be unital and commutative. For the non-unital case, we can perform the proof in $A^+$ and use similar arguments.

\begin{lemma}\label{rayspect}
Let $\A$ be a Banach algebra, we have the following identity for the spectral radius:
$$r(a) = \lim_{n\rightarrow\infty} \norm{a^n}^{\frac 1 n}.$$
Moreover, if $\A$ is a $C^*$-algebra and $a$ is normal ($aa^*=a^*a$) or Hermitian (a=a*), or if $\A$ is a commutative $C^*$-algebra we have:
$$r(a) =  \norm{a}.$$
\end{lemma}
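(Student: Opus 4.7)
The plan is to prove the two inequalities $r(a) \leq \liminf_n \norm{a^n}^{1/n}$ and $\limsup_n \norm{a^n}^{1/n} \leq r(a)$ separately, which together give both the existence of the limit and its identification with the spectral radius. The $C^*$-refinement will then follow by iterating the defining identity $\norm{a^*a} = \norm{a}^2$.

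For the lower bound, I first establish the polynomial spectral mapping $\lambda \in \sigma(a) \implie \lambda^n \in \sigma(a^n)$ via the commuting factorization $a^n - \lambda^n = (a - \lambda)(a^{n-1} + \lambda a^{n-2} + \cdots + \lambda^{n-1})$: an inverse for $a^n - \lambda^n$ in $\A$ would, by multiplication with the second factor on either side, produce an inverse for $a - \lambda$. Combining this with the standard Neumann-series estimate $\sigma(b) \subset \set{\abs{\mu} \leq \norm{b}}$ applied to $b = a^n$ gives $\abs{\lambda}^n \leq \norm{a^n}$, and taking the supremum over $\sigma(a)$ yields $r(a) \leq \norm{a^n}^{1/n}$ for every $n$.

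For the upper bound, I would exploit the holomorphy of the $\A$-valued resolvent $\lambda \mapsto (\lambda - a)^{-1}$ on $\rho(a)$, which contains $\set{\abs{\lambda} > r(a)}$. For $\abs{\lambda} > \norm{a}$ the Neumann series gives $(\lambda - a)^{-1} = \sum_{n \geq 0} a^n/\lambda^{n+1}$ in norm; composing with any $\phi \in \A^*$ produces a scalar Laurent expansion at infinity whose domain of convergence, by uniqueness, must extend to all of $\set{\abs{\lambda} > r(a)}$. Fixing such a $\lambda$, the terms $\phi(a^n)/\lambda^{n+1}$ tend to zero for every $\phi$, so $\set{a^n/\lambda^n}$ is weakly bounded; the uniform boundedness principle then promotes this to a norm bound $\norm{a^n} \leq M \abs{\lambda}^n$, yielding $\limsup_n \norm{a^n}^{1/n} \leq \abs{\lambda}$. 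Letting $\abs{\lambda} \searrow r(a)$ closes the gap.

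For the $C^*$ refinement, when $a$ is Hermitian the $C^*$-identity directly gives $\norm{a^2} = \norm{a^*a} = \norm{a}^2$; a $2^k$-fold iteration yields $\norm{a^{2^k}} = \norm{a}^{2^k}$, hence $r(a) = \lim_k \norm{a^{2^k}}^{1/2^k} = \norm{a}$ by the first part. When $a$ is only normal, the commutativity of $a$ and $a^*$ gives $\norm{a^2}^2 = \norm{(a^*)^2 a^2} = \norm{(a^*a)^2} = \norm{a^*a}^2 = \norm{a}^4$, reducing to the Hermitian case; in a commutative $C^*$-algebra every element is normal, so the same argument applies uniformly. The main obstacle is the upper-bound step: it combines holomorphy of the resolvent on the exterior annulus, uniqueness of the Laurent expansion there, and Banach--Steinhaus, and it tacitly relies on the non-emptiness of $\sigma(a)$, itself a consequence of Liouville's theorem applied to a scalarized resolvent.
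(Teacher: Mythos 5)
Your proposal is correct and follows essentially the same route as the paper: non-emptiness of the spectrum via Liouville applied to a scalarized resolvent, the Neumann series together with uniqueness of the Laurent expansion on $\abs{\lambda}>r(a)$ and the uniform boundedness principle for the upper bound, and iteration of the $C^*$-identity $\norm{a^*a}=\norm{a}^2$ along the powers $a^{2^k}$ for the Hermitian/normal/commutative refinement. The only cosmetic difference is your lower bound, obtained from the factorization $a^n-\lambda^n=(a-\lambda)(a^{n-1}+\lambda a^{n-2}+\cdots+\lambda^{n-1})$ rather than from the radius of convergence of the resolvent series as in the paper; both yield $r(a)\leq\norm{a^n}^{1/n}$ for every $n$.
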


\begin{proof}
First we will prove that the spectrum is not empty.  We define the function $f = \phi \circ R$ where $\phi$ is a continuous linear form on $\A$ and $R(\lambda) = \prt{a-\lambda 1}^{-1} $ is the {\bf\index{resolvent} resolvent}. The resolvent respects the following formula: 
$$R(w)-R(z) = (w-z)R(w)R(z)$$
 which gives 
 $$f(w)-f(z) = (w-z)\,\phi\prt{R(w)R(z)}.$$
 So $f$ is holomorphic on the resolvent set and \mbox{$f'(z) = \phi\prt{R^2(z)}$}. If we suppose the spectrum to be empty, then $f$ must be holomorphic on the whole complex plan.

Now let us choose $z\in\setC$ such that \mbox{$\abs{z}>\norm{a}$}, then 
\begin{equation}\label{seriesR}
R(z) = - \frac 1 z \sum_0^\infty \frac{a^n}{z^n}
\end{equation}
 with the series being convergent, and 
 \begin{equation}\label{seriesf}
f(z) = - \frac 1 z \sum_0^\infty \frac{\phi(a^n)}{z^n} 
\end{equation}
 is of order $\frac 1 z$ and so bounded. By the Liouville theorem\footnote{Liouville theorem: Every holomorphic function $f : \setC \rightarrow \setC$ for which there exists a positive number $M$ such that \mbox{$\abs{f(z)} \leq M \ \forall z\in\setC$} is constant.}, if $f$ is holomorphic, $f$ is constant, which is absurd because $\phi$ is an arbitrary function. So the spectrum must not be empty.

The expression of the resolvent in (\ref{seriesR}) gives us the fact that $r(a)\leq \norm{a}$. Moreover, this series converges for \mbox{$\abs{z} > \lim_{n\rightarrow\infty} \norm{a^n}^{\frac 1n}$}, so \mbox{$z\in\sigma(a) \implies \abs{z} \leq \lim_{n\rightarrow\infty} \norm{a^n}^{\frac 1n}$.} By taking the supremum over $z\in\sigma(a)$ we have \mbox{$r(a) \leq \lim_{n\rightarrow\infty} \norm{a^n}^{\frac 1n}$}.

Then let us suppose $\abs{z}>r(a)$. We can remember that the function (\ref{seriesf}) is well defined in this case, so
$$\lim_{n\rightarrow\infty}\phi(\frac{a^n}{z^n})) = 0.$$
 $\phi$ is an arbitrary element of the dual space $\A'$ which is a Banach space for the norm \mbox{$\norm{\phi} = \sup_{a\in\A} \abs{\phi(a)}$}. If we consider the family 
$$\F = \set{\frac{a^n}{z^n} : n\in \setN},$$
 then we have that for each $\phi \in\A'$ the set \mbox{$\set{\abs{\phi(b)} : b\in\F}$} is bounded. For each $b$ we can build a map \mbox{$\hat b : \A' \rightarrow \setC : \phi \leadsto \phi(b)$} whose the norm is:
$$\Vert\hat b\Vert' = \sup_{{\phi\in\A'}\atop{\phi\neq 0}} \frac{\abs{\phi(b)}}{\norm{\phi}} = \norm{b} = \frac{\norm{a^n}}{\abs{z}^n}.$$

By the uniform boundedness principle\footnote{Uniform boundedness principle: If $F$ is a collection of continuous linear operators from the Banach space $X$ to the normed vector space $Y$ and if for all $x\in X$ we have \mbox{$\sup_{T\in F} \norm{T(x)} < \infty$}, then \mbox{$\sup_{T\in F} \norm{T} < \infty$}.} the set \mbox{$\set{\Vert \hat b \Vert' : b\in\F}$} is bounded, so the set \mbox{$\set{\frac{\norm{a^n}}{\abs{z}^n} : n\in\setN}$} is bounded by a constant $C$. From \mbox{$\norm{a^n} \leq \abs{z}^n C$}, we find  
$$\lim \sup_{n\rightarrow\infty} \norm{a^n}^{\frac 1n} \leq \abs z$$
 which implies 
 $$\lim \sup_{n\rightarrow\infty} \norm{a^n}^{\frac 1n} \leq r(a)$$
  as $\abs{z}$ can be arbitrary close to $r(a)$. Thus
  $$r(a) = \lim_{n\rightarrow\infty} \norm{a^n}^{\frac 1n}.$$

If $\A$ is a $C^*$-algebra and $a$ is normal, we can use the fact that $aa^*$ is Hermitian and the $C^*$-algebra property (\ref{Cstarsprop}) to get:
\begin{eqnarray*}
\norm{a^{2^{n}}}^2 &=  &\norm{a^{2^{n}} \prt{a^{2^{n}}}^*} \qquad\text{($C^*$-algebra property)}\\
&= & \norm{\prt{aa^*}^{2^{n}}}\qquad\text{($normal$)}\\
&= &\norm{  \prt{aa^*}^{2^{n-1}} \prt{\prt{aa^*}^{2^{n-1}}}^*  }  \qquad\text{(Hermitian )}\\
&= & \norm{  \prt{aa^*}^{2^{n-1}} }^2 \qquad\text{($C^*$-algebra property)}\\
&= & \norm{aa^*}^{2^n}  \qquad\text{(iteration from the second line)}\\
&= &\norm{a}^{2^{n+1}} \qquad\text{($C^*$-algebra property)}
\end{eqnarray*}

The same result occurs if $a$ is Hermitian or if the algebra $\A$ is commutative, since in this case the normal relation $aa^*=a^*a$ is automatically verified.

The obtained equality \mbox{$\norm{a} = \norm{a^{2^{n}}}^{\frac{1}{2^n}}$} gives the unique possible value of the limit \mbox{$r(a) = \lim_{n\rightarrow\infty} \norm{a^n}^{\frac 1n}$}, and so \mbox{$r(a) = \norm{a}$}.
\end{proof}

\vspace{0.8em}                                                      

\begin{lemma}[Gel'fand--Mazur theorem]\label{mazur}
A Banach algebra in which every non-zero element is invertible is isometrically isomorphic to $\setC$.
\end{lemma}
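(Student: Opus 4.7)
The plan is to use the non-emptiness of the spectrum, which was already established in the first paragraph of the proof of Lemma \ref{rayspect}. The strategy is to show that the only way every non-zero element can be invertible is if the algebra consists entirely of scalar multiples of the unit.

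First, I would observe that the hypothesis tacitly requires $\A$ to be unital, since the notion of invertibility presupposes a multiplicative identity $1$; moreover, the $C^*$-algebra conventions in use give $\norm{1}=1$ (this also follows in any unital Banach algebra from $\norm{1} = \norm{1 \cdot 1} \le \norm{1}^2$ combined with $1 \ne 0$).

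Next, I would fix an arbitrary $a \in \A$ and invoke the fact, already proved inside Lemma \ref{rayspect} via the Liouville argument on the resolvent $R(\lambda) = (a-\lambda 1)^{-1}$, that the spectrum $\sigma(a)$ is non-empty. Hence there exists some $\lambda_a \in \setC$ such that $a - \lambda_a 1$ is \emph{not} invertible. But by hypothesis every non-invertible element of $\A$ must be zero, so $a = \lambda_a 1$. The scalar $\lambda_a$ is unique (otherwise $1$ would be a zero divisor, contradicting invertibility), so we have a well-defined map $\Phi : \A \to \setC : a \leadsto \lambda_a$ whose inverse is $\lambda \leadsto \lambda 1$.

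Finally, I would check that $\Phi$ (or equivalently $\lambda \leadsto \lambda 1$) is a $*$-algebra isomorphism and an isometry. Linearity and multiplicativity are immediate from $\lambda 1 + \mu 1 = (\lambda+\mu)1$ and $(\lambda 1)(\mu 1) = (\lambda\mu) 1$. Bijectivity was shown in the previous step. The isometry property is just
$$\norm{\lambda 1} \eq \abs{\lambda}\,\norm{1} \eq \abs{\lambda},$$
which identifies $\A$ with $\setC$ as normed algebras. The only non-trivial ingredient was the non-emptiness of the spectrum, but since that has been dispatched inside the previous lemma, there is no real obstacle here; the whole argument is a clean corollary of Lemma \ref{rayspect}.
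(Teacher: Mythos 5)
Your proposal is correct and follows essentially the same route as the paper: the non-emptiness of the spectrum established in Lemma \ref{rayspect} forces every $a\in\A$ to be of the form $\lambda 1$, and the map $a\leadsto\lambda$ gives the isometric isomorphism with $\setC$ (you merely spell out the isometry and bijectivity checks that the paper leaves implicit). One small caveat: your parenthetical claim that $\norm{1}=1$ follows in any unital Banach algebra from $\norm{1}=\norm{1\cdot 1}\leq\norm{1}^2$ only yields $\norm{1}\geq 1$; the equality $\norm{1}=1$ is instead part of the paper's definition of a unital Banach algebra, which is the convention you should invoke for the isometry step.
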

\begin{proof}
As shown in the proof of the Lemma \ref{rayspect}, the spectrum of a non-zero element $a\in\A$ is non empty, so for every $a\in\A$, $a\neq0$, there exists $\lambda\in\sigma(a) \subset\setC$ such that $a-\lambda 1 = 0$, because $0$ is the only non-invertible element. $a\leadsto \lambda$ gives the isomorphism.
\end{proof}

\vspace{0.8em}                                                      

The following lemma gives the relation between the notion of spectrum of a particular element of $\A$ and the notion of spectrum of the algebra $\A$ itself.

\vspace{0.8em}                                                      

\begin{lemma}\label{equispect}
Let $\A$ be a unital, commutative Banach algebra and \mbox{$a\in\A$}. Then $\lambda\in\sigma(a)$ if and only if there exists $\chi\in\Delta(\A)$ such that $\chi(a) = \lambda$. Therefore, \mbox{$r(a) = \sup\set{\abs{\chi(a)} : \chi\in\Delta(\A)}$}.
\end{lemma}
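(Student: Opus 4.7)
The plan is to establish the equivalence in two directions, using the Gel'fand--Mazur theorem (Lemma~\ref{mazur}) to produce characters from maximal ideals.

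For the easy direction, I would suppose a character $\chi \in \Delta(\A)$ satisfies $\chi(a) = \lambda$. Since $\chi$ is a unital *-homomorphism (characters on a unital algebra send $1$ to $1$, as they are non-zero multiplicative), we have $\chi(a - \lambda 1) = 0$. If $a - \lambda 1$ were invertible with inverse $b$, applying $\chi$ to $(a-\lambda 1) b = 1$ would give $0 = \chi(a-\lambda 1) \chi(b) = \chi(1) = 1$, a contradiction. Hence $a - \lambda 1$ is non-invertible and $\lambda \in \sigma(a)$.

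The converse direction is the substantive step. Given $\lambda \in \sigma(a)$, the element $a - \lambda 1$ is non-invertible, so the principal ideal $\I = (a - \lambda 1)\A$ is proper (it does not contain $1$). By Zorn's lemma, $\I$ is contained in a maximal (two-sided) ideal $\M$. I would then invoke two standard facts for commutative unital Banach algebras: first, every maximal ideal $\M$ is automatically closed (its closure is again a proper ideal because the set of invertible elements is open, so by maximality $\overline{\M} = \M$); second, the quotient $\A / \M$ is then itself a commutative Banach algebra, and maximality ensures it is a field. Applying the Gel'fand--Mazur theorem (Lemma~\ref{mazur}), $\A / \M$ is isometrically isomorphic to $\setC$. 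Composing the quotient map with this isomorphism yields a character $\chi : \A \rightarrow \setC$ whose kernel contains $a - \lambda 1$, so $\chi(a) = \lambda$.

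The spectral radius formula then follows at once: combining the two directions gives $\sigma(a) = \{\chi(a) : \chi \in \Delta(\A)\}$, and taking suprema of absolute values yields $r(a) = \sup\{|\chi(a)| : \chi \in \Delta(\A)\}$. The main obstacle I anticipate is not the logical structure, which is direct, but rather the background results that must be cited cleanly, namely the openness of the invertible group (to ensure maximal ideals are closed) and the existence of maximal extensions via Zorn's lemma. Once these are in place, the Gel'fand--Mazur theorem does all the remaining work.
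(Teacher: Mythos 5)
Your proposal is correct and takes essentially the same route as the paper: the easy direction by applying the multiplicative character to $a-\lambda 1$ (the paper phrases this via $\ker(\chi)$ being a maximal ideal containing no invertibles), and the converse via Zorn's lemma, a maximal ideal containing $(a-\lambda 1)\A$, and the Gel'fand--Mazur theorem applied to the quotient. If anything you are slightly more careful than the paper, which passes to the Banach algebra $\A/\I$ with the quotient norm without noting that the maximal ideal must be closed (your remark about the openness of the invertible group supplies exactly this).
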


\begin{proof}
$\ker(\chi)$ is a two-sided ideal in $\A$ since \mbox{$\chi(ab) = \chi(a)\chi(b) = 0$}, \mbox{$\forall a\in\A$}, \mbox{$\forall b\in\ker(\chi)$}. Since $\chi$ is a linear functional, $\ker(\chi)$ is in particular a vector subspace of $\A$ of codimension one, so $\ker(\chi)$ is a maximal ideal in $\A$. Therefore, $\ker(\chi)$ cannot contain any invertible element. Now let us take $\chi\in\Delta(\A)$ with $\chi(a) = \lambda$. Then \mbox{$\chi(a-\lambda 1)=0$}, so $(a-\lambda 1)\in\ker(\chi)$. Because there is no invertible element in $\ker(\chi)$, $(a-\lambda 1)^{-1}$ does not exist, and $\lambda\in\sigma(a)$.

For the reverse, let us take $\lambda\in\sigma(a)$. Then by Zorn's lemma,\footnote{By Zorn's lemma, every unital, commutative algebra contains a maximal ideal.} \mbox{$\A(a-\lambda 1)$} is a proper ideal contained in a maximal ideal $\I$. The quotient space $\A / \I$ is a Banach algebra with the norm \mbox{$\norm{[b]} = \inf_{c\in[b]} \norm{c}$} and all non-zero elements are invertible, so $\A / \I$ is isomorphic to $\setC$ by the Gel'fand--Mazur theorem (Lemma \ref{mazur}). So there exists a *-homomorphism \mbox{$\chi : \A/\I \rightarrow \setC$} which can be extended to $\A$ by setting \mbox{$\chi(b) = \chi([b])$}. By construction, $\I=\ker(\chi)$, so \mbox{$a\in\I\implie \chi(a-\lambda 1) = 0\implie \chi(a) = \lambda$}.
\end{proof}

\vspace{0.5em}                                                      

\begin{lemma}\label{topo}
The spectrum $\Delta(\A)$ of a unital, commutative $C^*$-algebra $\A$ can be endowed with the weak-* topology -- called the Gel'fand topology -- in which $\Delta(\A)$ is a compact Hausdorff space.
\end{lemma}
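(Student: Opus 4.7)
The plan is to realize $\Delta(\A)$ as a weak-$*$ closed subset of the closed unit ball of the dual space $\A^*$, and then invoke the Banach--Alaoglu theorem. The proof therefore has three stages: (i) control the norms of characters so that $\Delta(\A) \subset B^* := \{\phi\in\A^* : \norm{\phi}\leq 1\}$; (ii) show $\Delta(\A)$ is weak-$*$ closed in $B^*$; (iii) check the Hausdorff property.

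First I would observe that every character is automatically continuous with $\norm{\chi}\leq 1$. Indeed, by Lemma \ref{equispect} every value $\chi(a)$ lies in $\sigma(a)$, and by Lemma \ref{rayspect} we have $\abs{\chi(a)}\leq r(a) \leq \norm{a}$. Since moreover $\chi(1)=1$ (because $\chi(1)=\chi(1)^2$ and $\chi\neq 0$), each character in fact has norm exactly one. In particular $\Delta(\A) \subset B^*$. Recall that the weak-$*$ topology on $\A^*$ is the coarsest topology for which every evaluation map $e_a : \phi \leadsto \phi(a)$, $a\in\A$, is continuous. By the Banach--Alaoglu theorem, $B^*$ endowed with this topology is a compact (Hausdorff) space.

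Next I would prove that $\Delta(\A)$ is closed in $B^*$. A non-zero linear functional $\chi\in\A^*$ belongs to $\Delta(\A)$ if and only if it satisfies the following three families of conditions:
\begin{itemize}
\item $\chi(ab) - \chi(a)\chi(b) = 0$ for all $a,b\in\A$;
\item $\chi(a^*) - \overline{\chi(a)} = 0$ for all $a\in\A$;
\item $\chi(1) = 1$.
\end{itemize}
Each of these equalities defines a weak-$*$ closed set, since it is the preimage of $\{0\}$ (or of $\{1\}$) under the continuous map $\chi\leadsto e_{ab}(\chi) - e_a(\chi) e_b(\chi)$, and similarly for the others. The set $\Delta(\A)$ is thus the intersection of a family of weak-$*$ closed sets with $B^*$, and is therefore a closed subset of the compact space $B^*$, hence itself compact. (Note that the unital condition $\chi(1)=1$ simultaneously excludes the zero functional, so no separate treatment is required.)

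Finally, the Hausdorff property of $\Delta(\A)$ is inherited from $\A^*$: if $\chi_1\neq\chi_2$ in $\Delta(\A)$, there exists $a\in\A$ with $\chi_1(a)\neq\chi_2(a)$, and disjoint neighbourhoods in $\setC$ of $\chi_1(a)$ and $\chi_2(a)$ pull back through the continuous evaluation $e_a$ to disjoint weak-$*$ open sets separating the two characters. The only mildly delicate step is verifying that the multiplicativity and involution-preserving conditions survive weak-$*$ limits; this is immediate once one writes them as continuity statements for the evaluation functionals, so I do not anticipate any genuine obstacle.
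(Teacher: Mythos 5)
Your proof is correct and follows essentially the same route as the paper: bound the characters in the closed unit ball via Lemmas \ref{rayspect} and \ref{equispect}, invoke Banach--Alaoglu, show the character conditions cut out a weak-$*$ closed subset, and get the Hausdorff property from the evaluation maps separating points. Your closedness step, phrased as an intersection of preimages of closed sets under the weak-$*$ continuous evaluations, is if anything slightly cleaner than the paper's convergent-sequence argument (which strictly should be phrased with nets), but the underlying idea is identical.
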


\begin{proof}
The characters of $\A$ form a subset of the unit sphere in $\A'$, the topological dual of $\A$. Indeed, by Lemma \ref{rayspect} and \ref{equispect},
$$ \norm{\chi} = \sup_{{a\in\A}\atop{a\neq 0}}\frac{\abs{\chi(a)}}{\norm{a}}  \leq \sup_{{a\in\A}\atop{a\neq 0}}\frac{  \sup{ \abs{ \tilde\chi(a) }: \tilde \chi \in\Delta(\A) }   }{\norm{a}}  = \sup_{{a\in\A}\atop{a\neq 0}}\frac{ r(a)  }{\norm{a}}  = 1$$
and because $\chi(1) = 1$ we must have $\norm{\chi} = 1$.

Now we can show that $\Delta(\A)$ is closed. Let us take a sequence of characters $\prt{\chi^\alpha} \rightarrow \chi$, then 
$$\chi(a^*b) = \lim_\alpha \chi^\alpha(a^*b) = \lim_\alpha \overline{\chi^\alpha(a)} \chi^\alpha(b)  = \overline{\chi(a)} \chi(b) $$
 and similarly for the addition, so $\chi$ is a character.

If we endow $\A'$ with the weak-* topology -- the weakest topology such that all functions \mbox{$x : \A' \rightarrow \setC : \phi \leadsto \phi(x)$} are continuous -- then the unit sphere of $\A'$ is compact (Banac--Alaoglu theorem), and since a closed subspace of a compact space is compact, $\Delta(\A)$ is compact for this topology.

Now we fix the set \mbox{$C = \set{\hat a : a \in \A}$}, the range of the Gel'fand transform. Each element in $C$ is a continuous function, by definition of the weak-* topology. Moreover, let us consider any $\chi_1,\chi_2\in\Delta(\A)$ with $\chi_1 \neq \chi_2$, there exists $a\in\A$ such that \mbox{$\chi_1(a) \neq \chi_2(a) \equi \hat a(\chi_1) \neq \hat a(\chi_2)$}, so $C$ separates the points of $\Delta(\A)$. Because $C$ is a set of continuous functions which separate the points of the spectrum, the topology is Hausdorff.
\end{proof}

\vspace{0.5em}                                                      

Now we can come into the proof of our Theorem \ref{GelNai}:

\begin{proof}[Proof of Gel'fand--Naimark's theorem]
The *-homomorphism property is obvious:
$$\widehat{a^*b}(\chi) = \chi(a^*b) = \overline{\chi(a)}\,\chi(b) = \overline{\hat a(\chi)} \,\hat b(\chi). $$

By definition of the supremum norm and using the Lemma \ref{equispect}, we have:
$$\norm{\hat a} = \sup_{\chi\in\Delta(\A)}\abs{\hat a(\chi)} = \sup_{\chi\in\Delta(\A)}\abs{\chi(a)} = r(a).$$

Then a simple application of Lemma \ref{rayspect} gives the isometry:
$$ r(a) = \norm{a} \limplie \norm{\hat a} = \norm{a}.$$

The Gel'fand transform is an injection since, if we have \mbox{$\hat a_1 = \hat a_2$}, by isometry \mbox{$\norm{\hat a_1 - \hat a_2} = \norm{a_1 - a_2}$}, hence $a_1 = a_2$.

To prove the surjection, let us remind that, by Lemma \ref{topo}, $\Delta(\A)$ is a compact Hausdorff space, and \mbox{$C = \set{\hat a : a \in \A}$} is a subset of $C_0\prt{\Delta(\A)}$ which separates the point of $\Delta(\A)$. Because $\A$ is unital, $\hat 1(\chi) = \chi(1) = 1$ is a non-zero constant  function. Then we can apply the Stone--Weierstrass theorem\footnote{We will use this version of the Stone--Weierstrass theorem: If $X$ is a locally compact Hausdorff space and $S$ is a subalgebra of $C_0(X)$, then $S$ is dense in $C_0(X)$ if and only if it separates points and vanishes nowhere, i.e.~for every $x\in X$, there is some $a\in S$ such that $a(x)\neq0$.} which leads to the fact that $C$ is dense in $C_0\prt{\Delta(\A)}$. We complete the proof by showing that $C$ is closed, so $C = C_0\prt{\Delta(\A)}$. Indeed, if $\prt{\hat a^\alpha}$ is a Cauchy sequence, by isometry \mbox{$\norm{\hat a^\alpha - \hat a^\beta} = \norm{a^\alpha - a^\beta}$}, then $\prt{a^\alpha}$ is a Cauchy sequence in the Banach space $\A$ which converges to $a\in\A$, and so $\prt{\hat a^\alpha}$ converges to $\hat a \in C$.
\end{proof}

The Gel'fand transform -- and its associated theorem -- gives us a new way to deal with geometrical aspects. We already know that, on every manifold, the space of continuous functions (vanishing at infinity if the manifold is not compact) is a $C^*$-algebra and that many informations about the geometrical system can be translated in the algebraic framework of continuous functions (cf. Section \ref{einsteinalg}), but the preservation of all the information was not guaranteed.\\

This is not the case anymore, since a topological manifold being a Hausdorff space, the Theorem \ref{GelNai} can be applied, so it is possible to recover the complete manifold directly from the algebra of continuous functions, just by considering the space of characters, as illustrated by the Figure \ref{Gelf_fig}.\\

\vspace{0.5em}                                                      

\begin{figure}[!ht]
\begin{center}
\includegraphics[width=10cm]{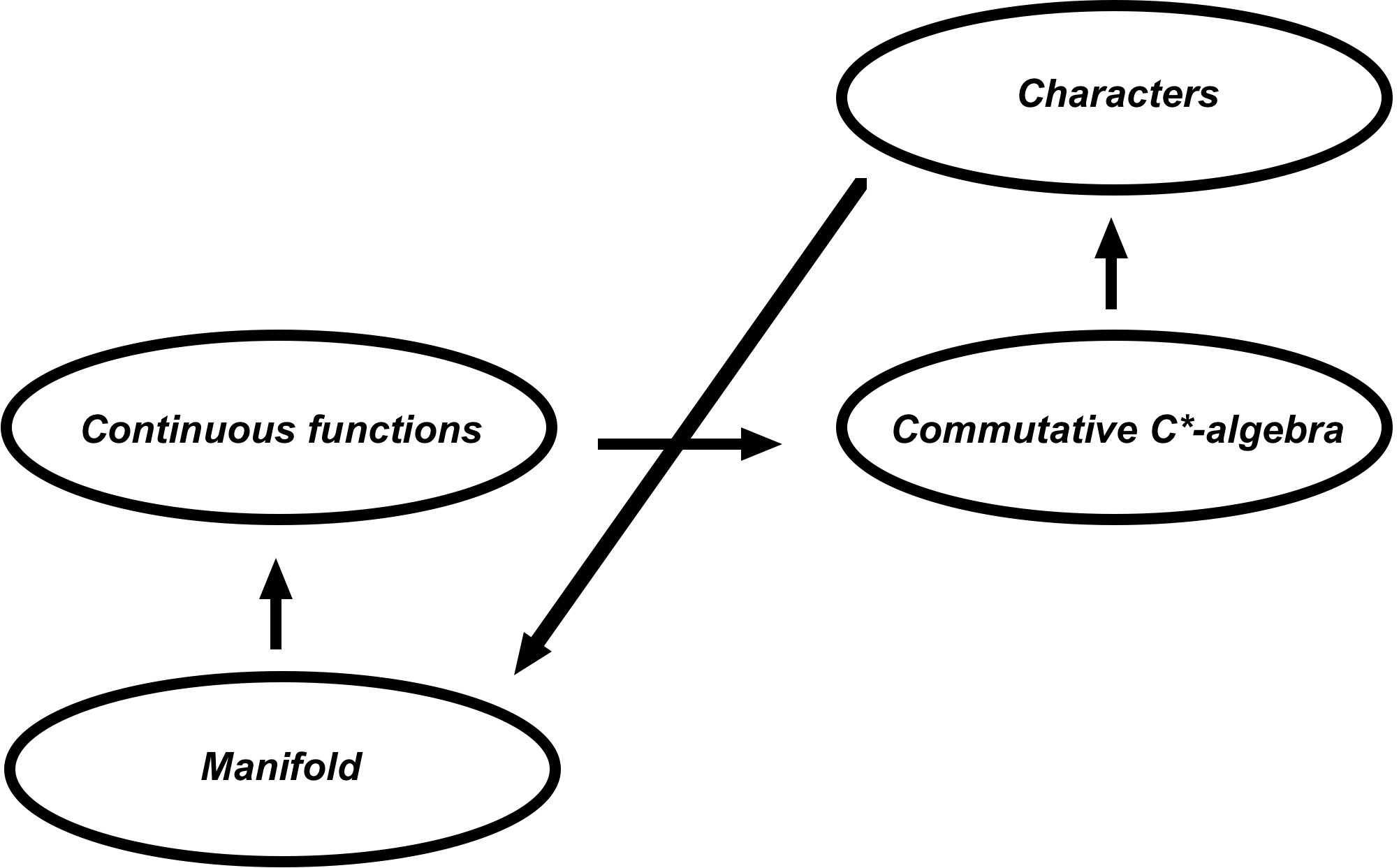}
\end{center}
\caption{How to recover a manifold with the Gel'fand transform}\label{Gelf_fig}
\end{figure}

\vspace{0.5em}                                                      

So it is possible to completely trade geometrical spaces for algebras, but in order to formalize this we need to introduce the language of categories.\\

\vspace{0.5em}                                                      

\begin{defn}
A {\bf\index{category} category} is a class of objects together with a class of morphisms between those objects with an associative composition between the morphisms and the existence of identity morphisms $1$ such that $f \circ 1 = f$ and $1 \circ f = f$ for every suitable morphism $f$.
\end{defn}

\begin{defn}
A covariant (contravariant) {\bf\index{functor} functor $\F$} between two categories $C$ and $D$ is a mapping that associates to each object $X\in C$ an object $\F(X)\in D$, and associates to each morphism \mbox{$f : X \rightarrow Y \in C$} a morphism \mbox{$\F(f) : \F(X) \rightarrow \F(Y) \in D$} (resp. \mbox{$\F(f) : \F(Y) \rightarrow \F(X) \in D$}) such that:
\begin{itemize}
\item $\F(\text{id}_X) = \text{id}_{\F(X)}$
\item $\F(g\circ f) = \F(g) \circ \F(f) $ (resp. $\F(g\circ f) = \F(f) \circ \F(g) $)
\end{itemize}
\end{defn}

\begin{defn}
Two categories $C$ and $D$ are {\bf\index{category!equivalent}  equivalent} if there exist two covariant functors $\F : C \rightarrow D$ and $\G : D \rightarrow C$ and two natural isomorphisms \mbox{$\epsilon : \F\circ\G \rightarrow \mathbb I_D$} and \mbox{$\eta : \mathbb I_C \rightarrow \G \circ\F$}, called natural transformations, where $\mathbb I_C$ and $\mathbb I_D$ are the identity functors. If the functors are contravariant, the categories are said to be {\bf\index{category!dually equivalent} dually equivalent}.
\end{defn}

\begin{lemma}\label{homoepsi}
Let us take the evaluation map \mbox{$\epsilon_x : C(X) \rightarrow X : f \leadsto f(x)$} on a compact Hausdorff space $X$. Then \mbox{$\epsilon_X : X \rightarrow \Delta(C(X)) : x \leadsto \epsilon_x$} is a homeomorphism between the space $X$ and the space $ \Delta(C(X) )$ with its Gel'fand topology.
\end{lemma}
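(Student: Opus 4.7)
The plan is to verify the map $\epsilon_X$ is a well-defined continuous bijection, and then exploit compactness of $X$ together with Hausdorffness of $\Delta(C(X))$ (established in Lemma \ref{topo}) to upgrade this to a homeomorphism for free. That last step is standard: any continuous bijection from a compact space to a Hausdorff space is automatically a homeomorphism, so I would not need to deal with $\epsilon_X^{-1}$ directly.

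First I would check that each $\epsilon_x$ genuinely lies in $\Delta(C(X))$: it is linear, multiplicative, involution-preserving since $\overline{f(x)} = f^*(x)$, and non-zero because $\epsilon_x(1)=1$. Injectivity of $x \mapsto \epsilon_x$ then follows from Urysohn's lemma, applicable because a compact Hausdorff space is normal: given $x \neq y$, one produces $f \in C(X)$ with $f(x) \neq f(y)$, hence $\epsilon_x \neq \epsilon_y$. For continuity, the Gel'fand topology on $\Delta(C(X))$ is the weak-* topology, which by definition is generated by the evaluation maps $\chi \mapsto \chi(f)$; so $\epsilon_X$ is continuous iff each composition $x \mapsto \epsilon_x(f) = f(x)$ is continuous, which is the very definition of $f \in C(X)$.

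The main obstacle, and the only truly substantive point, is surjectivity. Given a character $\chi \in \Delta(C(X))$, I want to produce an $x \in X$ such that $\chi = \epsilon_x$, and the plan is to show that $\ker(\chi)$ has a common zero in $X$. Suppose, for contradiction, that no such zero exists: then for every $x \in X$ there is $f_x \in \ker(\chi)$ with $f_x(x) \neq 0$, and by continuity $|f_x|^2 > 0$ on an open neighborhood $U_x$ of $x$. Compactness of $X$ yields a finite subcover $U_{x_1},\dots,U_{x_n}$, and the function
\[ g \eq \sum_{i=1}^n f_{x_i}^* f_{x_i} \]
is strictly positive on $X$, hence invertible in $C(X)$. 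But $\ker(\chi)$ is an ideal, so $g \in \ker(\chi)$, contradicting the fact (as used in the proof of Lemma \ref{equispect}) that a maximal ideal contains no invertible element. Thus there exists $x \in X$ with $f(x)=0$ for every $f \in \ker(\chi)$. Applying this to $f - \chi(f)\cdot 1 \in \ker(\chi)$, valid for any $f \in C(X)$, gives $f(x) = \chi(f)$, i.e.~$\chi = \epsilon_x$.

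Once injectivity, surjectivity and continuity are in hand, the closing step invokes the classical fact that $\epsilon_X$ is a closed map: any closed subset of $X$ is compact, its image is compact in the Hausdorff space $\Delta(C(X))$, hence closed. This gives the homeomorphism and concludes the proof.
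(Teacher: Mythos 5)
Your proof is correct, and it diverges from the paper's at the one substantive point, surjectivity. The paper argues that $\ker(\chi)$, being a maximal ideal, cannot be dense in $C(X)$, and then invokes the Stone--Weierstrass theorem to conclude that $\ker(\chi)$ must vanish at some point $x\in X$; you instead argue directly by compactness: assuming no common zero, you cover $X$ by finitely many sets on which some $f_{x_i}\in\ker(\chi)$ is nonvanishing, form the strictly positive (hence invertible) element $g=\sum_i f_{x_i}^* f_{x_i}$, and derive a contradiction with the fact that the proper ideal $\ker(\chi)$ contains no invertible element. Both routes then finish identically by applying $\chi$ and $\epsilon_x$ to $f-\chi(f)1\in\ker(\chi)$. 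Your version is more elementary and self-contained (it needs only compactness of $X$ and the ideal property, not Stone--Weierstrass, which in the paper's argument also silently uses that the unital *-subalgebra $\ker(\chi)+\setC 1$ separates points), while the paper's is shorter given that Stone--Weierstrass is already in its toolkit for the Gel'fand--Naimark theorem. You also make explicit the final upgrade from continuous bijection to homeomorphism via the compact-to-Hausdorff closed-map argument, a step the paper's proof leaves implicit; including it is a genuine improvement in completeness, since Hausdorffness of $\Delta(C(X))$ from Lemma \ref{topo} is exactly what it requires.
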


\begin{proof}
$\epsilon_X$ is trivially continuous by the Gel'fand topology. It is also injective since by Urysohn's lemma,\footnote{Urysohn's lemma: A topological space is normal if and only if any two disjoint closed subsets can be separated by a function. This lemma can be applied since every compact Hausdorff space is normal.} there exists a function $f\in C(X)$ with values $f(x_0)=0$ and $f(x_1)=1$ on two distinct points  $x_0, x_1 \in X$, so \mbox{$\epsilon_{x_0} (f) \neq \epsilon_{x_1} (f)$}.

To show the surjection, let us take $\chi \in \Delta(C(X))$. Then $\ker(\chi)$ is a maximal ideal of $C(X)$ which separates the points of $X$. Because $\ker(\chi)$ cannot be dense in $C(X)$, we can deduce by Stone--Weierstrass theorem that $\ker(\chi)$ vanishes somewhere, i.e.~there exists a point $x\in X$ such that \mbox{$f(x) = \epsilon_x(f) = 0 \ \forall f\in\ker(\chi)$}. So the spaces $\ker(\chi)$ and $\ker(\epsilon_x)$ are identical since they are both maximal ideals, and for all $f \in C(X)$, \mbox{$\chi\prt{ f - \chi(f) 1} = 0 \implies \epsilon_x\prt{ f - \chi(f) 1} = 0 \implies \epsilon_x(f) = \chi(f)$}.
\end{proof}

\begin{prop}\label{equivgelf}
The category of compact Hausdorff spaces (with continuous maps) and the category of unital, commutative $C^*$-algebras (with unital *-homomorphisms) are dually equivalent.
\end{prop}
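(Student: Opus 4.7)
The plan is to build two contravariant functors between the categories and to promote the Gel'fand transform and the evaluation map already constructed into natural isomorphisms.

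First, I would define the functors explicitly. Let $\F$ send a compact Hausdorff space $X$ to the unital commutative $C^*$-algebra $C(X)$, and a continuous map $\varphi:X\to Y$ to the pullback $\F(\varphi):C(Y)\to C(X)$, $f\mapsto f\circ\varphi$. I would check that $\F(\varphi)$ is a unital $*$-homomorphism (all operations on $C(X)$ are pointwise, so this is routine) and that $\F(\mathrm{id}_X)=\mathrm{id}_{C(X)}$, $\F(\psi\circ\varphi)=\F(\varphi)\circ\F(\psi)$. In the other direction, I would let $\G$ send a unital commutative $C^*$-algebra $\A$ to its spectrum $\Delta(\A)$, which is compact Hausdorff by Lemma \ref{topo}. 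For a unital $*$-homomorphism $\pi:\A\to\B$, set $\G(\pi):\Delta(\B)\to\Delta(\A)$, $\chi\mapsto\chi\circ\pi$. Unitality of $\pi$ ensures $\chi\circ\pi$ is non-zero, hence a character; continuity in the Gel'fand (weak-$*$) topology is immediate because for each $a\in\A$ the map $\chi\mapsto(\chi\circ\pi)(a)=\chi(\pi(a))$ is continuous on $\Delta(\B)$ by definition of the weak-$*$ topology.

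Next, I would produce the two natural isomorphisms. The family $\eta_X:=\epsilon_X:X\to\Delta(C(X))$, $x\mapsto\epsilon_x$, is a homeomorphism by Lemma \ref{homoepsi}. The family $\varepsilon_\A:=\,\bigvee_\A:\A\to C(\Delta(\A))$, $a\mapsto\hat a$, is an isometric $*$-isomorphism by the Gel'fand--Naimark theorem (Theorem \ref{GelNai}); note that for unital $\A$ the algebra $C(\Delta(\A))=C_0(\Delta(\A))$ since $\Delta(\A)$ is compact. So both families are isomorphisms in their respective categories pointwise.

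Then I would verify naturality. For $\eta$: given $\varphi:X\to Y$ continuous, I must show $(\G\circ\F)(\varphi)\circ\eta_X=\eta_Y\circ\varphi$, i.e.\ for every $x\in X$ and $f\in C(Y)$, $\bigl(\epsilon_x\circ\F(\varphi)\bigr)(f)=\epsilon_{\varphi(x)}(f)$. Both sides equal $f(\varphi(x))$, so naturality holds. For $\varepsilon$: given a unital $*$-homomorphism $\pi:\A\to\B$, I must show $\F(\G(\pi))\circ\varepsilon_\A=\varepsilon_\B\circ\pi$, i.e.\ for every $a\in\A$ and $\chi\in\Delta(\B)$, $\hat a\bigl(\G(\pi)(\chi)\bigr)=\widehat{\pi(a)}(\chi)$. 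Unwinding: the left side is $\hat a(\chi\circ\pi)=(\chi\circ\pi)(a)=\chi(\pi(a))=\widehat{\pi(a)}(\chi)$, which matches.

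Having $\eta$ and $\varepsilon$ as natural isomorphisms fulfils the definition of dually equivalent categories, completing the proof. I expect nothing genuinely hard here, because the two heavy results (Gel'fand--Naimark and the homeomorphism of Lemma \ref{homoepsi}) have already been proved; the only places that require a touch of care are (i) checking that unitality is needed so that $\chi\circ\pi$ stays non-zero (hence the restriction to unital morphisms on the algebra side), and (ii) noting that $C_0(\Delta(\A))=C(\Delta(\A))$ when $\A$ is unital, so $\varepsilon_\A$ genuinely lands in the right object of the category. The rest is a short pair of diagram chases.
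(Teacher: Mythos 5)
Your proof is correct and follows essentially the same route as the paper: the contravariant functors you call $\F$ and $\G$ are the paper's $\C$ and $\Sigma$, and the natural isomorphisms are likewise the evaluation homeomorphism of Lemma \ref{homoepsi} and the Gel'fand transform of Theorem \ref{GelNai}. Your explicit verification of the two naturality squares is a welcome addition, since the paper leaves those diagram chases implicit.
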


\begin{proof}
If \mbox{$f : X \rightarrow Y$} is a continuous mapping between two compact Hausdorff spaces, let us define the mapping $\C$ by \mbox{$\C f : C(Y) \rightarrow C(X) : h \leadsto h \circ f$}. Then $\C$ is a contravariant functor from the category of compact Hausdorff spaces and continuous maps to the category of unital, commutative $C^*$-algebras and unital *-homomorphisms.

Moreover, if $\phi$ is a unital *-homomorphism between two commutative $C^*$-algebras $\A$ and $\B$, the mapping \mbox{$\Sigma$} defined by \mbox{$\Sigma \phi : \Delta(\B) \rightarrow \Delta(\A) : \chi \leadsto \chi \circ \phi$} is also a contravariant functor.

From Lemma \ref{homoepsi}, \mbox{$\epsilon_X : X \rightarrow \Delta(C(X)) : x \leadsto \epsilon_x$} is a homeomorphism between the space $X$ and the space $ \Delta(C(X) )$. So $\epsilon$ is a natural transformation between the identity functor on the category of compact Hausdorff spaces and  the functor $\Sigma \circ \C$.

The natural transformation between the identity functor on the category of unital, commutative $C^*$-algebras and the functor $\C\circ \Sigma$ is just given by the Gel'fand transform $\bigvee$, with \mbox{$\bigvee_\A : \A \rightarrow C\prt{\Delta(\A)} : a \leadsto \hat a$}.
\end{proof}

\vspace{0.5em}                                                      

This result cannot be directly generalized to the non-unital case, since the mapping \mbox{$h \leadsto h \circ f$} does not necessarily send functions vanishing at infinity to functions vanishing at infinity. We need to introduce further notions.

\vspace{0.5em}                                                      

\begin{defn}
{A \bf\index{proper map} proper map} between two locally compact Hausdorff spaces is a map such that inverse images of compact subsets are compact.
\end{defn}

Proper maps send functions vanishing at infinity to functions vanishing at infinity. Indeed, if $h\in C_0(Y)$ and $f : X \rightarrow Y$ is a proper map, for $\epsilon > 0$ and by definition of $C_0(Y)$, there exists a compact set $K \subset Y$ such that $\abs{h(y)} < \epsilon$ if $y\notin K$, and so $K' = f^{-1}(K) \subset X$ is a compact such that $\abs{(h\circ f)(x)} < \epsilon$ if $x\notin K'$.

\vspace{0.5em}                                                      

\begin{defn}
The {\bf\index{compactification!Alexandroff} Alexandroff compactification} of a locally compact Hausdorff space $X$ is the compact Hausdorff space \mbox{$X^+ = X\ \cup\ \set{\infty}$} with the suitable extended topology.
\end{defn}

\vspace{0.5em}                                                      

The space of continuous functions $C(X^+)$ on the compactified space $X^+$ is no more than the unitization $\prt{C_0(X)}^+$ of the algebra of functions vanishing at infinity on the initial non-compact space $X$ \cite{Wegge}.

\vspace{0.5em}                                                      

\begin{defn}
A {\bf\index{pointed compact space} pointed compact space} is a pair \mbox{$(X,\star)$} where $X$ is a compact Hausdorff space and $\star\in X$ a particular element called the basepoint. A morphism $f$ between two pointed compact spaces \mbox{$(X,\star_X)$}  and \mbox{$(Y,\star_Y)$} is such that  $f(\star_X) = \star_Y$. The space $C(X,\star)$ is defined as the set of functions $f\in C(X)$ where $f(\star) = 0$.
\end{defn}

\vspace{0.5em}                                                      

\begin{prop}\label{essensurj}
The Alexandroff compactification defines a functor between locally compact Hausdorff spaces (with proper maps) to pointed compact spaces (with basepoint morphisms) which is surjective for the objects (so the functor is called essentially surjective). 
\end{prop}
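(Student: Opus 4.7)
The plan is to specify the functor explicitly on both objects and morphisms, verify continuity and functoriality, and finally exhibit a preimage in the source category for every pointed compact space. On objects I would send a locally compact Hausdorff space $X$ to the pair $(X^+,\infty)$ with $\infty$ serving as the basepoint. On morphisms, for a proper map $f : X \rightarrow Y$, I would define its image $f^+ : X^+ \rightarrow Y^+$ by $f^+(x) = f(x)$ for $x\in X$ and $f^+(\infty_X) = \infty_Y$, which by construction preserves the basepoint.

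The main technical point is to verify that $f^+$ is continuous on $X^+$, and this is precisely where properness of $f$ enters. Recall that the open sets of the Alexandroff compactification are the open sets of $X$ together with the sets of the form $\set{\infty} \cup (X\setminus K)$ with $K \subset X$ compact. For any open $V \subset Y$, continuity of $f$ gives that $(f^+)^{-1}(V) = f^{-1}(V)$ is open in $X$ and hence in $X^+$. For a basic neighborhood $\set{\infty_Y} \cup (Y \setminus K)$ of $\infty_Y$ with $K \subset Y$ compact, the preimage equals $\set{\infty_X} \cup (X \setminus f^{-1}(K))$, which is an open neighborhood of $\infty_X$ precisely because $f^{-1}(K)$ is compact by the properness of $f$. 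I expect this step to be the main place where the hypothesis actively works. Functoriality is then automatic: $(\text{id}_X)^+ = \text{id}_{X^+}$ is immediate, and for composable proper maps $f : X \rightarrow Y$, $g : Y \rightarrow Z$ we have $(g\circ f)^+(x) = g(f(x)) = g^+(f^+(x))$ on $X$, while both sides send $\infty_X$ to $\infty_Z$.

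For essential surjectivity, given any pointed compact space $(Y,\star)$, I would set $X = Y \setminus \set{\star}$ equipped with the subspace topology. Since $Y$ is compact Hausdorff and the singleton $\set{\star}$ is closed, $X$ is an open subspace of a compact Hausdorff space, hence locally compact Hausdorff. It then remains to check that the natural bijection $X^+ \rightarrow Y$ acting as the identity on $X$ and sending $\infty$ to $\star$ is a homeomorphism, which reduces to matching the Alexandroff topology on $X^+$ with the ambient topology of $Y$: neighborhoods of $\star$ in $Y$ are exactly the complements of closed-hence-compact subsets of $X$, which coincide with the declared neighborhoods of $\infty$ in $X^+$. This exhibits $(Y,\star)$ as isomorphic to the image of $X$ and concludes essential surjectivity.
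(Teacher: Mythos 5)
Your proof is correct and follows essentially the same route as the paper: extend proper maps over the point at infinity, and recover any pointed compact space $(Y,\star)$ as the compactification of $Y\setminus\set{\star}$. You simply spell out the continuity and homeomorphism checks that the paper leaves implicit, which is fine.
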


\vspace{0.5em}                                                      

\begin{proof}
We just have to choose $\infty$ as the basepoint, and extend any continuous proper map $f = X \rightarrow Y$ to a morphism \mbox{$f^+ : X^+ \rightarrow Y^+$} by setting $f^+(\infty) = \infty$. Conversely, if \mbox{$(X,\star)$} is a pointed compact space, then \mbox{$X \setminus \set{\star}$} is a locally compact Hausdorff space, and the restriction of any morphism to \mbox{$X \setminus \set{\star}$} is proper since every map between compact spaces are proper, so the functor is essentially surjective.
\end{proof}

\vspace{0.5em}                                                      

\begin{remark}\label{remessensurj}
Let us notice that not every morphisms between pointed compact spaces can be recovered from a map between locally compact Hausdorff spaces. So the two categories are not equivalent.
\end{remark}

\vspace{0.5em}                                                      

\begin{prop}\label{equivgelfnu}
The category of pointed compact spaces (with morphisms) and the category of commutative $C^*$-algebras (with *-homomorphisms) are dually equivalent.
\end{prop}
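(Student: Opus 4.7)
The plan is to reduce to the unital case of Proposition \ref{equivgelf} by systematically using the unitization $\A\leadsto\A^+$ on the algebra side and the basepoint structure on the topological side. First I would define the two contravariant functors. Send a pointed compact space $(X,\star)$ to the (possibly non-unital) commutative $C^*$-algebra $C(X,\star)$, and send a basepoint morphism $f:(X,\star_X)\rightarrow(Y,\star_Y)$ to the *-homomorphism $\C f:C(Y,\star_Y)\rightarrow C(X,\star_X):h\leadsto h\circ f$, which is well defined because $f(\star_X)=\star_Y$ forces $(h\circ f)(\star_X)=0$. Conversely, send a commutative $C^*$-algebra $\A$ to the pointed compact space $(\Delta(\A^+),\chi_\infty)$, where $\chi_\infty$ is the canonical character $(a,\lambda)\leadsto\lambda$ of $\A^+=\A\oplus\setC$ whose kernel is $\A$. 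A *-homomorphism $\phi:\A\rightarrow\B$ extends uniquely to the unital *-homomorphism $\phi^+:\A^+\rightarrow\B^+:(a,\lambda)\leadsto(\phi(a),\lambda)$, and this lifts to $\Sigma\phi:\Delta(\B^+)\rightarrow\Delta(\A^+):\chi\leadsto\chi\circ\phi^+$, which sends the basepoint to the basepoint since $\chi_\infty\circ\phi^+=\chi_\infty$.

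The key structural observation is that these two unitization operations are mutually compatible: $C(X,\star)^+\cong C(X)$ as unital $C^*$-algebras (the added unit being the constant function $1$), and conversely $\A^+$ is the unital algebra whose spectrum is the one-point ``compactification at $\chi_\infty$''. So the natural transformations I need are essentially inherited from Gel'fand--Naimark. On the topological side, the evaluation map $\epsilon_X:X\rightarrow\Delta(C(X))\cong\Delta(C(X,\star)^+)$ is a homeomorphism by Lemma \ref{homoepsi}, and $\epsilon_\star$ annihilates $C(X,\star)$, so $\epsilon_\star=\chi_\infty$, giving an isomorphism of pointed compact spaces. On the algebra side, the Gel'fand transform $\A^+\rightarrow C(\Delta(\A^+))$ of Theorem \ref{GelNai} restricts to an isometric *-isomorphism $\A\rightarrow C(\Delta(\A^+),\chi_\infty)$, because an element $(a,0)\in\A^+$ has Gel'fand transform $\widehat{(a,0)}$ vanishing at $\chi_\infty$, and conversely any $h\in C(\Delta(\A^+),\chi_\infty)$ is the transform of some $(a,0)$ with $a\in\A$.

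The main obstacle will be checking naturality and, crucially, that the functor $\Sigma$ really is well-defined on morphisms without a unitality hypothesis. In contrast to the unital situation of Proposition \ref{equivgelf}, a *-homomorphism $\phi:\A\rightarrow\B$ need not send an approximate unit to anything nice, but the unitization $\phi^+$ is automatic and canonical, which is precisely what avoids the obstruction highlighted in Remark \ref{remessensurj}: the category of locally compact spaces with proper maps was too restrictive because proper maps correspond only to those unital morphisms $\phi^+$ which send $\A$ into $\B$, whereas arbitrary *-homomorphisms between $\A$ and $\B$ correspond to arbitrary basepoint-preserving morphisms between $(\Delta(\A^+),\chi_\infty)$ and $(\Delta(\B^+),\chi_\infty)$. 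Once this bijection between morphisms is verified, functoriality, fullness and faithfulness follow at once from Proposition \ref{equivgelf} applied to the unitized algebras, and dual equivalence of the two categories is established.
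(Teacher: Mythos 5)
Your proposal is correct and follows essentially the same route as the paper: both reduce to the unital duality of Proposition \ref{equivgelf} by unitizing, with the basepoint realized as the canonical character $\pi_\A = \chi_\infty$ on $\A^+$ and the identification $C(X,\star)^+ \cong C(X)$. The only difference is presentational: the paper factors the argument through the slice category of unital, commutative $C^*$-algebras over $\setC$ and composes two equivalences, whereas you write out the composite functors and natural isomorphisms directly, which makes the morphism correspondence (and the contrast with Remark \ref{remessensurj}) more explicit.
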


\vspace{0.5em}                                                      

\begin{proof}
In fact there are two equivalences.

First let us define the {\it slice category of unital, commutative \mbox{$C^*$-algebras}} as the category whose objects are the pairs $(\A,f)$ with $\A$ a unital, commutative $C^*$-algebra, $f : \A \rightarrow \setC$ a homomorphism, and morphisms \mbox{$g : (\A,f) \rightarrow (\A',f')$} such that \mbox{$g : \A \rightarrow \A'$} is a morphism with $g \circ f = f'$. Then the unitization \mbox{$^+ : \A \rightarrow (\A^+, \pi_\A)$} with the unitized algebra  $\A^+$ and the homomorphism \mbox{$\pi_\A : \A^+ \rightarrow \setC : (a,z) \leadsto z$} is a functor from the category of commutative $C^*$-algebras to the slice category of unital, commutative $C^*$-algebras. The reverse functor is just given by the kernel application $\A = \ker(\pi_\A)$. So these two categories are equivalent.

To find the second equivalence, let us remark that every pointed compact space $(X,\star)$ can be seen as a couple \mbox{$(X,f)$} with $X$ a compact Hausdorff space and a morphism \mbox{$f : X \rightarrow \star$} with value to a point $\star \in X$. If we use the  functors $\C$ and $\Sigma$ defined in the proof of the Proposition \ref{equivgelf} we have that $\C(\star) = \setC$, so these functors define a dually equivalence between the category of pointed compact spaces and the slice category of unital, commutative $C^*$-algebras.

We conclude by the fact that the composition of two equivalences of categories is an equivalence.
\end{proof}

\vspace{0.5em}                                                      

The Proposition \ref{equivgelfnu} generalizes the Proposition \ref{equivgelf} to non-unital algebras. The complete equivalence is between the pointed compact spaces and the commutative $C^*$-algebras. Since every pointed compact space arises as the compactification of a locally compact Hausdorff space (Proposition \ref{essensurj}), we would like to extend the equivalence between locally compact Hausdorff space and commutative $C^*$-algebras, but the Remark \ref{remessensurj} tells us that it is not possible, because non-unital $C^*$-algebras introduce possible morphisms with no correspondence as morphisms between locally compact Hausdorff spaces. However, the Gel'fand theorem is still valid in this case to give an isomorphism between points and characters.\\

\vspace{0.5em}                                                      

In this section we have seen that the Gel'fand transform is a very useful mathematical tool creating some correspondences between geometrical spaces and algebraic spaces with $C^*$-algebra structure, especially because the correspondence created is an isomorphism and can be used to translate information from geometry to algebra as well as from algebra to geometry without loss of information. In the remaining of the chapter, we will illustrate two recent mathematical tools using this correspondence between geometry and algebra in order to fix mathematical frameworks which can be useful for unifying physical theories. The first one is the theory of quantum gravity developed to solve the problem of quantization of gravitation, and the second will be the general framework of noncommutative geometry.


\newpage

\section{Quantum gravity}\label{QG}

This section will be an illustration of our current main concern: can new mathematical tools combining geometry and algebra be developed to create more general spaces representing gravitation with additional physical aspects as quantization or the description of other fundamental interactions? One of the answers to this problem is given by a recent theory called {\it quantum gravity}, often called {\it loop quantum gravity}  from its initial formulation in terms of loops.\\

This theory arose late 1980s from A.~Ashtekar, C.~Rovelli and L.~Smolin, and provides a way to quantify Einstein's theory of general relativity. The mathematical formalism we present here -- known as the differential formalism of the theory -- was set in 1992-1994 and is better defined mathematically than the initial one.  As we present this theory as an illustration, we will not go into proofs and details but only give the main construction of the mathematical formalism behind. For more development we refer the reader to Rovelli's book \cite{Rov} for the physical aspects and to Thiemann's book \cite{Thiemann} for the mathematical side.\\

Let $\sigma$ be a locally compact 3-dimensional manifold with a Riemannian metric $g_{ab}$. We consider the structure of a $SU(2)$-bundle over $\sigma$, and define the triad field $e^i_a$  by $g_{ab} = \delta_{ij} e^i_a e^j_b$ (with $i=1,2,3$ relative to the Lie algebra of $SU(2)$).

\begin{defn}
The {\bf\index{Ashtekar variables} Ashtekar variables}\footnote{A pedagogical introduction of these variables can be found in \cite{Baez}.} are the two sets of variables $A^i_a$ and $E^a_i$ where:
\begin{itemize}
\item $A^i_a$ is a $SU(2)$ gauge connection on $\sigma$ (a 1-form with value in the Lie algebra of $SU(2)$)
\item $E^a_i$ is the {\it electrical field} given by
$$E^a_i = \frac 12 \epsilon_{ijk} \epsilon^{abc} e^j_b e^k_c$$
where $\epsilon_{ijk} $ is the Levi--Civita antisymmetric symbol.
\footnote{
$\epsilon_{ijk} =
\begin{cases}
  1 &\text{ for even permutations $(i,j,k)$ of } (1,2,3)\\
 -1 &\text{ for odd permutations}\\
 0 &\text{ otherwise}
\end{cases}$
}
\end{itemize}
\end{defn}

$A^i_a$ and $E^a_i$ respect the canonical structure given by the Poisson algebra:
\begin{equation*}
\begin{split}
&\{E^a_i(x), E^b_j(y)\} = 0,\quad \{A^i_a(x),A^j_b(y)\} = 0,\\
&\{E^a_i(x),A^j_b(y)\} = \frac{8\pi G}{c^4} \gamma\,\delta^a_b \delta^j_i \delta(x,y).
\end{split}
\end{equation*}
where $\gamma$ is a free (usually real) parameter of the theory, called the Immirzi parameter.\\

The classical equations of this theory can be derived from general relativity using a 3+1 decomposition of the spacetime ($\sigma\times\setR$) and the Hamiltonian formalism of general relativity (ADM formalism \cite{ADM}). These equations can be written in 3 groups:
\begin{itemize}
\item $G_j = D_a E^a_j = 0$
\item $H_a = F^j_{ab} E^b_j = 0$
\item $H =\prt{F^j_{ab} +(\gamma^2+1)\epsilon_{jmn}(A^m_a-\Gamma^m_a)(A^n_b-\Gamma^n_b)}\epsilon_{jkl} E^a_k E^b_l = 0$
\end{itemize}
with 
\begin{itemize}
\item $D_a E^a_j= \partial_a E^a_j+ \epsilon_{jkl} A^k_a E^a_l\ $ the covariant derivative relative to $A^j_a$
\item $F^j_{ab} = \partial_aA^j_b - \partial_bA^j_a + \epsilon_{jkl}A^k_aA^l_b\ $ the curvature relative to $A^j_a$
\item $\Gamma^i_{a} = \frac 12 \epsilon_{ijk} e^b_k \prt{\partial_{b}\,e^j_a - \partial_{a}\,e^j_b + e^{cj} e_{al} \partial_{b}\, e^l_c}\ $ the connection compatible with $e^j_a$ (spin connection)\\
\end{itemize}

The first two equations $G_j = 0$ and $H_a=0$ describe the geometry of the 3-dimensional manifold $\sigma$, by imposing respectively invariance under $SU(2)$ local transformation and diffeomorphism. The last equation \mbox{$H=0$} is the Hamiltonian constraint and can be seen as a fitting condition of slices of 3-dimensional manifolds $\sigma$ under a fourth one with additional variables given by the Lagrange multipliers of those constraints.\\

So the classical variables of quantum gravity are $SU(2)$ connections with electrical fields as momentum variables. We will note the space of all smooth $SU(2)$ connections by $\A$. The main idea of the theory is to perform a canonical quantization on this Hamiltonian system, so to replace connections $A^i_a$ by states, i.e.~functionals $\psi$ defined on $\A$ and belonging to a Hilbert space $\H$. Moreover the constraints must have a representation in terms of Hermitian operators on $\H$, especially the Hamiltonian constraint giving the Wheeler--DeWitt equation $\hat H \psi = 0$. Then those constraints would define a subspace $\H_{phys} \subset \H$ of physical states.\\

One could choose the space $C(\A)$ of functions on the space of connections with a kind of smoothness condition and with a scalar product defined on it as the space of states, but constraints of diffeomorphism and $SU(2)$ invariance are hardly represented in this case. Instead, the Hilbert space is chosen to be $\H = L^2(\bar\A,d\mu)$ with $\bar\A$ a more larger space of distributional connections, and $\mu$ a suitable measure that should be defined.

\begin{defn}
Let $c : [0,1] \rightarrow \sigma$ be a curve in $\sigma$, $A\in\A$ a particular connection and the function $H_{c,A} : [0,1] \rightarrow SU(2)$ the unique solution of the differential equation 
$$\dot H_{c,A} (t) = H_{c,A}(t) \; A^j_a(c(t))  \; \dot c^a(t) \; \tau_j$$
 with the initial condition \mbox{$H_{c,A}(0) = 1_{SU(2)}$} and $\tau_j$ the generators of the Lie algebra of $SU(2)$. Then \mbox{$A(c) = H_{c,A} (0) \in SU(2)$} is the {\bf\index{holonomy} holonomy} operator (or parallel transport).
\end{defn}

The holonomy has a good behaviour under gauge and spatial transformations, so it is a good candidate for the construction of the Hilbert space. The definition of the holonomy can be extended to any piecewise smooth curve by taking the product of the holonomy on each piece, and is independent of any reparametrization or retracing if the orientation is conserved. So we can consider the holonomy to be defined on paths (equivalence class under reparametrization or retracing with a fixed orientation), with $\P$ the set of all paths on $\sigma$. Moreover, for any $A\in\A$ and any $p,p'\in\P$, if we consider composition and inversion of paths, we have that
\begin{equation}\label{groupoid}
A(p\circ p \prime) = A(p)\,A(p\prime) \ \text{ and }\ A(p^{-1}) = \prt{A(p)}^{-1}.
\end{equation}

Now we can remark that the 3-dimensional manifold $\sigma$ can be interpreted as a category, the points of $\sigma$ being the objects and $\P$ being the set of morphisms. In fact, each of those morphisms being an isomorphism, the category is called a {\bf\index{groupoid} groupoid}. We will denote this groupoid also by $\P$. By (\ref{groupoid}), any  $A\in\A$ defines a groupoid morphism, but not every groupoid morphism can be expressed in terms of a connection $A$. So the space
$$\bar \A = Hom(\P,SU(2))$$
of all groupoid morphisms from the set of paths in $\sigma$ onto the gauge group $SU(2)$ is a distributional extension of $\A$.\\

Now we have to set a measure on this space. For that we need to introduce some graph notions. An oriented {\bf\index{graph} graph} $\gamma$ in $\sigma$ is a set of vertices $V(\gamma)$ that are points on $\sigma$ together with a set of edges $E(\gamma)$ that are paths between elements of $V(\gamma)$. Each graph is in fact a subgroupoid of $\P$, so we can define the elements:
$$\X_\gamma = Hom(\gamma,SU(2)).$$

The notion of subgraph defines a partially ordered relation $\gamma\prec\gamma'$ which induces  a notion of projection on $\set{\X_\gamma}$ by
$$p_{\gamma'\gamma} : \X_{\gamma'} \rightarrow \X_{\gamma} : x_{\gamma'} \leadsto \prt{x_{\gamma'}}_{\vert\gamma} \qquad \forall \gamma\prec\gamma'$$
so we can define the projective limit:
$$\overline\X = \set{(x_\gamma)_{\gamma} : p_{\gamma'\gamma}(x_{\gamma'}) = x_\gamma \ \forall\,\gamma\prec \gamma'} \ \subset\ \bigotimes_{\gamma}\X_\gamma.$$
This space is just isomorphic to $\bar\A$ by the map
\begin{equation}\label{isoAX}
\bar\A\rightarrow \bar\X :  \mathfrak H \leadsto \prt{\mathfrak H_{\arrowvert_\gamma}}_{\gamma}
\end{equation}
so we can define our measure in the space $\bar\X$ instead.\\

Let us take $C(\X_\gamma)$ the space of functions on $\X_\gamma$ and their union \mbox{$\cup_{\gamma}C(\X_\gamma)$}. If we take two functions \mbox{$f,f'\in\cup_{\gamma}C(\X_\gamma)$}, then $\exists \,\gamma,\gamma'$ such that \mbox{$f\in C(\X_\gamma)$} and \mbox{$f'\in C(\X_{\gamma'})$} and we can define an equivalence relation by
$$f\sim f' \lequi p^*_{\gamma''\gamma} f = p^*_{\gamma''\gamma'} f'\quad \forall\, \gamma, \gamma' \prec \gamma''$$
where $p^* $ is the pullback map.\footnote{i.e.~such that \mbox{$(p^*_{\gamma''\gamma} f )(x_{\gamma''}) = f(p_{\gamma''\gamma}(x_{\gamma''}))$}. }\\

The quotient space
$$Cyl(\overline\X) =  \mbox{$\cup_{\gamma}C(\X_ \gamma)$}/\sim$$
is called the space of {\bf\index{function!cylindrical} cylindrical functions} and its closure $\overline{Cyl(\overline\X)}$ is a $C^*$-algebra. We have then two isomorphisms, one given by the Gel'fand transform
$$\bigvee : \overline{Cyl(\overline\X)} \rightarrow C(\Delta(\overline{Cyl(\overline\X)}))$$
and the other by the map
$$\overline\X \rightarrow \Delta(\overline{Cyl(\overline\X)}) : x = (x_ \gamma)_{\gamma} \leadsto \chi(x)$$
 $$\text{such that }\ [\chi(x)](f) = f(x_\gamma)\ \text{ if }\ f\in C(\X_\gamma)$$
 so at the end we have the isomorphism $\overline{Cyl(\overline\X)} \cong C(\overline\X)$.\\

Now let us look at the set of spaces $\X_\gamma$. We can notice that each \mbox{$x_\gamma \in \X_\gamma$} is completely determined by its values on the edges \mbox{$x_\gamma(e) \in SU(2)$}, $e\in E(\gamma)$. So there is a bijection
 $$\X_l \rightarrow \text{SU}(2)^{\sharp E(\gamma)} : x_\gamma \leadsto \prt{x_\gamma(e)}_{e\in E(\gamma)}.$$
 Now we can use the facts that $SU(2)$ is a compact Hausdorff space and that there is a finite number of edges on every graph to say that \mbox{$\X_l \cong \text{SU}(2)^n$} can be endowed with a compact Hausdorff topology and with a finite measure $\mu_\gamma$ given by the Haar measure. The projective limit $\bar\X$ becomes compact in the product topology. With this family of measures $\mu_\gamma$ we can define a linear functionnal on the space of cylindrical functions by
$$\Lambda : Cyl(\overline\X) \rightarrow \sC : f \in C(\X_\gamma) \leadsto \int_{\X_{\gamma}} f(x_{\gamma})\ d\mu_\gamma(x_{\gamma})$$
which can be extended continuously to the closure $\overline{Cyl(\overline\X)} \cong C(\overline\X)$.\\

The conclusion comes from the Riesz representation theorem\footnote{Riesz representation theorem: Let $X$ be a locally compact Hausdorff space, then for any positive linear functional $\psi$ on $C_0(X)$, there is a unique Borel regular measure $\mu$ on $X$ such that $\psi(f) = \int_X f(x)\, d\mu(x)$.} which guarantees the existence of a measure $\mu$ on $\overline\X$, and so on $\bar\A$ by the isomorphism  (\ref{isoAX}). This measure is called the Ashtekar--Lewandowski measure. The Hilbert space of quantum gravity is complete and given by
$$\H = L^2\prt{\bar\A,d\mu}.$$
The remarkable result is that an orthonormal basis on this Hilbert space can be given by spin networks (which are graphs such that  each edge is associated to an irreducible representation of $SU(2)$). All the states of quantum gravity can by this way be written in terms of spin networks.\\

So now we can remember our concern: can algebraization of geometry be used to create more general spaces which could include quantization or the other fundamental interactions? Quantum gravity gives a positive answer about the first possibility. The switch from classical variables to connections and the creation of the algebraic structure of cylindrical functions on the space of distributional connections allow us to create a well defined mathematical background that includes gravitational aspects and quantization. However, the theory of quantum gravity does not include any possibility to describe other interactions as electromagnetism or nuclear interactions, just because those possibilities are not in the scope of the theory.\\

Whereas quantum gravity seems to be a promising theory, we can wonder about the construction that leads to this algebraic formalism. First, the equations come from the ADM formalism, which is not the simplest we can find about gravitation. After the transformation to the Hamiltonian formalism, equations are transformed to a completely new set of variables. Then only the whole system is translated into an algebra of functions. The newly obtained algebraic formalism seems far away from the starting intuitive geometrical space. In other words, the newly obtained objects cannot be easily interpreted as usual geometric ones. This leads to the fact that, except a few informations obtained by some geometrical operator as the area operator, quantum gravity has a difficult interpretation in terms of geometry. There is no real way to go back from the algebraic objects to the geometrical counterparts.\\

We can now add the question: is it possible to construct a complete algebraic formulation of geometry in a more trivial way, in the sense that the obtained algebraic objects conserve a geometrical interpretation? Wa can remember the Gel'fand transform: spaces of functions on any manifold can be interpreted as abstract commutative $C^*$-algebras, and in the other way characters on any commutative $C^*$-algebra can be interpreted as points on some manifold. Typically we would like a complete correspondence between the geometrical formalism and the algebraic formalism, with each object on the one side having a (possibly abstract) interpretation on the other side. So we want a theory which includes an {\it algebraization of geometry} as well as a {\it geometrization of algebra}, and in such way that we can introduce gravitation via the geometrical interpretation and the other fundamental interactions via the algebraic interpretation. Nice program, but this description is no more than the fundamental idea supporting the link between  noncommutative geometry and physics.


\newpage

\section{Noncommutative geometry}

Now we come to the foundations of the theory of noncommutative geometry. To introduce that, we can remember what we have learned from the Section \ref{gelfsect}.\\

From any manifold $\M$ (or even from any locally compact Hausdorff space) we can consider the commutative $C^*$-algebra $\A=C_0(\M)$ of continuous functions and transpose some geometrical notions to it (as derivations for example). Then by considering the spectrum $\Delta(\A)$ we can recover the manifold itself with the isomorphism given by the Gel'fand transform (Definition \ref{geltransf}). Moreover, any commutative $C^*$-algebra will give rise to a Hausdorff space by considering the spectrum thanks to the equivalence of categories.\\

Of course any construction of this type requires to use commutative $C^*$-algebras, since they represent algebras of functions with product given by the necessarily commutative pointwise product, but we can wonder what could happen if we try to consider noncommutative $C^*$-algebras instead. Is it possible in this case to take the spectrum and to consider it as a geometrical space?\\

For that, we need to update our definition of  spectrum. Indeed, by the Definition \ref{defspectrum1}, the spectrum is the space of characters which are non-zero *-homomorphisms from $\A$ to $\setC$, but this definition does not make sense for noncommutative algebras, since in this case \mbox{$\chi(a)\chi(b) - \chi(b)\chi(a)$} could be different of zero. Instead, we will consider a more larger class of linear functionals.\\

\begin{defn}
If $\A$ is a $C^*$-algebra, a {\bf\index{state} state} is a positive linear functional of norm one, i.e.~a linear functional $\phi$ such that:
\begin{itemize}
\item $\phi(a^*a) \geq 0\qquad \forall a\in\A\qquad$($a^*a$ is called a positive element)
\item $\norm{\phi} = 1$, which implies $\phi(1)=1$ if $\A$ is unital
\end{itemize}
The space of states of $\A$ is denoted by $S(\A)$. Every state on a \mbox{$C^*$-algebra} is automatically continuous.
\end{defn}

The space $S(\A)$ is clearly a convex set, since for any $\phi_1, \phi_2 \in S(\A)$ and for $0 \leq \lambda \leq 1$ we have \mbox{$\lambda \phi_1 + (1-\lambda)\phi_2 \in S(\A)$}.

\begin{defn}
A {\bf\index{state!pure} pure state} is an element of $S(\A)$ that cannot be written as a convex combination of two other states. So pure states are extreme points of $S(\A)$.
\end{defn}

To show the interest of states of $C^*$-algebras, we need to introduce the theory of representation.

\begin{defn}
A {\bf\index{representation} representation} of a $C^*$-algebra $\A$ is a pair \mbox{$(\H,\pi)$} where $\H$ is a Hilbert space and $\pi$ is a *-homomorphism from $\A$ into $\B(\H)$, the $C^*$-algebra of bounded operators on $\H$. A representation is called {\bf\index{representation!faithful} faithful} if $\pi$ is injective.
\end{defn}

\begin{defn}\label{reprirr}
A representation \mbox{$(\H,\pi)$} is {\bf\index{representation!irreducible} irreducible} if there is no closed subspace of $\H$ which is invariant under the action of $\pi(\A)$ except the trivial spaces $\set{0}$ and $\H$.
\end{defn}

\begin{defn}\label{reprcycl}
A representation \mbox{$(\H,\pi)$} is {\bf\index{representation!cyclic} cyclic} if there is an element $\xi\in\H$ called a {\bf\index{cyclic vector} cyclic vector} such that the orbit \mbox{$\set{\pi(a)\xi : a\in\A}$} is dense in $\H$.
\end{defn}

We can remark that any non-cyclic vector would generate by the closure of its orbit a closed invariant space under the action of $\pi(\A)$ different from $\H$, and different from $\set{0}$ if the vector is not null. So a trivial consequence from Definitions \ref{reprirr} and \ref{reprcycl} is that every irreducible representation is cyclic, with every non-zero vector of $\H$ being cyclic, and reciprocally, if every non-zero vector is cyclic, the representation is irreducible since there is no nontrivial invariant space. We have also another characterization of irreducible representation.

\begin{lemma}[Schur]\label{propcommut}
A representation \mbox{$(\H,\pi)$} is irreducible if and only if \mbox{$\pi(\A)' = \setC . \text{id}$} where $\pi(\A)'$ is the commutant of $\pi(\A)$, i.e.~the set of elements in $\B(\H)$ which commute with each element in $\pi(\A)$.
\end{lemma}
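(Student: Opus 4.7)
The plan is to prove the two implications separately, in both cases exploiting the fact that closed invariant subspaces correspond to projections in the commutant $\pi(\A)'$, together with the *-structure of the representation.

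For the implication $(\Leftarrow)$, I would assume $\pi(\A)' = \setC \cdot \text{id}$ and let $K \subset \H$ be a closed subspace invariant under $\pi(\A)$. Because $\pi$ is a *-homomorphism, $K^\perp$ is invariant as well: if $x \in K^\perp$, $y \in K$ and $a \in \A$, then $\scal{\pi(a)x, y} = \scal{x, \pi(a^*) y} = 0$ since $\pi(a^*)y \in K$. Consequently, the orthogonal projection $P_K$ onto $K$ commutes with every $\pi(a)$, so $P_K \in \pi(\A)'$. The hypothesis forces $P_K = \lambda\,\text{id}$, and since $P_K$ is a projection $\lambda \in \set{0,1}$, giving $K = \set{0}$ or $K = \H$.

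For the implication $(\Rightarrow)$, I would first show that $\pi(\A)'$ is stable under the involution: if $T\pi(a) = \pi(a)T$ for all $a$, then taking adjoints yields $\pi(a^*)T^* = T^*\pi(a^*)$, and since $a \mapsto a^*$ is a bijection of $\A$, $T^* \in \pi(\A)'$. Given an arbitrary $T \in \pi(\A)'$, decompose $T = T_1 + iT_2$ with $T_1 = \tfrac 12(T+T^*)$ and $T_2 = \tfrac 1{2i}(T-T^*)$; both are self-adjoint and lie in $\pi(\A)'$, so it suffices to treat the case $T = T^*$. Now I would argue that every spectral projection $E$ of $T$ also belongs to $\pi(\A)'$: any bounded operator that commutes with $T$ commutes with every element of the von Neumann algebra generated by $T$ and $\text{id}$, and in particular with its spectral projections; equivalently, $\pi(\A)'$ is a von Neumann algebra and is therefore closed under the Borel functional calculus of its self-adjoint elements. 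The range of any such $E$ is then a closed $\pi(\A)$-invariant subspace, so by irreducibility $E \in \set{0,\text{id}}$. Thus the spectrum of $T$ consists of a single point, whence $T = \lambda\,\text{id}$ for some $\lambda \in \setR$, and the general case yields $T \in \setC \cdot \text{id}$.

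The main obstacle is the technical point that spectral projections of $T = T^* \in \pi(\A)'$ remain in $\pi(\A)'$. This is where some care is needed: one cannot simply approximate them by polynomials in $T$ in norm, but rather must invoke either the bicommutant interpretation of spectral projections (they lie in $\set{T}''$, hence commute with everything in $\set{T}'$, in particular with $\pi(\A)$) or the fact that the commutant of a self-adjoint set of operators is always a von Neumann algebra, stable under the Borel functional calculus of its self-adjoint elements. Everything else in the proof is formal manipulation with the *-structure.
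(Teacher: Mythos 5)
Your proof is correct and follows essentially the same route as the paper: projections onto invariant subspaces lie in the commutant for one direction, and Hermitian/anti-Hermitian decomposition plus spectral projections for the other. You merely spell out two technicalities the paper leaves implicit (invariance of $K^\perp$ via the *-structure, and the bicommutant argument placing spectral projections in $\pi(\A)'$), which only strengthens the argument.
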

\begin{proof}
Let us assume that \mbox{$\pi(\A)' = \setC . \text{id}$} and let us take a closed subspace $\K\subset\H$ invariant under the action of $\pi(\A)$. Then the projector $P_\K$ on $\K$ commutes with $\pi(\A)$, and consequently is of the form $P_\K = \lambda\, \text{id}$ with $\lambda\in\setC$. Since the projector is idempotent, $\lambda \in \set{0,1}$ so $\K$ is a trivial space. 

To prove the the reverse property, let us suppose that \mbox{$(\H,\pi)$} is an irreducible representation and let us take an element $T\in\B(\H)$ which commutes with $\pi(\A)$. Since $T$ can be decomposed into a Hermitian and an antiHermitian part, we can restrict to Hermitian operators and take the spectral decomposition \mbox{$T = \int \lambda\, dP(\lambda)$}. Each projector $P(\lambda)$ commutes with $\pi(\A)$ so it must be equal to $0$ or $1$, the spectrum is then restricted to one point and \mbox{$T = \lambda \,\text{id}$}.
\end{proof}

\begin{defn}
Two representations \mbox{$(\H_1,\pi_1)$} and \mbox{$(\H_2,\pi_2)$} are {\bf\index{representation!equivalent} equivalent} if there exists a unitary operator \mbox{$U : \H_1 \rightarrow \H_2$} such that \mbox{$\pi_2(a) = U \pi_1(a)\, U^*$} for any $a\in\A$.
\end{defn}

States of a $C^*$-algebra and representations of that algebra are  related by the following construction called the GNS-representation.

\begin{thm}[Gel'fand-Naimark-Segal representation]
Given a state $\phi$ of a $C^*$-algebra $\A$, there is a cyclic representation \mbox{$(\H_\phi,\pi_\phi)$}  of $\A$ with cyclic vector $\xi_\phi$ such that \mbox{$\phi(a) = \scal{\xi_\phi , \pi_\phi(a) \xi_\phi}$} for every $a\in\A$.
\end{thm}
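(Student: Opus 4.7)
The plan is to carry out the standard GNS construction: build the Hilbert space from $\A$ itself, using $\phi$ to manufacture an inner product, then let $\A$ act on this space by left multiplication.

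First, I would define a sesquilinear form on $\A$ by $\langle a, b \rangle_\phi = \phi(a^* b)$. Positivity of $\phi$ makes this positive semi-definite, and a routine polarisation/Cauchy--Schwarz argument gives $|\phi(a^* b)|^2 \leq \phi(a^*a)\, \phi(b^*b)$. Setting $\N_\phi = \set{a \in \A : \phi(a^*a) = 0}$, the Cauchy--Schwarz inequality forces $\N_\phi$ to be a closed left ideal of $\A$ (closedness from continuity of $\phi$; left-ideal because $\phi((ab)^*(ab)) = \phi(b^* a^* a b) \leq \norm{a^*a}\,\phi(b^*b)$ via the positivity of $\norm{a^*a}\cdot 1 - a^*a$, see below). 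The quotient vector space $\A/\N_\phi$ then carries a genuine inner product $\scal{[a],[b]} = \phi(a^*b)$, and I let $\H_\phi$ be its completion.

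Next I would define the representation. For $a\in\A$ set $\pi_\phi(a)[b] = [ab]$ on the dense subspace $\A/\N_\phi \subset \H_\phi$. This is well-defined precisely because $\N_\phi$ is a left ideal. The key estimate is boundedness: for any $b$,
$$\norm{\pi_\phi(a)[b]}^2 = \phi(b^* a^* a b) \leq \norm{a}^2\, \phi(b^*b) = \norm{a}^2\,\norm{[b]}^2,$$
where the inequality uses the fact that $\norm{a}^2\cdot 1 - a^*a$ is a positive element of $\A^+$ (its spectrum lies in $\close{0,\norm{a}^2}$ by Lemma \ref{rayspect}), hence is of the form $c^*c$, and $\phi$ applied to $b^*c^*cb = (cb)^*(cb)$ is nonnegative. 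Thus $\pi_\phi(a)$ extends uniquely to a bounded operator on $\H_\phi$ with $\norm{\pi_\phi(a)} \leq \norm{a}$, and the algebraic relations $\pi_\phi(a+b) = \pi_\phi(a) + \pi_\phi(b)$, $\pi_\phi(ab) = \pi_\phi(a)\pi_\phi(b)$, $\pi_\phi(a^*) = \pi_\phi(a)^*$ follow directly from the definitions.

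Finally I would exhibit the cyclic vector. In the unital case the obvious choice is $\xi_\phi = [1]$: its orbit under $\pi_\phi(\A)$ is the whole dense subspace $\set{[a] : a\in\A}$, and
$$\scal{\xi_\phi, \pi_\phi(a)\xi_\phi} = \scal{[1],[a]} = \phi(1^* a) = \phi(a),$$
which is the desired reconstruction formula. The main obstacle is the non-unital case, where there is no element $1$ to serve as $\xi_\phi$. Here I would invoke an approximate identity $(e_\lambda)$ in $\A$: the net $([e_\lambda])$ is Cauchy in $\H_\phi$ because $\phi((e_\lambda - e_\mu)^*(e_\lambda - e_\mu)) \to 0$ using $\norm{\phi} = 1$ and positivity, hence converges to a vector $\xi_\phi$ with $\norm{\xi_\phi}^2 = \lim \phi(e_\lambda^2) = \norm{\phi} = 1$. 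One then checks $\pi_\phi(a)\xi_\phi = [a]$ by continuity, which gives both density of the orbit (cyclicity) and the identity $\scal{\xi_\phi,\pi_\phi(a)\xi_\phi} = \lim \phi(e_\lambda a e_\lambda) = \phi(a)$. This limiting argument, and the verification that the approximate-identity construction produces a genuine unit vector in $\H_\phi$, is the only delicate part of the proof.
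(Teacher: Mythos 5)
Your proposal is correct and follows essentially the same route as the paper: the standard GNS construction via the form $\phi(a^*b)$, the null ideal $\N_\phi$, the quotient $\A/\N_\phi$ completed to $\H_\phi$, left multiplication bounded by the positivity estimate $b^*a^*ab \leq \norm{a}^2\, b^*b$, and the cyclic vector $[1]$ (or a limit of an approximate unit in the non-unital case, which the paper also invokes, even more tersely than you do).
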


\begin{proof}
First, we can observe that
$$\scal{a,b} = \phi(a^*b)$$
defines a sesquilinear form which is positive semidefinite (i.e.~\mbox{$\phi(a^*a) \geq 0\  \forall a\in\A$}). So this form must respect the Cauchy--Schwarz inequality \mbox{$\phi(a^*b)^2 \leq \phi(a^*a)\;\phi(b^*b)$}.

Let us define the set $\N_\phi = \set{a\in\A\; |\; \phi(a^*a) = 0}$. Then $\N_\phi$ is a closed left ideal in $\A$. Indeed, for $x,y\in\N_\phi$ and $a\in\A$ we have:
\begin{itemize}
\item \mbox{$0 \leq \phi\prt{(x+y)^*(x+y)} = \phi(x^*x) + \phi(y^*y) + 2 \,\mathfrak{Re}\!\prt{\phi(x^*y)} \leq 0$} \mbox{$\implies \ x+y\in\N_\phi$}
\item \mbox{$\phi\prt{(ax)^*(ax)}^2 = \phi\prt{x^*a^*ax}^2 \leq \phi(x^*x)\;\phi\prt{(a^*ax)^*(a^*ax)} = 0 $}  \mbox{$\implies \ ax\in\N_\phi$}
\end{itemize}
where we have applied each time the Cauchy--Schwarz inequality.

So we can construct the quotient space $\A/\N_\phi$  which turns out to be a pre-Hilbert space with the positive definite Hermitian inner product $\scal{[a],[b]} = \phi(a^*b)$ which is independent of the representatives in the equivalence classes. The Hilbert space $\H_\phi$ is the completion of $\A/\N_\phi$ by the norm defined by the inner product.

To each $a\in\A$, we associate the operator \mbox{$\pi(a)$} on $\A/\N_\phi$ defined by:
$$\pi(a)\,[b] = [ab].$$

If we denote by $a \leq b$ the fact that $\sigma(b-a) \subset [0,\infty)$, then $a^*a \geq 0$ and if $\A$ is unital (otherwise we perform it throughout $\A^+$) by \mbox{Lemma \ref{rayspect}} we have:
$$0 \leq a^*a \leq \norm{a^*a} 1.$$
For $b\in\A$ the map \mbox{$a \leadsto b^*ab$} preserves positivity, so the inequality becomes:
$$ b^*a^*ab \leq \norm{a^*a} b^*1b$$
and by applying the positive linear functional $\phi$:
$$ \phi(b^*a^*ab) \leq \norm{a^*a} \phi(b^*b) = \norm{a}^2 \phi(b^*b)$$
$$\implies \frac{\norm{\pi(a)\,[b]}}{\norm{[b]}} = \frac{\sqrt{\phi(b^*a^*ab)}} {\norm{[b]}}\leq \norm{a} \frac{\phi(b^*b)}{{\norm{[b]}}} = \norm{a}$$
which implies that the operator \mbox{$\pi(a)$} is bounded on $\A/\N_\phi$, and can be uniquely extended to a bounded operator in $\H_\phi$. The *-homomorphism properties of $\pi_\phi$ can easily be checked.

The cyclic vector  is simply the unity $\xi_\phi = [1]$. If $\A$ is not unital, we can take $\xi_\phi = \lim_\alpha [u_\alpha]$ where $\set{u_\alpha}$ is an approximate unit, i.e.~an increasingly ordered net of positive elements in the closed unit ball of $\A$ such that \mbox{$\forall a\in\A,\ a = \lim_\alpha a \,u_\alpha$} (such approximate unit always exists). We trivially have \mbox{$\phi(a) = \phi(\xi_\phi^*\,(a\,\xi_\phi)) = \scal{\xi_\phi , \pi_\phi(a) \xi_\phi}$}.
\end{proof}

Of course any cyclic representation \mbox{$(\H,\pi)$} defines a state $\phi$ by the formula \mbox{$\phi(a) = \scal{\xi , \pi(a) \xi}$} with $\xi$ a cyclic vector of norm one. In this case, the GNS-representation \mbox{$(\H_\phi,\pi_\phi)$} is equivalent to \mbox{$(\H,\pi)$} by use of the unitary operator $U$ defined by \mbox{$U \pi(a)\xi = \pi_\phi(a)\xi_\phi$}. So we can say that the GNS-representation is surjective among the equivalence classes of cyclic representations.\\

From this GNS-representation, we can obtain the following theorem also from Gel'fand and Naimark, which can be interpreted as a generalization of the Theorem \ref{GelNai} for arbitrary $C^*$-algebras:

\begin{thm}[Gel'fand--Naimark]
Any C*-algebra $\A$ is isometrically \mbox{*-isomorphic} to a closed subalgebra of the algebra $\B(\H)$ of bounded operators on some Hilbert space $\H$.
\end{thm}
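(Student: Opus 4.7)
The plan is to construct a faithful isometric *-representation of $\A$ by taking the direct sum of all GNS representations indexed by states. Concretely, I define the universal representation $(\H, \pi)$ by
$$\H = \bigoplus_{\phi \in S(\A)} \H_\phi, \qquad \pi(a) = \bigoplus_{\phi \in S(\A)} \pi_\phi(a),$$
where $(\H_\phi, \pi_\phi, \xi_\phi)$ is the GNS triple associated to $\phi$. The *-homomorphism properties transfer componentwise. The bound $\norm{\pi(a)} \leq \norm{a}$ follows because every *-homomorphism between $C^*$-algebras is norm-decreasing, itself a consequence of Lemma \ref{rayspect} applied to the Hermitian element $a^*a$ together with the fact that *-homomorphisms cannot enlarge the spectrum.

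The heart of the argument is the reverse inequality $\norm{\pi(a)} \geq \norm{a}$. It suffices to exhibit, for each $a \in \A$, a state $\phi$ satisfying $\phi(a^*a) = \norm{a}^2$, because then
$$\norm{\pi_\phi(a) \xi_\phi}^2 \eq \scal{\xi_\phi, \pi_\phi(a^*a) \xi_\phi} \eq \phi(a^*a) \eq \norm{a}^2,$$
and since $\norm{\xi_\phi} = 1$ this gives $\norm{\pi(a)} \geq \norm{\pi_\phi(a)} \geq \norm{a}$.

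To build such a state, set $b = a^*a$, which is Hermitian positive with $\norm{b} = \norm{a}^2$. Passing to $\A^+$ if needed, let $\B$ be the commutative $C^*$-subalgebra generated by $b$ and the unit. Since $b$ is Hermitian, Lemma \ref{rayspect} gives $r(b) = \norm{b}$, and positivity of $b$ places $\norm{b}$ itself in $\sigma(b)$. Applying the commutative Gel'fand--Naimark Theorem \ref{GelNai} to $\B$ produces a character $\chi \in \Delta(\B)$ with $\chi(b) = \norm{b}$, and such a character is automatically a state on $\B$.

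The main obstacle is extending $\chi$ to a genuine state on all of $\A$ without losing the value at $b$. The Hahn--Banach theorem extends $\chi$ to a continuous linear functional $\phi$ on $\A^+$ with $\norm{\phi} = \norm{\chi} = 1$ and $\phi(1) = 1$. The delicate ingredient is the classical observation that any norm-one functional on a unital $C^*$-algebra satisfying $\phi(1) = 1$ is automatically positive, hence a state; this is verified by applying $\phi$ to spectral decompositions of Hermitian elements and controlling signs via the norm condition. Restricting $\phi$ back to $\A$ yields the required state. Once isometry is in hand, injectivity of $\pi$ is immediate, and the image $\pi(\A)$ is closed in $\B(\H)$ since $\A$ is complete and $\pi$ is isometric, completing the identification of $\A$ with a closed *-subalgebra of $\B(\H)$.
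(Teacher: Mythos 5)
Your proof is correct, but it follows a genuinely different route from the one in the text. You build the universal representation indexed by \emph{all states} and prove isometry directly, by exhibiting for each $a$ a state with $\phi(a^*a)=\norm{a}^2$: a character on the commutative subalgebra $C^*(a^*a,1)$ attaining the spectral radius, pushed up to $\A^+$ by the Hahn--Banach \emph{extension} theorem together with the classical lemma that a norm-one functional with $\phi(1)=1$ is automatically positive. The text instead indexes the direct sum by the nonzero elements $a\in\A$ and, for each one, uses the Hahn--Banach \emph{separation} theorem to split $-a^*a$ from the closed positive cone $\A_+$, obtaining a state with $f_a(a^*a)>0$; this only yields faithfulness, and the isometry is then recovered from a separate structural lemma (every injective *-morphism of $C^*$-algebras is isometric), itself proved via the commutative Gel'fand--Naimark theorem and Lemma \ref{rayspect}. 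Your route buys the sharper quantitative fact that the norm is attained on states, $\norm{a}^2=\sup_{\phi\in S(\A)}\phi(a^*a)$, and avoids the injectivity-implies-isometry detour; the text's route avoids your automatic-positivity lemma, at the price of needing $\A_+$ to be a closed convex cone with $-a^*a\notin\A_+$ for $a\neq0$. Both arguments lean on unproven standard positivity facts of comparable depth, so your level of rigor matches the paper's; if you wanted to tighten yours, the two points to flesh out are the automatic positivity of norm-one unital functionals (which you only sketch) and the claim $\norm{a^*a}\in\sigma(a^*a)$, which uses that $\sigma(a^*a)\subset[0,\infty)$ --- note that $r(a^*a)=\norm{a^*a}$ alone only places $\pm\norm{a^*a}$ in the spectrum. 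A last cosmetic point: in the non-unital case the restriction of your state from $\A^+$ back to $\A$ may have norm strictly less than one; normalizing it only improves the inequality, so nothing breaks, but it is worth a line.
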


\begin{proof}
Let us take a non-zero element $a\in\A$. We denote by $\A_+\subset \A$ the closed convex cone of positive elements (i.e.~in the form $b^*b$). The element $-a^*a\in\A$ is clearly separate from $\A_+$. We can apply the \mbox{Hahn--Banach}\footnote{Hahn--Banach separation theorem:
Let $V$ be a topological vector space and \mbox{$A$, $B$} convex, non-empty subsets of $V$ with \mbox{$A \cap B = \emptyset$} and $A$ open, then there exists a continuous real linear map such that \mbox{$\lambda(a) < 0 \leq \lambda(b)\ \forall a\in A, b\in B$}.} separation theorem to find a real linear continuous form $f_a$ such that \mbox{$f_a(x) \geq 0$} for all \mbox{$a\in\A_+$} and \mbox{$f_a(-a^*a)<0$}. $f_a$ is positive by definition, with $f_a(a^*a)>0$ and can be rescaled to be a state. So we can take the GNS-representation \mbox{$(\H_{f_a},\pi_{f_a})$} with the cyclic vector $\xi_{f_a}$ and  find:
$$ \norm{\pi_{f_a}(a) \;\xi_{f_a}}^2 = \scal{\xi_{f_a} , \pi_{f_a}(a^*a) \; \xi_{f_a}} = f_a(a^*a) > 0.$$ 
So for every non-zero element $a\in\A$ we have a representation $(\H_a,\pi_a)$ with \mbox{$\pi_a(a)\neq 0$}. Then we can build the faithful representation:
$$ \prt{ \H = \bigoplus_{a\in\A\setminus\set{0}} \H_a\ ,\ \pi = \bigoplus_{a\in\A\setminus\set{0}} \pi_a  }\cdot$$

The isometry comes from the fact that every injective *-morphism of $C^*$-algebras is an isometry. To show this last assumption, let us consider an injective *-morphisms $f$ between two unital $C^*$-algebras $A$ and $B$ (we use the unitizations if it is not initially the case). Let us take a Hermitian element $x\in A$ and $A_x$, $B_{f(x)}$ the unital sub $C^*$-algebras generated by $x$ and $f(x)$. Because $x$ is Hermitian, those subalgebras are commutative and are isomorphic to the algebras of continuous functions on their spectrum by the Gel'fand--Naimark theorem (Theorem \ref{GelNai}). Let us consider the pullback map \mbox{$ f^* : \Delta(B_{f(x)}) \rightarrow \Delta(A_x)$}. This map is surjective, because if it is not the case there must exist a continuous function $\hat a$ non zero on all $\Delta(A_x)$ but zero on $f^*\prt{\Delta(B_{f(x)})}$, so a non-zero element $a\in A$ such that $f(a)=0$, which contradicts the injectivity of $f$. So by use of {Lemma \ref{rayspect}} we have for Hermitian elements
\begin{eqnarray*}
\norm{f(x)} &=& r(f(x)) = \sup_{\phi\in\Delta(B_{f(x)})} \abs{\phi(f(x))} = \sup_{\phi\in\Delta(B_{f(x)})} \abs{f^*(\phi)(x)}\\
&=& \sup_{\psi\in\Delta(A_x)} \abs{\psi(x)} = r(x) = \norm{x}
\end{eqnarray*}
and for non Hermitian elements
\begin{equation*}
\norm{x}^2 = \norm{x^*x} = \norm{f(x^*x)} = \norm{f(x)}^2.\qedhere
\end{equation*}
\end{proof}

We have seen that any commutative $C^*$-algebra is isometrically \mbox{*-isomorphic} to an algebra of continuous functions by the first \mbox{Gel'fand--Naimark} theorem, and that in the general case a $C^*$-algebra is isometrically \mbox{*-isomorphic} to an algebra of bounded operators on a Hilbert space by the second one. In the commutative case we have defined the spectrum to be the space of characters. The following theorems will allow us to make a new definition compatible with the commutative case.

\begin{thm}
If $\A$ is a commutative $C^*$-algebra, the GNS-representation \mbox{$(\H_\phi,\pi_\phi)$} is irreducible if and only if $\phi$ is a character.
\end{thm}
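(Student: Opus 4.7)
The plan is to handle the two implications separately, using Schur's lemma (Lemma \ref{propcommut}) for one direction and a direct dimension argument for the other.

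First, I would prove the easy direction: if $\phi$ is a character, then the GNS-representation is irreducible. The key observation is that when $\phi$ is a character (a *-homomorphism to $\sC$), the set $\N_\phi = \set{a \in\A : \phi(a^*a) = 0}$ coincides with $\ker\phi$. Indeed, $\phi(a^*a) = \overline{\phi(a)}\phi(a) = \abs{\phi(a)}^2$, so $a\in\N_\phi$ iff $\phi(a) = 0$. Since $\ker\phi$ has codimension one in $\A$, the pre-Hilbert quotient $\A/\N_\phi$ is one-dimensional, hence so is its completion $\H_\phi$. Any representation on a one-dimensional Hilbert space is irreducible, since the only closed invariant subspaces are $\set{0}$ and $\H_\phi$ itself.

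For the converse, suppose $(\H_\phi,\pi_\phi)$ is irreducible. By Schur's lemma (Lemma \ref{propcommut}), the commutant satisfies $\pi_\phi(\A)' = \setC\cdot\text{id}$. Now I use the commutativity of $\A$: since $\pi_\phi$ is a *-homomorphism, the image $\pi_\phi(\A)$ is itself commutative, which means $\pi_\phi(\A) \subset \pi_\phi(\A)' = \setC\cdot\text{id}$. Therefore, for each $a\in\A$ there exists $\lambda_a \in\setC$ such that $\pi_\phi(a) = \lambda_a\,\text{id}$. Evaluating at the cyclic vector yields $\phi(a) = \scal{\xi_\phi,\pi_\phi(a)\xi_\phi} = \lambda_a\,\norm{\xi_\phi}^2 = \lambda_a$ (where I use that the cyclic vector has norm one, which follows from $\norm{\xi_\phi}^2 = \phi(1) = 1$ in the unital case, or via the approximate unit in the non-unital case). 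The multiplicativity then drops out:
\[ \phi(ab) = \scal{\xi_\phi,\pi_\phi(a)\pi_\phi(b)\xi_\phi} = \lambda_a\lambda_b = \phi(a)\phi(b), \]
so $\phi$ is a *-homomorphism to $\setC$, i.e.~a character.

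I do not anticipate a serious obstacle; the main point to be careful about is the inclusion $\pi_\phi(\A) \subset \pi_\phi(\A)'$, which is precisely where commutativity of $\A$ enters and without which the converse would fail. The non-unital case for the first implication requires a brief sanity check that $\N_\phi = \ker\phi$ still holds and that $\H_\phi$ is still one-dimensional, but this follows from the Cauchy--Schwarz inequality $\abs{\phi(a)}^2 = \abs{\phi(1\cdot a)}^2 \le \phi(1)\phi(a^*a) = \phi(a^*a)$ applied within the unitization.
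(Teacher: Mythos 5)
Your proof is correct, and for the nontrivial implication (irreducible $\Rightarrow$ character) it follows exactly the paper's route: commutativity gives $\pi_\phi(\A)\subset\pi_\phi(\A)'$, Schur's lemma (Lemma \ref{propcommut}) collapses this to $\setC\cdot\text{id}$, and evaluating on the cyclic vector identifies $\phi$ with the scalar homomorphism $\pi_\phi$. The converse (character $\Rightarrow$ irreducible via $\N_\phi=\ker\phi$ and one-dimensionality of $\H_\phi$) is left implicit in the paper, so your explicit treatment of it is a welcome, correct addition rather than a divergence in method.
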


\begin{proof}
Since $\A$ is commutative, $\pi_\phi(\A) \subset \pi_\phi(\A)' $ (commutant) so $\pi_\phi(\A) \subset \setC .\text{id}$ by Lemma \ref{propcommut}. That means that every representation is just a \mbox{*-homomorphism} \mbox{$\pi_\phi : \A \rightarrow \setC$}, so \mbox{$\phi(a) = \scal{1,\pi_\phi(a)1} = \pi_\phi(a)$} is a character.
\end{proof}

\begin{thm}
For any $C^*$-algebra $\A$, the GNS-representation \mbox{$(\H_\phi,\pi_\phi)$} is irreducible if and only if $\phi$ is a pure state.
\end{thm}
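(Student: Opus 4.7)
The plan is to combine Schur's lemma (Lemma \ref{propcommut}) with a Radon--Nikodym type construction. By Schur, irreducibility of $(\H_\phi,\pi_\phi)$ is equivalent to $\pi_\phi(\A)' = \setC.\text{id}$, so the task reduces to relating the size of this commutant to the extremality of $\phi$ in the convex set $S(\A)$. I would prove both implications by contraposition.

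For the direction ``irreducible $\Rightarrow$ pure'', I would assume $\phi = \lambda\phi_1 + (1-\lambda)\phi_2$ with $0<\lambda<1$ and $\phi_1\neq\phi_2$, and construct a positive bounded operator $T \in \pi_\phi(\A)'$, not a scalar multiple of the identity, from the formula
$$\scal{T[a],[b]} = \phi_1(a^*b) \qquad \forall\, a,b\in\A.$$
Well-definedness on the quotient $\A/\N_\phi$ and boundedness follow from the domination $\lambda\phi_1\leq\phi$ combined with Cauchy--Schwarz applied to the positive form $\phi_1$, yielding $\abs{\phi_1(a^*b)}^2 \leq \phi_1(a^*a)\phi_1(b^*b)\leq \lambda^{-2}\phi(a^*a)\phi(b^*b)$. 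The commutation $T\pi_\phi(c)=\pi_\phi(c)T$ is a direct algebraic check using $\phi_1((c^*a)^*b)=\phi_1(a^*cb)$. Finally $T = \mu\,\text{id}$ would force $\phi_1 = \mu\phi$ and then by normalization $\phi_1=\phi=\phi_2$, contradicting $\phi_1\neq\phi_2$. Schur then gives reducibility.

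For the direction ``pure $\Rightarrow$ irreducible'', I would assume the representation is reducible. Then $\pi_\phi(\A)'$ contains a non-scalar Hermitian element, whose spectral resolution produces a projection $P\in\pi_\phi(\A)'$ with $0\neq P\neq\text{id}$. Setting $\xi_1=P\xi_\phi$ and $\xi_2=(\text{id}-P)\xi_\phi$, both are nonzero, for if say $P\xi_\phi = 0$ the orbit $\pi_\phi(\A)\xi_\phi$ would lie in the proper closed subspace $(\text{id}-P)\H_\phi$, violating cyclicity of $\xi_\phi$. Because $P$ commutes with $\pi_\phi(\A)$, the cross terms $\scal{\xi_1,\pi_\phi(a)\xi_2}$ vanish, yielding the convex decomposition $\phi = \norm{\xi_1}^2 \phi_1 + \norm{\xi_2}^2 \phi_2$ with $\phi_i(a) = \norm{\xi_i}^{-2}\scal{\xi_i,\pi_\phi(a)\xi_i}$. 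To see this is nontrivial, I would check $\phi_1\neq\phi$: the equality $\phi_1=\phi$ would mean $\scal{\xi_\phi,(\norm{\xi_1}^{-2}P-\text{id})\pi_\phi(a)\xi_\phi}=0$ for every $a$, so by cyclicity the vector $(\norm{\xi_1}^{-2}P-\text{id})\xi_\phi$ would be orthogonal to a dense subspace and hence zero, giving $P\xi_\phi = \norm{\xi_1}^2\xi_\phi$; but a projection has only eigenvalues $0$ and $1$, which is incompatible with $0<\norm{\xi_1}^2<1$.

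The main obstacle I expect is the Radon--Nikodym step in the first direction: showing that the sesquilinear form $(a,b)\mapsto\phi_1(a^*b)$ genuinely descends from $\A$ to the quotient $\A/\N_\phi$ relies crucially on the domination $\lambda\phi_1\leq\phi$, since a priori $\N_\phi$ is the null ideal of $\phi$ and not of $\phi_1$. Once $T$ has been produced in the commutant, the remainder of the forward direction is an algebraic check, and the reverse direction is more elementary since it only exploits the orthogonal splitting induced by a single spectral projection in $\pi_\phi(\A)'$.
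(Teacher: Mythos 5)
Your proof is correct, and it takes a somewhat different route from the paper's. The paper first proves an auxiliary characterization of purity ($\phi$ is pure iff every positive continuous form majorized by $\phi$ is a multiple of $\phi$), then argues both implications directly: for ``pure $\Rightarrow$ irreducible'' it feeds a commutant projector $P$ into that characterization via $\psi(a)=\scal{P\xi_\phi,\pi_\phi(a)P\xi_\phi}$ and uses cyclicity to force $P=\sqrt{\lambda}\,\text{id}\in\set{0,\text{id}}$; for ``irreducible $\Rightarrow$ pure'' it invokes an extension of the Radon--Nikodym theorem, cited to Dixmier, to represent a majorized form by a Hermitian $T\in\pi_\phi(\A)'$ and then applies Schur's lemma. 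You instead argue both directions by contraposition and work straight from the extreme-point definition: in one direction you build the Radon--Nikodym operator yourself, as the bounded operator representing the sesquilinear form $(a,b)\leadsto\phi_1(a^*b)$ on $\A/\N_\phi$ --- precisely the content of the cited theorem, with the domination $\lambda\phi_1\leq\phi$ supplying both well-definedness on the quotient and the bound --- and check it is a non-scalar element of the commutant before applying Schur; in the other direction you split the cyclic vector along a commutant projection, obtain $\phi=\norm{\xi_1}^2\phi_1+\norm{\xi_2}^2\phi_2$, and establish nontriviality by observing that $P\xi_\phi=\norm{\xi_1}^2\xi_\phi$ is impossible for a projector with $0<\norm{\xi_1}^2<1$. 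The ingredients (Schur's lemma, commutant projections, cyclicity) are the same; what your version buys is self-containedness --- no auxiliary majorization lemma and no external citation --- at the cost of carrying out the Riesz-representation step explicitly, whereas the paper's majorization characterization is shorter on the page and reusable elsewhere. The only point worth flagging is the step ``$\phi_1(a^*b)=\mu\,\phi(a^*b)$ for all $a,b$ implies $\phi_1=\mu\phi$'': with a unit this is immediate by taking $a=1$, while in the non-unital case it requires an approximate unit, exactly as in the paper's own GNS construction.
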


\begin{proof}
First, we begin by considering an equivalent definition of pure states. We say that a state $\phi$ is a pure state if and only if $f\neq 0$ and, for every positive continuous form $\psi$ on $\A$ such that $\phi$ is a majorant of $\psi $, $\psi = \lambda\, \phi$ for some $\lambda\in[0,1]$.

To establish this new definition, let us take a pure state $\phi$ majorant of a positive continuous form $\psi_1$, so $\phi = \psi_1 + \psi_2$ with $\psi_2$ another positive continuous form. If $\norm{\psi_1} = \lambda$ then $\norm{\psi_2} = 1 - \lambda$, and $\phi$ can be written as the convex combination
$$\phi = \lambda\prt{\frac{\psi_1}{\lambda}} +  (1-\lambda)\prt{\frac{\psi_2}{1-\lambda}} $$
and since $\phi$ is extremal we must have 
\vspace{-0.2em}                                                      
$$\phi = \prt{\frac{\psi_1}{\lambda}} = \prt{\frac{\psi_2}{1-\lambda}} \quad\implies\quad \psi_1 = \lambda\, \phi.$$
\vspace{-1em}                                                      

Let us assume that, for each $\psi$ whose $\phi$ is a majorant, \mbox{$\psi = \mu\, \phi$}, $\mu\in[0,1]$, and let us assume \mbox{$\phi = \lambda\, \psi_1 + (1-\lambda)\, \psi_2$} with \mbox{$0<\lambda<1$} and \mbox{$\norm{\phi} = \norm{\psi_1}=\norm{\psi_2}=1$} (states). Then $\phi$ is a majorant of $\lambda\, \psi_1$, so \mbox{$\lambda\, \psi_1 = \mu\, \phi$}. By taking the norm, $\ \lambda\, \norm{\psi_1} = \mu\, \norm{\phi}\ $ implies $\ \mu=\lambda\ $ and $\phi=\psi_1=\psi_2$.

Now we can come to the proof and assume $\phi$ to be a pure state. If $P$ is a projector in $\H_\phi$ (so Hermitian and idempotent) which commutes with $\pi_\phi(\A)$, then \mbox{$\psi(a) = \scal{  P\, \xi_\phi , \pi_\phi(a) P\, \xi_\phi}$} defines a positive continuous form on $\A$ with $\phi$ as majorant, so $\psi = \lambda \phi$ with $\lambda\in[0,1]$. So \mbox{$\scal{ \xi_\phi ,  \pi_\phi(a)\,P^2\, \xi_\phi} = \scal{  \xi_\phi ,  \pi_\phi(a)\, \lambda\,\xi_\phi}$} and because $\xi_\phi$ is a cyclic vector we obtain \mbox{$P = \sqrt{\lambda}\,\text{id}$} which gives $P = 0$ or $P = 1$ by idempotence. So the only invariant spaces are trivial.

For the reverse, let us assume that \mbox{$(\H_\phi,\pi_\phi)$} is an irreducible representation. If $\psi$ is a continuous positive form on $\A$ with $\phi$ as majorant, by an extension of Radon--Nikodym theorem (cf. \cite{DixmierALG}), $\psi$ can be written as \mbox{$\psi(a) = \scal{ T \,\xi_\phi , \pi_\phi(a) \,T \,\xi_\phi}$} with some Hermitian operator $T\in\pi_\phi(\A)'$ such that \mbox{$0 \leq T \leq \text{id}$} (such that $T$ and $\prt{\text{id}-T}$ are both semidefinite positive). By the Lemma \ref{propcommut}, \mbox{$\pi_\phi(\A)' = \setC . \text{id}$} so $T= \lambda\, \text{id}$ with $\lambda$ constraint in the interval $[0,1]$. Hence \mbox{$\psi = \lambda^2 \phi$} which proves that $\phi$ is a pure state.
\end{proof}

\vspace{-0.2em}                                                      

Since every irreducible representation is cyclic, the \mbox{GNS-representation} gives a surjection between the pure states and the equivalence classes of irreducible representations. We can suggest the following definition.

\vspace{-0.2em}                                                      

\begin{defn}\label{defspectrum2}
Let $\A$ be a $C^*$-algebra. The {\bf\index{spectrum!of a $C^*$-algebra} spectrum $\Delta(\A)$} of $\A$ is the set of all equivalence classes of irreducible representations\footnote{One could also define the spectrum as the space of kernels of irreducible representations, which gives a similar space in the commutative case but not necessarily in the noncommutative one.} of $\A$.
\end{defn}

With this definition, the spectrum corresponds to the space of pure states on the algebra $\A$. If the algebra is commutative, then the pure states are just the characters of the algebra.                                                     

We now have our extension of the notion of spectrum. We can remember that the spectrum, in the commutative case, is identified with the {\it points} of a Hausdorff space, and possibly with the points of a manifold if there is sufficient continuity and differentiability. So the natural extension is to consider pure states as {\it points} of some kind of manifold or space, and to identify the elements of the algebra to {\it functions} on this manifold. Of course this is not realistic geometrically, since by this way we obtain a space of functions which do not commute! So we have to do here some mind gymnastic, and imagine that a noncommutative algebra could be a space of noncommutative functions on a certain space given by the pure states, a space with a geometry which becomes noncommutative.\\

\vspace{0.5em}                                                      

Let us translate out this reasoning into a figure, by updating our old Figure \ref{Gelf_fig} from the Section \ref{gelfsect} to a new one (Figure \ref{GelfNC_fig}). Now, only the right side is correctly defined, with an algebraic space of operators  and a set of pure states. The left side is a virtually "noncommutative" manifold with a virtually set of noncommutative functions on it.

\vspace{0.5em}                                                      

\begin{figure}[!ht]
\begin{center}
\includegraphics[width=10cm]{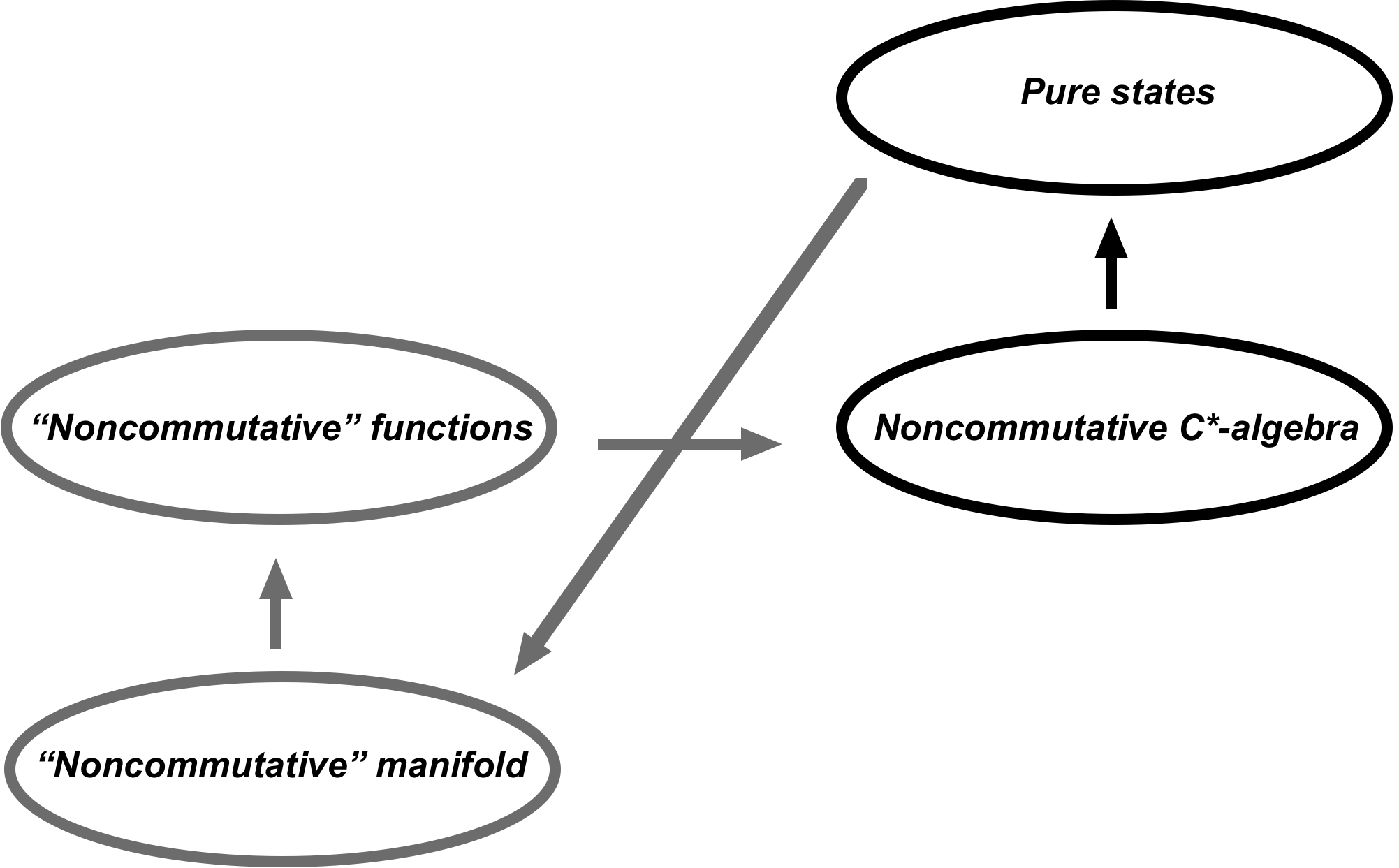}
\end{center}
\caption{The conception of noncommutative manifolds}\label{GelfNC_fig}
\end{figure}

\vspace{0.5em}                                                      

The beauty of this figure is that we can read it from left to right and from right to left.
\begin{itemize}
\item From left to right: We can see that the algebraic part comes directly from the geometric one. There is a priori no geometrical concept relating to noncommutative algebras, but we can translate geometrical concepts from the left side in the commutative case to the right side in the commutative case, and then extend them to noncommutativity. This is the primarily concern of noncommutative geometry: translating concepts  from commutative geometry to noncommutative algebras in order to create new noncommutative spaces with geometrical tools. This is our {\it algebraization of geometry}.
\item From right to left: The new tools translated in the algebraic formalism can be abstractly identified with a geometrical signification, since we can consider the space of pure states as our new geometrical space. So elements which can only be defined with the help of noncommutative algebraic framework can now have a signification as geometrical elements.  This is our {\it geometrization of algebra}.
\end{itemize}

The program of noncommutative geometry is simple in its concept: translating geometrical tools into an algebraic formalism, and extending these tools to the noncommutative world. This leads to the creation of a {\it dictionary} which gives correspondences between geometrical and algebraic concepts.\\

We have already seen an important element of this dictionary: the correspondence between locally compact spaces and commutative $C^*$-algebra. We have another fundamental one we will not develop here (see e.g.~\cite{Var}) which is given by the Serre--Swan theorem: there is an equivalence of categories between the category of vector bundles over a compact space $\M$ and the category of finitely generated projective\footnote{
A module $P$ is projective if there exists a module $Q$ such that the direct sum of the two is a free module $F = P \oplus Q$, where a free module is a module with a generating set of linear independent elements.
} modules over $C(\M)$.\\

A lot of other elements can be added, and we will review and construct some of them in the forthcoming chapters. We give here a non exhaustive extract of this dictionnary, which is growing up from year to year:\\

\begin{center}\begin{tabular}{rcl}
{\bf Geometry} &  & {\bf Algebra} 
\vspace{0.2cm}\\
\hline\\
locally compact space & $\longleftrightarrow$ & C*-algebra \\
compact & $\longleftrightarrow$ & unital \\
Alexandroff compactification & $\longleftrightarrow$ & unitization \\
Stone--Cech compactification & $\longleftrightarrow$ & multiplier algebra \\
point & $\longleftrightarrow $ & pure state \\
open subset & $\longleftrightarrow$ & ideal \\
dense open subset & $\longleftrightarrow$ & essential ideal \\
closed subset & $\longleftrightarrow$ & quotient algebra \\
surjection & $\longleftrightarrow$ & injection \\
injection & $\longleftrightarrow$ & surjection \\
homeomorphism & $\longleftrightarrow$ & automorphism \\
metrizable & $\longleftrightarrow$ & separable \\
Borel measure & $\longleftrightarrow$ & positive functional \\
probability measure & $\longleftrightarrow$ & state \\
measure space & $\longleftrightarrow$ & von Neumann algebra \\
vector field & $\longleftrightarrow$ & derivation \\
fiber bundle & $\longleftrightarrow $ & finite projective module \\
compact Riemannian manifold & $\longleftrightarrow$ & unital spectral triple \\
complex variable & $\longleftrightarrow $ & operator \\
real variable & $\longleftrightarrow $ & Hermitian operator \\
infinitesimal & $\longleftrightarrow $ & compact operator \\
integral & $\longleftrightarrow $ & Dixmier trace \\
range of a function & $\longleftrightarrow$ & spectrum of an operator \\
de Rham cohomology & $\longleftrightarrow$ & cyclic homology \\
& ... & \\
\end{tabular}\end{center}

\vspace{1em}                                                      

The second chapter of our dissertation will deal with the special case of compact Riemannian manifolds. We will see which elements can be constructed in order to translate the geometry of a compact Riemannian manifold into an algebraic formalism, and then how we can construct from that a model which combines  a description of the standard model and Euclidean gravity.\\

 The third chapter will be dedicated to the problem of generalization of these elements to non-compact Lorentzian manifolds.\\

There are many books or general articles dealing with the basis of noncommutative geometry. We give here a list of those which have helped us to write this chapter: \cite{C94,MC08,DixmierALG,Var,Kalk,Landi97,Landi02,Mass96}.


\cleardoublepage
\hbox{} \vspace*{\fill} \thispagestyle{empty}
\chapter[Euclidean noncommutative geometry]{The current model of Euclidean noncommutative geometry}\label{chaprie}

In this chapter, we introduce the theory of noncommutative geometry sometimes referred as noncommutative geometry "à la Connes". From about 30 years now, A. Connes and his collaborators have developed a huge amount of tools for noncommutative geometry. Among those tools we can highlight the following elements:
\begin{itemize}
\item Algebraization and generalization to noncommutative spaces of some elements from the differential structure of Riemannian geometry
\item Algebraization and generalization to noncommutative spaces of the notion of Riemannian distance
\item Construction of a noncommutative model which combines Euclidean gravitation and a classical standard model of particles
\end{itemize}
We will develop these elements throughout this chapter. A first section will be dedicated to the machinery of differential calculus in noncommutative geometry, aiming at the creation of the structure of "spectral triple". The second section will be an overview of the application of the notion of spectral triple to reproduce the standard model of particles combined with Euclidean gravity.


\newpage

\section[Noncommutative differential Riemannian geometry]{Noncommutative differential Riemannian geometry}\label{secgrav}

\vspace{0.5em}                                                      

We remember the main concern of noncommutative geometry: transcribing existing geometrical elements into an algebraic formalism in order to extend them to noncommutative algebras. From now and for the whole chapter we will consider extensions of elements from compact Riemannian manifolds.\\

We will work mainly with an algebra $\A \subset \B(\H)$ of bounded operators acting on some infinite-dimensional separable Hilbert space $\H$. When the algebra will be supposed to be commutative it will be the algebra $C^\infty(\M)$ of continuous functions over a compact $n$-dimensional Riemannian manifold $\M$ (with $n\geq 3$) with metric $g$ (and since the completion $\overline{C^\infty(\M)} = C(\M)$ is a $C^*$-algebra, $C^\infty(\M)$ can always be interpreted as an algebra of bounded operators thanks to the GNS-representation).\\

 Since our goal is not to rewrite a complete book on the subjet we will not present all the details of this theory, and we refer the reader to the classical books \cite{C94,MC08,Var}.\\

\subsection{Noncommutative infinitesimals}

\vspace{0.5em}                                                      

One main idea is to define in fine a notion of integral on noncommutative spaces. Because the functions are replaced by operators in noncommutative geometry, such integral should take the form of a trace of some kind of infinitesimal operators, so we will start to define what can be a noncommutative infinitesimal. Moreover, we want to define an order for those infinitesimals so that the integral will be defined for each infinitesimal of order one and will vanish for the others.\\

By their name, infinitesimal operators should be operators which are {\it as small as possible}. Since it is impossible for a non-null operator $T$ to require $\norm{T}< \epsilon$ for any $\epsilon>0$, where \mbox{$\norm{\,\cdot\,}$} is the operator norm\footnote{$\norm{T} = \sup_{\phi\in\H,\;\norm{\phi}\leq 1} \norm{T\,\phi} = \sup_{\phi\in\H,\;\phi\neq 0} \frac{\norm{T\,\phi}}{\norm{\phi}}$.} on $\H$, we can try to require this while removing some finite-dimensional subspace of $\H$. So we propose the following definition:

\begin{defn}
An {\bf\index{infinitesimal} infinitesimal} is an operator $T$ on $\H$ such that, for all $\epsilon > 0$, there exists a finite-dimensional subspace $E\subset \H$ such that $\norm{T_{\vert  E^\perp}} < \epsilon$.
\end{defn}

We would like to characterize  more precisely these infinitesimals by using the spectral theory of compact operators, which has the following well-known results:

\begin{thm}[Spectral theorem]
Let $T\in\K(\H)$, where $\K(\H)$ is the space of compact operators on $\H$. Then its spectrum $\sigma(T)$ is either a finite set, or a discrete sequence $\set{\mu_n}_{n\geq 0}$ such that $\mu_n\rightarrow 0$.
\end{thm}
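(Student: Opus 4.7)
The plan is to reduce the statement to the single claim that, for every $\epsilon>0$, the set $\sigma_\epsilon(T)=\set{\mu\in\sigma(T):\abs{\mu}\geq\epsilon}$ is finite. Granted this, $\sigma(T)\setminus\set{0}$ is a countable union of finite sets (take $\epsilon=1/n$) and can be enumerated as a sequence $\set{\mu_n}$ with $\abs{\mu_n}\to 0$; together with whether $0$ itself belongs to $\sigma(T)$, the two cases of the statement follow.

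First I would establish the Riesz--Fredholm fact that every nonzero point of $\sigma(T)$ is an eigenvalue of $T$ with finite-dimensional eigenspace. For $\mu\neq 0$, write $T-\mu\,\text{id}=-\mu(\text{id}-\mu^{-1}T)$, where $\mu^{-1}T$ is compact. The Fredholm alternative for operators of the form $\text{id}-K$ with $K$ compact then yields that $T-\mu\,\text{id}$ is invertible whenever its kernel is trivial; since $\mu\in\sigma(T)$ forces non-invertibility, there must exist a nonzero $e$ with $Te=\mu e$. Moreover, $\Ker(T-\mu\,\text{id})$ is finite-dimensional because the restriction of the compact operator $\mu^{-1}T$ to this subspace equals the identity, so the closed unit ball of the eigenspace is compact, which by Riesz's lemma forces finite dimension.

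Next I would derive the finiteness of $\sigma_\epsilon(T)$ by contradiction. Suppose there exist infinitely many distinct eigenvalues $\mu_1,\mu_2,\ldots$ with $\abs{\mu_n}\geq\epsilon$ and corresponding eigenvectors $e_n$. Since eigenvectors for distinct eigenvalues are linearly independent, the subspaces $F_n=\Span(e_1,\ldots,e_n)$ form a strictly increasing chain of closed $T$-invariant subspaces, and Riesz's lemma provides unit vectors $x_n\in F_n$ with $d(x_n,F_{n-1})\geq 1/2$. A short computation using $Te_k=\mu_k e_k$ gives $Tx_n\in\mu_n x_n+F_{n-1}$, and since $Tx_m\in F_m\subseteq F_{n-1}$ for $m<n$, one obtains $\norm{Tx_n-Tx_m}\geq\abs{\mu_n}\,d(x_n,F_{n-1})\geq\epsilon/2$. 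Hence the image under $T$ of the bounded sequence $\set{x_n}$ admits no Cauchy subsequence, contradicting the compactness of $T$.

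The hard part will be the Fredholm alternative step: the proofs that $\text{id}-K$ has closed range and that triviality of its kernel implies surjectivity require genuine functional-analytic work, typically via the finite-dimensionality of $\Ker(\text{id}-K)$, an open-mapping argument on the injective restriction, and the identity relating the range of $\text{id}-K$ to the orthogonal complement of $\Ker(\text{id}-K^*)$. Once this analytical core is in place, everything else is a transparent application of Riesz's lemma, and the dichotomy in the statement is immediate.
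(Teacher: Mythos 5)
Your argument is correct, but note that the paper does not actually prove this statement: it quotes it as one of the ``well-known results'' of the spectral theory of compact operators, and the proof that follows in the text is only for the subsequent singular-value (spectral decomposition) theorem, via polar decomposition and the spectral theorem for the self-adjoint compact operator $T^*T$. What you supply is therefore the standard Riesz--Schauder argument that the paper leaves to the literature, and it is sound: the reduction to finiteness of $\sigma_\epsilon(T)=\set{\mu\in\sigma(T):\abs{\mu}\geq\epsilon}$ does yield both the dichotomy and the convergence $\mu_n\rightarrow 0$ (any enumeration of an infinite set meeting each $\set{\abs{\mu}\geq 1/n}$ in finitely many points tends to $0$); the identification of nonzero spectral points as eigenvalues with finite-dimensional eigenspaces is exactly where the Fredholm alternative enters; and the chain $F_n=\Span(e_1,\dots,e_n)$ with Riesz-lemma vectors $x_n$ and the estimate $\norm{Tx_n-Tx_m}\geq\abs{\mu_n}\,d(x_n,F_{n-1})\geq\epsilon/2$ correctly contradicts compactness. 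Your approach has the merit of being signature-free and valid for compact operators on arbitrary Banach spaces, whereas the paper's surrounding development leans on Hilbert-space structure and self-adjointness. The one piece you defer, the Fredholm alternative (injectivity of $\mathrm{id}-K$ implies invertibility), is indeed the analytic core; if you want to stay entirely within Hilbert space you could shorten it by observing that a nonzero boundary point of $\sigma(T)$ is an approximate eigenvalue, and compactness of $T$ upgrades an approximate eigenvector sequence to a genuine eigenvector via $x_n=\mu^{-1}\prt{Tx_n-(T-\mu)x_n}$, but as written your plan is complete in outline and contains no gap in the parts you argue.
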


\begin{thm}[Spectral decomposition]
Let $T\in\K(\H)$. Then $T$ admits an expansion convergent in norm given by
$$ T = \sum_{n\geq 0} \mu_n \,| \psi_n\rangle \langle \phi_n |$$
where $\mu_n\in\sigma(\sqrt{T^*T})$ are the singular values of  $\,T$ ordered as a decreasing sequence and $\set{\psi_n}_{n\geq 0}$, $\set{\phi_n}_{n\in\geq 0}$ are two orthonormal sets.
\end{thm}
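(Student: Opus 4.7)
The plan is to reduce the singular value decomposition of $T$ to an ordinary spectral decomposition of the positive compact operator $T^*T$, and then lift the resulting orthonormal eigenbasis back to $\H$ via the action of $T$. First I would observe that $T^*T\in\K(\H)$ whenever $T\in\K(\H)$, since $\K(\H)$ is a closed two-sided ideal of $\B(\H)$, and that $T^*T$ is Hermitian and positive, so its spectrum is a discrete sequence of nonnegative real numbers accumulating only at $0$ (by the preceding spectral theorem).

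Next I would apply the standard spectral theorem for compact self-adjoint operators to $T^*T$ to obtain an orthonormal family $\set{\phi_n}_{n\geq 0}$ of eigenvectors spanning $(\ker T^*T)^\perp = (\ker T)^\perp$, with eigenvalues $\mu_n^2 > 0$ ordered decreasingly. Setting $\mu_n = \sqrt{\mu_n^2}$, these are the singular values of $T$ and $\mu_n \to 0$. I would then define
$$\psi_n = \frac{1}{\mu_n}\,T\phi_n$$
and verify orthonormality by the direct computation
$$\scal{\psi_n,\psi_m} = \frac{1}{\mu_n\mu_m} \scal{T\phi_n, T\phi_m} = \frac{1}{\mu_n\mu_m}\scal{\phi_n, T^*T\phi_m} = \frac{\mu_m^2}{\mu_n\mu_m}\delta_{nm} = \delta_{nm}.$$

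To establish the claimed expansion, I would argue on the orthogonal decomposition $\H = (\ker T)^\perp \oplus \ker T$. Completing $\set{\phi_n}$ to an orthonormal basis of $\H$ by adjoining an orthonormal basis of $\ker T$, every $\phi\in\H$ writes $\phi = \sum_n \scal{\phi_n,\phi}\phi_n + \phi_0$ with $\phi_0\in\ker T$, so
$$T\phi = \sum_n \scal{\phi_n,\phi}\,T\phi_n = \sum_n \mu_n \scal{\phi_n,\phi}\,\psi_n = \Big(\sum_n \mu_n\,|\psi_n\rangle\langle\phi_n|\Big)\phi,$$
which gives the expansion pointwise. Finally, for norm convergence I would compare $T$ with its $N$-th partial sum $T_N = \sum_{n=0}^{N} \mu_n\,|\psi_n\rangle\langle\phi_n|$. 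Since $T-T_N$ vanishes on $\mathrm{Span}\set{\phi_0,\dots,\phi_N}$ and on $\ker T$, and acts on the remaining orthogonal complement as a compact operator whose singular values are $\set{\mu_{N+1},\mu_{N+2},\dots}$, one gets $\norm{T-T_N} = \mu_{N+1}$, which tends to $0$ by the spectral theorem quoted just above. The only subtle point, and the one deserving care rather than being a true obstacle, is the separate treatment of $\ker T$ (i.e.\ the zero eigenspace of $T^*T$), which may be infinite-dimensional and must be excluded from the summation without breaking the orthonormality of $\set{\psi_n}$ or the norm estimate.
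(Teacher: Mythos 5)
Your proof is correct, and it takes a slightly different route from the one in the text. Both arguments rest on the same pivot, namely diagonalizing the positive compact operator $T^*T$ and reading off the singular values $\mu_n$ from its eigenvalues; the difference lies in how the left vectors $\psi_n$ are produced. The paper quotes the polar decomposition $T = U\sqrt{T^*T}$ with $U$ a partial isometry and simply sets $\psi_n = U\phi_n$, so orthonormality of the $\psi_n$ is inherited from the isometric behaviour of $U$ on $\overline{\mathrm{ran}}\sqrt{T^*T} = (\Ker T)^\perp$. You instead bypass the polar decomposition altogether, defining $\psi_n = \mu_n^{-1}T\phi_n$ and verifying $\scal{\psi_n,\psi_m} = \delta_{nm}$ by the direct computation through $T^*T$; since $U\phi_n = \mu_n^{-1}T\phi_n$ whenever $\mu_n>0$, the two constructions yield the same vectors, but yours is more self-contained, at the cost of having to check orthonormality and the expansion by hand. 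You also make explicit two points the paper leaves implicit: the exact role of $\Ker T = \Ker T^*T$ (excluded from the sum, possibly infinite-dimensional) and the norm convergence, via the estimate $\norm{T - T_N} = \mu_{N+1} \to 0$ coming from the decreasing ordering; the only cosmetic caveat is that this limit statement is needed only when there are infinitely many nonzero singular values, the finite case being trivial. Both proofs are complete; the paper's is shorter because it delegates the work to the polar decomposition theorem, while yours is the more elementary verification.
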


\begin{proof}
There is a unique partial isometry $U$ such that $T = U\sqrt{T^*T}$ (polar decomposition). $T^*T$ is a positive self-adjoint compact operator which admits the usual spectral decomposition for self-adjoint operators:
$$ T^*T = \sum_{n\geq 0} \mu_n^2 \,| \phi_n\rangle \langle \phi_n |$$
with $\mu_n^2\in\sigma(T^*T)$ and $\phi_n$ the associate eigenvectors. We set \mbox{$\psi_n=U\phi_n$} to obtain the decomposition for non-self-adjoint compact operators.
\end{proof}

We can notice that, as a consequence of the decreasing order, we have $\mu_0 = \norm{T}$.

\vspace{1em}                                                      

\begin{prop}
$T$ is an infinitesimal if and only if $T\in\K(\H)$.
\end{prop}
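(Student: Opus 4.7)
The plan is to prove both implications using two standard facts: the norm closure of finite-rank operators is $\K(\H)$, and the spectral decomposition of compact operators stated just above.

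For the implication "infinitesimal $\Rightarrow$ compact", I would first show that the infinitesimal condition is equivalent to $T$ being a norm limit of finite-rank operators. Given $\epsilon > 0$, take the finite-dimensional subspace $E$ with $\norm{T_{\vert E^\perp}} < \epsilon$, and let $P_E$ denote the orthogonal projection onto $E$. Then $T P_E$ is finite-rank and
$$\norm{T - T P_E} \eq \norm{T(\text{id}-P_E)} \eq \norm{T_{\vert E^\perp}} < \epsilon.$$
Letting $\epsilon \to 0$ yields a sequence of finite-rank operators converging to $T$ in operator norm, and since $\K(\H)$ is the norm closure of the finite-rank operators on a separable Hilbert space, $T \in \K(\H)$.

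For the reverse implication "compact $\Rightarrow$ infinitesimal", I would apply the spectral decomposition to write $T = \sum_{n \geq 0} \mu_n \, | \psi_n \rangle \langle \phi_n |$ with the singular values $\mu_n$ forming a decreasing sequence converging to $0$. Given $\epsilon > 0$, choose $N$ with $\mu_N < \epsilon$ and set $E = \Span\set{\phi_0,\dots,\phi_{N-1}}$, which is finite-dimensional. For any $\xi \in E^\perp$ of unit norm, $\langle \phi_n, \xi\rangle = 0$ for $n < N$, so
$$\norm{T\xi}^2 \eq \sum_{n\geq N} \mu_n^2 \, \abs{\langle \phi_n, \xi\rangle}^2 \,\leq\, \mu_N^2 \, \norm{\xi}^2 < \epsilon^2,$$
giving $\norm{T_{\vert E^\perp}} < \epsilon$, as required.

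There is no real obstacle here: both directions are essentially direct consequences of the spectral theory recalled in the theorems above, together with the standard characterization of $\K(\H)$ as the norm closure of finite-rank operators. The only point that merits attention is ensuring the subspace $E$ chosen in the reverse direction truly makes $T$ small on $E^\perp$, which is why one works with the eigenvectors $\phi_n$ of $T^*T$ (i.e.~the "right" side of the singular value decomposition) rather than the $\psi_n$.
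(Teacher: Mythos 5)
Your proposal is correct. The direction ``compact $\Rightarrow$ infinitesimal'' is essentially the paper's own argument: the paper truncates the spectral decomposition, notes that the tail $\sum_{n>k}\mu_n\,|\psi_n\rangle\langle\phi_n|$ has norm $\mu_{k+1}<\epsilon$, and takes the same subspace $E=\Span\set{\phi_n : n\leq k}$; your computation of $\norm{T\xi}$ for unit $\xi\in E^\perp$ via Bessel's inequality just makes explicit why this choice of $E$ (the ``right'' singular vectors $\phi_n$, as you note) works. Where you genuinely go beyond the paper is the converse: the paper's proof stops after the compact-to-infinitesimal direction and never argues that an infinitesimal is compact, whereas you supply this with the standard closure argument --- $TP_E$ is finite-rank, $\norm{T-TP_E}=\norm{T_{\vert E^\perp}}<\epsilon$, and $\K(\H)$ is the norm closure of the finite-rank operators (a fact valid on any Hilbert space, so separability is not actually needed, though $\H$ is separable here). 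This completes the ``only if'' half of the equivalence, which the paper leaves implicit; one could also extract it from the paper's displayed estimate by observing that the truncation exhibits $T$ as a norm limit of finite-rank operators, which is exactly the same idea as yours.
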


\begin{proof}
Since the singular values $\mu_n\in\sigma(\sqrt{T^*T})$ form a decreasing sequence with limit zero, then $\forall \epsilon > 0$ there exists $k\in\setN$ such that
$$\norm{T-\sum_{n\leq k} \mu_n \,| \psi_n\rangle \langle \phi_n |} = \norm{\sum_{n> k} \mu_n \,| \psi_n\rangle \langle \phi_n |} < \epsilon.$$
We can take $E = \text{span}\set{\phi_n : n\leq k}$ to have $\norm{T_{\vert  E^\perp}} < \epsilon$.
\end{proof}

We want to characterize infinitesimals by their order, and since every infinitesimal has a decreasing sequence of singular values, we want to use the rate of decay of these singular values to determine the order of the infinitesimal.

\begin{defn}
An infinitesimal $T$ with singular values $\mu_n(T)$ is of {\bf\index{infinitesimal!order of} order $\alpha$} $\in\setR^+$ if \mbox{$\mu_n(T) = O(n^{-\alpha})$}, i.e.~if
$$\exists\, C < \infty : \ \mu_n(T) \leq C\,n^{-\alpha}\quad \forall n\in\setN.$$
\end{defn}

This definition is coherent since by \cite{Simon} we have the property
$$ \mu_{n+m}(T_1 T_2) \leq \mu_n(T_1)\ \mu_m(T_2)$$
so if $T_1$ is an infinitesimal of order $\alpha$ and $T_2$ an infinitesimal of order $\beta$, $T_1 T_2$ is an infinitesimal of order at most $\alpha + \beta$.\\

\subsection{The Dixmier trace}

We have suggested that a noncommutative integral should take the form of a trace on infinitesimal operators. We want to define a trace on the space of compact operators which neglects operators of order greater than one, so a linear functional \mbox{$\tr : \K(\H) \rightarrow \setC$} with $\tr(T_1T_2)=\tr(T_2T_1)$ such that $\tr(T)\neq0$ only if $\mu_n(T) = O(\frac 1n)$.\\ 

A first idea would be to take  the usual trace for compact operators
$$ \norm{T}_1 = \tr(\abs{T}) = \sum_{n \geq 0} \mu_n(T) $$ 
in the space $\L^1 = \set{T : \sum \mu_n(T) < \infty}$. This trace defines a new norm, with the usual properties of norms, and can be rewritten as
$$\norm{T}_1 = \lim_{N\rightarrow\infty} \sigma_N(T)$$
if we define the partial sums
$$\sigma_N(T) = \sum_{n = 0}^{N-1} \mu_n(T).$$
However, the space of infinitesimals of order 1 is larger than \mbox{$\mathcal L^1$} since the partial sums can be logarithmically divergent:
$$\mu_n(T) \leq C\,\frac 1n\quad \forall n\in\setN \quad\implies\quad \sigma_N(T) \leq C \ln N.$$

J. Dixmier \cite{Dixmiertrace} has shown that there exists a trace which corresponds to an extraction of the coefficient $C$:
\begin{equation}\label{extractC}
C = \lim_{N \rightarrow \infty} \frac{1}{\ln N} \sum_{n = 0}^{N-1} \mu_n(T) = \lim_{N \rightarrow \infty} \frac{\sigma_N(T)}{\ln N} .
\end{equation}
However the formula (\ref{extractC}) does not define a trace since linearity and convergence are not guaranteed. We must instead introduce what is called the Dixmier trace. We will partially follow \cite{CM95} for the construction of this trace. In the sequel we will assume $T\geq 0$ since we will be able to extend the trace to non-positive operators by linearity.

\begin{lemma}\label{lemmasup}
\begin{eqnarray*}
\sigma_N(T) &= &\sup\set{\norm{TP_E}_1 : E\subset \H\ \text{and}\  \dim E = N}\\
&=&\sup\set{\tr\prt{TP_E} : E\subset \H\ \text{and}\  \dim E = N}
\end{eqnarray*}
where $P_E$ is the projector and $\tr$ is the usual trace.
\end{lemma}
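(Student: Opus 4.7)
The plan is to establish the two-sided chain $\tr(TP_E)\leq\norm{TP_E}_1\leq\sigma_N(T)$ for every $N$-dimensional subspace $E\subset\H$, and then to exhibit a specific subspace $E^*$ for which both inequalities become equalities; the two stated identities then follow at once.

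The left-hand inequality $\tr(TP_E)\leq\norm{TP_E}_1$ is just the elementary bound $\abs{\tr A}\leq\norm{A}_1$ applied to the trace-class operator $A = TP_E$, combined with the observation that $\tr(TP_E) = \tr(T^{1/2}P_E T^{1/2})\geq 0$ because $T\geq 0$. For the right-hand inequality the key is the singular-value comparison $\sigma_k(TP_E)\leq\mu_k(T)$ for every $k$. This follows at once from the min-max characterization $\sigma_k(A) = \inf\set{\norm{A-K} : \text{rank }K<k}$: indeed, for any such $K$,
$$\norm{TP_E-KP_E} = \norm{(T-K)P_E}\leq\norm{T-K}\cdot\norm{P_E} = \norm{T-K},$$
so taking the infimum over $K$ gives $\sigma_k(TP_E)\leq\sigma_k(T) = \mu_k(T)$. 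Because $P_E$ has rank $N$, so does $TP_E$ at most, and hence $\norm{TP_E}_1 = \sum_{k=0}^{N-1}\sigma_k(TP_E)\leq\sum_{k=0}^{N-1}\mu_k(T) = \sigma_N(T)$.

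For attainment, I would set $E^* := \Span\set{\phi_0,\ldots,\phi_{N-1}}$, where $\set{\phi_n}$ is the orthonormal eigenbasis of $T$ supplied by the spectral decomposition with $T\phi_n = \mu_n(T)\phi_n$. Then $P_{E^*}$ commutes with $T$, so $TP_{E^*} = \sum_{n=0}^{N-1}\mu_n(T)\,\vert\phi_n\rangle\langle\phi_n\vert$ is itself positive and self-adjoint; in particular $\vert TP_{E^*}\vert = TP_{E^*}$, and consequently $\tr(TP_{E^*}) = \norm{TP_{E^*}}_1 = \sum_{n=0}^{N-1}\mu_n(T) = \sigma_N(T)$. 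Combined with the inequalities of the previous paragraph, both suprema are attained and equal $\sigma_N(T)$. The only nontrivial ingredient is the singular-value comparison $\sigma_k(TP_E)\leq\mu_k(T)$, which is where compactness of $T$ and the fact that $\norm{P_E} = 1$ are genuinely used; everything else is purely mechanical.
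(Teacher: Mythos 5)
Your proof is correct and takes essentially the same route as the paper's: the singular-value domination $\mu_k(TP_E)\leq\mu_k(T)$ combined with $\tr(TP_E)\leq\norm{TP_E}_1$, and attainment at $E^*=\Span\set{\phi_0,\dots,\phi_{N-1}}$. If anything, your justification of the comparison via the approximation-number characterization is more careful than the paper's terse assertion that ``$TP_E\leq T$'' implies the singular-value inequality.
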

\begin{proof}
Since $TP_E \leq T$ we have \mbox{$\mu_n(TP_E) \leq \mu_n(T)$} for all $n$ and because \mbox{$\dim E = N$}:
$$ \norm{TP_E}_1 = \sum_{n=0}^{\infty} \mu_n(TP_E)  \leq  \sum_{n=0}^{N-1} \mu_n(T) = \sigma_N(T).$$
For the second expression, \mbox{$\tr(TP_E) \leq \tr(\abs{TP_E}) = \norm{TP_E}_1 \leq \sigma_N(T).$} The supremum is each time attained by taking \mbox{$E = \text{span}\set{\phi_n : n < N}$.}
\end{proof}

We define a continuous extension of the notion of partial sums, with the following proposition.

\begin{prop}
The following function of $\lambda\in\setR^*$ matches $\sigma_N(T)$ at integer values:
$$\sigma_\lambda(T) = \inf\set{\norm{R}_1 + \lambda \norm{S} : R + S = T , \ R,S\in\K(\H)}.$$
\end{prop}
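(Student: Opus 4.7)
The plan is to prove both inequalities with $\lambda = N \in \sN^*$, using the characterization of $\sigma_N(T)$ given in Lemma \ref{lemmasup} together with the spectral decomposition of the positive compact operator $T$. Throughout I assume $T\geq 0$, as mentioned in the text; the extension to non-positive infinitesimals is handled by linearity.

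For the inequality $\sigma_N(T) \leq \sigma_\lambda(T)|_{\lambda=N}$, I would take any decomposition $T = R+S$ with $R,S \in \K(\H)$ and any subspace $E\subset\H$ with $\dim E = N$, with associated projector $P_E$. Writing $\tr(TP_E) = \tr(P_E R P_E) + \tr(P_E S P_E)$ via cyclicity of the trace, I would bound each term separately: $|\tr(P_E R P_E)| \leq \norm{P_E R P_E}_1 \leq \norm{R}_1$ since $P_E$ is a contraction, and $|\tr(P_E S P_E)| \leq N\,\norm{S}$ because $P_E S P_E$ lives on an $N$-dimensional subspace and has operator norm at most $\norm{S}$. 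Summing gives $\tr(TP_E) \leq \norm{R}_1 + N\norm{S}$; taking the supremum over $E$ on the left (using Lemma \ref{lemmasup}) and the infimum over decompositions on the right yields the desired inequality.

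For the reverse inequality $\sigma_\lambda(T)|_{\lambda=N} \leq \sigma_N(T)$, I would exhibit an explicit decomposition. Using the spectral decomposition $T = \sum_{n\geq 0} \mu_n\, |\phi_n\rangle\langle\phi_n|$, I set
$$R = \sum_{n=0}^{N-1} (\mu_n - \mu_N)\, |\phi_n\rangle\langle\phi_n|, \qquad S = T - R = \mu_N\sum_{n=0}^{N-1} |\phi_n\rangle\langle\phi_n| + \sum_{n\geq N} \mu_n\, |\phi_n\rangle\langle\phi_n|.$$
Then $R,S \in \K(\H)$, $R$ is a finite-rank positive operator with $\norm{R}_1 = \sum_{n=0}^{N-1}(\mu_n - \mu_N) = \sigma_N(T) - N\mu_N$, and $\norm{S} = \mu_N$ since the singular values of $S$ are $\mu_N, \mu_N, \ldots, \mu_N, \mu_N, \mu_{N+1}, \ldots$ (decreasingly reordered). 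Hence $\norm{R}_1 + N\norm{S} = \sigma_N(T)$, and passing to the infimum gives $\sigma_\lambda(T)|_{\lambda=N} \leq \sigma_N(T)$.

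I do not anticipate a serious obstacle: the only point that requires a little care is verifying $\norm{P_E S P_E}_1 \leq N\norm{S}$, which follows because $P_E S P_E$ has rank at most $N$ so its trace norm is at most $N$ times its operator norm, itself bounded by $\norm{S}$. Combining both inequalities gives equality for integer $\lambda$, which is the statement of the proposition.
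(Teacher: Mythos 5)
Your proof is correct and follows essentially the same route as the paper: the lower bound on $\inf$ is obtained by estimating $\sigma_N$ of an arbitrary decomposition via Lemma \ref{lemmasup} (you bound $\tr(TP_E)$ directly, the paper phrases it through subadditivity $\sigma_N(R+S)\leq\sigma_N(R)+\sigma_N(S)\leq\norm{R}_1+N\norm{S}$, which is the same estimate), and the attaining decomposition you exhibit, $R=\sum_{n<N}(\mu_n-\mu_N)\,|\phi_n\rangle\langle\phi_n|$ and $S=T-R$, is exactly the paper's choice $R=(T-\mu_N 1)P_E$, $S=\mu_N P_E+T(1-P_E)$ with $E=\mathrm{span}\set{\phi_n : n<N}$. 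No gaps.
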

\begin{proof}
First, we can remark that $\sigma_N$ obeys the triangle inequality
$$\sigma_N(R+S) \leq \sigma_N(R) + \sigma_N(S)$$
 since $\norm{\,\cdot\,}_1$ is a well defined norm respecting the triangle inequality and \mbox{$\norm{(R+S)P_E}_1 \leq \norm{RP_E}_1 + \norm{SP_E}_1$} for every subspace $E$ of dimension $N$. We also have that \mbox{$\sigma_N(T) \leq N \norm{T}$} since \mbox{$\sigma_N(T) \leq N \mu_0(T) = N \norm{T}$}.
 
Then, if $T = R + S$ we have
$$\sigma_N(T) \leq \sigma_N(R) + \sigma_N(S) \leq \norm{R}_1 + N \norm{S}.$$
The get the infimum, we can take $E = \text{span}\set{\phi_n : n < N}$ and set the decomposition $T = R + S$ with \mbox{$R = (T-\mu_N(T) 1) P_E$} and \mbox{$S=\mu_N(T) P_E + T(1-P_E)$} which gives \mbox{$\norm{R}_1 = \sigma_N(T) - N \mu_N(T)$} and \mbox{$\norm{S}=\mu_N(T)$}.
\end{proof}

\vspace{1em}                                                      

\begin{lemma}
If $T_1,T_2\geq 0$ and $\lambda\in\setR^*$, then
\begin{equation}\label{largelambda}
\sigma_\lambda(T_1 + T_2) \leq \sigma_\lambda(T_1 ) + \sigma_\lambda(T_2) \leq \sigma_{2\lambda}(T_1 + T_2).
\end{equation}
\end{lemma}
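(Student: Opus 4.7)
The first inequality is almost immediate from the infimum definition. Given any decompositions $T_i = R_i + S_i$ into compact operators, their sum $T_1 + T_2 = (R_1 + R_2) + (S_1 + S_2)$ is an admissible decomposition of $T_1 + T_2$, and applying the triangle inequality to both $\norm{\cdot}_1$ and $\norm{\cdot}$ yields
\[
\sigma_\lambda(T_1+T_2) \leq \norm{R_1+R_2}_1 + \lambda\norm{S_1+S_2} \leq (\norm{R_1}_1 + \lambda\norm{S_1}) + (\norm{R_2}_1 + \lambda\norm{S_2}).
\]
Taking the infimum independently over the two pairs gives $\sigma_\lambda(T_1+T_2) \leq \sigma_\lambda(T_1) + \sigma_\lambda(T_2)$; positivity of $T_1, T_2$ plays no role here.

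For the second inequality I would first settle the integer case $\lambda = N$ by means of Lemma \ref{lemmasup}. Choose subspaces $E_i$ of dimension $N$ (spanned by the first $N$ eigenvectors of $T_i$) so that $\tr(T_i P_{E_i}) = \sigma_N(T_i)$, and let $\tilde F$ be any subspace of dimension exactly $2N$ containing $E_1 + E_2$. Since $T_i \geq 0$ and $E_i \subset \tilde F$, completing an orthonormal basis of $E_i$ to one of $\tilde F$ and expanding the trace in that basis yields $\tr(P_{\tilde F} T_i) \geq \tr(P_{E_i} T_i) = \sigma_N(T_i)$; therefore
\[
\sigma_{2N}(T_1+T_2) \geq \tr(P_{\tilde F}(T_1+T_2)) = \tr(P_{\tilde F}T_1) + \tr(P_{\tilde F}T_2) \geq \sigma_N(T_1) + \sigma_N(T_2).
\]

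To extend to arbitrary real $\lambda$, I would use two structural features of $\sigma_\lambda$. First, $\lambda \mapsto \sigma_\lambda(T)$ is linear on each interval $[N,N+1]$: the explicit minimizer $t = \mu_N(T)$ in $\sigma_\lambda(T) = \inf_{t\geq 0}[\tr((T-t)_+) + \lambda t]$ delivers $\sigma_{N+\alpha}(T) = (1-\alpha)\sigma_N(T) + \alpha\sigma_{N+1}(T)$ for $\alpha \in [0,1]$. Second, $\mu \mapsto \sigma_\mu(T)$ is concave on $[0,\infty)$ as an infimum of affine functions $\mu \mapsto \norm{R}_1 + \mu\norm{S}$. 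Setting $f(\lambda) = \sigma_\lambda(T_1) + \sigma_\lambda(T_2)$ and $g(\lambda) = \sigma_{2\lambda}(T_1+T_2)$, the integer case gives $f(N) \leq g(N)$ and $f(N+1) \leq g(N+1)$, and linearity of $f$ on $[N,N+1]$ together with concavity of $g$ then yield
\[
f(\lambda) = (1-\alpha)f(N) + \alpha f(N+1) \leq (1-\alpha)g(N) + \alpha g(N+1) \leq g(\lambda)
\]
for every $\lambda = N + \alpha \in [N,N+1]$.

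The main obstacle is the integer case, which crucially depends on positivity: only for $T_i \geq 0$ can one guarantee that enlarging the projection from $P_{E_i}$ to $P_{\tilde F}$ does not decrease the trace $\tr(P\cdot T_i)$, and a counterexample with sign-indefinite operators is easy to construct. The passage from integer to real $\lambda$ is then a routine interpolation, once the piecewise linearity of $\sigma_\lambda$ on integer intervals and the concavity of $\sigma_\mu$ in $\mu$ are recognized.
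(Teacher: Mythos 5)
Your proof is correct and follows essentially the same route as the paper: the first inequality by the triangle inequality applied to the defining infimum, and the second by establishing the integer case through enlarging the maximizing $N$-dimensional subspaces $E_1,E_2$ to a common $2N$-dimensional subspace (where positivity of $T_1,T_2$ is exactly what is needed) and invoking Lemma \ref{lemmasup}. Your interpolation to real $\lambda$ via piecewise linearity of $\sigma_\lambda$ on integer intervals together with concavity of $\mu\mapsto\sigma_\mu$ is a more carefully justified version of the paper's terse ``extend by piecewise linearity'' (it correctly handles the half-integer breakpoints of $\lambda\mapsto\sigma_{2\lambda}$), but it is not a different argument in substance.
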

\begin{proof}
We just have to prove this inequality for integer values, and extend by piecewise linearity. The first part is just the triangle inequality. For the second, let us take two $n$-dimensional subspaces $E_1, E_2$ and a $2n$-dimensional subspace $E$ such that $E_1 + E_2 \subset E$, then
$$\tr\prt{T_1E_1} + \tr\prt{T_2E_2}  \leq \tr\prt{T_1E}  + \tr\prt{T_2E}  = \tr\prt{(T_1+T_2)E}. $$
We conclude by taking the supremum over all triplets $E_1,E_2,E$ and by using Lemma \ref{lemmasup}.
\end{proof}

\vspace{1em}                                                      

\begin{defn}
The {\bf\index{ideal!Dixmier} Dixmier ideal} is the norm subspace defined by:
$$\L^{1+}(\H) = \set{T\in\K(\H) : \norm{T}_{1+} = \sup_{\lambda>e} \frac{\sigma_\lambda(T)}{\ln\lambda} < \infty}\cdot$$
\end{defn}

By looking on the equation (\ref{largelambda}), we can extrapolate that $\frac{\sigma_\lambda(T)}{\ln\lambda}$ should behave like an additive functional while $\lambda$ is growing. We can look at the {\bf\index{Ces\`aro mean} Ces\`aro mean} of this quantity defined by
$$\tau_\lambda(T) = \frac 1{\ln\lambda} \int_a^\lambda \frac{\sigma_u(T)}{\ln u} \frac{du}{u} $$
for $\lambda\geq a$ and with $a>e$. Once more the convergence \mbox{$\lim_{\lambda\rightarrow\infty} \tau_\lambda(T)$} and the linearity are not guaranteed, but the behaviour of the Ces\`aro mean for large $\lambda$ is given by the following lemma:

\begin{lemma}
If $T_1, T_2 \in\L^{1+}(\H)$ are such that $\forall\lambda\geq a$, \mbox{$\sigma_\lambda(T_1) \leq C_1 \ln \lambda$} and \mbox{$\sigma_\lambda(T_2) \leq C_2 \ln \lambda$}, then
$$
\abs{\tau_\lambda(T_1+ T_2) - \tau_\lambda(T_1) - \tau_\lambda(T_2)} \leq \prt{C_1 + C_2} \frac{(\ln\ln\lambda + 2)\ln 2}{\ln \lambda}.
$$
\end{lemma}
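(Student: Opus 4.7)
The plan is to leverage the two-sided inequality (\ref{largelambda}) to bracket $\tau_\lambda(T_1+T_2)$ between $\tau_\lambda(T_1)+\tau_\lambda(T_2)$ and an expression involving $\sigma_{2u}$, then use a change of variables to convert this into an estimate of an integral of $\sigma_v(T)/[v \ln v(\ln v - \ln 2)]$, which is precisely what produces the $\ln\ln\lambda$ behavior.

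First, the upper bound. Setting $T = T_1+T_2$, the left half of (\ref{largelambda}) gives immediately $\sigma_u(T) \le \sigma_u(T_1)+\sigma_u(T_2)$, so integrating against $du/(u \ln u)$ on $[a,\lambda]$ and dividing by $\ln\lambda$ yields $\tau_\lambda(T) \le \tau_\lambda(T_1) + \tau_\lambda(T_2)$. For the reverse direction, the right half of (\ref{largelambda}) yields
\begin{equation*}
\tau_\lambda(T_1)+\tau_\lambda(T_2) - \tau_\lambda(T) \ \le\ \frac{1}{\ln\lambda}\int_a^\lambda \frac{\sigma_{2u}(T) - \sigma_u(T)}{\ln u}\,\frac{du}{u}.
\end{equation*}

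Next, I would perform the substitution $v = 2u$ in the $\sigma_{2u}$ integral to get $\int_{2a}^{2\lambda} \sigma_v(T)/[v(\ln v-\ln 2)]\,dv$, then split both integrals at $2a$ and $\lambda$ respectively so that the portions on $[2a,\lambda]$ can be combined. This reduces the right-hand side to
\begin{equation*}
\frac{1}{\ln\lambda}\Bigl[\int_{2a}^{\lambda}\frac{\sigma_v(T)\,\ln 2}{v\,\ln v\,(\ln v-\ln 2)}\,dv \,+\, \int_\lambda^{2\lambda}\frac{\sigma_v(T)}{v(\ln v-\ln 2)}\,dv \,-\, \int_a^{2a}\frac{\sigma_v(T)}{v\ln v}\,dv\Bigr].
\end{equation*}
The last (boundary) term is nonnegative and may be discarded for an upper estimate. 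Using the hypothesis $\sigma_v(T) \le (C_1+C_2)\ln v$ on the remaining two integrals, the first becomes bounded by $(C_1+C_2)\ln 2 \int_{2a}^\lambda dv/[v(\ln v-\ln 2)]$, which, via the substitution $w = \ln v-\ln 2$, evaluates to $(C_1+C_2)\ln 2\,[\ln(\ln\lambda-\ln 2) - \ln\ln a] \le (C_1+C_2)\ln 2\cdot \ln\ln\lambda$; while the second is bounded, using $\ln v /(\ln v-\ln 2) \le 2$ for $\lambda$ past a threshold, by $2(C_1+C_2)\int_\lambda^{2\lambda} dv/v = 2(C_1+C_2)\ln 2$.

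Summing, the parenthesized quantity is at most $(C_1+C_2)\ln 2\,(\ln\ln\lambda + 2)$, and dividing by $\ln\lambda$ gives exactly the claimed bound. The main subtlety, I expect, will be handling the second integral cleanly: one must argue that $\ln v/(\ln v-\ln 2)$ stays controlled on $[\lambda,2\lambda]$ for $\lambda$ large (so the "$+2$" absorbs this factor). All remaining steps are changes of variables and elementary integrations of $1/(v\ln v)$, with positivity of $T_1,T_2$ used implicitly through (\ref{largelambda}).
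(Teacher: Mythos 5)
Your strategy is the right one, and it is essentially the Connes--Moscovici computation that the paper itself does not reproduce (it only remarks that the proof "is only technical and can be found in \cite{CM95}"), so there is no in-text proof to compare against: the reduction to bounding $\tau_\lambda(T_1)+\tau_\lambda(T_2)-\tau_\lambda(T_1+T_2)$ via the left half of (\ref{largelambda}), the substitution $v=2u$, the telescoping of the two weights into $\ln 2/\bigl(v\ln v(\ln v-\ln 2)\bigr)$, the discarded nonnegative boundary term over $[a,2a]$, and the use of $\sigma_v(T_1+T_2)\le\sigma_v(T_1)+\sigma_v(T_2)\le (C_1+C_2)\ln v$ are exactly the needed ingredients, and your constants come out matching the statement.

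The one genuine blemish is the threshold you flag yourself: splitting the $\sigma_{2u}$-integral at $\lambda$ tacitly assumes $\lambda\ge 2a$, and the estimate $\ln v/(\ln v-\ln 2)\le 2$ on $[\lambda,2\lambda]$ requires $v\ge 4$, so as written you only obtain the inequality for $\lambda\ge\max(2a,4)$, while the lemma asserts it for all $\lambda\ge a$ with $a>e$ (for its purpose, additivity of $\overline\tau$ modulo $C_0([a,\infty))$, only large $\lambda$ matters, but the statement is sharper). The gap closes with a slightly different grouping: instead of combining on $[2a,\lambda]$, add and subtract $\int_{2a}^{2\lambda}\sigma_v(T)\,dv/(v\ln v)$, which turns the difference into
$$
\int_{2a}^{2\lambda}\frac{\sigma_v(T)\,\ln 2}{v\,\ln v\,(\ln v-\ln 2)}\,dv
\;+\;\int_{\lambda}^{2\lambda}\frac{\sigma_v(T)}{v\,\ln v}\,dv
\;-\;\int_{a}^{2a}\frac{\sigma_v(T)}{v\,\ln v}\,dv ,
$$
i.e.\ the leftover piece over $[\lambda,2\lambda]$ now carries the weight $1/\ln v$ rather than $1/(\ln v-\ln 2)$. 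Then the first term integrates exactly to $(C_1+C_2)\ln 2\,\bigl[\ln\ln\lambda-\ln\ln a\bigr]\le (C_1+C_2)\ln 2\,\ln\ln\lambda$ (because $\ln(2\lambda)-\ln 2=\ln\lambda$ and $\ln\ln a>0$), the second is at most $(C_1+C_2)\ln 2$ with no condition on $\lambda$, and the third is discarded as before. This yields the bound with $\ln\ln\lambda+1$ in place of $\ln\ln\lambda+2$, valid for every $\lambda\ge a$, and in particular the stated inequality.
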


The proof is only technical and can be found in  \cite{CM95}.\\

For every positive element $T\in\L^{1+}(\H)$, $\tau_\lambda(T)$ is a bounded function for the parameter $\lambda \in [a,\infty)$, i.e.~$\tau(T) = \tau_{(\cdot)}(T)\in C_b([a,\infty))$ where $C_b$ denotes the space of continuous bounded functions. Since \raisebox{0cm}[1.2\height][0.8\height]{$ \frac{(\ln\ln\lambda + 2)\ln 2}{\ln \lambda}$} is bounded and vanishes at infinity, it is an element of the ideal $C_0([a,\infty))\subset C_b([a,\infty))$. So in order to have an additive map, we define
$$\overline\tau(T) = [\tau(T)] \in C_b([a,\infty))/C_0([a,\infty))$$
as an element of the quotient space of bounded continuous functions modulo functions vanishing at infinity.\\

Now the space $\L^{1+}(\H)$ is generated by its positive elements since every self-adjoint element $T = U\abs{T}$ can be expressed as the difference of two positive operators \mbox{$T = P_+ \abs{T}P_+ - P_- \abs{T}P_-$} with $P_\pm$ the projectors on the eigenspaces of $U$ with respective eigenvalues $\pm 1$, and since self-adjoint elements generate the whole space by the formula \mbox{$T = \frac 12 (T+T^*) + \frac i2 (iT^* -iT)$}. So $\overline\tau$ can be extended to $\L^{1+}(\H)$ by setting \mbox{$\overline\tau(T) = \overline\tau(P_+ \abs{T}P_+) - \overline\tau(P_- \abs{T}P_-)$} on self-adjoint elements and \mbox{$\overline\tau(T) = \frac 12 \overline \tau(T+T^*) + \frac i2 \overline\tau(iT^* -iT)$} for non-self-adjoint elements. From the additivity of $\overline\tau$ on positive elements, one can check that $\overline\tau$ is linear on $\L^{1+}(\H)$, with \mbox{$\overline\tau(T_1T_2) = \overline\tau(T_2T_1)$}.\\

The space $C_b([a,\infty))/C_0([a,\infty))$ is a commutative $C^*$-algebra, so has a good number of states generated from the characters, each state being a positive linear map \mbox{$\omega : C_b([a,\infty)) \rightarrow \setC$} vanishing on $C_0([a,\infty))$ and with $\omega(1)=1$.  A state $\omega$ corresponds to a generalized limit for bounded but not necessarily convergent functions. Since $\overline\tau$ is a map \mbox{$\L^{1+}(\H) \rightarrow C_b([a,\infty))/C_0([a,\infty))$}, we can combine it with $\omega$ to obtain a linear functional $\L^{1+}(\H) \rightarrow \setC$.\\                                                      

\begin{defn}
The {\bf\index{Dixmier trace} Dixmier trace} associated with the state $\omega$ is the trace operator:
$$\tr_\omega : \L^{1+}(\H) \rightarrow \setC : T \leadsto \tr_\omega(T) = \omega(\overline\tau(T)).$$
\end{defn}

\vspace{1em}                                                      

By its definition, Dixmier trace gives a non-null value only for sequence of singular values with logarithmic divergence, and a null value on all infinitesimals of order greater than one.

\begin{prop}
If $\lim_{\lambda\rightarrow\infty} \tau_\lambda(T)$ exists, and only in this case, the Dixmier trace is independent of the choice of $\omega$ and \mbox{$\tr_\omega(T) = \lim_{\lambda\rightarrow\infty} \tau_\lambda(T)$}.
\end{prop}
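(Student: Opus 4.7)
The plan is to prove both implications by exploiting the structure of the quotient $C^*$-algebra $C_b([a,\infty))/C_0([a,\infty))$ on which the states $\omega$ act.

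First I would handle the easy (``if'') direction. Assume $\lim_{\lambda\to\infty}\tau_\lambda(T) = L$ exists. Then the function $\lambda \mapsto \tau_\lambda(T) - L$ is continuous, bounded, and tends to $0$ at infinity, so it belongs to the ideal $C_0([a,\infty))$. Consequently, in the quotient algebra one has $\overline\tau(T) = [\tau(T)] = L\,[\mathbf{1}]$. Any state $\omega$ on $C_b/C_0$ satisfies $\omega([\mathbf{1}]) = 1$ by the normalization condition, so $\tr_\omega(T) = \omega(\overline\tau(T)) = L$, which is the same value for every choice of $\omega$.

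For the converse, I would argue by contrapositive: suppose the limit does not exist, so that $\ell_- := \liminf_{\lambda\to\infty}\tau_\lambda(T) < \limsup_{\lambda\to\infty}\tau_\lambda(T) =: \ell_+$. The key step is to exhibit two states $\omega_+,\omega_-$ on $C_b/C_0$ that separate these two values when evaluated on $\overline\tau(T)$. For this, I would work with the sublinear functional $p(f) = \limsup_{\lambda\to\infty} f(\lambda)$ on $C_b([a,\infty))$. Define on the two-dimensional subspace $\mathrm{span}\{\mathbf{1},\tau(T)\}$ the linear form $\phi_+(\alpha\mathbf{1} + \beta\tau(T)) = \alpha + \beta\ell_+$, which is dominated by $p$ (using $p(\mathbf{1})=1$ and $p(\tau(T))=\ell_+$, together with $p(-\tau(T))=-\ell_-\geq -\ell_+$). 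By the Hahn--Banach theorem, $\phi_+$ extends to a linear functional $\omega_+'$ on $C_b$ with $\omega_+' \leq p$. Since $p$ vanishes on $C_0$, so does $\omega_+'$, hence it factors through the quotient. Positivity ($f\geq 0\Rightarrow p(-f)\leq 0\Rightarrow \omega_+'(f)\geq 0$) and normalization ($\omega_+'(\mathbf{1})=1$) show $\omega_+'$ induces a bona fide state $\omega_+$ on $C_b/C_0$, and by construction $\tr_{\omega_+}(T) = \ell_+$. An analogous construction using $q(f) = -\liminf f(\lambda) = \limsup(-f(\lambda))$ produces a state $\omega_-$ with $\tr_{\omega_-}(T) = \ell_-$.

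Since $\ell_- \neq \ell_+$, the Dixmier trace depends on $\omega$ in this case, which contradicts independence. The main obstacle will be the Hahn--Banach extension step: one must verify carefully that the extension can be chosen positive, normalized, and vanishing on $C_0$, all of which follow from domination by the sublinear functional $p$ and the fact that $p$ itself enjoys these properties. Once this is in hand, the final formula $\tr_\omega(T) = \lim_\lambda \tau_\lambda(T)$ in the convergent case is immediate from the first paragraph.
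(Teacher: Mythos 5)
Your ``if'' direction is exactly the paper's argument: when the limit $L$ exists, $\tau(T)-L\,1 \in C_0([a,\infty))$, every state of the quotient kills it, hence $\tr_\omega(T)=L$ for all $\omega$. For the converse the paper offers only the bare assertion that two distinct limit points yield two states with different values on $\tau(T)$; you actually supply the construction, via Hahn--Banach extension of a functional on $\mathrm{span}\{1,\tau(T)\}$ dominated by the sublinear functional $p(f)=\limsup_{\lambda\to\infty}f(\lambda)$ (legitimate here since $T\geq 0$ makes $\tau(T)$ real-valued; one extends to the complex algebra afterwards). That part is sound: domination by $p$ gives positivity, normalization $\omega(1)=1$, and vanishing on $C_0$, so the extension descends to a state of $C_b/C_0$ taking the value $\ell_+$ on $[\tau(T)]$. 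The one slip is your second state: you cannot dominate anything by $q(f)=\limsup(-f(\lambda))$ and get a state, because $q(1)=-1$ forces $\omega(1)\leq -1$. The fix is immediate and uses the same $p$: define $\phi_-(\alpha 1+\beta\tau(T))=\alpha+\beta\ell_-$ and check domination by $p$ directly (for $\beta\geq 0$ use $\ell_-\leq\ell_+=p(\tau(T))$, and for $\beta<0$ one has $\beta\ell_-=-|\beta|\ell_-=|\beta|\,p(-\tau(T))$, so equality holds), then extend as before to obtain $\omega_-$ with $\tr_{\omega_-}(T)=\ell_-$. With that correction your argument is complete and, unlike the paper, gives a genuine proof of the ``only in this case'' half rather than a claim.
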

\begin{proof}
If $\lim_{\lambda\rightarrow\infty} \tau_\lambda(T) = L(T)$ then \mbox{$\prt{\tau(T) - L(T) 1}\in C_0([a,\infty))$}, and since every $\omega$ vanishes on $C_0([a,\infty))$ we have \mbox{$\omega\prt{\tau(T) - L(T)1} = 0$}, so \mbox{$\tr_\omega(T) = \omega\prt{\tau(T)} = L(T)$}.

Conversely if $ \tau_\lambda(T)$ has two distinct limit points we can find two states $\omega_1$ and $\omega_1$ whose values on $ \tau(T)$ are different.
\end{proof}

The Dixmier trace can be seen as a kind of generalized limit 
$$\tr_\omega(T) = \lim_{N\rightarrow\,\omega} \frac{\sigma_N(T)}{\ln N} $$
which corresponds to the usual limit when this one exists.

\begin{defn}
Operators such that the Dixmier trace is independent of the choice of $\omega$ are called {\bf\index{operator!measurable} measurable}.
\end{defn}

The Dixmier trace is define for every operator in the space $\L^{1+}(\H)$. We can extend the definition of this space to $\L^{p+}(\H)$ spaces. 

\begin{defn}
The spaces $\L^{p+}(\H)$ are defined by:
$$\L^{p+}(\H) = \set{T\in\K(\H) : \norm{T}_{p+} = \sup_{\lambda>e} \frac{\sigma_\lambda(T)}{\lambda^{\frac{p-1}{p}}} < \infty}\cdot$$
\end{defn}

\begin{prop}
If $T\in\L^{p+}(\H)$ is a positive operator, then \mbox{$T^p\in\L^{1+}(\H)$}.\footnote{The converse is not necessarily true.}
\end{prop}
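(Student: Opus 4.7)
The plan is to extract pointwise decay information for the singular values of $T$ from the membership $T \in \L^{p+}(\H)$, transfer that decay to $T^p$ using positivity, and then estimate $\sigma_N(T^p)$ directly.

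First I would translate the hypothesis into a decay rate. Since $T \in \L^{p+}(\H)$, there is a constant $C$ with $\sigma_\lambda(T) \leq C\,\lambda^{(p-1)/p}$ for all $\lambda > e$, and in particular $\sigma_N(T) \leq C\,N^{(p-1)/p}$ for every integer $N \geq 3$. Because the singular values $\mu_n(T)$ form a decreasing sequence, one has the elementary bound
\[
N\,\mu_{N-1}(T) \;\leq\; \sum_{n=0}^{N-1} \mu_n(T) \;=\; \sigma_N(T) \;\leq\; C\, N^{(p-1)/p},
\]
so $\mu_n(T) \leq C\,(n+1)^{-1/p}$ for all $n$. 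Thus $T$ is an infinitesimal of order $1/p$.

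Next I would use positivity to pass to $T^p$. Since $T \geq 0$, the spectral decomposition gives $\mu_n(T^p) = \mu_n(T)^p$, hence
\[
\mu_n(T^p) \;\leq\; C^p\,(n+1)^{-1}.
\]
This is the expected order-one behaviour. Summing, for every integer $N \geq 2$,
\[
\sigma_N(T^p) \;=\; \sum_{n=0}^{N-1} \mu_n(T^p) \;\leq\; C^p \sum_{n=0}^{N-1} \frac{1}{n+1} \;\leq\; C^p\bigl(1 + \ln N\bigr),
\]
using the standard comparison with $\int_1^N \frac{du}{u}$. Extending from integers to real $\lambda$ by piecewise linearity of $\sigma_\lambda$ yields $\sigma_\lambda(T^p) \leq C'\,\ln \lambda$ for $\lambda > e$, and therefore
\[
\sup_{\lambda > e} \frac{\sigma_\lambda(T^p)}{\ln \lambda} \;<\; \infty,
\]
which is precisely $T^p \in \L^{1+}(\H)$.

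The main subtlety, and the one step that is more than a routine manipulation, is the first one: deducing the pointwise bound $\mu_n(T) = O(n^{-1/p})$ from the integrated bound on $\sigma_\lambda(T)$. The monotonicity argument $N\,\mu_{N-1}(T) \leq \sigma_N(T)$ is what makes it work; without positivity of $T$ one would still get this decay for the singular values, but the identity $\mu_n(T^p) = \mu_n(T)^p$ used in the second step genuinely requires $T \geq 0$, which explains the hypothesis and the footnote that the converse fails in general.
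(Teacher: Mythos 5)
Your proof is correct and follows essentially the same route as the paper: extract the pointwise decay $\mu_n(T)=O(n^{-1/p})$ from the bound on $\sigma_\lambda(T)$, use positivity to get $\mu_n(T^p)=\mu_n(T)^p=O(1/n)$, and sum to obtain logarithmic growth of $\sigma_\lambda(T^p)$. Your explicit monotonicity bound $N\,\mu_{N-1}(T)\leq\sigma_N(T)$ merely spells out a step the paper leaves implicit (modulo a harmless enlargement of the constant for small $n$).
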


\begin{proof}
There exists a constant $C>0$ such that 
$$\sigma_\lambda(T) \leq C\,\lambda^{\frac{p-1}{p}} = C\,\frac{p-1}{p} \int_0^\lambda \frac{1}{s^{\frac 1p}} ds.$$
 So we can find a constant $C' \geq C$ such that $\mu_n(T) \leq C'\,\frac{p-1}{p} \frac{1}{(n+1)^{\frac 1p}}$ for each $n\geq 0$, and since $T$ is positive:
 $$\mu_n(T^p) =  \mu_n(T)^p \leq \prt{C'\,\frac{p-1}{p}}^p \frac{1}{(n+1)} $$
 and for some $C'' \geq  \prt{C'\,\frac{p-1}{p}}^p$:
 \begin{equation*}
 \sigma_\lambda(T^p) \leq C''\int_0^\lambda \frac{1}{s} ds = C''\,\ln\lambda.\qedhere
 \end{equation*}
\end{proof}

By this proposition, for every operator $T\in\L^{p+}(\H)$, the Dixmier trace $\tr_\omega(\abs{T}^p)$ is well defined.\\

\vspace{1em}                                                      

\subsection{Spin geometry on Riemannian manifolds}\label{spinsec}

We need to introduce the theory of spin manifolds and Dirac operators. We give here a concise review of the fundamental notions, and we refer the readers who are unfamiliar with spin geometry  to \cite{Ati64,Lawson}.

\begin{defn}
If $V$ is a vector space over $\setR$ and $q$ a symmetric bilinear form on $V$, the {\bf\index{algebra!Clifford} Clifford algebra} $\text{Cl}(V,q)$ is the most general algebra generated by $V$ under the condition
$$uv + vu = 2\,q(u,v) \  \lequi \ u^2 = q(u,u) 1 \quad \forall u,v\in V.$$
The  {\bf\index{algebra!Clifford!complexified} complexified Clifford algebra} is just given by complexification \mbox{$\Cl(V,q) = \text{Cl}(V,q) \otimes \setC\cong  \text{Cl}(V \otimes \setC,q)$} with $q$ extended bilinearly.
\end{defn}

One way to construct a Clifford algebra is to consider the tensor algebra \mbox{$\sum_{r=0}^\infty \bigotimes_r V$} (the algebra of tensors of any rank with tensor product as multiplication) and to take the quotient by the ideal generated by elements of the form \mbox{$v\otimes v - q(v,v) 1$} for all $v \in V$.

\begin{defn}
The {\bf\index{fiber bundle!Clifford} Clifford bundle} $\Cl(T\M)$ over a compact Riemannian manifold $\M$ with metric $g$ is the fiber bundle  whose fibers are the complexified Clifford algebras generated by the tangent bundle over $\M$, i.e.~\mbox{$F_p = \Cl(T_p(\M),g_{|p})$}. Alternatively, the  {\bf Clifford bundle} $\Cl(T^*\M)$ is generated by the cotangent bundle, i.e.~\mbox{$F_p = \Cl(T^*_p(\M),g^{-1}_{|p})$}. Those two Clifford bundles are isomorphic by extending the isomorphism between tangent and cotangent spaces given by $g$, so we can define the {\bf Clifford bundle} $\Cl(\M)$ to be alternatively one of them.
\end{defn}

\begin{defn}
A {\bf\index{module!Clifford} Clifford module} over $\M$ is a finitely generated projective $C(\M)$-module $\Gamma(W)$, where $W$ is a vector bundle over $\M$, together with a representation of the Clifford algebra $\Cl(\M)$ onto the space of endomorphisms of $W$, i.e.~a \mbox{$C(\M)$-linear} homomorphism \mbox{$c : \Gamma\prt{\Cl(\M)} \rightarrow \Gamma(\text{End }W)$}. The action of the Clifford algebra can be represented by a left multiplication \mbox{$c(\alpha)\,\omega = \alpha \cdot \omega$}, so a Clifford module can be considered as a $C(\M)$-module and a left ${\Gamma(\Cl(\M)})$-module.
\end{defn}

The Levi--Civita connection on the tangent bundle, or similarly on the cotangent bundle, extends uniquely as a connection $\nabla^{\Cl}$ on ${\Cl(\M)}$. 

\begin{defn}
For any self-adjoint Clifford module $\Gamma(W)$ (i.e.~Clifford module with a self-adjoint representation $c^\dagger (\alpha) = c(\alpha^*)$ where the adjoint $c^\dagger$ is relative to a Hermitian pairing in \mbox{$\Gamma(W)$}) we define a {\bf\index{connection!Clifford} Clifford connection} $\nabla$ by requiring the Leibniz rule:
$$\nabla(c(v)s) = c(\nabla^{\Cl}v)\,s + c(v) \,\nabla s \qquad \forall v\in\Gamma({\Cl(\M)}),\ \forall s\in\Gamma(W).$$
\end{defn}

We can rewrite the Clifford action \mbox{$c : \Gamma\prt{\Cl(\M)} \rightarrow \Gamma(\text{End }W)$} as an operator \mbox{$\hat c : \Gamma\prt{\Cl(\M)} \otimes \Gamma(W) \rightarrow \Gamma(W)$} by setting \mbox{$\hat c(v\otimes s) = c(v)\,s$}. This will be useful for the following definition:

\begin{defn}
The (generalized) {\bf\index{operator!Dirac} Dirac operator} associated to the connection $\nabla$ and the self-adjoint Clifford action $c$ is defined by\footnote{The $-i$ factor is added here to obtain an essentially self-adjoint operator (see below) instead of an essentially skew-self-adjoint operator.}:
$$ D = -i(\hat c \circ \nabla).$$
\end{defn}

This definition can easily be illustrated with the use of Dirac matrix.

\begin{defn}
If \mbox{$(x^1,\dots,x^n)$} is a local basis of $\M$, the (curved) {\bf\index{Dirac matrix} Dirac matrix} (or curved gamma matrix) are the Hermitian elements \mbox{$\gamma^\mu(x) = c(dx^\mu) \in \Cl(\M)$} which respect the relations:
$$\gamma^\mu \gamma^\nu + \gamma^\nu \gamma^\mu = 2 g^{\mu\nu}.$$
\end{defn}

The Dirac operator associated to the covariant derivative \mbox{$\nabla_\mu = \partial_\mu + \omega_\mu$} can be written locally as
$$D = -i c(dx^\mu)\nabla_\mu = -i\gamma^\mu\prt{\partial_\mu  + \omega_\mu}$$
and in case of a flat connexion, the Dirac operator is simply 
$$D = -i\gamma^\mu \partial_\mu = -i /\!\!\!\partial.$$

We can be interested in the definition of the Dirac operator whenever the compact Riemannian manifold $\M$ carries a spin structure. We will denote by $E$ the tangent bundle over $\M$ with metric $g$.

\begin{defn}
A {\bf\index{spin structure} spin structure}\footnote{See  \cite{Lawson} for more details or alternative definitions about spin structures, and for existence conditions of spin structures on Riemannian manifolds.} on $E$ is a principal $\text{Spin(n)}$-bundle $P_{\text{Spin}}(E)$ with a double covering \mbox{$\xi : P_{\text{Spin}}(E) \rightarrow P_{\text{SO}}(E)$} such that
$$ \xi(pg) = \xi(p)\,\iota(g) \qquad \forall p\in P_{\text{Spin}}(E),\ \forall g\in\text{Spin(n)}$$
where
\begin{itemize}
\item $\text{Spin(n)}$ is the Lie group which is a double covering of the group $SO(n)$ composed by the invertible elements of the real Clifford algebra $\text{Cl}(\setR^n,g)$, with \mbox{$\iota : \text{Spin(n)} \rightarrow SO(n)$} being the double covering map (i.e with kernel $\setZ_2$)
\item $ P_{\text{SO}}(E)$ is the principal $SO(n)$-bundle of orthonormal frames of $E$, i.e.~with each fiber $F_x$ being the set of all orthonormal ordered basis at the point $x\in\M$
\end{itemize}
\end{defn}

\vspace{0.5em}                                                      

A manifold $\M$ which carries a spin structure on its tangent bundle $E$ is called a {\bf\index{manifold!spin} spin manifold}.

\vspace{0.5em}                                                      

\begin{defn}
A (complex) {\bf\index{fiber bundle!spinor}\index{module!spinor} spinor bundle} $S$ of $E$ is a bundle of the form \mbox{$S = P_{\text{Spin}}(E) \times_\mu V$} where $V$ is a complex left $\Cl(\setR^n,g)$-module (typically $\Cl(\setR^n,g)$ itself with left multiplication) and $\mu$ is the complex representation of \mbox{$\text{Spin}(n)$} (seen as a subspace of the algebra $\Cl(\setR^n,g)$) given by left multiplication.
\end{defn}

\vspace{0.5em}                                                      

One can check that the sections of a spinor bundle $S$ form a module over the sections of the Clifford bundle $\Cl(\M)$, so \mbox{$\Gamma(S)$} can be seen as a self-adjoint Clifford module with a Clifford action $c$.\\

Once more we can take the Levi--Civita connection considered as a connection $1$-form on $P_{\text{SO}}(E)$, lift it to $ P_{\text{Spin}}(E)$ by the use of $\xi$, and then define a unique Hermitian connection on $S$, called the {\bf\index{connection!spin} spin connection} $\nabla^S$, with the Leibniz rule:
$$\nabla^S(c(v)s) = c(\nabla^{\Cl}v)\,s + c(v) \,\nabla^S s \qquad \forall v\in\Gamma({\Cl(\M)}),\ \forall s\in\Gamma(S).$$

We define by this way the Dirac operator on a spinor bundle to be the Dirac operator associated to the spin connection:
$$ D = -i(\hat c \circ \nabla^S).$$
This operator is sometimes called the {\bf\index{operator!Atiyah--Singer} Atiyah--Singer} operator. Its square is related to the Laplacian by the Lichnerowicz formula:

\begin{thm}[Lichnerowicz \cite{Lich}]
\begin{equation}\label{Lich}
D^2 = \Delta^S + \frac{1}{4}R
\end{equation}
where $\Delta^S = -g^{\mu\nu}\prt{\nabla^S_\mu\nabla^S_\nu - \Gamma^\lambda_{\mu\nu} \nabla^S_\lambda}$ is the Laplacian operator associated to the spin connection and $R$ is the scalar curvature.
\end{thm}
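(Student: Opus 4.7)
The plan is to compute $D^2$ directly in a local coordinate frame and organize the result into a symmetric part, which will yield $\Delta^S$, and an antisymmetric part, which will yield $\tfrac14 R$. Writing $D = -i\gamma^\mu \nabla^S_\mu$ locally gives
$$D^2 = -\gamma^\mu \nabla^S_\mu \gamma^\nu \nabla^S_\nu.$$
First I would exploit the fact that the spin connection is a Clifford connection: applying the Leibniz rule $\nabla^S(c(v)s) = c(\nabla^{\mathrm{Cl}} v)\,s + c(v)\,\nabla^S s$ to $v=dx^\nu$, together with $\nabla^{\mathrm{Cl}}_\mu dx^\nu = -\Gamma^\nu_{\mu\lambda}\,dx^\lambda$, yields the commutation relation $[\nabla^S_\mu,\gamma^\nu] = -\Gamma^\nu_{\mu\lambda}\gamma^\lambda$. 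Substituting this into the expression for $D^2$ produces one term of the form $-\gamma^\mu \gamma^\nu \nabla^S_\mu \nabla^S_\nu$ plus Christoffel corrections that will be absorbed into the Laplacian.

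Next I would split $\gamma^\mu \gamma^\nu = g^{\mu\nu} + \tfrac12[\gamma^\mu,\gamma^\nu]$ using the Clifford relation. The symmetric piece combines with the Christoffel terms from the previous step to reconstruct exactly the spinorial Laplacian
$$\Delta^S = -g^{\mu\nu}\bigl(\nabla^S_\mu \nabla^S_\nu - \Gamma^\lambda_{\mu\nu}\nabla^S_\lambda\bigr),$$
which is a straightforward bookkeeping computation. The antisymmetric piece becomes
$$-\tfrac12 [\gamma^\mu,\gamma^\nu]\nabla^S_\mu \nabla^S_\nu = -\tfrac14 [\gamma^\mu,\gamma^\nu]\,[\nabla^S_\mu,\nabla^S_\nu],$$
where the equality uses only the antisymmetry of $[\gamma^\mu,\gamma^\nu]$ in $(\mu,\nu)$.

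The remaining ingredient is the curvature of the spin connection, which I would derive by lifting the Levi--Civita curvature through the double covering $\mathrm{Spin}(n)\to SO(n)$, obtaining
$$[\nabla^S_\mu,\nabla^S_\nu] = \tfrac14 R_{\mu\nu\alpha\beta}\,\gamma^\alpha \gamma^\beta.$$
Substituting this gives a scalar expression proportional to $\gamma^\mu\gamma^\nu\gamma^\alpha\gamma^\beta R_{\mu\nu\alpha\beta}$. The main obstacle is the algebraic reduction of this quartic Clifford expression to the scalar curvature. I would carry it out by repeatedly applying $\gamma^\mu \gamma^\nu + \gamma^\nu \gamma^\mu = 2g^{\mu\nu}$ to anticommute the $\gamma$'s and then invoking the first Bianchi identity $R_{\mu[\nu\alpha\beta]} = 0$, which kills all fully antisymmetric pieces; what survives is a double trace $g^{\mu\alpha}g^{\nu\beta}R_{\mu\nu\alpha\beta}$ whose antisymmetries in $(\mu\nu)$ and $(\alpha\beta)$ collapse the combinatorial factors to precisely $\tfrac14 R$. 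Adding this to $\Delta^S$ produces the Lichnerowicz formula.
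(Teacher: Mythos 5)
The paper does not actually prove this theorem: it states the formula and cites Lichnerowicz's original paper \cite{Lich}, using it later as a black box, so there is no internal proof to compare your argument against. On its own merits your outline is the standard proof and is sound: the relation $[\nabla^S_\mu,\gamma^\nu]=-\Gamma^\nu_{\mu\lambda}\gamma^\lambda$ does follow from the Clifford-connection Leibniz rule applied to $c(dx^\nu)$; the split $\gamma^\mu\gamma^\nu=g^{\mu\nu}+\tfrac12[\gamma^\mu,\gamma^\nu]$ correctly isolates $\Delta^S$, where the one point hidden in your ``bookkeeping'' remark is that the symmetry of $\Gamma^\nu_{\mu\lambda}$ in its lower indices forces the Christoffel correction to contract only with the symmetric part $g^{\mu\lambda}$ of $\gamma^\mu\gamma^\lambda$, which is exactly why it feeds into the Laplacian and not into the curvature term; and the passage to $-\tfrac14[\gamma^\mu,\gamma^\nu][\nabla^S_\mu,\nabla^S_\nu]$ followed by the Clifford/first-Bianchi reduction of $R_{\mu\nu\alpha\beta}\,\gamma^\mu\gamma^\nu\gamma^\alpha\gamma^\beta$ to a Ricci double trace is the classical endgame. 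The only ingredient that a complete write-up must prove (or cite) rather than assert is the spinor curvature identity $[\nabla^S_\mu,\nabla^S_\nu]=\tfrac14 R_{\mu\nu\alpha\beta}\,\gamma^\alpha\gamma^\beta$; your plan of obtaining it from the identification of $\mathfrak{spin}(n)$ with the quadratic Clifford elements via the double covering is the right one, but the factor $\tfrac14$ and the overall sign — which must be matched against the curvature conventions fixed in Chapter 1 so that the result comes out as $+\tfrac14 R$ rather than $-\tfrac14 R$ — are precisely where such computations most often slip, so they deserve explicit tracking rather than the qualitative description you give.
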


Now we can construct a Hilbert space where the Dirac operator acts as an unbounded essentially self-adjoint operator.

\begin{defn}
{\bf $\H = L^2(\M,S)$} is the Hilbert space of square integrable sections of the spinor bundle $S$ over $\M$, i.e.~the completion of the prehilbert space given by the spinor module \mbox{$\Gamma(S)$} under the positive definite Hermitian form
$$(\psi,\phi) = \int_\M \psi^*\phi\, d\mu_g$$
with $\psi^*\phi$ the usual inner product in $\setC^{m}$ (with $m = 2^{\left[\frac n2\right]}$ being the rank of the spinor bundle) and $d\mu_g = \sqrt{\det g} \;d^nx$ the Riemannian density on $\M$.
\end{defn}

\begin{thm}
The Dirac operator on a spinor bundle $S$ is an unbounded essentially\footnote{Essentially self-adjoint means that the unique extension of $D$ on $\H = L^2(\M,S)$ is a self-adjoint operator.} self-adjoint operator on the space $\H = L^2(\M,S)$.
\end{thm}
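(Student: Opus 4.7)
The plan is to establish three things on the dense domain $\Gamma(S) \subset \H$: that $D$ is unbounded, that it is symmetric, and that its deficiency subspaces $\ker(D^* \pm i)$ are trivial, which is Von Neumann's criterion for essential self-adjointness. Unboundedness is immediate: $D$ is a nontrivial first-order differential operator, so already on any coordinate chart one can exhibit rapidly oscillating compactly supported spinors $\psi_n$ with $\|D\psi_n\|/\|\psi_n\| \to \infty$.

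For symmetry, I would work locally with $D = -i\gamma^\mu \nabla^S_\mu$. Two ingredients are needed: first, the self-adjointness hypothesis on the Clifford action $c$ gives that the Dirac matrices $\gamma^\mu = c(dx^\mu)$ are Hermitian with respect to the pointwise Hermitian pairing on $S$; second, the spin connection $\nabla^S$ is Hermitian, meaning $\partial_\mu(\psi^*\phi) = (\nabla^S_\mu \psi)^*\phi + \psi^*(\nabla^S_\mu \phi)$. Combining these, for $\psi, \phi \in \Gamma(S)$ one computes
\[
(D\psi,\phi) - (\psi, D\phi) = -i\int_\M \nabla_\mu\bigl(\psi^*\gamma^\mu\phi\bigr)\sqrt{\det g}\,d^nx,
\]
which equals the integral of an exact divergence and hence vanishes by Stokes' theorem, since $\M$ is compact without boundary.

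For the deficiency subspaces, the key observation is that $D$ is elliptic: its principal symbol is $\sigma_D(x,\xi) = c(\xi)$, and the Clifford relation $c(\xi)^2 = g_x^{-1}(\xi,\xi)\,\mathrm{id}$ shows $\sigma_D(x,\xi)$ is invertible for $\xi \neq 0$. Consequently $D \pm i$ is elliptic with the same principal symbol. If $\psi \in \H$ lies in $\ker((D \mp i)^*)$, then by definition $((D \mp i)\phi, \psi) = 0$ for every $\phi \in \Gamma(S)$, which says that $\psi$ is a distributional solution of $(D \pm i)\psi = 0$. Elliptic regularity then forces $\psi \in \Gamma(S)$, and the symmetry established above gives
\[
\pm i\,\|\psi\|^2 = (D\psi,\psi) = (\psi,D\psi) = \mp i\,\|\psi\|^2,
\]
so $\psi = 0$.

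The routine components are the Stokes computation and the von Neumann criterion. The main analytical obstacle is the elliptic regularity step, which requires the construction of a pseudodifferential parametrix for $D \pm i$ (or equivalently G\aa{}rding's inequality together with a local smoothing argument); this is a general fact about elliptic operators on compact manifolds, used here as a black box. An alternative route worth mentioning is to invoke the Lichnerowicz formula \eqref{Lich}, which shows $D^2 + C$ is a positive elliptic second-order operator for $C$ a bound for $|R|/4$; essential self-adjointness of $D$ then reduces to essential self-adjointness of $D^2$, which can be handled directly via the heat semigroup $e^{-tD^2}$ on the compact manifold.
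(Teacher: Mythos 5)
Your proof is correct. The paper does not give its own argument for this theorem---it defers to \cite{Var,Lawson}---and what you propose (symmetry of $D$ on $\Gamma(S)$ via the divergence theorem on the closed manifold $\M$, then vanishing of the deficiency subspaces by combining ellipticity of $D\pm i$ with elliptic regularity and von Neumann's criterion) is essentially the standard proof found in those references, so it matches the intended route.
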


The proof of this theorem can be found in \cite{Var,Lawson}.\\

The next result is of great interest for Riemannian noncommutative geometry. It shows that the commutator of a Dirac operator acts as a differential on the space of functions on $\M$.

\begin{thm}\label{commudirac}
If $D$ is the Dirac operator on the spinor  module $\Gamma(S)$ and if we consider $C^\infty(\M)$ as an algebra of operators acting on $\Gamma(S)$ by scalar multiplication, then 
$$\forall a\in C^\infty(\M),\qquad [D,a] = -i\,c(da).$$
\end{thm}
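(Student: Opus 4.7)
The plan is to evaluate both sides on an arbitrary section $s \in \Gamma(S)$ and exploit the Leibniz rule satisfied by the spin connection $\nabla^S$ together with the $C^\infty(\M)$-linearity of the Clifford action $\hat c$.

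First, I would unfold the definition: for $a \in C^\infty(\M)$ acting by scalar multiplication and for $s\in\Gamma(S)$,
$$D(as) = -i\,\hat c\bigl(\nabla^S(as)\bigr).$$
The key fact is that $\nabla^S$, being a connection on the vector bundle $S$, obeys the Leibniz rule with respect to multiplication by smooth functions:
$$\nabla^S(as) = da\otimes s + a\,\nabla^S s,$$
where $da\in\Omega^1(\M)$ is viewed as a section of the cotangent bundle, and hence (through the canonical embedding of $T^*\M$ into $\Cl(\M)$) as a degree-one element of the Clifford bundle.

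Next, I would apply $-i\hat c$ to both terms. Since $\hat c: \Gamma(\Cl(\M))\otimes\Gamma(S)\to\Gamma(S)$ satisfies $\hat c(v\otimes s)=c(v)s$, the first term yields $-i\,c(da)\,s$. For the second term, using that $\hat c$ is $C^\infty(\M)$-linear (so the scalar $a$ commutes past it), we get $a\bigl(-i\,\hat c(\nabla^S s)\bigr) = a\,Ds$. Thus
$$D(as) = -i\,c(da)\,s + a\,Ds,$$
which, after rearrangement, gives
$$[D,a]\,s = D(as) - a\,Ds = -i\,c(da)\,s.$$
As $s\in\Gamma(S)$ was arbitrary and $\Gamma(S)$ is dense in $\H = L^2(\M,S)$, the operator identity $[D,a] = -i\,c(da)$ follows.

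The argument is essentially a one-line computation once the correct interpretation of each symbol is in place, so there is no real obstacle; the only conceptual point to be careful about is the identification of the differential $da$ (a 1-form) with the corresponding section of $\Cl(\M)$ on which $c$ acts, and the use of the fact that viewing the function $a$ as the Clifford element $c(a)$ (a degree-zero element) gives $\nabla^{\Cl}a = da$, which is consistent with the Leibniz rule stated in the definition of a Clifford connection.
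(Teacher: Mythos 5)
Your proof is correct and follows essentially the same route as the paper's: evaluate $[D,a]$ on a section, apply the Leibniz rule for $\nabla^S$ and the $C^\infty(\M)$-linearity of the Clifford action, and identify the covariant derivative of the function $a$ with $da$. The only cosmetic difference is that you write $da\otimes s$ directly in the Leibniz rule where the paper first writes $\nabla^S(a)\otimes s$ and then identifies $\nabla^S a$ with $da$; the content is identical.
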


\begin{proof}
For any section $s\in\Gamma(S)$, we have
\begin{eqnarray*}
i\,[D,a] \,s &=& i\,D(as) - i\,a\,D(s)\\
&=& \hat c\prt{ \nabla^S(as)} - a\,\hat c\prt{\nabla^S s}\\
&=& \hat c\prt{ \nabla^S(as) - a\,\nabla^S s}\qquad\text{since c is $C(\M)$-linear}\\
&=& \hat c\prt{ \nabla^S(a)\otimes s +a\,\nabla^S s - a\,\nabla^S s}\\
&=& c\prt{ \nabla^S(a) } \,s\qquad\text{}
\end{eqnarray*}
and since the covariant derivative of a function $a$ corresponds to its directional derivative, we have \mbox{$\nabla^S_x\, a = x(a) = da(x) $}, so
\begin{equation*}
[D,a]\,s = -i\,c(da)\,s.\qedhere
\end{equation*}
\end{proof}

This result is quite surprising. It shows that the operator $[D,a]$ is a bounded operator acting on $\H$ which is no more than a representation of the usual differential operator $da$. Its norm is given by \mbox{$\norm{[D,a]} = \sup_{x\in\M}\norm{da(x)}$} which is finite since $a$ is smooth and $\M$ is compact. The fact the $[D,a]$ acts as a multiplicative operator can also be checked from its local expression:
$$i[D,a]s = \gamma^\mu \partial_\mu (as) - a \gamma^\mu \partial_\mu s = \prt{\gamma^\mu \partial_\mu a}s .$$

Further results can be obtained from the Dirac operator on compact Riemannian manifolds, and more precisely from a kind of inverse $\abs{D}^{-1}$ of this operator. However, in order to define this inverse, we need to introduce pseudodifferential operators. For a complete introduction to the theory of pseudodifferential operators, the reader can refer to \cite{Taylor,Treves}.\\

The notion of pseudodifferential operator comes from the following observation: every differential operator of order $d$ on the Riemannian bundle $E$ is a linear map \mbox{$p :\Gamma(E) \rightarrow \Gamma(E)$} which can be locally written as:
$$P(x) = \sum_{\abs{\alpha} \leq d} a_\alpha(x) \,\partial^\alpha$$
with $\alpha = \set{\alpha_1, \dots, \alpha_n}$  a multi-index of cardinality \mbox{$\abs{\alpha} = \sum_{k=1}^n \alpha_k$} ($n$ is here the dimension of the manifold $\M$), $a_\alpha \in C^\infty(\M)$ for each $\alpha$ and \mbox{$\partial^\alpha = \partial_1^{\alpha_1} \circ\cdots\circ \partial_n^{\alpha_n}$}. By use of the Fourier transform, we can write:
\begin{eqnarray}
P(x) f(x) &=& \frac{1}{(2\pi)^n} \int e^{ix\cdot \xi} \,p(x,\xi)\,\hat f(\xi)\,d\xi \nonumber\\
&=& \frac{1}{(2\pi)^n} \int \int e^{i(x-y)\cdot \xi} \,p(x,\xi)\,f(y)\,dy\,d\xi\label{pseudo}
\end{eqnarray}
for  $f\in\Gamma(E)$ and with $\xi\in\setR^n$, and where
\begin{equation}\label{princip}
p(x,\xi) =  \sum_{\abs{\alpha} \leq d} a_\alpha(x) \xi^\alpha.
\end{equation}

The function $p(x,\xi)$ is called the {\bf\index{symbol} symbol} of $P$. The idea of the construction of pseudodifferential operators is to enlarge the class of symbols. For this goal, we say that a symbol is of order $d$, and we write $p\in\text{Sym}^d$, if for every multi-index $\alpha,\beta$ there is a constant $C_{\alpha\beta}$ such that
$$ \abs{\partial_x^\beta \,\partial_\xi^\alpha \,p(x,\xi)} \leq C_{\alpha\beta} \prt{1+\abs{\xi}}^{d-\abs{\alpha}}.$$

\begin{defn}
A {\bf\index{operator!pseudodifferential} pseudodifferential operator} of order $d$ is an operator $P$ in the form defined by (\ref{pseudo}) with a symbol $p\in\text{Sym}^d$.
\end{defn}

\vspace{0.5em}                                                      

There is no problem to define pseudodifferential operators of negative order. A pseudodifferential operator of order $-\infty$, i.e.~with a symbol \mbox{$p\in\text{Sym}^{-\infty} = \cap_d\; \text{Sym}^d$}, is called a {\bf\index{operator!smoothing} smoothing operator}.

\vspace{0.5em}                                                      

\begin{defn}
The {\bf\index{symbol!principal} principal symbol} of a pseudodifferential operator of order $d$ is the highest order part of the formula (\ref{princip}):
$$\sigma^P(\xi) = \sum_{\abs{\alpha} = d} a_\alpha\, \xi^\alpha.$$
\end{defn}

From the principal symbol, we can define a kind of trace in the space of pseudodifferential operators of order $-n$, introduced by Wodzicki \cite{Wodzicki}.

\vspace{0.5em}                                                      

\begin{defn}
Let $P$ be a pseudodifferential operator of order $-n$. The {\bf\index{Wodzicki residue} Wodzicki residue}\footnote{The initial formulation of the Wodzicki residue was without the coefficient.} is defined by
\begin{equation}\label{Wodzicki}
\text{Res}_W\,P = \frac{1}{n(2n)^n} \int_\M \int_{S^{n-1}} \tr_E\,\sigma^P(\xi)\  d\mu_g\,d\xi
\end{equation}
where \mbox{$S^{n-1} =\set{\xi : \norm{\xi}=1}$} is the standard unit sphere and $\tr_E$ is the pointwise matrix trace on $\text{End(E)}$.
\end{defn}

\begin{defn}
A pseudodifferential operator is {\bf\index{operator!elliptic} elliptic} if its principal symbol $\sigma^P(\xi)$ is invertible for each non-zero $\xi$.
\end{defn}

The invertibility of the principal symbol of elliptic operators leads to the useful property (see \cite{Treves} for the proof):

\begin{prop}
If $P$ is an elliptic pseudodifferential operator of order $d$, then there exists an inverse $P^{-1}$ modulo smoothing operator, i.e.~another elliptic pseudodifferential operator $P^{-1}$ of order $-d$ such that \mbox{$\prt{PP^{-1} - \mathbb I}$} and \mbox{$\prt{P^{-1}P - \mathbb I}$} are both smoothing operators. Such inverse is called a {\bf\index{parametrix} parametrix}.
\end{prop}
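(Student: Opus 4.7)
The plan is to construct the parametrix explicitly at the symbol level by inverting the principal symbol and then correcting iteratively, using the composition calculus for pseudodifferential operators.

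First, I would exploit ellipticity. Since $\sigma^P(x,\xi)$ is invertible for $\xi \neq 0$, I define $q_{-d}(x,\xi) = \chi(\xi)\, \sigma^P(x,\xi)^{-1}$ where $\chi$ is a smooth cutoff that vanishes on a neighborhood of $\xi=0$ and equals $1$ for $|\xi|\geq 1$. Using the homogeneity/growth of an elliptic principal symbol one checks the estimates $|\partial^\beta_x \partial^\alpha_\xi q_{-d}| \leq C_{\alpha\beta}(1+|\xi|)^{-d-|\alpha|}$, so $q_{-d}\in \mathrm{Sym}^{-d}$. Its associated pseudodifferential operator $Q_0$ is then an elliptic operator of order $-d$, and by construction the principal symbol of $PQ_0 - \mathbb I$ vanishes, so $PQ_0 - \mathbb I$ is of order $-1$.

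Second, I would correct iteratively. The composition formula gives a formal expansion
\[
\sigma(P\,Q) \;\sim\; \sum_{\alpha}\frac{(-i)^{|\alpha|}}{\alpha!}\, \partial_\xi^\alpha \sigma^P\cdot \partial_x^\alpha \sigma^Q .
\]
Writing $Q \sim \sum_{j\geq 0} q_{-d-j}$, I would plug this into the composition and solve order by order the equations $\sigma^P\cdot q_{-d-j} = - r_{-j}$, where $r_{-j}$ collects all the already-known lower order contributions from $q_{-d},\ldots,q_{-d-j+1}$. Each $q_{-d-j}$ is again obtained by multiplication by $\sigma^P(x,\xi)^{-1}$ (times the cutoff) and lands in $\mathrm{Sym}^{-d-j}$.

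Third, I would pass from the formal series $\sum_j q_{-d-j}$ to an honest symbol $q$. By the standard asymptotic summation lemma (a Borel-type argument: insert cutoffs $\chi(\xi/R_j)$ with $R_j\to\infty$ fast enough so that each term contributes negligibly in all lower seminorms), there exists $q\in \mathrm{Sym}^{-d}$ with $q - \sum_{j<N} q_{-d-j}\in \mathrm{Sym}^{-d-N}$ for every $N$. The corresponding operator $Q$ is elliptic of order $-d$, and the construction forces every homogeneous component of the symbol of $PQ - \mathbb I$ to vanish; hence $PQ - \mathbb I$ has symbol in $\bigcap_d \mathrm{Sym}^d = \mathrm{Sym}^{-\infty}$, i.e.\ $PQ - \mathbb I$ is smoothing. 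Running the same construction on the left (inverting $\sigma^P$ on the opposite side) produces $Q'$ with $Q'P - \mathbb I$ smoothing, and the standard trick $Q' = Q'(PQ) + Q'(\mathbb I - PQ) = (Q'P)Q + \text{smoothing} = Q + \text{smoothing}$ shows $Q = Q'$ modulo smoothing operators, so $P^{-1}:=Q$ is a two-sided parametrix. The main obstacle is the second step: controlling the recursion and verifying that the asymptotic summation really produces an error in $\mathrm{Sym}^{-\infty}$, which requires careful bookkeeping of the symbol estimates at each order.
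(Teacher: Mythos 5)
Your construction is correct: it is the standard symbol-calculus parametrix argument (invert the principal symbol away from $\xi=0$, correct order by order using the composition expansion, asymptotically sum \`a la Borel, and identify the left and right parametrices modulo smoothing operators), and this is precisely the proof the paper defers to, since the text gives no proof of its own and simply cites Treves for it. Your phrasing ``every homogeneous component of the symbol of $PQ-\mathbb I$ vanishes'' tacitly assumes classical symbols, but your own third step already contains the correct general statement (the remainder lies in $\mathrm{Sym}^{-d-N}$ for every $N$), so no genuine gap remains.
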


Now we can return to our analysis of the Dirac operator, within the light of the theory of pseudodifferential operators. We have the following results:

\begin{prop}
If $D$ is the Dirac operator  on the Hilbert space \mbox{$\H = L^2(\M,S)$} of square integrable spinor sections over the \mbox{$n$-dimensional} compact Riemannian manifold $\M$, then:
\begin{enumerate}
\item $D$ is a pseudodifferential operator of order $1$
\item The principal symbol of $D$ is given by the Clifford multiplication \mbox{$\sigma^D(\xi) = c(\xi)$}, which implies  \mbox{$\sigma^D(\xi)^2 = c^2(\xi) = g(\xi,\xi)$}
\item $D$ is an elliptic pseudodifferential operator
\item $D$ possesses an inverse $D^{-1}$ modulo smoothing operation
\item The operator $\abs{D}^{-1}$ is compact and lives in $\in \L^{n+}$, so the Dixmier trace \mbox{$\tr_\omega(\abs{D}^{-n})$} is well defined
\end{enumerate}
\end{prop}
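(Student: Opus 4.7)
The plan is to handle (1)--(2) by reading off the symbol from the local expression of $D$, deduce (3) from the Riemannian signature, obtain (4) as an immediate corollary of the preceding parametrix proposition, and reserve the real work for (5), which will rest on Weyl's asymptotics.

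For (1) and (2), I would start from the local formula $D = -i\gamma^\mu(\partial_\mu + \omega_\mu)$ already derived from $D = -i(\hat c \circ \nabla^S)$. This is a first-order differential operator with smooth, matrix-valued coefficients acting on sections of $S$, which places it in $\mathrm{Sym}^1$ via the Fourier representation~(\ref{pseudo}); the full symbol is $\gamma^\mu\xi_\mu + (\text{order }0)$. Reading off the top-order part gives $\sigma^D(\xi) = \gamma^\mu\xi_\mu = c(\xi_\mu dx^\mu) = c(\xi)$. Squaring and invoking the defining Clifford relation $\gamma^\mu\gamma^\nu+\gamma^\nu\gamma^\mu = 2g^{\mu\nu}$ yields $\sigma^D(\xi)^2 = g^{\mu\nu}\xi_\mu\xi_\nu\,\mathbb{1} = g(\xi,\xi)\,\mathbb{1}$.

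For (3), since $g$ is Riemannian we have $g(\xi,\xi) > 0$ whenever $\xi\neq 0$, so $\sigma^D(\xi)^{-1} = c(\xi)/g(\xi,\xi)$ exists pointwise and $D$ is elliptic; (4) is then nothing but the parametrix proposition applied to this elliptic first-order operator.

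For (5), which I expect to be the main obstacle, I would proceed spectrally. Because $D$ is essentially self-adjoint and elliptic on a compact manifold, $D^2$ is an elliptic self-adjoint operator of order $2$ with principal symbol $g(\xi,\xi)\,\mathbb{1}$ (Lichnerowicz~(\ref{Lich}) or direct squaring), so $\ker D = \ker D^2$ is finite-dimensional and $D$ has pure point spectrum with finite-multiplicity eigenvalues $\lambda_k$ accumulating only at $\infty$. Defining $|D|^{-1}$ as the inverse of $|D|$ on $(\ker D)^\perp$ and zero on $\ker D$, this modification is a finite-rank perturbation and does not affect compactness or asymptotic singular-value behaviour. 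The key ingredient is Weyl's asymptotic law applied to $D^2$: the eigenvalues of $D^2$ satisfy $\lambda_k(D^2) \sim C\,k^{2/n}$ for some constant $C>0$ depending on $(\M,g)$ and the rank of $S$, hence $\mu_k(|D|^{-1}) \sim C^{-1/2}\, k^{-1/n}$. This immediately gives $|D|^{-1}\in\K(\H)$ and
\[
\sigma_N(|D|^{-1}) \;=\; \sum_{k=0}^{N-1}\mu_k(|D|^{-1}) \;=\; O\!\bigl(N^{(n-1)/n}\bigr),
\]
which is exactly the membership $|D|^{-1}\in\L^{n+}(\H)$. Applying the preceding proposition with $p=n$ to the positive operator $|D|^{-1}$ gives $|D|^{-n}\in\L^{1+}(\H)$, so the Dixmier trace $\tr_\omega(|D|^{-n})$ is well defined. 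The delicate point is really the sharp asymptotics $\lambda_k(D^2)\sim C k^{2/n}$, which is not formal and requires the standard Weyl--H\"ormander theorem for elliptic self-adjoint operators on compact manifolds rather than just the parametrix machinery used in (4).
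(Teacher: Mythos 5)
Your proposal is correct and follows essentially the same route the paper indicates (which it delegates to the references \cite{Var,Lawson}): the symbol is read off from the local expression $D=-i\gamma^\mu(\partial_\mu+\omega_\mu)$ for (1)--(3), (4) is the parametrix proposition, and (5) rests on the Lichnerowicz formula together with the spectral growth of the associated Laplace-type operator on a compact manifold. The only remark worth making is that for mere membership $\abs{D}^{-1}\in\L^{n+}$ a one-sided Weyl-type bound $\lambda_k(D^2)\geq c\,k^{2/n}$ already suffices; the sharp asymptotics you invoke are needed only when one wants to evaluate the Dixmier trace, as in the Connes trace theorem.
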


The proofs of these properties can be found in \cite{Var}. 1 is obvious by the definition of the connection. 2 comes from the local expression of this connection. 3 is a consequence of 2, and 4 a consequence of 3. 5 is obtained by use of the Lichnerowicz formula (\ref{Lich}) and by studying the properties of the Laplacian operator. It is related to the fact that every self-adjoint elliptic differential operator over a compact Riemannian manifold has a real discrete spectrum which tends rapidly to infinity \cite{Lawson}.\\

\vspace{1em}                                                      

\subsection{Noncommutative integral}

\vspace{1em}                                                      

We now return to our goal of defining a noncommutative integral. The main key is the theorem introduced by A. Connes in \cite{C88} establishing a correspondence between the Dixmier trace and the Wodzicki residue (\ref{Wodzicki}).

\begin{thm}[Connes trace theorem]
If $P$ be a pseudodifferential operator of order $-n$ acting on the Riemannian bundle $E$, then the Dixmier trace $\tr_\omega(P)$ is independent of $\omega$ and corresponds to the residue
$$\tr_\omega(P) = \text{Res}_W\,P.$$
\end{thm}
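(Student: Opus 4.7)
The plan is to show the two functionals coincide by reducing the identity to a canonical computable case. First I would observe that any pseudodifferential operator of order $-n$ on a compact $n$-manifold lies in $\L^{1+}(\H)$ (a consequence of Weyl's asymptotic law applied to $|P|$), so $\tr_\omega(P)$ is defined. Both sides are linear in $P$: the Dixmier trace by its very construction, and the Wodzicki residue because integration against the principal symbol is linear. Using the decomposition into Hermitian and anti-Hermitian parts and then the spectral splitting of a self-adjoint operator into positive and negative components, I would reduce to the case where $P$ is positive and self-adjoint.

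Next I would eliminate the lower-order terms. A $\Psi$DO of order strictly less than $-n$ is trace class, so $\sigma_N(P)$ is bounded and $\tr_\omega(P) = 0$; simultaneously, its principal symbol of order $-n$ vanishes identically, so $\text{Res}_W P = 0$ by inspection of the defining integral. Thus both functionals descend to the quotient of order-$(-n)$ symbols modulo lower order, which is determined pointwise by the principal symbol. It therefore suffices to establish the identity for positive self-adjoint elliptic $\Psi$DOs of order $-n$, since every positive order-$(-n)$ principal symbol lifts to such an operator.

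For positive elliptic $P$ of order $-n$, consider the inverse $Q = P^{-1}$, a positive elliptic $\Psi$DO of order $n$ on $\H$. Weyl's asymptotic law for the eigenvalue counting function of $Q$ reads
\begin{equation*}
N_Q(\lambda) \;\sim\; \frac{\lambda}{n(2\pi)^n} \int_\M \int_{S^{n-1}} \tr_E\, \sigma^Q(x,\xi)^{-1}\, d\xi\, d\mu_g,
\end{equation*}
and inverting this relation gives, for the singular values of $P$, the asymptotic $\mu_k(P) \sim C/k$ with
\begin{equation*}
C \;=\; \frac{1}{n(2\pi)^n} \int_\M \int_{S^{n-1}} \tr_E\, \sigma^P(x,\xi)\, d\xi\, d\mu_g \;=\; \text{Res}_W P.
\end{equation*}
Consequently $\sigma_N(P)/\log N \to C$, which by Ces\`aro averaging forces $\tau_\lambda(P) \to C$, and by the measurability criterion proved earlier we conclude $\tr_\omega(P) = C = \text{Res}_W P$ independently of $\omega$.

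The main obstacle is the sharp form of Weyl's asymptotic law with the precise leading constant expressed as an integral over the cosphere bundle, which relies on a careful parametrix construction and heat-kernel or resolvent analysis for $Q$. A more algebraic route bypasses this: one invokes Wodzicki's theorem that the residue is, up to a scalar, the unique continuous trace on the algebra of classical $\Psi$DOs modulo smoothing operators, observes that $T \mapsto \tr_\omega(T)$ also defines such a trace (it is a trace on $\L^{1+}(\H)$ by construction and vanishes on smoothing operators by the previous paragraph), and then fixes the proportionality constant by a single explicit computation, for instance on the flat torus $\mathbb{T}^n$ with $P = (1+\Delta)^{-n/2}$, where both sides reduce to a Fourier series calculation.
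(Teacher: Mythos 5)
The paper itself offers no proof of this theorem --- it defers to \cite{C88} and \cite{Var} --- so your attempt can only be measured against the standard argument in those references, and your main route is essentially that argument: kill everything of order $<-n$ on both sides, reduce to positive elliptic operators, extract $\mu_k(P)\sim C/k$ from the sharp Weyl asymptotics of $Q=P^{-1}$, and conclude with the measurability criterion ($\lim_\lambda\tau_\lambda$ exists $\Rightarrow$ $\tr_\omega$ is $\omega$-independent and equals the limit). Your normalization $\frac{1}{n(2\pi)^n}$ is the correct one; the $(2n)^n$ appearing in the paper's formula (\ref{Wodzicki}) is a typo, as one sees by checking it against the value of $c_n$ used immediately afterwards, which your constant reproduces.

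Two steps need repair. First, the reduction as you wrote it fails: the positive and negative spectral parts of a self-adjoint pseudodifferential operator are in general \emph{not} pseudodifferential, so ``spectral splitting'' takes you out of the class on which $\text{Res}_W$ is even defined. The fix is the one your quotient remark already points to, but it must be carried out at the symbol level: both functionals vanish on operators of order $\leq -n-1$, hence factor through the principal symbol; a Hermitian symbol of order $-n$ is a difference of two pointwise positive-definite ones (add $c\,\sigma^{L}$ for a fixed positive elliptic $L$ of order $-n$ and $c$ large, using compactness of the cosphere bundle); and each positive-definite symbol is realized by a genuinely positive operator $B^*B$ with $B$ of order $-n/2$, a finite-rank smoothing correction taking care of a possible kernel before you invert to get $Q$. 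Second, your ``algebraic'' alternative is not a complete bypass as stated: $\tr_\omega$ is not defined on the full algebra of classical pseudodifferential operators (operators of order $\geq 0$ are not compact), so Wodzicki's uniqueness theorem does not apply directly; one needs either a uniqueness statement for traces on the ideal of operators of order $\leq -n$, or a hypertrace argument $\tr_\omega(AP)=\tr_\omega(PA)$ for $A$ of order $0$, before fixing the constant on the torus. Finally, be aware that the sharp Weyl law for positive elliptic pseudodifferential (not merely differential) operators, with the cosphere-bundle constant, is where the real analytic weight of the theorem sits; citing it is legitimate, but it is precisely what the references prove via parametrix, heat-kernel or zeta-function asymptotics and a Tauberian theorem, so your proof is best viewed as a correct reduction of the statement to that input rather than a self-contained argument.
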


The proof is quite long and can be found in \cite{C88} or \cite{Var}.\\

We know that the Dirac operator leads to a pseudodifferential operator $\abs{D}^{-n}$ of order $-n$. Since any $f\in\A = C^\infty(\M)$ acts as a bounded multiplicative operator, we can create the  pseudodifferential operator $f\abs{D}^{-n}$ which is also of order $-n$ and in $ \L^{1+}$. We have then the following result:

\begin{thm}
\begin{equation}\label{comint}
\int_\M f(x) \,d\mu_g =  c_n \tr_\omega(f\abs{D}^{-n})\qquad \forall f\in\A
\end{equation}
with $c_n = 2^{n-[\frac n2]-1}\pi^{\frac n2}n\,\Gamma(\frac n2)$.\footnote{$\Gamma(n)$ is the  Gamma function.}
\end{thm}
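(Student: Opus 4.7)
The plan is to apply the Connes trace theorem just stated to the pseudodifferential operator $P = f\abs{D}^{-n}$ and then compute the Wodzicki residue explicitly in local coordinates, matching the result against the constant $c_n$.

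First, I would check that $P$ is indeed a pseudodifferential operator of order $-n$ to which the trace theorem applies. Since $D$ is an elliptic pseudodifferential operator of order $1$, its square $D^2$ is elliptic of order $2$, and a functional calculus argument (or direct parametrix construction) shows that $\abs{D}^{-n} = (D^2)^{-n/2}$ is an elliptic pseudodifferential operator of order $-n$ (defined to be zero on the finite-dimensional kernel of $D$, which does not affect the Dixmier trace). Multiplication by $f \in C^\infty(\M)$ is a zero-order operator, so $f\abs{D}^{-n}$ is of order $-n$ and lies in $\L^{1+}$, so Connes' trace theorem yields $\tr_\omega(f\abs{D}^{-n}) = \mathrm{Res}_W(f\abs{D}^{-n})$ independently of $\omega$.

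Second, I would compute the principal symbol of $P$. By property $2$ of the previous proposition, $\sigma^D(\xi) = c(\xi)$ and $\sigma^D(\xi)^2 = g^{\mu\nu}\xi_\mu\xi_\nu\cdot \mathbb{I}_m$, where $m = 2^{[n/2]}$ is the rank of the spinor bundle. Hence
\[
\sigma^{f\abs{D}^{-n}}(x,\xi) = f(x)\,\bigl(g^{\mu\nu}(x)\,\xi_\mu\xi_\nu\bigr)^{-n/2}\,\mathbb{I}_m.
\]
Third, I would insert this into the Wodzicki residue formula. Since the Wodzicki residue is a coordinate invariant density on $\M$, I can evaluate the inner integral at each point $x$ in geodesic normal coordinates, where $g^{\mu\nu}(x) = \delta^{\mu\nu}$ and $\sqrt{\det g}(x) = 1$. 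The integrand over the Euclidean unit sphere $S^{n-1}$ reduces to the constant $f(x)\,\mathbb{I}_m$, so
\[
\int_{S^{n-1}} \tr_E\,\sigma^{f\abs{D}^{-n}}(x,\xi)\,d\xi \;=\; f(x)\cdot 2^{[n/2]}\cdot\mathrm{vol}(S^{n-1}) \;=\; f(x)\cdot 2^{[n/2]+1}\,\frac{\pi^{n/2}}{\Gamma(n/2)}.
\]
Putting everything together:
\[
\tr_\omega(f\abs{D}^{-n}) \;=\; \frac{1}{n(2\pi)^n}\int_\M f(x)\cdot 2^{[n/2]+1}\,\frac{\pi^{n/2}}{\Gamma(n/2)}\,d\mu_g \;=\; \frac{1}{c_n}\int_\M f\,d\mu_g,
\]
where $c_n = n\cdot 2^{n-[n/2]-1}\pi^{n/2}\Gamma(n/2)$ matches the stated constant after arithmetic simplification.

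The main obstacle is the third step, specifically the coordinate-invariance of the Wodzicki residue that allows the pointwise computation in normal coordinates; without this, one would have to verify directly that the coordinate-dependent integral $\int_{\abs{\xi}=1} (g^{\mu\nu}\xi_\mu\xi_\nu)^{-n/2}\,d^{n-1}\xi$ reduces to $\mathrm{vol}(S^{n-1})$ times the density $\sqrt{\det g}$. A minor side issue is the contribution of $\ker D$, which is finite-dimensional by ellipticity on a compact manifold and thus trace-class, yielding zero Dixmier trace and not affecting the identity.
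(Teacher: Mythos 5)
Your proposal follows essentially the same route as the paper: invoke Connes' trace theorem for the order $-n$ operator $f\abs{D}^{-n}$, note that its principal symbol restricted to the unit cosphere is $f(x)\,\mathbb I_{2^{[n/2]}}$, take the fiber trace, and read off $c_n$ from the volume $\frac{2\pi^{n/2}}{\Gamma(n/2)}$ of $S^{n-1}$. Your version is simply a bit more explicit about the points the paper leaves implicit (that $\abs{D}^{-n}$ is an elliptic pseudodifferential operator of order $-n$, the harmlessness of the finite-dimensional kernel of $D$, and the normal-coordinate/invariance justification of the pointwise symbol computation), and your prefactor $\frac{1}{n(2\pi)^n}$ is the correct one matching the stated $c_n$.
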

\begin{proof}
The principal symbol of the operator $f\abs{D}^{-n}$ is \mbox{$\sigma(\xi)=f(x)\norm{\xi}^{-n}$}. Because the Wodzicki residue works on the unit sphere  \mbox{$S^{n-1} =\set{\xi : \norm{\xi}=1}$}, this symbol reduces to a matrix  \mbox{$\sigma(\xi)=f(x)\, \mathbb I_{2^{\left[\frac n2\right]}}$} where $2^{\left[\frac n2\right]}$ is the dimension of the fibers, so we get:
\begin{eqnarray*}
\tr_\omega(f\abs{D}^{-n}) &=&  \frac{1}{n(2n)^n} \int_\M \int_{S^{n-1}} \tr_E\,  f(x) \,\mathbb I_{2^{\left[\frac n2\right]}} \  d\mu_g\,d\xi\\
&=&   \frac{1}{n(2n)^n}   \int_{S^{n-1}} \tr_E \,\mathbb I_{2^{\left[\frac n2\right]}} \, d\xi\ \ \int_\M f(x)\  d\mu_g\,\\
&=&   \frac{2^{\left[\frac n2\right]}}{n(2n)^n}   \int_{S^{n-1}} \, d\xi\ \ \int_\M f(x)\  d\mu_g\\
&=& \frac{1}{c_n}\int_\M f(x)\  d\mu_g
\end{eqnarray*}
with $c_n$ determined from the area of the unit sphere  \mbox{$\int_{S^{n-1}} \, d\xi = \frac{2\pi^{\frac n2}}{\Gamma\prt{\frac n2}}$}.\\
\end{proof}

\begin{prop}
The action functional given by the Dixmier trace of $\abs{D}^{2-n}$ is proportional to the Einstein--Hilbert action
$$ S(D) = \tr_\omega(\abs{D}^{2-n}) \sim \int_\M R\ d\mu_g.$$
\end{prop}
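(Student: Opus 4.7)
The plan is to reduce the statement to a Wodzicki residue computation via Connes' trace theorem and then to identify the scalar curvature term using the Lichnerowicz formula. First, I would extend the Connes trace theorem stated above to the operator $\abs{D}^{2-n}$: although this operator is a pseudodifferential operator of order $2-n$ rather than $-n$, the Wodzicki residue is defined for all classical pseudodifferential operators of integer order, and it computes the appropriate Dixmier-type trace (this is the standard abuse of notation used in the Kastler--Kalau--Walze-type calculation). Thus I would write
\begin{equation*}
\tr_\omega(\abs{D}^{2-n}) \eq C \int_\M \int_{S^{n-1}} \tr_E\,\sigma^{\abs{D}^{2-n}}_{-n}(x,\xi)\, d\xi\, d\mu_g
\end{equation*}
where $\sigma^{\abs{D}^{2-n}}_{-n}$ denotes the component of order $-n$ in the symbol expansion of $\abs{D}^{2-n}$.

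The core of the proof is then the extraction of this particular term in the symbol expansion. For this I would invoke the Lichnerowicz formula $D^2 = \Delta^S + \frac 14 R$ and treat $\abs{D}^{2-n} = (D^2)^{(2-n)/2}$ via complex powers. Using the symbolic calculus for pseudodifferential operators, one expands the full symbol of $(D^2)^{(2-n)/2}$ around the leading symbol $g^{\mu\nu}\xi_\mu\xi_\nu$ (the symbol of $\Delta^S$). The subprincipal terms are corrections produced by two sources: the connection terms $\omega_\mu$ hidden inside the Laplacian $\Delta^S$, and the curvature term $\frac 14 R$ from Lichnerowicz. After expanding the binomial-type series for the complex power, the component of order exactly $-n$ has two contributions, one of purely metric origin (from the $\omega_\mu$) and one proportional to $R(x)$ times a scalar constructed from $\norm{\xi}^{-n-2}$ and $\norm{\xi}^{-n}$.

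Next I would perform the fiber integration over the unit cosphere $S^{n-1}$. By homogeneity, terms linear in $\xi$ integrate to zero and the remaining spherical integrals produce numerical constants multiplying $\tr_E \mathbb{I} = 2^{[n/2]}$. The purely connection-based contributions can be checked to cancel or to recombine into a term still proportional to $R$, using the standard identity relating $g^{\mu\nu}\partial_\mu \omega_\nu$ and its contractions to the scalar curvature on a spin manifold. After collecting everything the pointwise symbol integrates to a constant multiple of $R(x)$, so that
\begin{equation*}
\tr_\omega(\abs{D}^{2-n}) \eq c'_n \int_\M R(x)\, d\mu_g.
\end{equation*}

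The main obstacle is clearly the symbol-calculus step: computing $\sigma^{\abs{D}^{2-n}}_{-n}$ requires handling a non-integer complex power of a positive elliptic operator and tracking several subprincipal contributions whose cancellations are delicate. The cleanest way to sidestep explicit term-by-term calculation would be to pass through the heat kernel expansion of $e^{-tD^2}$: by the Seeley--DeWitt expansion, the coefficient of $t^{(n-2)/2}$ is (up to dimensional constants) $\int_\M R\, d\mu_g$ times $\tr_E \mathbb{I}$, and by a Mellin transform this coefficient is exactly what the Wodzicki residue of $\abs{D}^{2-n}$ extracts. This heat kernel route bypasses the explicit symbolic computation and gives the cleanest derivation of the proportionality $\tr_\omega(\abs{D}^{2-n}) \sim \int_\M R\, d\mu_g$.
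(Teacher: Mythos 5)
The paper itself does not prove this proposition: it records it as a conjecture of Connes and points to the ``brute force'' verifications of Kalau--Walze and Kastler, so there is no in-paper argument to compare with. Your sketch is precisely the route of those references. One point you handle correctly but should keep explicit: since $\abs{D}^{2-n}$ has order $2-n>-n$, it does not lie in $\mathcal{L}^{1+}$, so $\tr_\omega(\abs{D}^{2-n})$ only makes sense through the Wodzicki residue (equivalently through the residue of the zeta function of $D^2$ at $s=(n-2)/2$); the proposition as stated is an abuse of notation, and your reformulation in terms of the order $-n$ symbol component of $(D^2)^{(2-n)/2}$ is the correct reading.

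The one place where your outline understates the work is the claim that the connection-based subprincipal contributions ``cancel or recombine'' into a term proportional to $R$. They do not cancel: showing that they combine with the Lichnerowicz term $\frac{1}{4}R$ into a \emph{nonzero} multiple of $R$ (the familiar $\frac{R}{6}-\frac{R}{4}=-\frac{R}{12}$ combination in heat-kernel normalization) is exactly the lengthy symbol computation of the cited papers, and if that coefficient vanished the proportionality would be empty, so this step cannot be waved through. Your proposed detour through the heat kernel is indeed the cleanest way to make it precise: the Seeley--DeWitt coefficient $a_2$ of $e^{-tD^2}$ determines the relevant zeta residue, hence the Wodzicki residue of $\abs{D}^{2-n}$, and its $R$-dependence is classical. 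With either route carried out (or the Mellin-transform link to $a_2$ invoked), your argument is sound and matches the literature the paper relies on.
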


This last proposition was conjectured by A. Connes in \cite{C92} and verified mainly by "brute force" in \cite{Kalau, Kastler}.\\

The formula (\ref{comint}) is our main interest. It shows that we can construct a complete algebraic translation of the concept of geometrical integration. In this formula, the operator ${D}^{-1}$ plays the role of a kind of line element $ds$, so the metric is dictated from the Dirac operator. Of course this is just a tautology since the Dirac operator is directly defined from the metric.\\

However this opens the door to a possible generalization to noncommutative spaces. Indeed, if we take an arbitrary unital algebra $\A$ acting on a Hilbert space $\H$ as bounded multiplicative operators, we just have in order to define integrals to choose an operator $D$ acting as a self-adjoint operator at least on a dense subset of $\H$ and which corresponds to an elliptic pseudodifferential operator of order one with compact inverse $D^{-1}$.\\

Of course requiring the existence of an inverse is too strong, since for the usual Dirac operator we only have inverse modulo smoothing operator. However, we can use the fact that for every self-adjoint operator, eigenvalues are real and that the resolvent $R(\lambda) = (D-\lambda 1)^{-1}$ for all $\lambda \notin \sigma(D)$ (so in particular $\forall\lambda \notin \setR$) are bounded operators on $\H$. So we can required the resolvent to be compact $\forall\lambda \notin \sigma(D)$. One can check that requiring this condition for a particular $\lambda$ is sufficient. 

\begin{defn}
An operator is with {\bf\index{resolvent!compact} compact resolvent} if its resolvent $R(\lambda)$ is compact for $\lambda \notin \sigma(D)$.
\end{defn}

Such operator is automatically unbounded if $\H$ is infinite dimensional. As a consequence of spectral theory, for any self-adjoint operator $D$ with compact resolvent there is an orthonormal basis $\set{\phi_k}$ for $\H$ consisting of simultaneous eigenvectors for all the resolvents $R(\lambda)$. These vectors are also eigenvectors of $D$ with eigenvalues \mbox{$D\phi_k = \mu_k\phi_k$} forming a discrete sequence such that \mbox{$\lim_{k\rightarrow\infty}\abs{\mu_k} = \infty$}. By this fact, the kernel of $D$ must be finite dimensional.

\begin{defn}
The inverse {\bf $D^{-1}$} of a self-adjoint operator $D$ with compact resolvent is defined as the inverse operator on the orthogonal complement of the finite dimensional kernel of $D$ and $0$ otherwise.
\end{defn}

So now we can choose an arbitrary self-adjoint operator $D$ with compact resolvent and use its inverse $D^{-1}$. The way to characterize the first order condition will be shown latter.

\begin{defn}
$D$ is {\bf\index{operator!finitely summable} finitely summable} (or $n^+$-summable) if there exists a positive integer $n$ such that $D^{-1} \in \L^{n+}$.
\end{defn}

\begin{defn}
If $D$ is $n^+$-summable and measurable, the {\bf\index{noncommutative integral} noncommutative integral} of $a\in\A$ is defined as
$$\int\!\!\!\!\!\!-\ a \abs{D}^{-n} = \tr_\omega(a \abs{D}^{-n}) .$$
\end{defn}

This shows that the choice of the operator $D$ completely determines the integral aspect of noncommutative algebras by use of the spectral information given by the operator $D$.\\

\subsection{Noncommutative differential calculus}\label{diffcalc}

We want  to characterize differential operators of any order for noncommutative spaces. The idea is to construct a representation of differential forms which can be extended to noncommutative spaces, but we need to introduce the more general background of differential graded algebras.

\begin{defn}
A {\bf\index{algebra!graded} graded algebra} $\bA$ is a direct sum of associative algebras $\bA^r$:
$$\bA = \bigoplus_{r=0}^\infty \bA^r$$
such that the multiplication operation satisfies
$$a\in \bA^r \text{ and } b\in \bA^s \implies ab\in \bA^{r+s}.$$
It is a {\bf\index{algebra!differential} differential graded algebra} if it is equipped with a linear map \mbox{$d : \bA \rightarrow \bA$} of degree $1$ (which means \mbox{$d : \bA^r \rightarrow \bA^{r+1}$})  with the following properties:
\begin{itemize}
\item $d^2 = 0$
\item $d(ab) = (da) \,b + (-1)^{r} a\, db\quad $ if $a\in\bA^r\ $ (graded Leibniz rule)
\end{itemize}
A {\bf\index{differential calculus} differential calculus} on the algebra $\A$ is a differential graded algebra \mbox{$(\Omega\A,d)$} with $\Omega^0\A = \A$.
\end{defn}

The space of differential forms \mbox{$\bigoplus_{k=0}^\infty \Omega^k(\M)$} with the exterior derivative is clearly  a particular case of differential calculus on $C^\infty(\M)$, and is called the de Rham differential calculus. We want to introduce the most general differential calculus on an algebra $\A$, called the universal differential algebra.

\begin{defn}
The {\bf\index{algebra!differential!universal} universal differential algebra} on  $\A$ is the differential graded algebra \mbox{$(\Omega\A,d)$} where
\begin{itemize}
\item $\Omega^0\A = \A$
\item $\Omega^1\A$ is the $\A$-bimodule subspace of $\A \otimes \A$ generated by \mbox{$da = 1 \otimes a - a \otimes 1$}, which implies the condition \mbox{$d(ab)= a\,db + b\,da$}, and with the identification \mbox{$a_o \otimes \bar a_1 = a_0\,da_1$} with $\bar a_1$ taken modulo $\setC$ (i.e.~$\bar a$ is the image of $a$ by the quotient map \mbox{$\A \rightarrow \A/\setC$})
\item $\Omega^n\A =\underbrace{\Omega^1\A \otimes \cdots \otimes \Omega^1\A}_{n \text{ times}}$ is the space generated by elements of the form \mbox{$a_0 \otimes \bar a_1 \otimes \cdots \otimes \bar a_n = a_0 \,da_1 \dots da_n $}, with the product rule
$$\prt{a_0\,da_1}\prt{b_0\,db_1} = \prt{a_0\,\prt{da_1}\,b_0}\,db_1 = a_0\,d\!\prt{a_1b_0}\,db_1 - a_0\,a_1\,db_0\,db_1$$
$$\prt{a_0\,da_1\dots da_n}\prt{b_0\,db_1\dots db_n} =   \prt{a_0\,\prt{da_1\dots da_n}\,b_0}\,db_1\dots db_n$$
\item $d$ is extended to every $\Omega^n\A$ in the unique following manner: $d(a_0 \otimes \bar a_1 \otimes \cdots \otimes \bar a_n) =1 \otimes \bar a_0 \otimes \cdots  \otimes \bar a_{n} $, which implies \mbox{$d(a_0\,da_1\dots da_n) = da_0\,da_1\dots da_n$} and $d^2 = 0$ since $\bar 1 = 0$
\end{itemize}
\end{defn}

The name {\it universal} is justified by the following property: if \mbox{$(\tilde\Omega\A,\tilde d)$} is another differential graded algebra with $\tilde\Omega^0\A = \A$ and $\tilde\Omega^n\A$ generated by the $\tilde da$, then \mbox{$(\tilde\Omega\A,\tilde d)$} is isomorphic to a quotient of \mbox{$(\Omega\A,d)$} \cite{Mass96}. The de Rham differential algebra is by this way a particular case of quotient differential graded algebra.\\

In the special case where $\A = C^\infty(\M)$ for a compact Riemannian manifold $\M$ with spin structure, we have from the Theorem \ref{commudirac} that the Dirac operator acts as a derivative by its commutator $[D,\,\cdot\,]$, and that $\forall a\in \A$, $[D,a] = -ic(da)$ is a bounded operator acting multiplicatively on the Hilbert space \mbox{$\H = L^2(\M,S)$}. So we can construct a representation of the universal algebra $\Omega\mathcal{A}$ into the algebra of bounded operators on $\H$ by:
$$\pi : \Omega\mathcal{A} \rightarrow \mathcal B(\mathcal{H}) : \quad\pi(a_0 d a_1 \cdot\cdot\cdot d a_p) = a_0 [D,a_1] \cdot\cdot\cdot[D,a_p]\qquad a_0,...,a_p\in\mathcal{A}.$$

Once more we want to extend this construction to any unital algebra $\A\in\B(\H)$ with a self-adjoint operator $D$ such that $[D,\,\cdot\,]$ acts as a first order derivative. To obtain this first order condition we will simply require that $[D,a]$ should be a bounded operator for each $a\in\A$.\\

The first idea is to define a noncommutative differential algebra by taking all elements of the form \mbox{$a_0 [D,a_1] \cdot\cdot\cdot[D,a_p]$}, so the full image \mbox{$\pi(\Omega\mathcal{A})$}. Nevertheless, this definition leads to an unpleasant problem: the fact that $\pi(\omega)=0$ for $\omega\in\Omega\mathcal{A}$ does not imply  $\pi(d\omega)=0$. The forms $\omega\in\Omega\mathcal{A}$ such that $\pi(\omega)=0$ and $\pi(d\omega) \neq 0$ are called {\bf\index{junk form} junk forms}.\\

If $\A = C^\infty(\M)$, an example of such junk form is given by
$$\omega = fdf - (df)f$$
 with $f\in\A$. Indeed, we have that
$$ \pi(\omega) = f[D,f] - \prt{[D,f]}f = -i\gamma^\mu\prt{f\,\partial_\mu f - \prt{\partial_\mu f} f} = 0,$$
but by the graded Leibniz rule,
$$d\omega = df df + f d^2 f - \prt{d^2 f} f + df df =2\,df\,df$$\vspace{-2em}                                                      
\begin{eqnarray*}
\implies \pi(d\omega) &=& 2[D,f][D,f] = -i\,2\,\gamma^\mu\gamma^\nu \,\partial_\mu f \partial_\nu f\\
&=&  -i \gamma^\mu\gamma^\nu \,\partial_\mu f \partial_\nu f -i \gamma^\nu\gamma^\mu \,\partial_\nu f \partial_\mu f\\
&=& -i\prt{ \gamma^\mu\gamma^\nu + \gamma^\nu\gamma^\mu} \,\partial_\mu f \partial_\nu f\\
&=& -i\,2\,g^{\mu\nu} \,\partial_\mu f \partial_\nu f \neq 0.
\end{eqnarray*}
So we have to remove such forms by performing a suitable quotient.

\begin{prop}
Let us define
$$J_0^p = \set{\omega \in\Omega^p\A\ :\ \pi(\omega)=0}$$
and $J_0 = \bigoplus_{p=0}^\infty J_0^p$, then the subspace \mbox{$J = J_0 + dJ_0$} is a graded two-sided differential ideal of $\Omega\mathcal{A}$.
\end{prop}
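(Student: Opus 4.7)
The strategy is to first establish that $J_0$ itself is a graded two-sided ideal of $\Omega\A$, then leverage this together with the graded Leibniz rule to upgrade to $J = J_0 + dJ_0$.

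First, I would check that $J_0$ is a graded two-sided ideal. It is graded by definition, $J_0 = \bigoplus_p J_0^p$. For the ideal property, if $\omega \in J_0$ and $\eta \in \Omega\A$, then since $\pi$ is an algebra homomorphism, $\pi(\eta\omega) = \pi(\eta)\pi(\omega) = 0$ and similarly $\pi(\omega\eta) = 0$, so $\eta\omega, \omega\eta \in J_0$. The grading of $J$ follows by setting $J^p = J_0^p + d(J_0^{p-1}) \subset \Omega^p\A$, so $J = \bigoplus_p J^p$.

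Next, closure under $d$ is immediate from $d^2 = 0$: if $\omega = \omega_0 + d\omega_1$ with $\omega_0,\omega_1 \in J_0$, then $d\omega = d\omega_0 + d^2\omega_1 = d\omega_0 \in dJ_0 \subset J$.

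The key step is showing $J$ is a two-sided ideal. The component $J_0$ is already ideal, so the only real work is in multiplying elements of $\eta \in \Omega\A$ against something of the form $d\omega_1$ with $\omega_1 \in J_0$. Here I would apply the graded Leibniz rule in the rearranged form
\[
\eta \cdot d\omega_1 = (-1)^{|\eta|}\bigl(d(\eta\omega_1) - d\eta \cdot \omega_1\bigr),
\]
valid for $\eta$ homogeneous of degree $|\eta|$ (and extended by linearity). Since $J_0$ is a two-sided ideal, $\eta\omega_1 \in J_0$ and $d\eta \cdot \omega_1 \in J_0$; hence the first term lies in $dJ_0$ and the second in $J_0$, so $\eta \cdot d\omega_1 \in J_0 + dJ_0 = J$. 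Combined with $\eta\omega_0 \in J_0 \subset J$, this gives $\eta\omega \in J$ for all $\omega \in J$. A symmetric calculation, using $d\omega_1 \cdot \eta = d(\omega_1\eta) - (-1)^{|\omega_1|}\omega_1 \cdot d\eta$, handles right multiplication.

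The only mild subtlety I anticipate is being careful with the signs and degrees in the Leibniz rule (and ensuring the decomposition $\omega = \omega_0 + d\omega_1$ can be taken with homogeneous pieces, which follows from the grading on $J_0$ and the fact that $d$ raises degree by one). Everything else is formal manipulation, so there is no serious obstacle beyond this bookkeeping.
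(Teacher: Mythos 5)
Your proof is correct and follows essentially the same route as the paper: use the homomorphism property of $\pi$ to see that $J_0$ absorbs products on both sides, then apply the graded Leibniz rule to rewrite products involving $dJ_0$ as an element of $J_0$ plus an element of $dJ_0$, with closure under $d$ coming from $d^2=0$. The only difference is presentational: you spell out both left and right multiplication, while the paper does one side and notes the other is analogous.
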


\begin{proof}
$J^p = J \cap \Omega^p{\A}$ is generated by elements of the form \mbox{$\omega = \omega_1 + d\omega_2$} with $\omega_1 \in J_0^p$ and $\omega_2\in J_0^{p-1}$. Then for every $\eta \in\Omega^q\A$ we have \mbox{$\pi(\omega_i\eta) = \pi(\omega_i) \pi(\eta) = 0$}, $i\in\set{1,2}$, by the homomorphism property of the representation, and then:
\begin{eqnarray*}
\omega\eta &=& \omega_1 \eta + (d\omega_2)\eta\\
&=& \underbrace{\omega_1 \eta - (-1)^{p-1} \omega_2 d\eta}_{\in \,J_0^{p+q}} + d(\underbrace{\omega_2 \eta}_{\in \,J_0^{p+q-1}}) \\
\end{eqnarray*}
which implies $\omega\eta \in J^{p+q} \subset J$, and similar for $\eta\omega$. So $J$ is a graded two-sided ideal, with the differential property coming from \mbox{$dJ = dJ_0 + d^2 J_0 = dJ_0$}.
\end{proof}

\begin{defn}
The {\bf\index{algebra!differential!noncommutative} noncommutative differential algebra} is defined as the quotient space:
\vspace{-0.5em}                                                      
$$\Omega_D\A = \Omega\A / J.$$
\end{defn}

If we explicit the structure of this algebra gradually \cite{C94}, we have:
\begin{itemize}
\item $\Omega^0_D\A \cong \A$
\item $\Omega^1_D\A \cong \pi\prt{ \Omega^1\A}$
\item $\Omega^p_D\A \cong \pi\prt{\Omega^p\A}/\pi\prt{dJ_0^{p-1}}$
which is composed of sums of operators of the form
$$\omega_p = a_0 [D,a_1] \cdot\cdot\cdot[D,a_p]$$
modulo the subbimodule generated by operators
$$\tilde\omega_p = [D,b_0] [D,b_1] \cdot\cdot\cdot[D,b_{p-1}] \ \text{ such that }\  b_0 [D,b_1] \cdot\cdot\cdot[D,b_{p-1}]= 0.$$
\end{itemize}

Of course the derivative is extended to the quotient space and is explicitly given on basic elements by
$$d\prt{\,a_0 [D,a_1] \cdot\cdot\cdot[D,a_p]\,} = [D,a_0][D,a_1] \cdot\cdot\cdot[D,a_p].$$

We can wonder to what the differential algebra $\Omega_D\A$ corresponds when $D$ is the Dirac operator  on the Hilbert space $\H = L^2(\M,S)$ of square integrable spinor sections over $\M$? The result is that this algebra is simply isomorphic to the usual de Rham differential algebra \cite{Landi97}.\\

\subsection{Noncommutative Riemannian distance}\label{riemdist}

Having set a way to construct integrals and differential forms by using a Dirac-like operator, we are showing now the fact that these data are sufficient to recover  the notion of Riemannian distance. The establishment of the distance formula is very important for us since it will be one of our main concerns in the Chapter \ref{chaplor} when we will discuss the Lorentzian generalization of the theory, so we will take the time here to study the details of the proof. The Riemannian distance was introduced by A. Connes in \cite{C94,C96}.\\

The proof of this Riemannian distance is too often reduced in the literature to approximated considerations. In particular, the fact that the usual Riemannian distance is not a fully differentiable function is usually swept under the carpet. So we will try here to establish the result as carefully as possible.\\

First, we must review what exactly a Riemannian distance is. Let us consider a compact Riemannian manifold $(\M,g)$ and two points $p$ and $q$ on it. For reason of simplicity, we will suppose the manifold to be connected (otherwise we just have to consider the distance between connected points). We know that there exists at least one piecewise smooth curve $\gamma : [0,1] \rightarrow \M$ with $\gamma(0) = p$ and $\gamma(1) = q$.

\begin{defn}
The {\bf\index{curve!length} length} of a piecewise smooth curve $\gamma$ is given by
$$l(\gamma) = \int_0^1 \abs{\dot \gamma(t)}\,dt  = \int_0^1 \sqrt{g_{\gamma(t)}\prt{\dot \gamma(t),\dot \gamma(t)}}\,dt.$$
\end{defn}

\begin{defn}
The {\bf\index{distance!Riemannian} Riemannian distance} between $p$ and $q$ is the quantity
$$d(p,q) = \inf \set{l(\gamma) : \gamma \text{ piecewise smooth curve with } \gamma(0) = p,\ \gamma(1) = q}.$$
\end{defn}

The Riemannian distance has the usual properties of a distance function:
\begin{itemize}
\item $d(p,q) \geq 0$ for all $p,q\in\M$
\item $d(p,q) = 0$ if and only if $p = q$
\item $d(p,q) = d(q,p)$ for all $p,q\in\M$
\item $d(p,r) \leq d(p,q) + d(q,r)$  for all $p,q,r\in\M$\quad(triangle inequality)
\end{itemize}

We will construct the noncommutative counterpart in three steps: first one we will show that there exists a formulation of the Riemannian distance independent of any path consideration, second one we will show that this formulation can be expressed with the only use of the Dirac operator $D$, and last one we will look to how this can be generalized to noncommutative spaces.

\subsubsection*{Step 1: Path independent formulation}

If we set $\A=C^\infty(\M)$, then for each function $f \in \A$ and each piecewise smooth curve $\gamma$ from $p$ to $q$, we have:
\begin{equation}\label{rdist1}
f(q) - f(p) = f(\gamma(1)) - f(\gamma(0)) = \int_0^1 \dv{}{t} f(\gamma(t)) \, dt = \int_0^1 df_{\gamma(t)}(\dot\gamma(t)) \, dt
\end{equation}
by using the second fundamental theorem of calculus.\\

If we denote the gradient operation by $\nabla f$, then the gradient is related to the differential by \mbox{$g(\nabla f, v) = df(v)$} for every vector field $v$.\footnote{Since $f$ is a complex-valued function, the metric $g$ is here the unique \mbox{$C^\infty(\M)$-sesquilinear} extension of the real-valued metric, i.e.~\mbox{$g(\alpha u,\beta v) = \bar \alpha \beta g(u,v)$} for $u,v\in T\M$ real-valued vectors fields and $\alpha,\beta\in C^\infty(\M)$, in order to guarantee its positive definiteness. This leads to the fact that the isomorphism between $\nabla f$ and $df$ -- called musical isomorphism -- is actually antilinear.} Since $\dot\gamma(t)$ is a tangent vector for each $t\in[0,1]$ except at most on a discrete set, the integral (\ref{rdist1}) is equivalent to:
$$ 
\int_0^1 df_{\gamma(t)}(\dot\gamma(t)) \, dt = \int_0^1  g_{\gamma(t)}( \nabla f_{\gamma(t)},\dot\gamma(t)) \, dt.
$$

Since the Cauchy--Schwartz inequality is valid for the metric $g$, we obtain:
\begin{eqnarray}
\abs{ f(q) - f(p) } &\leq& \int_0^1  \abs{g_{\gamma(t)}( \nabla f_{\gamma(t)},\dot\gamma(t))}\nonumber\\
&\leq& \int_0^1  \abs{ \nabla f_{\gamma(t)}} \abs{\dot\gamma(t)} \, dt \label{rdist5}\\
&\leq& \norm{  \nabla f}_\infty \  \int_0^1 \abs{\dot\gamma(t)} \, dt\nonumber  \\
&=& \norm{  \nabla f}_\infty\  l(\gamma)\nonumber
\end{eqnarray}
where \mbox{$ \norm{  \nabla f}_\infty = \sup_{x\in\M} \abs{ \nabla f_x} $}.\\

Since this result is valid for any piecewise smooth curve from $p$ to $q$, it is valid for the infimum, so \mbox{$\abs{ f(q) - f(p) } \leq \norm{  \nabla f}_\infty d(p,q)$}. If we restrict our set of functions to those which respect $\norm{  \nabla f}_\infty \leq 1$, we have 
\begin{equation}\label{rdist8}
d(p,q) \ \geq\  \sup\set{  \abs{f(q)-f(p)} \ :\  f \in \mathcal A,\   \norm{  \nabla f}_\infty \leq 1 }.
\end{equation}

If we want an equality, we have to find a function $f \in \mathcal A$ such that $ \norm{  \nabla f}_\infty \leq 1$ and $\abs{f(q)-f(p)} = d(p,q)$, and because the first condition is invariant by addition of a constant function, we can fix $f(p)=0$ and search for a function satisfying \mbox{$f(q)=d(p,q)$}.\\

The trivial solution is the distance function itself  as a function of its second argument \mbox{$f(\,\cdot\,) = d_p(\,\cdot\,) = d(p,\,\cdot\,)$}. However, this function is not in $\A=C^\infty(\M)$, since its derivative is not correctly defined on $p$ and on a set called the cut locus\footnote{Roughly speaking, the cut locus is the set of points which can be reached by more than one geodesic starting from the point $p$. We will take more time to discuss on the cut locus in the Chapter \ref{chaplor} about the Lorentzian case. In the Riemannian case, the study of the cut locus is not mandatory since the Lipschitz continuity of the distance function implies that the cut locus has measure zero. Informations on the cut locus for Riemannian manifolds can be found e.g.~in \cite{Cha}.}. Nevertheless, this function is still continuous on the whole manifold, and is Lipschitz continuous with best Lipschitz constant $1$ since
$$\abs{f(x)-f(y)} = \abs{d(p,x)-d(p,y)} \leq 1\  d(x,y)\quad\forall x,y\in\M$$
by the reverse triangle inequality. More precisely, this function belongs to the algebra of bounded Lipschitz continuous functions, since the manifold is compact, and is automatically almost everywhere differentiable (a.e.~differentiable) by Rademacher's theorem. The second fundamental theorem of calculus is not valid in general for a.e.~differentiable functions, but is still valid for the class of Lipschitz continuous functions, so the construction in (\ref{rdist1}) holds.\footnote{Properly speaking, the construction (\ref{rdist1}) holds for curves for which the function $f$ is differentiable except for a discrete number of points on the curve. This is not a problem since every potentially problematic curve could be approximated by other curves respecting (\ref{rdist1}) and the result is still valid for the infimum among all curves.} Moreover, since $1$ is the best Lipschitz constant for $d_p$, we have \mbox{$\norm{\nabla d_p} = 1$} whenever this gradient exists, which leads us to use the notion of essential supremum.

\begin{defn}
If $\mu$ is a measure on a space $X$ and $h:X\rightarrow\setR$ a function with real values, the {\bf\index{essential supremum} essential supremum} of $f$ on $X$ is defined by
$$\text{\rm ess} \sup f = \inf\set{a\in\setR : \mu(\set{x : f(x) > a}) = 0}.$$
\end{defn}

The distance $d_p$ belongs to a dense subalgebra $\A_L \subset C(\M)$ composed of functions which are bounded Lipschitz continuous, and respect the condition $\text{\rm ess} \sup \norm{  \nabla d_p} \leq 1$. Since a modification of $\nabla f$ in (\ref{rdist5}) for a discrete number of points does not affect the value of the integral, we can extend the formula (\ref{rdist8}) to all functions in $\A_L$ under the condition $\text{\rm ess} \sup \norm{  \nabla f} \leq 1$, with the equality given by the distance $d_p$. At the end, we have the formula:
\begin{equation}\label{rdistnabla}
d(p,q) = \sup\set{  \abs{f(q)-f(p)} \ :\  f \in \A_L,\  \text{\rm ess} \sup \norm{  \nabla f} \leq 1 }.
\end{equation}

This formula is very interesting on its own by the fact that it is totally independent of any particular choice of piecewise smooth curve. The path dependance on the initial definition has been removed.

\subsubsection*{Step 2: Operatorial formulation}

The condition \mbox{$\text{\rm ess} \sup \norm{  \nabla f} \leq 1$} for $ \in \A_L$ can be replaced by the following one: \mbox{$\norm{[D,f]} \leq 1$}.\\

To prove that, let us take a function $f\in\A$. As a consequence of the \mbox{Theorem \ref{commudirac}} and of the boundedness of the manifold $\M$, we have that \mbox{$[D,f] = -i\,c(df)$} is a densely defined bounded operator on $\H$, so its norm is given by:
\begin{eqnarray}
\norm{[D,f]}^2 = \norm{c(df)}^2 &=&  \sup_{\phi\in\H,\;\phi\neq0} \frac{(c(df)\phi,c(df)\phi)}{(\phi,\phi)}\nonumber\\
&=&  \sup_{\phi\in\H,\;\phi\neq0} \frac{ \int_\M c(df^*)\,c(df)\, \phi^*\phi \,d\mu_g}{\int_\M \phi^*\phi \,d\mu_g }\cdot\label{rdist15}
\end{eqnarray}
since we work with a self-adjoint Clifford action.\\

The integral formulation (\ref{rdist15}) shows that we can extend this calculation to  a.e.~differentiable functions $f\in\A_L$ without modification of the norm value $\norm{[D,f]}$.\\

By the relations of the Clifford algebra, and since $g$ stands here for the sesquilinear metric, we have:
\begin{equation}\label{rdist20}
c(df^*)\,c(df) =  g^{-1}(df,df) =  g(\nabla f,\nabla f).
\end{equation}

If we insert (\ref{rdist20}) in (\ref{rdist15}), we obtain:
\begin{eqnarray*}
\norm{[D,f]}^2  &=&\sup_{\phi\in\H,\;\phi\neq0} \frac{ \int_\M g(\nabla f,\nabla f)\;\phi^*\phi \,d\mu_g}{\int_\M \phi^*\phi \,d\mu_g }\\
&=& \underset{x\in\M}{\text{\rm ess} \sup} \ g_x(\nabla f_x,\nabla f_x) =  \text{\rm ess} \sup \norm{  \nabla f}^2.
\end{eqnarray*}

So the distance formula (\ref{rdistnabla}) becomes:
\begin{equation}\label{rdistdiracL}
d(p,q) = \sup\set{  \abs{f(q)-f(p)} \ :\  f \in \A_L,\  \norm{[D,f]} \leq 1 }.
\end{equation}

Actually, the role of the operator $[D,f]$ is more important than giving an upper bound for the gradient, it completely defines the Lipschitz algebra $\A_L$. Indeed, using a result of \cite{Weaver}, we have that the space of bounded continuous functions $f$ such that $[D,f]$ is bounded corresponds to the space of bounded Lipschitz functions. So we can use the closure algebra $\bar\A = C(\M)$ whose Lipschitz subalgebra $\A_L$ is given by the boundedness condition of $[D,f]$, and the final commutative version of the Riemannian formula (\ref{rdistdiracL}) is:
\begin{equation}\label{rdistdirac}
d(p,q) = \sup\set{  \abs{f(q)-f(p)} \ :\  f \in \bar\A,\  \norm{[D,f]} \leq 1 }.
\end{equation}

This time, the formula only involves the closure of the algebra $\A$ (as a normed algebra acting on the Hilbert space $\H = L^2(\M,S))$ and the Dirac operator $D$ densely defined on $\H$.

\subsubsection*{Step 3: Noncommutative generalization}

The distance formula (\ref{rdistdirac}) gives all the necessary ingredients to construct a distance for noncommutative spaces. Indeed, if we take a possibly noncommutative unital algebra $\A$ which is a pre-$C^*$-algebra, acting as usual on a Hilbert space $\H$ as bounded multiplicative operators, we have that $\bar\A$ is a $C^*$-algebra and we can consider its spectrum $\Delta(\bar\A)$. Then every self-adjoint operator $D$ densely defined on $\H$ with the condition that \mbox{$[D,a]$} is bounded for each $a\in\A$ gives rise to a well defined distance function between every pure states \mbox{$\xi,\eta\in\Delta(\bar\A)$}:
\begin{equation}\label{rdistnc}
d(\xi,\eta) = \sup\set{  \abs{\xi(a)-\eta(a)} \ :\  a \in \bar\A,\  \norm{[D,a]} \leq 1 }.\footnote{In the literature, the condition $a\in\bar\A$ is often replaced by the simpler one $a\in\A$, but this will correspond to the usual Riemannian distance (\ref{rdistdirac}) only if one can find a sequence $f_n\in C^\infty(\M)$ respecting $ \norm{[D,f_n]} \leq 1$ and converging to the usual distance function $d_p$.}
\end{equation}
The Lipschitz algebra is defined as \mbox{$\A_L =\set{a\in\bar\A : [D,a] \in\B(\H)}$}.\\

Actually, this function can be extended to the complete space of states (convex combinations of pure states)\footnote{We could mention here the works of M. A. Rieffel which provide an extended study of the Riemannian distance function on the space of states \cite{Rieffel}.}. There is no problem for the algebra $\A$ to be a discrete algebra, so we have a notion of distance valid even for discrete spaces.\\

We can check that this function respects all the conditions for a Riemannian distance.
\begin{itemize}
\item $d(\xi,\eta) \geq 0$ for all $\xi,\eta\in\Delta(\bar\A)$ is trivial
\item $d(\xi,\eta) = 0$ if and only if $p = q$ comes from the fact that two distinct states must differ at least on one $a \in \bar\A$
\item $d(\xi,\eta) = d(\eta,\xi)$ for all $\xi,\eta\in\Delta(\bar\A)$ is also trivial
\item The triangle inequality is valid since for all $\xi,\eta,\rho\in\Delta(\bar\A)$ we have
\begin{eqnarray*}
d(\xi,\rho) &=& \sup_{a \in \bar\A}\set{  \abs{\xi(a)-\rho(a)} \ :\norm{[D,a]} \leq 1 }\\
&\leq& \sup_{a \in \bar\A}\set{  \abs{\xi(a)-\eta(a)} + \abs{\eta(a)-\rho(a)} \ :\norm{[D,a]} \leq 1 }\\
&\leq& \sup_{a \in \bar\A}\set{  \abs{\xi(a)-\eta(a)} \ :\norm{[D,a]} \leq 1 }\\
&&+\   \sup_{a \in \bar\A}\set{  \abs{\eta(a)-\rho(a)} \ :\norm{[D,a]} \leq 1 }\\
&=&  d(\xi,\eta) + d(\eta,\rho)
\end{eqnarray*}
\end{itemize}

When $\A=C^\infty(\M)$ and $D$ is the Dirac operator on the Hilbert space $\H = L^2(\M,S)$ of square integrable spinor sections over $\M$, the formula (\ref{rdistnc}) is completely equivalent to the usual Riemannian distance (\ref{rdistdirac}) by using the isomorphism given by the Gel'fand transform.\\

\vspace{1em}                                                      

\subsection{Spectral Triples}\label{ST}

\vspace{1em}                                                      

We have almost completed the construction of the basic elements of Riemannian noncommutative geometry. By setting the noncommutative equivalents for integration, differential forms and distance, we have each time referred to abstract key ingredients as an algebra $\A$, a Hilbert space $\H$ and a Dirac-like operator $D$, with some conditions to guarantee the good definition of these objects as well as the correspondence in the commutative case to compact Riemannian spin manifolds. Those three ingredients form together the fundamental spaces of noncommutative geometry, called spectral triples.

\vspace{1em}                                                      

\begin{defn}\label{spectrip}
A {\bf\index{spectral triple} Spectral Triple} \mbox{$(\A,\H,D)$} is the data of:
\begin{itemize}
\item A Hilbert space $\mathcal{H}$
\item A unital pre-$C^*$-algebra $\mathcal{A}$ with a representation as bounded multiplicative operators on $\mathcal{H}$ 
\item A self-adjoint operator $D$ densely defined on $\mathcal{H}$ with compact resolvent such that all commutators $[D,a]$ are bounded for every $a\in\A$
\end{itemize}
\end{defn}

\vspace{1em}                                                      

Whenever the algebra $\A$ is non abelian, a spectral triple is clearly a noncommutative space, with those elements automatically defined:
\begin{itemize}
\item The spectrum \mbox{$\Delta(\bar\A)$} corresponds to a noncommutative geometrical space
\item Elements in the form \mbox{$a_0 [D,a_1] \cdot\cdot\cdot[D,a_p]$}, \mbox{$a_0, a_1, \dots , a_p \in\A$} generate a graded algebra of differential forms
\item The subspace $\set{a\in\bar\A : [D,a] \in\B(\H)} \subset \bar\A$ defines the Lipschitz algebra of the spectral triple
\item The function \mbox{$d(\xi,\eta) = \sup\set{  \abs{\xi(a)-\eta(a)} \ :\  a \in \bar\A,\  \norm{[D,a]} \leq 1 }$} defines a Riemannian distance on the space of states of $\bar\A$
\end{itemize}

\vspace{1em}                                                      

One can define alternatively the spectral triple directly on the closure algebra \mbox{$(\A = \bar\A,\H,D)$} and require the commutators $[D,a]$ to be bounded only on a dense subalgebra of $\A$ (the Lipschitz algebra). We do not favor this definition since it can lead to some confusion between the algebras $C(\M)$ and $C^\infty(\M)$ on the commutative case.\\

Additionally, some specific properties can be imposed to the spectral triples. As this theory is still developing, the conditions imposed on it are also developing  following the needs. We present here the properties as given in \cite{MC08}.

\vspace{1em}                                                      

\begin{defn}
A spectral triple is {\bf\index{spectral triple!finitely summable} finitely summable} (or \mbox{$n^+$-summable}) if there exists a positive integer $n$ such that the resolvent of $D$ has characteristic values \mbox{$\mu_k = O(k^{-n})$}. In this case, $D^{-1} \in \L^{n+}$, and $n$ is the {\bf\index{metric!dimension} metric dimension} of the spectral triple.
\end{defn}

\vspace{1em}                                                      

So this is the rate of growth of the eigenvalues of the Dirac operator which gives the information on the dimension of the spectral triple. Discrete spectral triples, defined with a finite dimensional Hilbert space $\H$ (and so with a finite number of eigenvalues for the Dirac operator), can be seen as noncommutative space of dimension zero, while non-finitely summable spectral triples represent infinite-dimensional noncommutative spaces.\\

For every finitely summable spectral triple of metric dimension $n$, we have a noncommutative integral defined by
$$\int\!\!\!\!\!\!-\ a \abs{D}^{-n} = \tr_\omega(a \abs{D}^{-n})$$
for every measurable element $a \abs{D}^{-n}$ with $a\in\A$.

\vspace{1em}                                                      

\begin{defn}
A spectral triple is {\bf\index{spectral triple!even} even}  if there exists a \mbox{$\mathbb Z_2$}-grading $\gamma$ such that \mbox{$[\gamma,a] = 0\ \forall a\in\mathcal{A}$} and \mbox{$\gamma D + D \gamma = 0$}.
\end{defn}
Spectral triples which are not even are simply called odd spectral triples.

\vspace{1em}                                                      

\begin{defn}\label{realst}
A spectral triple is {\bf\index{spectral triple!real} real of KO-dimension $n \in\mathbb Z_8$} if there exists an antilinear isometry \mbox{$J :\mathcal{H} \rightarrow \mathcal{H}$}  such that:
\begin{itemize}\itemsep=0pt
\item $J^2 = \epsilon$
\item $J D = \epsilon' D J$
\item $J \gamma = \epsilon'' \gamma\quad$ (if the spectral triple is even)  
\item $[a,b^ \circ] = 0\ \quad\forall a,b\in\mathcal{A}$
\item $[[D,a],b^ \circ] = 0\ \quad\forall a,b\in\mathcal{A}$
\end{itemize}
where $b^ \circ =Jb^*J^{-1}$ and where the numbers $\epsilon$, $\epsilon'$ and $\epsilon''$ are taken in the set $\left\{{-1,1}\right\}$ depending on the value of $n\!\!\mod 8$:
\begin{center}
\begin{tabular}{|c|rrrrrrrr|}
 \hline
  n &0 &1 &2 &3 &4 &5 &6 &7 \\
\hline \hline
$\epsilon $  &1 & 1&-1&-1&-1&-1& 1&1 \\
$\epsilon'$ &1 &-1&1 &1 &1 &-1& 1&1 \\
$\epsilon''$&1 &{}&-1&{}&1 &{}&-1&{} \\  
\hline
\end{tabular}
\end{center}
\end{defn}

Elements $b^\circ=Jb^*J^{-1}$ are the elements of the {\bf\index{algebra!opposite} opposite algebra} of $\A$, i.e.~the algebra \mbox{$\A^ \circ = \set{ a^\circ : a\in\A}$} which respects the inverse product \mbox{$a^\circ b^\circ = (ba)^\circ$}. The KO-dimension does not really play the role of a dimension, but rather the role of a signature of the noncommutative space.\\

We know that for every $n$-dimensional compact Riemannian spin-manifold, a spectral triple can be constructed by taking the Hilbert space $\H = L^2(\M,S)$ of square integrable spinor sections over $\M$, the pre-$C^*$-algebra $\A=C^\infty(\M)$ and the Dirac operator $D$. This spectral triple has metric dimension $n$. If $n$ is even, then the spectral triple is even by taking the chirality element \mbox{$\gamma = \gamma^1\dots\gamma^n$} as a \mbox{$\mathbb Z_2$}-grading, with the Dirac matrix $\gamma^\mu=c(dx^\mu)$. An antilinear isometry $J$ giving a real structure is known as a charge conjugation operator.\\

One can wonder if the converse is also true, so if every spectral triple with commutative algebra, maybe under suitable conditions, corresponds to a compact Riemannian spin-manifold, with $\A$ being the space of smooth functions on it? The answer is yes, and is known as the Connes' reconstruction theorem. This theorem was first conjectured in \cite{C96}, and was recently proven in details in \cite{C08} under slightly stronger axioms.


\newpage

\section[Noncommutative standard model]{Noncommutative standard model}

In the first part of this chapter we have seen that noncommutative geometry provides an interesting mathematical tool in the form of spectral triples which can lead to translate Euclidean gravity into an algebraic formalism (Euclidean has here the meaning of a Riemannian signature) as well as giving a geometrical interpretation to noncommutative algebraic spaces, with the definitions of distance, integrals and differential forms.\\

In the Chapter \ref{chapterintro}, we have asked the question of the existence of new mathematical tools which can describe gravitation and include in the same time a description of the other fundamental interactions. Noncommutative geometry gives a positive answer to this question, by modeling the standard model of particles (at least on a classical level) of the form of a spectral triple, and then by creating a product of spectral triples representing a minimal coupling between Euclidean gravity and a classical standard model.\\

So to conclude this chapter on Euclidean noncommutative geometry, we will give a quick overview of the construction of this noncommutative standard model (noncommutative is used here in the sense that this model is based on noncommutative geometry). A complete version of this theory (with massive neutrinos) was given in \cite{MC207} by A.~H.~Chamseddine, A.~Connes and M.~Marcolli, and largely discussed in \cite{MC08}. The latest version of this theory can be found in \cite{CC10}. An interesting physical presentation can also be found in \cite{Schuk}.\\

First of all, we must explain how the product geometry is constructed. The idea is similar to a Kaluza--Klein theory. The spectral triple is taken to be the product of a spectral triple associated to the commutative geometry of a compact $4$-dimensional Riemannian spin manifold $\M$ with a spectral triple associated to a finite noncommutative geometry $F$.\\

The gravitational part \mbox{$(\A_\M,\H_\M,D_\M) = (C^\infty(\M),L^2(\M,S),D)$} is constructed in the same way as in the Section \ref{secgrav}, with $D$ the Dirac operator associated to the spin structure.  We can add the chirality element\footnote{The indices $\set{0,1,2,3}$ are generally used even if the signature is positive.}  \mbox{$\gamma_\M = \gamma^0\gamma^1\gamma^2\gamma^3$} and the $4$-dimensional charge conjugation operator $J_\M$ defined  by \mbox{$J_\M = C : \psi \rightarrow i\gamma^2\gamma^0 \bar\psi$} to obtain a real even spectral triple of KO-dimension $4$ modulo $8$ : \mbox{$\M = (\A_\M,\H_\M,D_\M,\gamma_\M,J_\M)$}.\\

The finite geometry is given by another real even spectral triple \mbox{$F = (\A_F,\H_F,D_F,\gamma_F,J_F)$}. Then the product geometry \mbox{$\M \times F = (\A,\H,D,\gamma,J)$} is constructed in the following way:
\begin{itemize}
\item $\mathcal{A} = \mathcal{A}_\M \otimes \mathcal{A}_F$
\item $\mathcal{H} =  \mathcal{H}_\M \otimes \mathcal{H}_F$
\item $D = D_\M \otimes 1 + \gamma_\M \otimes D_F$
\item $\gamma =  \gamma_\M \otimes \gamma_F$
\item $J  = J_\M \otimes J_F$
\end{itemize}
One can check that those products give rise to a well defined spectral triple. The KO-dimension of the product geometry is the sum modulo 8 of the KO-dimensions of both spaces. We can note that a real structure is not mandatory in order to construct a product of spectral triple, but one at least must be even.\\

The first and most simple model of product geometry was introduced in \cite{CH93} as the product of a spin manifold and a two-points space. Then, the  finite algebra was chosen to be $M_n(\setC)$, the algebra of complex $n\times n$ matrix, and the result was comparable to an Einstein--Yang--Mills system \cite{CH93/2}. The construction of the complete standard model requires to find an algebra such that the group of "diffeomorphisms" of the finite noncommutative geometry corresponds to the gauge groups of the standard model \cite{CC97,C96}.\\

To see what the term diffeomorphism can mean for a noncommutative geometry, we can notice that in the Riemannian commutative case the group $\text{Diff}(\M)$ of diffeomorphisms of $\M$ is in a one to one correspondence with the group $\text{Aut}(C^\infty(\M))$ of automorphisms (endomorphisms which are isometric) of the algebra of smooth functions on $\M$. Indeed, each $\varphi\in\text{Diff}(\M)$ is associated to the isomorphism $\alpha : C^\infty(\M) \rightarrow C^\infty(\M)$ by the relation \mbox{$\alpha(f) = f\circ\varphi^{-1}$} for each $f\in C^\infty(\M)$. So a diffeomorphism for a noncommutative space would just correspond to an automorphism of the noncommutative algebra.\\

If we define the unitary group for a unital involutive noncommutative algebra $\A$ by \mbox{$\U=\set{ u\in\A : uu^* = u^*u = 1}$}, then the space of {\bf\index{inner automorphism} inner automorphisms}
$$\text{Int}(\A) = \set{\alpha \in\A : \exists \,u\in\U, \;\alpha(a) = u\,a\,u^*\ \forall a\in\A}$$
is a non-empty subgroup of $\text{Aut}(\A)$, and corresponds to  internal symmetries of the noncommutative space. If we take the finite spectral triple \mbox{$F = (\A_F,\H_F,D_F,\gamma_F,J_F)$}, then the unitary group of $\A_F$ defines an action on $\H_F$ by the adjoint representation:
$$\text{Ad}(u) \,\xi = u\,\xi\,u^* = u(u^*)^\circ \,\xi\qquad \forall \xi\in\H\quad \forall u\in\A\ :\ uu^* = u^*u = 1$$
with $(u^*)^\circ = J u J^{-1}$ in the opposite algebra. So in order to reproduce the standard model, we would ask the algebra $\A_F$ to be chosen such that the adjoint representations act as elements in the gauge group $U(1)\times SU(2) \times SU(3)$.\\

Those inner automorphisms can be interpreted as gauge transformations on the noncommutative space. Those transformations affect the Dirac operator $D$ as a kind of inner fluctuations of the metric. Actually, we can replace the Dirac operator $D$ by a covariant formulation:
$$D_A = D + A + \epsilon'JAJ^{-1}$$
where $A$ is a Hermitian $1$-form \mbox{$A=\sum_i a_i [D,b_i]$}, \mbox{$a_i,b_i \in\A$}, $A=A^*$.
The covariant property is given by the following proposition \cite{MC08}:
\begin{prop}
If \mbox{$(\A,\H,D)$} is a real spectral triple with antilinear isometry $J$, then for any gauge potential \mbox{$A\in\Omega^1_D$} with \mbox{$A=A^*$} and any unitary \mbox{$uu^* = u^*u = 1$}, $u\in\A$, one has
$$\text{Ad}(u) \prt{D + A + \epsilon'JAJ^{-1} }  \text{Ad}(u^*) = D + A' + \epsilon'JA'J^{-1}$$
where
$$ A' = u[D,u^*] + uAu^*.$$
\end{prop}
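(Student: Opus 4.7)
The plan is to decompose $\text{Ad}(u)\,(D+A+\epsilon'JAJ^{-1})\,\text{Ad}(u^*)$ into three pieces and simplify each using the zeroth- and first-order axioms of a real spectral triple. Setting $U=u$ and $V=(u^*)^\circ=JuJ^{-1}$, the defining formula $\text{Ad}(u)\xi=u\xi u^*=u(u^*)^\circ\xi$ gives $\text{Ad}(u)=UV$ and $\text{Ad}(u^*)=U^*V^*=\text{Ad}(u)^{-1}$; unitarity of $u$ combined with $[a,b^\circ]=0$ yields the mutual commutations among $\{U,U^*,V,V^*\}$. A short auxiliary step I will use is the derived identity $[a,[D,b^\circ]]=0$ for $a,b\in\A$, obtained by applying $[D,\cdot]$ to $[a,b^\circ]=0$ and invoking $[[D,a],b^\circ]=0$.

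For the $D$ piece, the relation $JD=\epsilon'DJ$ gives $[D,JcJ^{-1}]=\epsilon'J[D,c]J^{-1}$; in particular $[D,V^*]=\epsilon'J[D,u^*]J^{-1}$. Writing $\text{Ad}(u)\,D\,\text{Ad}(u^*)=D+\text{Ad}(u)[D,\text{Ad}(u^*)]$ and expanding $[D,U^*V^*]=[D,U^*]V^*+U^*[D,V^*]$, the commutations above collapse the correction to $U[D,U^*]+V[D,V^*]$, which becomes
\[
u[D,u^*]+\epsilon'J\bigl(u[D,u^*]\bigr)J^{-1}
\]
via $V=JuJ^{-1}$ and the identity $(Ja_1J^{-1})(Ja_2J^{-1})=J(a_1a_2)J^{-1}$.

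For the $A$ piece, since $A=\sum a_i[D,b_i]$ with $a_i,b_i\in\A$, both axioms $[a,b^\circ]=0$ and $[[D,a],b^\circ]=0$ force $[V,A]=0$, so $\text{Ad}(u)\,A\,\text{Ad}(u^*)=uAu^*$. For the $\epsilon'JAJ^{-1}$ piece, the mirror identities $[U,JcJ^{-1}]=0$ and $[U,J[D,c]J^{-1}]=0$ (the latter using the derived $[a,[D,b^\circ]]=0$) show that $JAJ^{-1}$ commutes with $U$ and $U^*$, leaving $\epsilon'V(JAJ^{-1})V^*=\epsilon'J(uAu^*)J^{-1}$. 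Summing the three contributions gives $D+A'+\epsilon'JA'J^{-1}$ with $A'=u[D,u^*]+uAu^*$, as required. The main obstacle is pure bookkeeping: keeping the antilinearity of $J$ straight when pushing $D$, $A$, and $JAJ^{-1}$ past $U$ and $V$, and deploying the correct axiom (zeroth-order, first-order, or the derived one) at each step; there is no conceptual difficulty, only a risk of slips in the order of operators.
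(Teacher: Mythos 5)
Your proof is correct: the decomposition into the $D$, $A$ and $\epsilon'JAJ^{-1}$ pieces, the identity $[D,JcJ^{-1}]=\epsilon'J[D,c]J^{-1}$ from $JD=\epsilon'DJ$, and the derived relation $[a,[D,b^\circ]]=0$ are all used correctly, and the bookkeeping with $U=u$, $V=JuJ^{-1}$ goes through. Note that the paper itself gives no proof of this proposition (it is quoted from \cite{MC08}); your argument is essentially the standard one found in that reference, so there is nothing to reconcile with the text.
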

To summarize, we have the gauge transformations
$$\xi\in\H \rightarrow u\,\xi\,u^* = u(u^*)^\circ \,\xi,\qquad A\in\Omega^1_D \rightarrow u[D,u^*] + uAu^*$$
with a Hermitian gauge field $A\in\Omega^1_D$ which gives rise to the existence of gauge bosons in noncommutative geometry.\\

The action proposed by A. Connes and A. H. Chamseddine \cite{CC97} is based on this covariant Dirac operator, with a cut-off parameter $\Lambda$ which fixes the mass scale and a positive even functional $f$:
\begin{equation}\label{spectaction}
S =\tr\left({ f\left({\frac{D_A}{\Lambda}}\right)}\right)\cdot
\end{equation}
This action is called the {\bf\index{spectral action} spectral action}, and requires to assume the following principle:
\begin{defn}
{\bf\index{spectral action!principle} Spectral action principle}:  The physical action depends only on the spectum of the Dirac operator. 
\end{defn}

In the special case of gravity, this spectral invariance is however a stronger condition than the usual diffeomorphism invariance since there exist manifolds which are isospectral without being isometric.\\

Typically, the action (\ref{spectaction}) is calculated by use of the Lichnerowicz' formula (Theorem \ref{Lich}) on the square \mbox{$\frac{D_A^2}{\Lambda^2}$} and by computing the trace with a method of heat kernel expansion \cite{CC97}. This action only handles the bosonic part. In order to account for the fermionic part an additional term must be added:
$$S =\tr\left({ f\left({\frac{D_A}{\Lambda}}\right)}\right)\ +\ \frac 12 \scal{J\xi,D_A\xi}\cdot$$

The construction of the finite geometry is quite complicated, and can be found in \cite{MC08}. We present here a sketch of this construction.\\

First we define the algebra
$$\mathcal{A}_{LR} = \mathbb C \oplus \mathbb H_L\oplus \mathbb H_R \oplus M_3(\mathbb C)$$
where $\mathbb H$ is the algebra of quaternions, and $\mathbb H_L$ and $\mathbb H_R$ are just two copies of this algebra, labelled left and right (and they will have a correspondence with left-handed and right-handed fermions).\\

We will use irreducible representations of each element of this direct sum of algebras, and denote these representations with the following notations:
\begin{itemize}
\item ${\bf 1}$ is the $1$-dimensional irreducible representation of $\mathbb C$, and ${\bf 1^\circ}$ is its opposite (the irreducible representation of the opposite algebra)
\item ${\bf 2}$ is the $2$-dimensional irreducible (complex) representation of $\mathbb H$, and ${\bf 2^\circ}$ is its opposite, with ${\bf 2}_L$ and ${\bf 2}_R$ begin the distinction between  $\mathbb H_L$ and $\mathbb H_R$
\item ${\bf 3}$ is the $3$-dimensional irreducible representation of $M_3(\mathbb C)$, and ${\bf 3^\circ}$ is its opposite
\end{itemize}

Then $M_F = \mathcal E \oplus \mathcal E^ \circ $ with
$$\mathcal E ={{\bf 2}_L \otimes {\bf 1^ \circ}} \oplus{{\bf 2}_R \otimes{\bf 1^ \circ}} \oplus { {\bf 2}_L \otimes{\bf 3^ \circ} } \oplus {{\bf 2}_R \otimes{\bf 3^ \circ}} $$
is a $\mathcal{A}_{LR}$-bimodule. The decomposition $\mathcal E \oplus \mathcal E^ \circ $ corresponds to the physical interpretation between particles and anti-particles inside one generation of fermions. To obtain the Hilbert space, we sum this bimodule  a number of times corresponding to the number of generations of fermions. So we define the Hilbert space by
$$\H_F =  M_F \oplus M_F \oplus M_F = \H_f \oplus \H_{\bar f}$$
with a suitable trace, and where $\H_f = \mathcal E \oplus \mathcal E \oplus \mathcal E $ and $\H_{\bar f} = \mathcal E^ \circ\oplus \mathcal E^ \circ\oplus \mathcal E^ \circ$.\\

The algebra $\A_F$ is chosen as a subalgebra of $\mathcal{A}_{LR}$ such that the existence of a Dirac operator is guaranteed under real spectral triple conditions (Definition \ref{realst}) and under the extra condition that the Dirac operator intertwines the subspace $\H_f$ and $\H_{\bar f}$. To guarantee this intertwining, the finite algebra is chosen to be
$$\mathcal{A}_F = \mathbb C \oplus \mathbb H \oplus M_3(\mathbb C).$$

The antilinear isometry $J_F$ is defined from its action on $M_F = \mathcal E \oplus \mathcal E^ \circ $ by
$$J_F (\xi,\bar\eta) = (\eta,\bar\xi)\qquad \forall \xi,\eta \in \mathcal E$$
and the $\setZ_2$-grading by using the $\setZ_2$-grading given by $\mathbb{H}$:
$$\gamma_F =  c - J_F c J_F,\quad c=(0,1,-1,0)\in\mathcal{A}_{LR}.$$

From these settings, it is possible to construct a general form for a Dirac operator acting on  $\H_F$ and respecting the real even spectral triple conditions, form that we do not want to explicit here. All details about the classification of the Dirac operators can be found in \cite{MC08}. We will just make the remark that the free elements in the construction of the Dirac operator correspond to the Yukawa parameters of the standard model.\\

The finite geometry is of KO-dimension $6$ modulo $8$, so the product geometry is of KO-dimension $2$ modulo $8$. We have then those nice results \cite{MC08}:
\begin{itemize}
\item If \mbox{$\U(\A_F) = \set{ u\in\A_F : uu^* = u^*u = 1}$} is the unitary group of the finite algebra $\A_F$ and \mbox{$SU(\A_F) = \set{u\in\U{\A_F} : \det(u) = 1}$} its special orthonormal subgroup, where $\det(u)$ is the determinant of the adjoint action of $u$ on $\H_F$, then modulo a finite abelian group the group $SU(\mathcal{A}_F)$ is of the form:
$$SU(\mathcal{A}_F) \cong U(1) \times SU(2) \times SU(3)$$
\item The action of the $U(1)$ subgroup is modulo a finite abelian group of the form \mbox{$u(\lambda) = (\lambda^\mu,1,\lambda^\nu 1_3) \in SU(\mathcal{A}_F)$}, $\mu,\nu\in\setR$, $\lambda\in U(1)$, and the corresponding adjoint action $\text{Ad}(u) = u\prt{u^*}^\circ$ is then a multiplication of the basis vectors of $\H_F$ by powers of $\lambda$. If we denote by $|\!\!\uparrow\rangle $ and $|\!\!\downarrow \rangle$ the basis of ${\bf 2}$ such that the action of $\lambda$ is diagonal on the basis, then the different powers of $\lambda$ can be written under the following table:
$$\begin{array}{ccccc}
&|\!\!\uparrow\rangle \otimes {\bf 1^\circ}  &|\!\!\downarrow \rangle \otimes {\bf 1^ \circ}  &|\!\!\uparrow \rangle \otimes {\bf 3^ \circ}  &|\!\!\downarrow \rangle \otimes {\bf 3^ \circ}\\
{\bf 2}_L & -1 & -1 & \frac 13 & \frac 13\\
{\bf 2}_R & 0 & -2 & \frac 43 & -\frac 23\\
\end{array}$$
Those powers correspond to the hypercharges of the fermions of the standard model.
\item The inner fluctuations of the metric for the product geometry $\M\times F$ can be separated into two parts, since $D$ is of the form \mbox{$D_\M \otimes 1 + \gamma_\M \otimes D_F$}. The fluctuations of the continuous part $D_\M \otimes 1$ correspond to a $U(1)$ gauge field, a $SU(2)$ gauge field and a $U(3)$ gauge field, which reduces to a $SU(3)$ gauge field if we restrict to gauge fields such that $\tr(A)=0$, so the 12 gauge bosons of the standard model are recovered. The fluctuations of the discrete part $\gamma_\M \otimes D_F$ correspond to an arbitrary quaternion-valued function which corresponds to a Higgs field.\\
\end{itemize}

To conclude, we have seen here that the framework of spectral triples can be used to construct a model which combines both Euclidean gravitation and a classical standard model. By this way, the standard model of particles is conferred a geometrical interpretation as a noncommutative geometrical space. Of course, in order to have a real theory of all the four fundamental interactions, one needs to solve the following two quite annoying problems:
\begin{itemize}
\item The standard model presented here is only defined at a classical level, so a way to quantify this model must still be found. Few works about this problem have been done, mainly on particular examples of noncommutative spaces (\cite{Bes07,Hale,HS,HS2,Man, Rov99}), but a complete quantization is still out of sight.
\item Since only Euclidean gravity is involved, so with positive signature, this model has no real physical interpretation at this time. The attempts  to generalize this theory to Lorentzian manifolds is the complete topic of our Chapter \ref{chaplor}.
\end{itemize}


\cleardoublepage
\hbox{} \vspace*{\fill} \thispagestyle{empty}
\chapter[Lorentzian noncommutative geometry]{An attempt to generalize noncommutative geometry to Lorentzian geometry}\label{chaplor}

In the Chapter \ref{chaprie} we have introduced Connes' noncommutative geometry. This theory provides a good mathematical background which allows us to construct algebraic noncommutative spaces with geometrical interpretation. In particular the theory can be used to construct a model which combines Euclidean gravitation and a classical standard model.\\

However this theory is only developed in the case of compact Riemannian manifolds, and the gravitational model obtained has a positive signature. This is not satisfactory since physical theories like general relativity are based on spacetimes with Lorentzian signature. So the theory of noncommutative geometry is not at this time a real physical theory, but only a technical background having its interest mainly at a mathematical level. If we want this theory to lead one day to a real physical unification between gravitation and the other fundamental interactions, a complete Lorentzian counterpart must be produced, but such a fully complete theory is still out of sight. \\

Connes' theory of noncommutative geometry is more than $25$ years old now, but the consideration on the hyperbolic case possesses only half this age. Moreover, if the Riemannian formulation has led to a quite good amount of literature, only a few attempts have been made about the Lorentzian formulation.\\

Generalizing the theory to Lorentzian spaces is where our research comes in. Of course our goal is not to find a complete solution to this wide open problem, but to propose significative improvements in this direction, with the hope that they can become some parts of a final solution.\\

We will begin this chapter by introducing different problems generated by the addition of a Lorentzian signature, and by making a review of the first existing results. Then the following will be a detailed report on our contributions about this problem.\\


\newpage

\section[Current attempts]{Generalization to Lorentzian spaces:\\ Current attempts}

We have said that the theory of noncommutative geometry is only valid at this time for Riemannian manifolds, but we have not expressed at which places the extension to Lorentzian manifolds can be problematic. It is impossible to give at this time an exhaustive list  of all problems and difficulties which could arise from the Lorentzian signature since the attempts to bypass some of them can lead to new ones. However, we will try to give a reasonable list of the main problems that one can encounter while evolving in the Lorentzian generalization.\\

\subsection{Problems coming from the generalization}\label{prob}

The problems encountered can be divided in two kinds: technical problems and conceptual problems, but some conceptual problems can also induce new technical problems.

\subsubsection*{Technical problems}

The technical problems come mainly from the definition of a spectral triple involving the three elements $\A$, $\H$ and $D$. $\H = L^2(\M,S)$ is the Hilbert space of square integrable sections of the spinor bundle $S$ over $\M$ with a Hermitian inner product defined by:
$$(\psi,\phi) = \int_\M \psi^*\phi\, d\mu_g.$$
In the Riemannian case, $D$ is an essentially self-adjoint elliptic operator with real discrete spectrum. In the Lorentzian case, $D$ is not an essentially self-adjoint operator any more, which implies a more complicated spectrum, and it is not elliptic any more, which implies several issues as the non-compactness of its resolvent, an ill-definition of its inverse modulo smoothing operation or singularities appearing in the domain of smoothness of the Dirac operator. As a consequence, all the elements defined in the Section \ref{secgrav} as integration, differential forms and Lipschitz algebra are not automatically conserved in the Lorentzian case.\\

So if one wants to conserve a similar notion of spectral triple, an important adaptation to the space $\H$ and to the Dirac operator $D$ must be performed, with if possible the conservation of similar notions of integration or differential forms.

\subsubsection*{Conceptual problems}

The main conceptual problem of the generalization is the introduction in Lorentzian geometry of the notion of causality. Roughly speaking\footnote{All elements about Lorentzian geometry will be mathematically defined in the Section \ref{Lorgeo}.}, causality is the fact that a point (an event) of the spacetime can only be in a physical relation with specific points of the spacetime, with some points being under the influence of that point, and some other points influencing that point. Such notion, which can be compared to a notion of partial order on the space of points, is totally absent from Connes' theory.\\

The notion of distance on Lorentzian spaces is completely dependent on this notion of partial causal order. Indeed, the distance between two points must be positive only if the two points are causally related, so the distance between two causally unrelated distinct points must be zero. Moreover, the distance must not be symmetric any more, since the distance from a point $p$ to a point $q$ is positive only if $q$ is in the following of $p$ for this order (this corresponds to $q$ being in the future of $p$). The distance defined in (\ref{rdistnc}) is clearly symmetric and is not null for any two distinct points, so in no way it can give rise to a Lorentzian distance.\\

So we can see that two important and related concepts -- Lorentzian distance and causality -- are totally absent of the theory of noncommutative geometry and must be introduced in a Lorentzian generalization.

\subsubsection*{A new technical problem: the non-compactness}

The introduction of causality leads  to the showing up of another important technical problem: the non-compactness of the manifold. Indeed, the current theory is mainly set for compact manifolds, with the compactness condition being very useful to avoid technical complications. However it is well known that Lorentzian compact manifolds do not accept a well defined causal structure. Actually, Lorentzian compact manifolds imply the existence of points which are in the future of themselves (see \cite{Beem,Gallo,Tipler}), which is once more not very physically realistic.\\

We have presented the Section \ref{secgrav} only for the compact case, as it is usually done. As we have said, the compactness condition is most of the time more a simplicity condition than a problematic one. For example, in the non-compact case the algebra $\A$ must be non-unital, and must correspond to functions vanishing at infinity in the commutative case. Since the Gel'fand--Naimark theorem has a non-unital version, this is not a priori a problem to consider non-unital algebras, even if sometimes a unitization could be necessary. The definition of spectral triple (Definition \ref{spectrip}) must be adapted in order to account for non-unital algebras, with the replacement of the condition on the compactness of the resolvent of $D$ by the condition:
\begin{equation}\label{diracncompact}
a \prt{D - \lambda 1}^{-1}\ \text{ is compact }\quad \forall a\in\A, \ \forall\lambda\notin\sigma(D).
\end{equation}

Constructions of non-compact noncommutative spaces can be found e.g.~in \cite{Gayral} with the Moyal plane. Extensions of the distance function can also be set for non-compact spaces \cite{DAM}, and especially for the Moyal plane \cite{Ca,CW}.\\

However not all the difficulties necessarily occur in a Lorentzian generalization since some elements which are problematic for the non-compact case are not even present in the Lorentzian case. For example, the boundedness condition of the Lorentzian distance function $d_p$ becomes irrelevant since this function is not  Lipschitz any more in the Lorentzian case, so this difficulty is actually replaced by an even more difficult one.\\

Nevertheless, the mandatory non-compactness condition of the manifold will lead to a quite large number of technical difficulties while trying to generalize the theory to Lorentzian spaces. This is mainly due to an important consideration: the fact that causality is a local concept, and that locality is something which is not a priori present in algebraic structures. So we can often be facing the problem of enlarging a local concept to the whole manifold, which could  be straightforward with a presence of finite open covering but not with non-compact manifolds.\\

As a consequence, global considerations on the manifold will be useful tools, especially about causality. This leads to the fact that most of the approaches to generalize the theory to Lorentzian manifold are done under the hypothesis of global hyperbolicity. This is logical since globally hyperbolic spacetimes carry a global information on causality by the existence of a global time function, as well as many useful properties that we will expose in the Section \ref{Lorgeo}.\\

\subsection{Review of the literature}

Literature on the subject of generalization of Connes' noncommutative geometry to Lorentzian spaces is so small at this time that we should be able to make an almost exhaustive review over it, of course at the best of our knowledge.\\

There exist different approaches, but no one with a complete solution. Actually, many of those approaches could consist each one as a piece from a big puzzle, with a final solution which would combine different aspects from diverse attempts. This is coherent since we have seen that the generalization to Lorentzian spaces needs to solve different problems, in relation to each other but with some particular distinct elements.\\

All existing attempts could be classified in three groups, even if some of those approaches were developed independently. So these groups do not really represent some known axes of research, but the problems they try to solve are quite similar, and in result the mathematical tools developed are often related in some way.

\subsubsection*{Group 1: Hamiltonian noncommutative geometry}

The first group does not really consist of approaches to create an equivalent Lorentzian version of Connes' noncommutative geometry, but it consists more of attempts to apply the existing Riemannian model to Lorentzian spaces.\\

A first way is to consider the Hamiltonian formulation (ADM formalism \cite{ADM}) of general relativity for globally hyperbolic spacetimes. In this formalism, the spacetime is divided in a $3+1$ decomposition, so it is foliated into a family of $3$-dimensional Riemannian  manifolds governed by a Hamiltonian equation, together with fitting conditions given by Lagrange multipliers called lapse and shift. Since the foliation gives rise to a family of Riemannian manifolds indexed by a time function $t$, one can consider a family of spectral triples \mbox{$(\A_t,\H_t,D_t)$} associated with each slice. Actually, this is sufficient to consider a unique Hilbert space $\H$ which is isomorphic to each $\H_t$ and then to study the evolution of $\A_t$ and $D_t$ by using a time evolution operator.  This approach is done by E. Hawkings in \cite{Haw} and by T. Kopf in \cite{Kopf98,Kopf00}.\\

A development of this idea is given by T. Kopf  and M. Paschke in \cite{Kopf01,Kopf02} by considering the algebras $A_t$ as elements of a category, where the morphisms between them are isomorphisms  which are given by the time evolution operator. So the system uses a groupoid as input. This groupoid is one of the main problems of this attempt since it involves a huge amount of data, with no known conditions to restrict them. Moreover this model cannot give any information on elements like the metric, the distance, differential calculus, integration, or even any information about causality. A similar development is given in \cite{Pas04} in the framewok of quantum field theory.\\\

Another way is to apply Euclidean noncommutative geometry not to the foliated space but to the space of connections on it, and more precisely to the space of connections defined in the theory of quantum gravity (see the Section \ref{QG}). This leads to an intersection between Connes' noncommutative geometry and quantum gravity proposed  by J. Aastrup, J. M. Grimstrup and R. Nest \cite{AaG06,AaG07,AaG091,AaG092,AaG093,AaG094,AaG112,AaG11}.\\

We will not develop these theories here since there are more an adaptation of the Riemannian model to some particular spaces referring to Lorentzian geometry than a complete search for a Lorentzian counterpart.

\vspace{1em}                                                      

\subsubsection*{Group 2: Pseudo-Riemannian spectral triples}

The goal of the second group is clear: adapting the construction of spectral triples \mbox{$(\A,\H,D)$} in order to account for pseudo-Riemannian signatures, et especially Lorentzian ones.\\

The main work about that is given by A. Strohmaier \cite{Stro} who introduces the concept of Krein space in order to recover the self-adjointness of the Dirac operator $\H$, and in the same time suggests an elliptic adaptation of this operator. Since this element is really interesting and could play an important role in a hypothetic complete generalization, we will expose the basis of this theory in the Section \ref{Krein}.\\

These pseudo-Riemannian spectral triples have been constructed onto some examples, as the Lorentzian noncommutative torus \cite{Stro}, the  Lorentzian noncommutative  cylinder \cite{Suij} or the Lorentzian noncommtuative $3$-sphere \cite{Pas}. The last citation seems to be an element of an extended research conducted mainly by M. Paschke about Lorentzian spectral triples, but without further publications at this time.\\

However, this adaptation is only technical since there is no way until now to recover the causal information from a Lorentzian spectral triple, because the only distance function available is Riemannian. Moreover, some informations from the Euclidean case as the noncommutative integral are translated in this formalism under the condition of compactness of the Lorentzian manifold, but we know that compact Lorentzian manifolds must be avoided. Likewise, most of the examples are based on compact Lorentzian manifolds. So this theory should need  further developments in order to clarify which elements can be extended to non-unital algebras and how this could be done, as it was initiated in \cite{Suij}.\\

To these pseudo-Riemannian spectral triples, we should add the work by J. Barrett \cite{Barret} which presents an adaptation of the spectral triple product of the standard model coupled with gravity with a Lorentzian signature (more precisely with a modification of the \mbox{KO-dimension}).

\subsubsection*{Group 3: Causal noncommutative geometry}

The last group is our main interest. The goal is not to technically adapt Euclidean noncommutative geometry to Lorentzian spaces, but to work on the conceptual elements that are introduced by Lorentzian signatures and do not currently exist in the Euclidean model. The main element is the establishment of a Lorentzian distance function. The notion of causality is also important, as well as the notion of {\it time}, still unknown in noncommutative geometry. Of course causality and Lorentzian distance are two strongly related concepts.\\

The first lines in this direction is written by G.N. Partfionov and R. R. Zapatrin in \cite{PZ} where they try to obtain a first conceptual formulation of what could be a Lorentzian distance function, with an example in Minkowski spacetime.\\

The first technical formulation of a Lorentzian distance for globally hyperbolic spacetimes is given by V. Moretti in \cite{Mor}, which can be considered as the starting point of the introduction of causality in noncommutative geometry. A path independent Lorentzian distance is given using local  conditions and an operatorial formulation is proposed by use of the Laplace--Beltrami--d'Alembert operator. Then a possibility of noncommutative generalization of the causal local elements is sketched. This approach is extremely complicated and it has never given rise to further developments, but many elements could be considered as of great interest.\\

It is at this point that our research in the subject takes place. In \cite{F2} we present a more developed conceptual formulation of a global Lorentzian distance function for globally hyperbolic spacetimes. Then in \cite{F3} we present a technical realization of this global Lorentzian distance function with the presentation of a path independent formulation. Despite a different approach, the function obtained possesses some similarities with the function proposed by V. Moretti, except the important fact that the local conditions are replaced by global ones, so this formulation could more easily give rise to a noncommutative generalization. The establishment of this function will be the entire topic of the Section \ref{secdist}.\\

As we have said, causality is a concept directly related to the Lorentzian distance, but since there is no counterpart in noncommutative geometry it deserves a research on its own. A work by F. Besnard \cite{Bes} presents a noncommutative generalization of the concept of causal order. Actually, a noncommutative generalization of the notion of completely separated ordered spaces is given on the compact case, but with no guarantee that the order corresponds to one induced by a Lorentzian structure. We will give a quick look to this theory in the Section \ref{seccaus}. In this same section, we will present some unpublished research about a key element in causal noncommutative geometry: the set of causal functions. We will be mainly interested in the establishment of a normed algebraic structure which can accept those functions, and which will lead to an extension of the notion of Lorentzian spectral triple including a temporal element.\\

\subsection{Pseudo-Riemannian Spectral Triples}\label{Krein}

We give here a review of the new tools introduced by A. Strohmaier in order to adapt the construction of spectral triples to pseudo-Riemannian manifolds. We will just be interested here in the results and we refer to  \cite{Stro} for the different proofs. The elements about Dirac operators in pseudo-Riemannian geometry can also be found in \cite{BaumG,BaumE}.\\

A pseudo-Riemannian spectral triple is, in the same way as a Riemannian spectral triple, a triple \mbox{$\prt{\A,\H,D}$} which corresponds in the commutative case to the algebra \mbox{$\A = C^\infty_0(\M)$} over a pseudo-Riemannian spin manifold $\M$ of signature \mbox{$(p,q)$} (with $q\geq 1$), to the Hilbert space $\H$ consisting of square integrable sections of the spinor bundle over $\M$ and on which there exists a representation of $\A$ as multiplicative bounded operators, and to the Dirac operator \mbox{$D = -i(\hat c \circ \nabla^S)$} acting on the space $\H$.\\

The Hilbert space $\H$ is endowed with the positive definite Hermitian structure
$$(\psi,\phi) = \int_\M \psi^*\phi\, d\mu_g$$
where $d\mu_g = \sqrt{\abs{\det g}} \;d^nx$ is the pseudo-Riemannian density on $\M$.\\

However, this structure does not admit any Dirac self-adjoint operator. Instead, a Dirac operator is an essentially Krein-self-adjoint operator if we transform $\H$ into a Krein space. We will give here the basic notions about Krein spaces. For further informations on Krein spaces, we refer the reader to \cite{Bog}.

\begin{defn}
An {\bf\index{inner product!indefinite} indefinite inner product} on a vector space $V$ is a map $V \times V \rightarrow \mathbb C$ which satisfies 
$$(v,\lambda w_1 + \mu w_2) = \lambda (v,w_1) + \mu(v, w_2), \qquad \overline{(v,w)} = (w,v).$$
An indefinite inner product is non-degenerate if
$$(v,w)=0 \quad\forall v\in V \ \ \Rightarrow\ \  w = 0.$$
\end{defn}

Let us suppose that $V$ can be written as the direct sum of two orthogonal spaces \mbox{$V = V^+ \oplus V^-$} such that the inner product is positive definite on $V^+$ and negative definite on $V^-$. Then the two spaces $V^+$ and $V^-$ are two pre-Hilbert spaces by the induced inner product (with a multiplication by $-1$ on the inner product for the second one).

\begin{defn}
If the two subspaces $V^+$ and $V^-$ are complete in the norm induced on them and if the indefinite inner product on $V$ is non-degenerate, then the space $V = V^+ \oplus V^-$ is called a \mbox{\bf\index{Krein space} Krein space}. The  indefinite inner product is called a \mbox{\bf\index{inner product!Krein} Krein inner product}.
\end{defn}

\begin{defn}
For every decomposition \mbox{$V = V^+ \oplus V^-$} the operator \mbox{$\J = \text{id}_{V^+} \oplus -\text{id}_{V^-}$} respecting the property $\J^2 = 1$ is called a {\bf\index{fundamental symmetry} fundamental symmetry}. Such operator defines a positive definite inner product (called the $\J$-product) on $V$ by \mbox{$\scal{\,\cdot\,,\,\cdot\,}_\J = \prt{\,\cdot\,,\J\,\cdot\,}$}.
\end{defn}

Each fundamental symmetry of a Krein space $V$ defines a Hilbert space structure, and two norms associated with two different fundamental symmetries are equivalent. So it is natural to defined the space of bounded operators $\B(V)$ as the space of bounded operators on the Hilbert space defined for any fundamental symmetry.

\begin{defn}
If $A$ is a densely defined linear operator on $V$, the {\bf\index{operator!Krein-adjoint} Krein-adjoint $A^+$} of $A$ is the adjoint operator defined for the Krein inner product \mbox{$(\,\cdot\,,\,\cdot\,)$}. An operator $A$ is called {\bf\index{operator!Krein-self-adjoint} Krein-self-adjoint} if \mbox{$A=A^+$}.
\end{defn}

Of course, for any fundamental symmetry $\J$ we can define an adjoint $A^*$ for the $\J$-product $\scal{\,\cdot\,,\,\cdot\,}_\J$. In this case, the Krein-adjoint is related to it by \mbox{$A^+ = \J A^* \J$}, and an operator $A$ is Krein-self-adjoint if and only if $\J\!A$ or $A\,\J$ are self-adjoint for the $\J$-product.\\

Now the question is how we could define a Krein space structure from a spin manifold with pseudo-Riemannian metric. This is done by using spacelike reflections.

\begin{defn}
A {\bf\index{spacelike reflection} spacelike reflection $r$} is an automorphism of the vector bundle $T\M$ such that:
\begin{itemize}
\item $g(r\,\cdot\,,r\,\cdot\,) = g(\,\cdot\,,\,\cdot\,)$
\item $r^2=\text{id}$
\item $g^r(\,\cdot\,,\,\cdot\,) = (\,\cdot\,,r\,\cdot\,)$ is a positive definite metric on $T\M$
\end{itemize}
\end{defn}

It is clear that, for every pseudo-Riemannian metric of signature $(p,q)$, the tangent bundle can be split into an orthogonal direct sum \mbox{$T\M = T\M_+^p \oplus T\M_-^q$} where the metric is positive definite on the $p$-dimensional bundle $T\M_+^p$ and negative on the $q$-dimensional bundle $T\M_-^q$, and so a spacelike reflection is automatically associated by defining \mbox{$r(v_{+|x} \oplus v_{-|x}) = v_{+|x} \oplus -v_{-|x}$}. This splitting is  transposed to the cotangent bundle \mbox{$T\M^* = T\M_+^{*p} \oplus T\M_-^{*q}$} by isomorphism.

\begin{prop}\label{propslr}
For each spacelike reflection $r$, there is an associated fundamental symmetry $\J_r$ defined from the Clifford action $c$ on a local oriented orthonormal basis \mbox{$\set{e_1,e_2,\dots,e_q}$} of $T\M_-^{*q}$ by
$$\J_r = i^{\frac{q(q+1)}{2}} c(e_1)c(e_2)\dots c(e_q) =  i^{\frac{q(q+1)}{2}}\gamma^1\gamma^2\dots\gamma^q.$$
\end{prop}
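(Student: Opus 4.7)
The plan is to establish that $\J_r$ satisfies the two defining properties extracted from the preceding definition: the involution $\J_r^2 = 1$, and the existence of an orthogonal splitting of the spinor fibers into positive and negative subspaces with respect to the Krein inner product, equivalently, the positive-definiteness of $(\,\cdot\,,\J_r\,\cdot\,)$. The specific normalization $i^{q(q+1)/2}$ is dictated precisely by these two constraints.

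First, I would verify $\J_r^2 = 1$ by a direct computation with the Clifford relations. Since $\set{e_1,\dots,e_q\}$ is orthonormal in the negative-definite part, $g(e_i,e_j) = -\delta_{ij}$, so $(\gamma^i)^2 = -1$ and $\gamma^i\gamma^j = -\gamma^j\gamma^i$ for $i\neq j$. Reordering the product $(\gamma^1\cdots\gamma^q)(\gamma^1\cdots\gamma^q)$ into $(\gamma^1)^2(\gamma^2)^2\cdots(\gamma^q)^2$ requires $q(q-1)/2$ transpositions, giving
$$(\gamma^1\cdots\gamma^q)^2 = (-1)^{q(q-1)/2}\,(-1)^q = (-1)^{q(q+1)/2}.$$
Since $(i^{q(q+1)/2})^2 = (-1)^{q(q+1)/2}$, this yields $\J_r^2 = (-1)^{q(q+1)} = 1$, as $q(q+1)$ is always even. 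Next I would confirm that $\J_r$ depends only on $r$: any two oriented orthonormal bases of $T\M_-^{*q}$ are related by an element $A\in SO(q)$, and the totally antisymmetric product $\gamma^1\cdots\gamma^q$ transforms by $\det(A) = +1$, so the local expression is in fact globally defined once $r$ fixes the splitting of $T^*\M$ together with its induced orientation on the negative part.

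The main content, and the principal obstacle, is the positive-definiteness of $(\,\cdot\,,\J_r\,\cdot\,)$. The strategy is to invoke the natural indefinite Hermitian form on the spinor bundle of a pseudo-Riemannian spin manifold, for which every Clifford action $c(v)$ is Krein-self-adjoint, $c(v)^+ = c(v)$. The spacelike reflection $r$ selects a positive-definite metric $g^r$ on $T\M$, which in turn induces a positive-definite Hermitian structure on spinors; the content of the proposition is that the operator bridging the two structures is exactly $\J_r$. Pointwise, one diagonalizes $\J_r$ in a spinor frame adapted to $r$ and checks that its eigenspaces split the fiber into the positive and negative parts of the Krein form, with the phase $i^{q(q+1)/2}$ ensuring both Krein-self-adjointness $\J_r^+ = \J_r$ and strict positivity on the $+1$-eigenspace. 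Integrating over $\M$ then extends positivity to the whole Hilbert space, and closedness of the two eigenspaces in the induced norms produces the Krein decomposition. The delicate part is the bookkeeping of signs and powers of $i$ required to show $\J_r^+ = \J_r$ and $(\psi,\J_r\psi)\geq 0$ fiberwise; this reduces to a well-known computation on the spinor representation $\Cl(\setR^{p,q})$, but depends on conventions for the indefinite spinor form that must be fixed consistently with those used to define the Krein structure on $\H$.
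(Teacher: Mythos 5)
You should first be aware that the paper contains no proof of this proposition: Section~\ref{Krein} is an explicit review of Strohmaier's pseudo-Riemannian spectral triples, and the proofs are deferred to \cite{Stro} (and \cite{BaumG,BaumE}), the paper only adding afterwards that the definition is basis-independent. So the comparison is with the standard literature argument, which your outline essentially reproduces. The two computations you actually carry out are correct in the paper's conventions: on $T\M_-^{*q}$ one has $\gamma^i = c(e_i)$ with $(\gamma^i)^2 = -1$ and mutual anticommutation, so $(\gamma^1\cdots\gamma^q)^2 = (-1)^{q(q-1)/2}(-1)^q = (-1)^{q(q+1)/2}$, which the prefactor $i^{q(q+1)/2}$ exactly compensates, giving $\J_r^2=1$; and $SO(q)$-equivariance of the top product gives the independence of the oriented orthonormal basis. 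The Lorentzian case $q=1$, $\J = i\gamma^0$, matches the paper's footnote.

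The gap is in your third step, which is where all the content of the phase $i^{q(q+1)/2}$ sits: the positive definiteness of $\scal{\,\cdot\,,\,\cdot\,}_{\J_r} = (\,\cdot\,,\J_r\,\cdot\,)$, equivalently the fact that the $\pm1$-eigenspaces of $\J_r$ realize the Krein decomposition, is not proved but delegated to ``a well-known computation on the spinor representation.'' Moreover your normalization ``$c(v)^+ = c(v)$ for all $v$'' does not match this paper's conventions: in the Lorentzian example the paper takes $(\gamma^0)^* = -\gamma^0$ and $(\gamma^j)^* = \gamma^j$ for the positive-definite $L^2$ structure, and with the Hermitian indefinite form $\int_\M \psi^*\J\phi\,d\mu_g$ Clifford multiplication then comes out Krein-\emph{skew}-adjoint, so the sign bookkeeping you defer would change with your convention. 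The clean way to close the gap in the paper's own setting is to use the $r$-adapted positive-definite Hermitian structure on spinors (gammas along $T\M_+^{*p}$ Hermitian, those along $T\M_-^{*q}$ anti-Hermitian) and check directly that $\J_r^* = (-i)^{q(q+1)/2}(\gamma^q)^*\cdots(\gamma^1)^* = (-1)^{q(q+1)/2}(-1)^{q}(-1)^{q(q-1)/2}\,\J_r = \J_r$, since the total exponent is $q(q+1)$, which is even. Together with $\J_r^2 = 1$ this makes the $\pm1$-eigenspaces orthogonal, closed and complementary, and on them the form $(\psi,\phi) = \int_\M \psi^*\J_r\phi\,d\mu_g$ is positive, respectively negative, definite; hence $\J_r$ is a fundamental symmetry of the resulting Krein space, which is the statement and is the route taken in \cite{BaumG,Stro}.
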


This definition is independent of the choice of the local basis. With such fundamental symmetry, the space  $\H$  of square integrable sections of the spinor bundle becomes a Krein space endowed with the indefinite inner product:
$$(\psi,\phi) = \int_\M \psi^* \J \phi\, d\mu_g.$$
 Actually, this operation is similar to a Wick rotation, but performed at an algebraic level.\\
 
In the special case of a $4$-dimensional Lorentzian manifold, with signature \mbox{$(-,+,+,+)$} and with local coordinates \mbox{$(x_0,x_1,x_2,x_3)$}, a fundamental symmetry is just given by \mbox{$J = i\gamma^0 = ic(dx^0)$}.\footnote{With this choice of signature, the Dirac matrix $\gamma^0$ is such that $\prt{\gamma^0}^2=-1$ and $\prt{\gamma^0}^*=-\gamma^0$, so $J = i\gamma^0$ respects the conditions of a fundamental symmetry. The other Dirac matrix respect $\prt{\gamma^i}^2=1$ and $\prt{\gamma^i}^*=\gamma^i$ for $i=1,2,3$.}\\

We have then the following result concerning the Dirac operator:
\begin{prop}\label{cpt1}
If there exists a spacelike reflection such that the Riemannian metric $g^r$ associated is complete, then the Dirac operator $D$ is essentially Krein-self-adjoint. In particular, if $\M$ is compact, then $D$ is always essentially Krein-self-adjoint.
\end{prop}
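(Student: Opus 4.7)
The plan is to reduce essential Krein-self-adjointness of $D$ to ordinary essential self-adjointness of a Dirac-type operator on a complete Riemannian spin manifold, for which a classical result of Wolf (extended by Chernoff) applies. The bridge is exactly the fundamental symmetry $\J_r$ provided by Proposition \ref{propslr}.

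First I would fix a spacelike reflection $r$ satisfying the hypotheses, form the associated fundamental symmetry $\J_r$, and endow $\H$ with the $\J_r$-product $\langle \psi,\phi\rangle_{\J_r} = (\psi,\J_r\phi)$. With respect to this positive definite inner product $\H$ becomes a genuine Hilbert space, and by the general identity for Krein-adjoints, $D$ is essentially Krein-self-adjoint if and only if the operator $\tilde D := \J_r D$ (equivalently $D\J_r$) is essentially self-adjoint on this Hilbert space. So the whole question is converted into a Riemannian essential self-adjointness problem.

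Next I would identify $\tilde D$ with the Atiyah--Singer operator of the Riemannian spin structure associated to $g^r$, up to a bounded symmetric perturbation. Concretely, since $\J_r$ is built out of Clifford multiplication by an orthonormal frame of $T\M^{*q}_-$, conjugating the Clifford action by $\J_r$ realises the change of bilinear form $g \leadsto g^r$ at the level of the symbol; thus $\tilde D$ has the same principal symbol as the Dirac operator $D^r$ associated to $(\M,g^r)$ and its own spin connection. The difference $\tilde D - D^r$ is therefore of order zero, i.e.\ a bundle endomorphism, and one checks it is symmetric and bounded with respect to $\langle\cdot,\cdot\rangle_{\J_r}$. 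By the Kato--Rellich theorem, essential self-adjointness of $D^r$ implies essential self-adjointness of $\tilde D$ on the same core $C_c^\infty(\M,S)$.

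Now I would invoke the classical theorem (Wolf, Chernoff) that the Dirac operator of a complete Riemannian spin manifold is essentially self-adjoint on $C_c^\infty(\M,S)$. This is exactly where the completeness hypothesis on $g^r$ is used: it guarantees that the first-order symmetric hyperbolic-type flow generated by $D^r$ does not run off the manifold in finite time, so one can apply a cutoff/Friedrichs-mollifier argument to show that $C_c^\infty$ is dense in the graph norm of the closure. Combining this with the previous two steps yields that $D$ is essentially Krein-self-adjoint. The compact case follows immediately, since on any compact manifold a spacelike reflection can be constructed by globalising the local splitting $T\M = T\M^p_+\oplus T\M^q_-$ via a partition of unity and orthonormalising, and every Riemannian metric on a compact manifold is automatically complete by Hopf--Rinow.

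The main obstacle I expect is step two: carefully verifying that the conjugation by $\J_r$ really does produce an operator whose principal symbol matches that of the $g^r$-Dirac operator, and that the order-zero discrepancy is symmetric and bounded in the $\J_r$-norm rather than merely formally self-adjoint. This requires a careful comparison of the two spin connections (the one compatible with $g$ lifted through the pseudo-Riemannian spin structure, and the one compatible with $g^r$), since the two Levi--Civita connections differ by a tensor built from $r$ and $\nabla r$, and one must check that the induced discrepancy on spinors does not introduce unbounded terms.
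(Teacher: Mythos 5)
Your first step is exactly right, and it is also how the argument in the reference the thesis defers to for this statement (\cite{Stro}; the thesis itself reproduces no proof) begins: since $\J_r$ is a bounded involution, $D$ is essentially Krein-self-adjoint if and only if $\J_r D$ is essentially self-adjoint for the positive definite $\J_r$-product, so the problem becomes a Riemannian-type essential self-adjointness statement. The genuine gap is in your second step. The difference $\J_r D - D^r$ is indeed of order zero, hence a smooth bundle endomorphism, but on a non-compact manifold a smooth endomorphism need not be uniformly bounded, and nothing in the hypothesis controls it: completeness of $g^r$ gives no bound on $\nabla r$, and the tensor by which the Levi--Civita connections of $g$ and $g^r$ (hence the two lifted spin connections) differ can blow up at infinity. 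So the "bounded symmetric perturbation" needed for Kato--Rellich is precisely what you cannot guarantee, and the obstacle you flag at the end is not a technical verification but a genuine failure of the method in general. There is also the prior issue that $D^r$ lives on the spinor bundle of a $\mathrm{Spin}(n)$-structure for $g^r$, which is not canonically the bundle $S$ you started from, so even writing the difference requires a non-trivial identification of spinor bundles.

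The standard way to close the argument avoids the comparison with $D^r$ altogether: apply Chernoff's theorem directly to $\J_r D$. It is a first-order differential operator, symmetric on $C_c^\infty(\M,S)$ for the $\J_r$-product (because $D$ is formally Krein-symmetric and $\J_r^2=1$), and its principal symbol $\J_r\, c(\xi)$ is pointwise bounded in $\J_r$-norm by $\prt{g^r(\xi,\xi)}^{1/2}$ --- this is the same computation that makes $\Delta_\J$ elliptic with $\sigma^{\Delta_\J}(\xi)^2=g^r(\xi,\xi)$ --- so $\J_r D$ has unit propagation speed with respect to the complete metric $g^r$. Chernoff's theorem applies to any symmetric first-order differential operator with finite propagation speed on a complete manifold, not only to the spin Dirac operator of $g^r$, and yields essential self-adjointness on $C_c^\infty(\M,S)$, hence essential Krein-self-adjointness of $D$; this is where completeness enters, exactly as you describe in your third step. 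Your treatment of the compact case is essentially fine: a spacelike reflection always exists (split $T\M$ by diagonalizing $g$ against an auxiliary Riemannian metric), and compactness gives completeness of $g^r$ by Hopf--Rinow.
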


From all these properties, we can introduce the definition of a pseudo-Riemannian spectral triple:\\

\begin{defn}
A {\bf\index{spectral triple!pseudo-Riemannian} pseudo-Riemannian Spectral Triple} \mbox{$(\A,\H,D)$} is the data of:
\begin{itemize}
\item A Krein space $\mathcal{H}$
\item A pre-$C^*$-algebra $\mathcal{A}$ with a representation as bounded multiplicative operators on $\mathcal{H}$ and such that \mbox{$a^*=a^+$}
\item A Krein-self-adjoint operator $D$ densely defined on $\mathcal{H}$ such that all commutators $[D,a]$ is bounded for every $a\in\A$
\end{itemize}
\end{defn}

\vspace{1em}                                                      

In addition, it is natural to assume the existence of a fundamental symmetry $\J$ which commutes with all elements in $\A$. In this case, $\mathcal{A}$ becomes a subalgebra of $\B(\H)$ and the involution $a^*$ corresponds to the adjoint for the Hilbert space defined by $\J$.\\

Similarly to the Section \ref{ST} we can define an even condition and a real condition for a pseudo-Riemannian spectral triple. The condition on the boundedness of every commutator $[D,a]$ allows us to construct a similar noncommutative differential algebra generated by elements in the form \mbox{$a_0 [D,a_1] \cdot\cdot\cdot[D,a_p]$} as in the Section \ref{diffcalc}.\\

The $n^+$-summable condition is however a bit different. Indeed, in the commutative Riemannian case, the Dirac operator $D$ was elliptic, and this is not the case any more for Lorentzian geometry since its principal symbol satisfies the relation \mbox{$\sigma^D(\xi)^2 = c^2(\xi) = g(\xi,\xi)$} with $g$ the Lorentzian metric, and so it is no more invertible. To obtain a differential elliptic operator of order $1$, we define:
$$\Delta_\J = \prt{ [D]_\J^2 + 1}^{\frac 12}$$
with $[D]_\J^2 = \frac{DD^*+D^*D}{2}$ being the formally square of $D$ under the fundamental symmetry $\J$. This operator is elliptic of order $1$ since \mbox{$\sigma^{\Delta_\J}(\xi)^2  = g^r(\xi,\xi)$}, and is self-adjoint for the $\J$-product, so we can consider the rate of decay of its inverse modulo smoothing operation.

\begin{defn}\label{PRST}
A pseudo-Riemannian spectral triple is {\bf\index{spectral triple!pseudo-Riemannian!finitely summable} finitely summable} (or \mbox{$n^+$-summable}) if there exists a positive integer $n$ such that $a\,{\Delta_\J}\!\!\!\!^{-n} \in \L^{1+}$ for all $a\in\A$.
\end{defn}

One can check that this definition is independent of the choice of the fundamental symmetry $\J$. When the pseudo-Riemannian spectral triple is constructed from a compact  pseudo-Riemannian spin manifold, we have those additional results:

\begin{prop}\label{cpt2}
If $\M$ is a $n$-dimensional compact orientable pseudo-Riemannian spin manifold and if \mbox{$(\A,\H,D)$} is its pseudo-Riemannian spectral triple associated, then  \mbox{$(\A,\H,D)$} is $n^+$-summable, and for each smooth fundamental symmetry $\J$ commuting with al elements in $\A$ we have:
\begin{equation}
\int_\M f(x) \,d\mu_g =  c_n \tr_\omega(f{\Delta_\J}\!\!\!\!^{-n})\qquad \forall f\in\A
\end{equation}
with $c_n = 2^{n-[\frac n2]-1}\pi^{\frac n2}n\,\Gamma(\frac n2)$. Moreover, the signature $(p,q)$ can be recovered from the spectral data since the following formula holds:
\vspace{0.5em}                                                      
$$ \tr_\omega(fD^2{\Delta_\J}\!\!\!\!^{-n-1}) = {(-1)}^{q} \frac{n-2q}{n}\tr_\omega(f{\Delta_\J}\!\!\!\!^{-n})\qquad \forall f\in\A.$$
\end{prop}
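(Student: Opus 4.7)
The plan is to reduce all three assertions to applications of the Connes trace theorem, combined with the symbol computations already set up in the preceding discussion of Section~\ref{spinsec} and in Definition~\ref{PRST}.

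For the $n^+$-summability, I would argue as follows. By Proposition~\ref{cpt1}, compactness of $\M$ guarantees that $D$ is essentially Krein-self-adjoint, and the operator $\Delta_\J = \prt{[D]_\J^2 + 1}^{1/2}$ is, by construction, a positive self-adjoint elliptic pseudodifferential operator of order~$1$ on the compact manifold $\M$, since its squared principal symbol equals the Riemannian quadratic form $g^{r,*}(\xi,\xi)$. Standard elliptic theory then yields a parametrix of order $-1$, so that $\Delta_\J^{-n}$ is (modulo smoothing operators) a pseudodifferential operator of order $-n$ acting on sections of the spinor bundle. Weyl's law for elliptic self-adjoint operators on a compact $n$-manifold gives $\mu_k(\Delta_\J^{-n}) = O(k^{-1})$, hence $a\,\Delta_\J^{-n}\in\L^{1+}$ for every $a\in\A$, and the metric dimension is $n$.

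For the integral formula, since the operator $f\Delta_\J^{-n}$ is a classical pseudodifferential operator of order~$-n$ on the rank-$2^{[n/2]}$ spinor bundle, I would invoke the Connes trace theorem to rewrite $\tr_\omega(f\Delta_\J^{-n}) = \mathrm{Res}_W(f\Delta_\J^{-n})$. The principal symbol of this operator is $f(x)\,g^{r,*}(\xi,\xi)^{-n/2}\,\mathbb I_{2^{[n/2]}}$, and the pointwise spinor trace is $2^{[n/2]}f(x)\,g^{r,*}(\xi,\xi)^{-n/2}$. Working in a local coframe that diagonalises $g^r$ into the identity, the standard unit sphere appearing in~(\ref{Wodzicki}) coincides with the $g^r$-cosphere, so $g^{r,*}(\xi,\xi)^{-n/2}\equiv 1$ on $S^{n-1}$. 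The remaining angular integral is $\int_{S^{n-1}} d\xi = 2\pi^{n/2}/\Gamma(n/2)$, and since $|\det g|=\det g^r$ one has $d\mu_g=d\mu_{g^r}$, which reproduces the base integral $\int_\M f\,d\mu_g$ together with the prefactor $c_n = 2^{n-[\frac n2]-1}\pi^{\frac n2}n\,\Gamma(\frac n2)$.

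For the signature formula, the same strategy applies to $fD^2\Delta_\J^{-n-1}$, whose order is again $-n$. Using $\sigma^{D^2}(\xi) = c(\xi)^2 = g^*(\xi,\xi)\,\mathbb I$, the principal symbol is $f(x)\,g^*(\xi,\xi)\,g^{r,*}(\xi,\xi)^{-(n+1)/2}\,\mathbb I_{2^{[n/2]}}$. Restricting to the $g^r$-cosphere and diagonalising $g$ as $\mathrm{diag}(+1,\dots,+1,-1,\dots,-1)$ with $p$ pluses and $q$ minuses, the angular average factorises as
\begin{equation*}
\frac{1}{\mathrm{vol}(S^{n-1})}\int_{S^{n-1}} g^*(\xi,\xi)\,d\xi \; =\; \frac{1}{n}\sum_{i=1}^{n} g^{ii} \; =\; \frac{p-q}{n}\; =\; \frac{n-2q}{n},
\end{equation*}
using the standard identity $\int_{S^{n-1}}\xi_i^2\,d\sigma = \mathrm{vol}(S^{n-1})/n$. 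Dividing by the analogous computation for $f\Delta_\J^{-n}$ (whose angular average is $1$) yields the ratio $\frac{n-2q}{n}$, and the sign $(-1)^q$ is picked up from the interaction between the Clifford symbol and the fundamental symmetry $\J$ inside the spinor trace, via $\J^2=1$ together with the anticommutation rule $\J c(\xi_t)\J = (-1)^{q-1} c(\xi_t)$ for timelike $\xi_t$, which is what forces the symbol of $D^2$ to be evaluated against $\J$ rather than against the naive spinor identity.

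The main obstacle in executing this plan is the last step: establishing the $(-1)^q$ factor cleanly. While the ratio $\frac{n-2q}{n}$ falls out immediately from the Wodzicki integral, verifying the sign requires a careful bookkeeping of the Clifford-algebraic conventions and of how the trace on $\H$ decomposes once the Krein structure is replaced by the auxiliary Hilbert structure defined by $\J$. One must also check that the final equality is independent of the choice of spacelike reflection $r$ (and of the local coframe diagonalising $g^r$), which follows from the naturality of the principal symbol and the fact that $d\mu_g$ is a genuine density on $\M$, independent of $r$.
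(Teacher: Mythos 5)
The paper itself contains no proof of this proposition: it is quoted from Strohmaier's work, with all proofs explicitly deferred to \cite{Stro}, so there is no internal argument to compare against and your proposal has to stand on its own. On its own terms, the first two assertions are handled correctly and by the natural route, which is exactly the template the paper uses in the Riemannian case: $\Delta_\J$ is an elliptic, positive, self-adjoint first-order operator for the $\J$-product, so $n^+$-summability follows from Weyl asymptotics (or a parametrix argument), and the constant $c_n$ drops out of the Connes trace theorem applied to $f\Delta_\J^{-n}$ together with the observation $\sqrt{\abs{\det g}}=\sqrt{\det g^r}$, so that $d\mu_g=d\mu_{g^r}$. I see no gap there.

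The signature formula is where the proposal genuinely fails. First, the order bookkeeping: with the normalisation of Definition \ref{PRST} the operator $\Delta_\J$ has order one, so $fD^2\Delta_\J^{-n-1}$ has order $2-(n+1)=1-n$, not $-n$ as you assert; for $n\geq 2$ such an operator is not even in $\L^{1+}$ and the Connes trace theorem does not apply. To land on order $-n$ the exponent must be $-(n+2)$ (equivalently Strohmaier's original normalisation, where the Laplace-type operator has order two and is raised to the power $-\frac n2-1$); a proof of the statement as written has to notice and resolve this mismatch rather than silently assign the wrong order. Second, and more seriously, your own principal-symbol computation gives $\frac{p-q}{n}=\frac{n-2q}{n}$ with no sign, and the mechanism you invoke for the factor $(-1)^q$ is not valid: the Dixmier trace of $fD^2\Delta_\J^{-(n+2)}$ on the Hilbert space defined by $\scal{\,\cdot\,,\,\cdot\,}_\J$ equals the Wodzicki residue of that operator, and the fundamental symmetry never enters the fibrewise trace of its principal symbol, so nothing ``forces the symbol of $D^2$ to be evaluated against $\J$''. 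The commutation rule you quote, $\J c(\xi_t)\J=(-1)^{q-1}c(\xi_t)$ for timelike $\xi_t$, is correct, but where it (together with the Hermiticity conventions) actually matters is upstream of the residue: since the timelike Dirac matrices must be anti-Hermitian for the $\J$-product, one finds $c(\xi)^+=(-1)^{q}\,c(\xi)$, so the relation between $D$, its adjoint $D^*=\J D\J$ and the Clifford action is signature-sensitive, and it is from this bookkeeping (i.e.\ from which phase convention makes the Dirac operator Krein-self-adjoint, and hence whether the relevant square has symbol $+g(\xi,\xi)$ or $-g(\xi,\xi)$) that the factor $(-1)^q$ has to be extracted. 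That factor is precisely the content of the formula beyond $\abs{p-q}$, and your sketch, as you concede, does not derive it; so the third assertion remains unproved in the proposal.
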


\vspace{1em}                                                      

So we can see that Strohmaier's theory of pseudo-Riemannian noncommutative geometry allows us to technically translate the spectral characterization of the Dirac operator to pseudo-Riemannian manifolds. Some examples of construction of particular pseudo-Riemannian spectral triples can be found in \cite{Pas,Stro,Suij}.\\

\vspace{1em}                                                      

However, these adaptations are not sufficient to build a complete generalization of the theory. Among the remaining problems, we can notice the following ones:
\begin{itemize}
\vspace{0.5em}                                                      
\item Some elements (as the Propositions \ref{cpt1} and \ref{cpt2}) refer to compact pseudo-Riemannian manifolds. Once more this compactness condition is probably  present in order to avoid additional technical difficulties, but as we have already said compact manifolds must be discarded  in the Lorentzian case. So a complete study of pseudo-Riemannian spectral triples in the explicit case of non-compact manifolds would be of a great help.
\vspace{0.5em}                                                      
\item The general definition of pseudo-Riemannian spectral triples (Definition \ref{PRST}) does not include any explicit characterization on the signature of the space. In particular it is difficult to distinguish those referring to Lorentzian spectral triples to the other possibilities of signature, at least when the spectral triple is not directly constructed from a compact Lorentzian manifold (i.e.~when the Proposition \ref{cpt2} cannot be applied). Maybe a characterization of the dimension of the Riemannian subspaces could be set from some spectral informations extracted from a modification of $\Delta_\J$. In the Section \ref{secTLST}, we will suggest an extended definition of Lorentzian spectral triples with a new way to fix the signature.
\vspace{0.5em}                                                      
\item  Except under the conditions of the Proposition \ref{cpt2}, there is no guarantee that the noncommutative integral \mbox{$\int\!\!\!\!\!\!-\ a\, {\Delta_\J}\!\!\!\!^{-n} = \tr_\omega(a\,{\Delta_\J}\!\!\!\!^{-n})$} for $a\in\A$ is independent of the choice of the fundamental symmetry $\J$. Further conditions on the acceptable fundamental symmetries should probably be imposed in order to guarantee the existence of such integral (and maybe by this way guaranteeing a characterization of the signature for every pseudo-Riemannian spectral triple).
\vspace{0.5em}                                                      
\item Causal elements, and in particular the definition of a Lorentzian distance, are completely missing from the definition of pseudo-Riemannian spectral triples. Actually, a pseudo-Riemannian spectral triple is endowed with a Riemannian distance (using the Riemannian formula (\ref{rdistnc}) with the norm given by the $\J$-product) which just corresponds in the commutative case to the distance associated with the Riemannian metric $g^r$ coming from the Wick rotation. So in some way the causal informations are lost while working on the Hilbert space defined by the $\J$-product.
\end{itemize}

Nevertheless, the introduction of Krein spaces is a good way to guarantee the existence of Dirac operators for Lorentzian noncommutative geometries. New elements must be introduced in order to recovered the information on causality, but Krein spaces (or more general similar structures as we will present in the Section \ref{seccaus}) should represent an important piece in the generalization problem of noncommutative geometry.


\newpage

\section[Lorentzian distance function]{Establishing a global formulation for the Lorentzian distance function}\label{secdist}

In this section, we  develop some results we have presented in \cite{F2,F3} about the construction of a Lorentzian distance function in noncommutative geometry. Since the establishment of some of them requires a good technical background in Lorentzian geometry, we begin by a review of some known and less known definitions and properties on the subject. A complete introduction to Lorentzian geometry can be found in the general books \cite{Beem,ONeill} or in \cite{Hawk}.\\

\subsection{Lorentzian geometry}\label{Lorgeo}

From here, we will work with a  $n$-dimensional Lorentzian $C^\infty$-manifold $\M$ with metric $g$ and with signature \mbox{$(-,+,+,+,\dots)$}. The vectors of the tangent space $T_p\M$ at one point $p\in\M$ can be classified in $3$ groups following the sign of the metric.

\begin{defn}
A tangent vector $x\in\T_p\M$ is said to be {\bf\index{tangent vector!spacelike} spacelike} if \mbox{$g_p(x,x)>0$}, {\bf\index{tangent vector!timelike} timelike} if \mbox{$g_p(x,x)<0$} and {\bf\index{tangent vector!null} null} if \mbox{$g_p(x,x)=0$}. Vectors fields are spacelike, timelike or null if their vectors at each point are respectively spacelike, timelike or null.  A vector is {\bf\index{tangent vector!causal} causal} if timelike or null.\footnote{Timelike vectors are sometimes called chronological vectors, null vectors are sometimes called lightlike vectors and causal vectors are sometimes called nonspacelike vectors.}
\end{defn}

\begin{defn}
A Lorentzian manifold is {\bf\index{manifold!Lorentzian!time-orientable} time-orientable} if it admits a smooth timelike vector field $T$. Timelike vectors $x\in T_p\M$ are said to be {\bf\index{tangent vector!timelike!future directed} future directed} if \mbox{$g_p(x,T_p)<0$}  and {\bf\index{tangent vector!timelike!past directed} past directed} if \mbox{$g_p(x,T_p)>0$}. Future and past directed vectors are  two equivalent classes which can be switched, the choice of one of them as future directed is a choice of {\bf\index{time orientation} time orientation}.
\end{defn}

We will consider all Lorentzian manifolds to be time-orientable (this is actually the usual definition of spacetime).\\

We can notice that two timelike vectors $x,y\in T_p\M$ have the same orientation (i.e.~are both future directed or past directed) if and only if \mbox{$g_p(x,y)<0$}, and that the notion of orientation can be extended to non-zero null vectors. As a consequence, the set of timelike vectors with the same orientation forms a convex cone, called {\bf\index{time cone} time cone}.\\

Since the inner product $g(\,\cdot\,,\,\cdot\,)$ is not positive definite in Lorentzian spaces, there is no positive norm \mbox{$\norm{x} = \sqrt{g_p(x,x)}$} available. As a consequence, Cauchy--Schwarz and triangle inequalities are not valid any more, since their proofs rely on the positivity of the norm. Thankfully, some counterpart exists but only when both vectors are timelike.

\begin{prop}\label{cs}
Let $v$ and $w$ be two timelike vectors at some point~$p$, and let us denote by \mbox{$\scal{\,\cdot\,,\,\cdot\,} = g_p(\,\cdot\,,\,\cdot\,)$} the indefinite inner product at~$p$. Then we have the following properties:
\begin{itemize}
\item $\abs{\scal{v,w}} \geq \abs{\scal{v,v}}^{\frac 12} \,\abs{\scal{w,w}}^{\frac 12}$, with equality if and only if $v$ and $w$ are collinear \vspace{-0.2cm}\begin{flushright} ("wrong way" Cauchy--Schwarz  inequality)\end{flushright} \vspace{-0.2cm}
\item If $u$ and $v$ have the same orientation (in the same time cone), then \mbox{$\abs{\scal{v,v}}^{\frac 12} + \abs{\scal{w,w}}^{\frac 12} \leq \abs{\scal{v+w,v+w}}^{\frac 12}$}, with equality if and only if $v$ and $w$ are collinear \vspace{-0.2cm}\begin{flushright} ("wrong way" triangle inequality)\end{flushright} \vspace{-0.2cm}
\end{itemize}
\end{prop}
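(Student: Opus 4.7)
The plan is to reduce the inequality to a coordinate computation at the single point~$p$ by choosing an orthonormal basis adapted to $w$. First I would invoke the fact that the orthogonal complement $w^\perp \subset T_p\M$ of a timelike vector $w$ is a spacelike subspace: indeed, if some nonzero $u\in w^\perp$ were causal, the $2$-plane spanned by $u$ and $w$ would carry two linearly independent causal directions, which is impossible for a quadratic form of signature $(-,+,\dots,+)$. Hence I can set $e_0 = w/\sqrt{-\langle w,w\rangle}$ and complete it to an orthonormal basis $(e_0,e_1,\dots,e_{n-1})$ of $T_p\M$ with $\langle e_0,e_0\rangle = -1$ and $\langle e_i,e_j\rangle = \delta_{ij}$ for $i,j\geq 1$, where $\langle\,\cdot\,,\,\cdot\,\rangle$ denotes $g_p$.

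In this basis, writing $v = v^0 e_0 + \vec v$ with $\vec v = \sum_{i\geq 1} v^i e_i$, and $w = w^0 e_0$ with $w^0 = \sqrt{-\langle w,w\rangle}$, a direct computation gives $\langle v,v\rangle = -(v^0)^2 + |\vec v|^2$, $\langle w,w\rangle = -(w^0)^2$ and $\langle v,w\rangle = -v^0 w^0$. Consequently
$$\langle v,w\rangle^2 - \langle v,v\rangle\,\langle w,w\rangle \;=\; (v^0)^2(w^0)^2 - \bigl((v^0)^2 - |\vec v|^2\bigr)(w^0)^2 \;=\; |\vec v|^2\,(w^0)^2 \;\geq\; 0,$$
with equality iff $\vec v = 0$, i.e.\ iff $v$ and $w$ are collinear. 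Taking square roots yields the reverse Cauchy--Schwarz inequality together with its equality case.

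For the triangle inequality, I would use that two timelike vectors in the same time cone satisfy $\langle v,w\rangle < 0$, which is immediate from the convexity of the time cone and the orientation convention. Hence $|\langle v,w\rangle| = -\langle v,w\rangle$ and $\langle v+w,v+w\rangle < 0$ (so that $v+w$ is itself timelike and in the same cone), giving
$$|\langle v+w,v+w\rangle| \;=\; |\langle v,v\rangle| + 2|\langle v,w\rangle| + |\langle w,w\rangle|.$$
Substituting the reverse Cauchy--Schwarz bound $|\langle v,w\rangle| \geq |\langle v,v\rangle|^{1/2}|\langle w,w\rangle|^{1/2}$ turns the right-hand side into $\bigl(|\langle v,v\rangle|^{1/2} + |\langle w,w\rangle|^{1/2}\bigr)^2$, and taking square roots yields the claim; the equality case transfers directly from that of Cauchy--Schwarz.

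The only mildly delicate step is the justification that $w^\perp$ is spacelike, which is the single place where the Lorentzian signature is genuinely used; once this is granted, both inequalities become sign-tracking exercises. The main obstacle in the presentation is to avoid conflating the two distinct roles of the inner product: in Cauchy--Schwarz only magnitudes intervene, whereas in the triangle inequality the sign of $\langle v,w\rangle$ is essential and is precisely where the hypothesis "same time cone" enters.
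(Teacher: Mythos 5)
Your argument is correct in outline but takes a different route from the paper's for the Cauchy--Schwarz half. The paper deliberately avoids choosing any basis: it considers the pencil $v+tw$, observes that if $v,w$ are not collinear this family must leave the two time cones and hence be spacelike for some $t$, and reads the inequality off the discriminant of the quadratic $t\mapsto\scal{v+tw,v+tw}$; collinearity is exactly the degenerate-discriminant case. You instead reduce to a coordinate computation in an orthonormal basis adapted to $w$, which is the classical textbook proof (the one the paper attributes to O'Neill). Both are fine; the paper's version buys independence from the auxiliary lemma that $w^\perp$ is spacelike, while yours makes the equality case completely explicit ($\vec v=0$). Your treatment of the triangle inequality coincides with the paper's: expand, use $\scal{v,w}<0$ for same-oriented timelike vectors (which in this paper is essentially the definition of same orientation), and insert reverse Cauchy--Schwarz.

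The one step you flag as delicate is, however, justified by a false statement. You claim that a $2$-plane cannot carry two linearly independent causal directions in signature $(-,+,\dots,+)$; it can. Any timelike plane, e.g.\ the span of $e_0$ and $e_1$ in Minkowski space, contains infinitely many pairwise independent causal vectors ($e_0$ and $e_0+\tfrac12 e_1$, say). What is impossible is a nonzero causal vector \emph{orthogonal} to a timelike one, and the orthogonality is exactly what your argument must use. The clean repair: since $\scal{w,w}\neq 0$, the line $\setR w$ is nondegenerate, so $T_p\M=\setR w\oplus w^\perp$ is an orthogonal splitting with $g$ nondegenerate on $w^\perp$; by Sylvester's law of inertia the signature $(n-1,1)$ of $g$ is the sum of the signatures of the two factors, and since $\setR w$ already carries the single negative direction, $g$ restricted to $w^\perp$ must be positive definite. (Alternatively, a short direct computation shows a nonzero null or timelike vector cannot be $g$-orthogonal to a timelike vector.) With that lemma properly established, the rest of your proof stands.
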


\begin{proof}
A technical proof can be found in \cite{ONeill}. For more diversity, we present here a personal proof based on geometrical considerations.

The usual Cauchy--Schwartz inequality in Riemannian spaces is a consequence of the invariance of the sign of  \mbox{$\norm{v+tw}^2 = \norm{v}^2 + 2t\scal{v,w} + t^2\norm{w}^2$} for $t$ going over $\setR$. In Lorentzian spaces, since timelike vectors are separated in two distinct cones, the combination $v+tw$ must go from one time cone (for $t\rightarrow\infty$ with the orientation of $w$ prevailing) to the other time cone (for $t\rightarrow-\infty$ with the orientation of $-w$ prevailing) without passing through the zero vector if $u$ and $v$ are not collinear, so $v+tw$ must be spacelike for some~$t$. As a consequence \mbox{$\scal{v+tw,v+tw} = \scal{v,v} + 2t\scal{v,w} + t^2 \scal{w,w}$} must change sign twice and its discriminant must be positive:
$$4 \scal{v,w}^2 - 4 \scal{v,v} \scal{w,w} \geq 0 \implies \scal{v,w}^2 \geq \prt{-\scal{v,v}} \,\prt{-\scal{w,w}}.$$
This discriminant is null if and only if $v+tw=0$ for some $t$.

The triangle inequality is a direct consequence, using the additional hypothesis that \mbox{$\scal{v,w} < 0$}:
\begin{eqnarray*}
 \prt{\abs{\scal{v,v}}^{\frac 12} + \abs{\scal{w,w}}^{\frac 12}}^2 &=& -\scal{v,v} + 2\abs{\scal{v,v}}^{\frac 12} \abs{\scal{w,w}}^{\frac 12} -\scal{w,w}\\
&\leq& -\scal{v,v} - 2\scal{v,w} -\scal{w,w}\\
&=& -\scal{v+w,v+w}.
\end{eqnarray*}\vspace{-1.3cm}\[\qedhere\]
\end{proof}

\begin{cor}\label{cscor}
The wrong way Cauchy--Schwarz and triangle inequalities still hold for causal vectors.
\end{cor}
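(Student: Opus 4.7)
The plan is to obtain the causal case as a limit of the timelike case already established in Proposition~\ref{cs}. The key geometric observation is that the open future time cone at $p$ has the set of non-zero future-directed null vectors on its topological boundary: explicitly, if $v$ is a non-zero future-directed causal vector and $T_p$ is the future-directed reference timelike vector at $p$, then $v_n = v + \frac{1}{n} T_p$ is timelike and future-directed for every $n \geq 1$ and $v_n \to v$. Thus every non-zero future-directed causal vector can be written as a limit of future-directed timelike vectors (one takes the constant sequence if the vector is already timelike), and the same holds in the past direction.

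For the Cauchy--Schwarz inequality the cleanest observation is that whenever $v$ or $w$ is null the right-hand side $\abs{\scal{v,v}}^{\frac 12}\,\abs{\scal{w,w}}^{\frac 12}$ vanishes, so the inequality reduces to the trivial $\abs{\scal{v,w}} \geq 0$; combined with Proposition~\ref{cs} for the case when both are timelike, this covers all causal pairs and also handles the zero vector. For the triangle inequality I would instead proceed by approximation: assuming $v$ and $w$ share the same time orientation (say future-directed), I pick sequences $v_n, w_n$ of future-directed timelike vectors with $v_n \to v$ and $w_n \to w$ as above. Because the open future time cone is convex, each sum $v_n + w_n$ is again future-directed timelike, and Proposition~\ref{cs} yields
$$\abs{\scal{v_n,v_n}}^{\frac 12} + \abs{\scal{w_n,w_n}}^{\frac 12} \leq \abs{\scal{v_n+w_n,\,v_n+w_n}}^{\frac 12}.$$
Passing to the limit is justified by the joint continuity of $\scal{\,\cdot\,,\,\cdot\,}$ together with the continuity of the absolute value and of the square root on $[0,\infty)$, producing the desired inequality for $v$ and $w$.

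No serious obstacle is expected: the only point worth isolating is the approximation lemma, and it is settled by the explicit perturbation $v + \frac{1}{n} T_p$. The equality cases of Proposition~\ref{cs} do not survive the limit, which is consistent with the statement of the corollary, that only asserts the inequalities themselves.
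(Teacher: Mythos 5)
Your proposal is correct, but it splits the work differently from the paper. For the Cauchy--Schwarz part you and the paper do the same thing: a causal vector that is not timelike satisfies $\scal{v,v}=0$, so the right-hand side vanishes and the inequality is vacuous; combined with Proposition~\ref{cs} this settles all causal pairs. For the triangle inequality, however, the paper's (very terse) argument is simply that the two-line computation in the proof of Proposition~\ref{cs} goes through verbatim: it only uses the Cauchy--Schwarz inequality just extended and the sign condition $\scal{v,w}<0$, which for same-oriented causal vectors weakens harmlessly to $\scal{v,w}\leq 0$ (null self-products contributing zeros). You instead prove it by approximation: perturbing $v$ and $w$ to the timelike vectors $v+\frac 1n T_p$, $w+\frac 1n T_p$, invoking convexity of the time cone so that the sums remain timelike with the same orientation, applying Proposition~\ref{cs}, and passing to the limit by continuity of $g_p$, of the absolute value and of the square root. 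This is a valid and self-contained route; what it buys is that you never have to re-inspect the sign of $\scal{v,w}$ or the algebra of the original proof in the degenerate (null) case, and what it costs is the (correct, explicitly verified) approximation lemma and the limit argument, which the paper's one-liner avoids. Your closing remark that the equality cases do not survive the limit is also the right observation, and is consistent with the corollary asserting only the inequalities.
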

\begin{proof}
This is trivial since any causal vector $v$ which is not timelike respects $\scal{v,v}=0$.
\end{proof}

\begin{defn}
A smooth curve or a piecewise smooth curve is said to be {\bf\index{curve!spacelike}\index{curve!timelike}\index{curve!null}\index{curve!causal} spacelike, timelike, null or causal} if its tangent vector, where defined, is respectively spacelike, timelike, null or causal everywhere. A timelike, null or causal curve can be future or past directed if its tangent vector is respectively future or past directed everywhere.
\end{defn}

\begin{defn}
If $p$ and $q$ are two points on $\M$, we have the possible following causal relations:
\begin{itemize}
\item $p \preceq q$ means that $p = q$ or that there is a future directed causal curve from~$p$ to~$q$
\item $p \precc q$ means that there is a future directed timelike curve from~$p$ to~$q$
\item $p \prec q$ means that $p \preceq q$ and $p\neq q$. 
\end{itemize}
These causal relations determine the following causal sets:
\begin{itemize}
\item $J^+(p) = \set{q\in\M : p \preceq q}$ is the {\bf\index{future!causal} causal future} of $p$
\item $J^-(p) = \set{q\in\M : q \preceq p}$ is the {\bf\index{past!causal} causal past} of $p$
\item $I^+(p) = \set{q\in\M : p \precc q}$ is the {\bf\index{future!chronological} chronological future} of $p$
\item $I^-(p) = \set{q\in\M : q \precc p}$ is the {\bf\index{past!chronological} chronological past} of $p$
\end{itemize}
Those sets can also be defined for every set $U\subset\M$ as for example \mbox{$J^+(U) = \cup_{p\in U}J^+(p) $}.
\end{defn}

We have said in the Section \ref{prob} that compact Lorentzian manifolds do not admit a well defined causal structure. Actually, one would like a Lorentzian manifold to be {\bf\index{manifold!Lorentzian!causal} causal}, which means that there exists no closed causal curve. It was proved in \cite{Gallo,Tipler} that compact Lorentzian manifolds cannot be causal. Moreover, one could require a stronger condition on causality. A Lorentzian manifold is {\bf\index{manifold!Lorentzian!strongly causal} strongly causal} if for each $p\in\M$ there exists a neighborhood of $p$ which is crossed at most once by every timelike curve.\\

Most of the time, we will impose an ever stronger condition on the global causal behaviour of the manifold:

\begin{defn}
A Lorentzian manifold $\M$ is {\bf\index{manifold!Lorentzian!globally hyperbolic} globally hyperbolic} if
\begin{itemize}
\item $\M$ is strongly causal
\item For every $p,q\in\M$, \mbox{$J^+(p) \cap J^-(q)$} is compact
\end{itemize}
\end{defn}

A first consequence of the global hyperbolicity of a Lorentzian manifold $\M$ is that two points $p\prec q$ can always be joined by at least one geodesic of maximal length, i.e.~there exists a causal geodesic from $p$ to $q$ whose length is greater than or equal to  that of any other future directed causal curve from $p$ to $q$ \cite{Beem}. Such geodesics are called {\bf\index{geodesic!maximal} maximal geodesics}.

\begin{defn}
A causal (or timelike) curve is said to be {\bf\index{curve!causal!inextensible} inextensible} if it admits no future or past endpoint, where a future (past) endpoint of a curve \mbox{$\gamma : I \rightarrow \M$} is an element $e\in\M$ such that for every neighborhood $U$ of $e$ there exists $t_e\in I$ such that \mbox{$\gamma(t)\in U \ \forall t> t_e$} ($\forall t<t_e$).
\end{defn}

\begin{defn}
A {\bf\index{Cauchy surface} Cauchy surface} is a subset $S \subset \M$ which every inextensible causal curve intersects exactly once.
\end{defn}

\begin{thm}\label{thgeroch}
A spacetime $\M$ is globally hyperbolic  if and only if it admits a smooth Cauchy surface $S$. In this case, $\M$ is diffeomorphic to the decomposition \mbox{$\setR \times S$}.
\end{thm}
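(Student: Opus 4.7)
The plan is to prove the two directions separately and then construct the diffeomorphism. I will follow the classical argument of Geroch for the topological/continuous version, and indicate where the smoothing technique of Bernal--S\'anchez is required for the full smooth statement.

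For the easier direction, I would first assume $\M$ admits a smooth Cauchy surface $S$ and derive global hyperbolicity. Strong causality follows almost directly: if strong causality failed at $p\in\M$, one could construct a sequence of causal curves approaching a closed or almost-closed curve near $p$, but an inextensible limit curve would either avoid $S$ or cross $S$ more than once, contradicting the defining property of a Cauchy surface. For the compactness of $J^+(p)\cap J^-(q)$, I would argue by contradiction: take a sequence $\{r_n\}\subset J^+(p)\cap J^-(q)$ without convergent subsequence and, for each $n$, a future directed causal curve from $p$ through $r_n$ to $q$. Using the fact that $S$ splits $\M$ into past and future and meets each inextensible causal curve exactly once, one extracts a limit causal curve in the topology on curves, whose endpoints must lie in the compact set $S\cap J^+(p)\cap J^-(q)$; this forces convergence and yields the contradiction.

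For the harder direction, I would construct a time function from volume measures, as in Geroch. Choose a finite Borel measure $\mu$ on $\M$ with $\mu(\M)=1$ and full support, and set
\[ t^-(p) = \mu(I^-(p)), \qquad t^+(p) = \mu(I^+(p)), \qquad \tau(p) = \log\frac{t^-(p)}{t^+(p)}. \]
The key lemma is that, under global hyperbolicity, $t^\pm$ are continuous (the chronological sets are open and their boundaries have measure zero because strong causality prevents accumulation of null generators), so $\tau:\M\to\mathbb{R}$ is continuous. Moreover, along any future directed causal curve $\gamma$, the set $I^-(\gamma(s))$ is strictly increasing and $I^+(\gamma(s))$ is strictly decreasing in $s$, so $\tau$ is strictly increasing along $\gamma$. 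Compactness of causal diamonds ensures that $\tau\to\pm\infty$ along inextensible causal curves, hence every level set $S_c=\tau^{-1}(c)$ is a topological Cauchy surface.

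To obtain a \emph{smooth} Cauchy surface and the diffeomorphism $\M\cong\mathbb{R}\times S$, one has to upgrade Geroch's continuous time function to a smooth one whose gradient is everywhere timelike. This is the main obstacle, and I would invoke the Bernal--S\'anchez smoothing procedure: convolve $\tau$ locally against a family of smooth mollifiers adapted to a countable locally finite cover by precompact causally convex neighbourhoods, keeping careful control so that the resulting function $T$ is smooth, still strictly increasing along every future directed causal curve, and has past/future directed timelike gradient. Then $S=T^{-1}(0)$ is a smooth spacelike hypersurface, and the Cauchy property is preserved because monotonicity along causal curves together with $T\to\pm\infty$ on inextensible curves guarantees that each such curve hits $S$ exactly once. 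Finally, the diffeomorphism is produced by flowing along the normalized timelike vector field $X=-\nabla T/g(\nabla T,\nabla T)$: this flow is complete (again because $T\to\pm\infty$ on inextensible causal curves and each integral curve of $X$ is causal), every integral curve meets $S$ exactly once, and the map
\[ \Phi:\mathbb{R}\times S \longrightarrow \M,\qquad (t,s)\longmapsto \phi_t^X(s) \]
is a diffeomorphism. The genuinely delicate step throughout is the smoothing: preserving both the Cauchy property and the timelike character of $\nabla T$ under mollification is what forced the gap of three decades between Geroch's topological result and its smooth refinement.
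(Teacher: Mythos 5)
Your proposal follows exactly the route the paper itself relies on: the paper gives no proof of this theorem, deferring to Geroch's volume-function construction $\tau(p)=\log\bigl(\mu(I^-(p))/\mu(I^+(p))\bigr)$ for the topological statement and to Bernal--S\'anchez for the smooth Cauchy surface, the Cauchy temporal function and the splitting $\setR\times S$, which is precisely the outline you give (including the converse direction via limit-curve arguments and the diffeomorphism via the flow of the normalized gradient). The only caution is that the smoothing step is not a direct mollification of Geroch's $\tau$ --- Bernal and S\'anchez build the temporal function from scratch out of carefully glued local ``time step'' functions, since naive convolution need not preserve the timelike character of the gradient or the Cauchy property --- but as you explicitly defer that step to their work, your argument is consistent with the paper's treatment.
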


The proof of this theorem is originated from R. Geroch \cite{Ger70}, but was only performed at a topological level (i.e.~with a homeomorphism to a non necessarily smooth Cauchy surface). The proof of the smoothness of the Cauchy surface is more recent and was given by A.N. Bernal and M. S\'anchez \cite{BS03,BS04,BS06}. In fact, the smoothness of the Cauchy surface comes from the existence of a Cauchy temporal function.

\begin{defn}\label{timefunction}
A {\bf\index{function!time}  time function} is a function strictly increasing along each future directed causal curve. A {\bf\index{function!time!Cauchy} Cauchy time function} is a  time function whose level sets are Cauchy surfaces. A {\bf\index{function!temporal}\index{function!temporal!Cauchy} (Cauchy) temporal function} is a smooth (Cauchy) time function with past-directed timelike gradient everywhere.
\end{defn}

\begin{cor}\label{corgeroch}
Let $\M$ be a globally hyperbolic spacetime. Then $\M$ admits a Cauchy temporal function.
\end{cor}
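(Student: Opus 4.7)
The plan is to leverage Theorem \ref{thgeroch} directly, but in a way that extracts the temporal structure from the splitting $\M \cong \setR \times S$, rather than merely the topological Cauchy surface. The starting point would be Geroch's original construction of a continuous Cauchy time function: given a measure $\mu$ on $\M$ of finite total mass with full support, one sets $\tau(p) = \ln\prt{\mu(J^-(p))/\mu(J^+(p))}$. Strong causality ensures that $\mu(J^\pm(p))$ depends continuously on $p$, compactness of $J^+(p)\cap J^-(q)$ forces the appropriate monotonicity, and a standard argument using inextensible causal curves shows the level sets $\set{\tau = c}$ are Cauchy surfaces. So $\tau$ is already a Cauchy time function in the sense of Definition \ref{timefunction}, just not smooth.

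The main obstacle, and essentially the entire content of the corollary beyond Geroch's theorem, is to upgrade $\tau$ to a smooth function with past-directed timelike gradient while keeping its level sets Cauchy. The plan here is to reproduce the Bernal--S\'anchez smoothing procedure: on each convex normal neighborhood $U_p$ around a point $p$, the exponential map yields a local smooth function $h_p$ (essentially a local time coordinate) whose gradient is past-directed timelike on $U_p$, and which approximates $\tau$ in a controlled way. One then glues these via a partition of unity $\set{\rho_\alpha}$ to form a smooth function $T = \sum_\alpha \rho_\alpha h_{p_\alpha}$. The delicate point, which I expect to be the hard step, is that gradients of partition-of-unity sums can in general fail to be timelike: timelike vectors form a convex cone, so a convex combination of past-directed timelike vectors remains past-directed timelike, but the derivatives $(\nabla \rho_\alpha)\,h_{p_\alpha}$ spoil this. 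The resolution is to use the freedom in choosing the $h_{p_\alpha}$ to add large constants or to compose with rapidly increasing functions, dominating the derivatives of the $\rho_\alpha$ by the timelike components. This is where the compactness of $J^+(p)\cap J^-(q)$ is essential, because it provides the local finiteness needed to bound the unwanted terms uniformly on compact chronological diamonds.

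Once a smooth function $T$ with past-directed timelike gradient everywhere is obtained and shown to agree up to a controllable error with Geroch's $\tau$ on every Cauchy surface, one verifies that its level sets remain Cauchy surfaces. This follows because any inextensible causal curve $\gamma$ satisfies $T\circ\gamma$ strictly increasing (since $g(\nabla T, \dot\gamma) < 0$ along $\gamma$ by the past-directedness of $\nabla T$ and future-directedness of $\dot\gamma$), with $T\circ\gamma \to \pm\infty$ at the endpoints (inherited from the corresponding property of $\tau$ and preserved by the uniform estimates of the smoothing). Hence every inextensible causal curve meets each level set exactly once, and $T$ is the desired Cauchy temporal function.
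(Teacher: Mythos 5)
The paper does not actually prove this corollary: it is quoted as a known result of Bernal and S\'anchez \cite{BS03,BS04,BS06}, stated right after Theorem \ref{thgeroch} with the remark that the smoothness of the Cauchy surface comes from the existence of a Cauchy temporal function. Your first paragraph (Geroch's volume-function construction of a continuous Cauchy time function $\tau$) is a fair sketch of the classical 1970 result, up to minor misattributions: continuity of the volume functions $p\mapsto\mu\prt{I^\pm(p)}$ is not a consequence of strong causality alone but of causal continuity (which global hyperbolicity supplies), and compactness of the diamonds is what forces $\tau\to\pm\infty$ along inextensible causal curves rather than the monotonicity, which comes from the full support of $\mu$.

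The genuine gap is the second paragraph: the smoothing step is not a technicality to be dispatched by a partition-of-unity gluing, it is the entire content of the corollary beyond Geroch, and it resisted proof from 1970 until 2003 precisely because naive mollification arguments of the kind you describe fail. Concretely, writing $T=\sum_\alpha\rho_\alpha h_{p_\alpha}$ gives $\nabla T=\sum_\alpha\rho_\alpha\nabla h_{p_\alpha}+\sum_\alpha h_{p_\alpha}\nabla\rho_\alpha$, and since $\sum_\alpha\nabla\rho_\alpha=0$ the offending term equals $\sum_\alpha\prt{h_{p_\alpha}-h_{p_{\beta}}}\nabla\rho_\alpha$ for any reference index $\beta$: it is controlled by the oscillation of the local functions across overlaps times the (a priori unbounded) derivatives of the $\rho_\alpha$. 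Your two proposed fixes do not address this: adding constants $C_\alpha$ either cancels out (if equal) or adds the uncontrolled term $\sum_\alpha C_\alpha\nabla\rho_\alpha$; composing with a rapidly increasing $\phi$ rescales the good term by $\phi'$ but also rescales the oscillations $\phi(h_{p_\alpha})-\phi(h_{p_\beta})\approx\phi'\cdot(h_{p_\alpha}-h_{p_\beta})$ by the same factor, so nothing is gained. The actual Bernal--S\'anchez argument proceeds differently: one builds, as locally finite sums of non-negative local functions whose gradients are causal and past-directed wherever they are non-zero, smooth "time step" functions with timelike past-directed gradient on prescribed regions (this is exactly the convex-cone summation trick that this thesis itself reuses in Lemma \ref{lemmaprincipal}), and then assembles a Cauchy temporal function from a countable family of such steps interpolating between the Geroch level sets $\tau^{-1}(n)$; it is not a mollification of $\tau$ at all. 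As it stands, your proposal restates the difficulty rather than resolving it, so it cannot serve as a proof of the corollary.
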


\begin{prop}\label{propJJ}
If $\M$ is globally hyperbolic and $S\subset\M$ is a Cauchy surface, then for all $p\in\M$, \mbox{$J^+(p) \cap J^-(S)$} and \mbox{$J^-(p) \cap J^+(S)$} are compact.
\end{prop}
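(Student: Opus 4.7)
Since $J^+(p)\cap J^-(S)$ and $J^-(p)\cap J^+(S)$ are interchanged by reversing the time orientation, I will only treat $K := J^+(p)\cap J^-(S)$. The plan is to combine the Cauchy temporal function produced by Corollary \ref{corgeroch} with a limit-curve argument in order to establish sequential compactness of $K$.

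First, choose a Cauchy temporal function $\tau$ normalized so that $S=\tau^{-1}(0)$. Strict monotonicity of $\tau$ along future causal curves together with the Cauchy property of $S$ gives $J^-(S)=\tau^{-1}((-\infty,0])$: the inclusion ``$\subseteq$'' follows because any future causal curve from $q$ to a point of $S$ strictly increases $\tau$ up to $0$, and ``$\supseteq$'' follows by flowing from $q$ along the future-directed timelike field $-\nabla\tau$ until the resulting curve meets the Cauchy surface $S$. In particular $J^-(S)$ is closed; since $J^+(p)$ is closed in a globally hyperbolic spacetime, so is $K$. The two temporal bounds $\tau(p)\leq\tau(q)\leq 0$ hold on $K$, so $K$ is empty when $\tau(p)>0$ and we may assume $\tau(p)\leq 0$.

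For sequential compactness, take a sequence $(q_n)\subset K$. For each $n$, select $s_n\in S$ with $q_n\preceq s_n$ and a future-directed causal curve $\alpha_n$ issuing from $p$, passing through $q_n$, and terminating at $s_n$; extend each $\alpha_n$ to a future-inextensible causal curve. I would parametrize the $\alpha_n$ by $\tau$, which confines the relevant portions to the slab $\tau\in[\tau(p),0]$ and turns them into an equicontinuous family (with uniform Lipschitz control coming from the interplay of strong causality and locally convex normal neighbourhoods). Applying the limit-curve lemma in the strongly causal manifold $\M$, a subsequence of $(\alpha_n)$ converges locally uniformly to a future-inextensible causal curve $\alpha$ from $p$. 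Since $S$ is a Cauchy surface, $\alpha$ crosses $S$ at a unique point $s$, and $s_n\to s$ along the subsequence. The points $q_n$ lie on $\alpha_n$ with $\tau(q_n)\in[\tau(p),0]$, so a further extraction produces $\tau(q_{n_k})\to t\in[\tau(p),0]$ and $q_{n_k}\to q:=\alpha(t)$ with $p\preceq q\preceq s$; hence $q\in K$.

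The main obstacle is the invocation of the limit-curve lemma: one must parametrize the $\alpha_n$ so as to secure equicontinuity (the Cauchy temporal function is the natural candidate) and then relate the convergence of curves to the convergence of the distinguished points $q_n$ marked on them. All the remaining ingredients---closedness of $J^+(p)$, the slab bound from $\tau$, and the achronality of the Cauchy surface---follow directly from the material already recalled in the excerpt.
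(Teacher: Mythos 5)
Your reduction to closedness plus a curve-compactness argument is sound as far as it goes: the identification $J^-(S)=\tau^{-1}\prt{(-\infty,0]}$ via the flow of the future-directed field $-\nabla\tau$, the closedness of $J^+(p)$ in a globally hyperbolic spacetime, and the slab bound $\tau\in[\tau(p),0]$ are all fine. The genuine gap is exactly at the step you flag as the main obstacle, and it is not a routine one: parametrizing the curves $\alpha_n$ by the Cauchy temporal function does \emph{not} yield an equicontinuous family with respect to any fixed background metric. In the orthogonal splitting $g=-N^2d\tau^2+g_\tau$ adapted to $\tau$, a future-directed causal curve satisfies only $g_\tau(\dot x,\dot x)\le N^2$, and neither the lapse $N$ nor the comparison of $g_\tau$ with a fixed complete Riemannian metric is uniformly controlled off compact sets. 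Strong causality and convex normal neighbourhoods give purely local Lipschitz control, and to patch such local bounds into a uniform one you would already need to know that all the segments from $p$ to $S$ stay inside one compact region --- which is precisely the statement being proved; the same circularity affects the pointwise relative compactness of $\set{\alpha_n(t)}$ required by Arzel\`a--Ascoli. (The auxiliary claim $s_n\to s$ is likewise asserted rather than proved, though it is not needed.) Note that the paper itself does not prove Proposition \ref{propJJ}: it defers exactly this technical content --- compactness of the space of continuous causal curves from $p$ to $S$ in a suitable topology --- to \cite{Wald}, so the one step your sketch leaves open is the whole substance of the result.

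If you want a self-contained limit-curve proof, the circularity can be avoided by changing both the parametrization and the logical structure. Parametrize the future-inextensible extensions by arclength of an auxiliary complete Riemannian metric $h$, so that they are $1$-Lipschitz and of infinite length, and argue by contradiction: if a sequence $(q_n)\subset K=J^+(p)\cap J^-(S)$ escaped every compact set, the $h$-parameters $t_n$ of the marked points would satisfy $t_n\ge d_h(p,q_n)\to\infty$; since for $t\le t_n$ the point $\alpha_n(t)$ lies in $J^+(p)\cap J^-(q_n)\subset K$, the limit curve $\alpha$ given by the limit-curve lemma would be a future-inextensible causal curve contained in the closed set $K\subset\tau^{-1}\prt{(-\infty,0]}$. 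But $\alpha$ must meet the Cauchy surface $S$, and immediately afterwards $\tau>0$ along it, a contradiction. This trades your marked-point bookkeeping (which needs the unavailable bound on the parameters of the $q_n$) for trapping the entire limit curve in $J^-(S)$, and uses only the facts you had already established.
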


This last result is quite intuitive and comes from the fact that the set of all continuous future directed causal curves  between $p$ and any point $q\in S$ is compact for a suitable topology. The proof is only technical and can be found in \cite{Wald}.

\begin{defn}\label{lordistdef}
The {\bf\index{distance!Lorentzian} Lorentzian distance}  on the spacetime $(\M,g)$ is the function $d : \M \times \M \rightarrow [0,+\infty) \cup \set{+\infty}$ defined by
$$
d(p,q) =
\begin{cases}
\quad \sup \left\{ l(\gamma) : 
\begin{array}{c}
\gamma \text{ future directed causal}\\
\text{piecewise smooth curve}\\
\text{with } \gamma(0)=p,\ \gamma(1)=q
\end{array}\right\} \ &\text{if } p \preceq q\\ 
\quad 0 &\text{if } p \npreceq q
\end{cases}
 $$
 where $l(\gamma) = \int \sqrt{ -g_{\gamma(t)}( \dot\gamma(t) ,  \dot\gamma(t) )}\ dt$ is the length of the curve.
\end{defn}

We can notice that, for globally hyperbolic spacetimes, the supremum is automatically obtained from a maximal geodesic joining $p$ and $q$. More generally, any future directed causal curve $\gamma$ from $p$ to $q$ is said to be maximal if \mbox{$l(\gamma)=d(p,q)$}.

\begin{prop}
The Lorentzian distance respects the following properties:
\begin{enumerate}
\item $d(p,q) \geq 0$ for all $p,q\in\M$
\item $d(p,q) > 0$ if and only if $p\precc q$
\item If $p\prec q \prec r$, then $d(p,r) \geq d(p,q) + d(q,r)$\begin{flushright} ("wrong way" triangle inequality)\end{flushright}
If $\M$ is causal, then we have the additional properties:
\item $d(p,p) = 0$
\item If $d(p,q) > 0$, then $d(q,p) = 0$
\end{enumerate}
\end{prop}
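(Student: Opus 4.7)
The plan is to dispatch the five claims roughly in the listed order, each following almost directly from the definition of $d$ together with elementary properties of causal curves established earlier in the section.

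First, for property~1, I would simply note that the integrand $\sqrt{-g_{\gamma(t)}(\dot\gamma(t),\dot\gamma(t))}$ defining $l(\gamma)$ is non-negative on causal curves (where $g(\dot\gamma,\dot\gamma)\leq 0$), so every $l(\gamma)$ is non-negative and hence so is the supremum, while $d(p,q)=0$ in the non-causally-related case by definition. For property~2, one direction is immediate: if $p \precc q$ there exists a future-directed timelike curve $\gamma$ joining them, whose length $l(\gamma)$ is strictly positive since the integrand is strictly positive everywhere, and so $d(p,q)\geq l(\gamma)>0$. For the converse I would argue contrapositively: if $p\npreceq q$ then $d(p,q)=0$ by definition; if $p\preceq q$ but not $p\precc q$, then any future-directed causal curve from $p$ to $q$ must be a null geodesic (a standard fact: a causal curve that is not a reparametrization of a null geodesic can be deformed to a timelike curve with the same endpoints, which would give $p\precc q$), and null geodesics have zero length, hence $d(p,q)=0$.

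For the wrong-way triangle inequality in property~3, I would take any two future-directed causal curves $\gamma_1$ from $p$ to $q$ and $\gamma_2$ from $q$ to $r$, and concatenate them into a piecewise smooth future-directed causal curve $\gamma_1\ast\gamma_2$ from $p$ to $r$ whose length is $l(\gamma_1)+l(\gamma_2)$. Since this curve is a candidate in the supremum defining $d(p,r)$, we get $d(p,r)\geq l(\gamma_1)+l(\gamma_2)$. Taking suprema over $\gamma_1$ and $\gamma_2$ independently yields $d(p,r)\geq d(p,q)+d(q,r)$. The subtlety here is that one needs the concatenation to be admissible as a piecewise smooth causal curve even when the two pieces do not join smoothly at $q$; this is fine because the definition already allows piecewise smooth curves.

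Properties~4 and~5 both follow from property~2 together with the causality hypothesis. If $d(p,p)>0$, then by property~2 there would be a future-directed timelike curve from $p$ to itself, i.e.\ a closed timelike curve, contradicting causality, so $d(p,p)=0$. Similarly, if $d(p,q)>0$ and $d(q,p)>0$, then by property~2 we would have $p\precc q$ and $q\precc p$, whose concatenation produces a closed timelike loop through $p$, again violating causality; hence at most one of $d(p,q)$ and $d(q,p)$ can be positive. The only step I expect to require real care is the converse direction of property~2, which rests on the deformation lemma for causal curves that are not null pregeodesics; the other items are essentially bookkeeping from the definition.
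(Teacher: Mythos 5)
Your proof is correct and follows essentially the same route as the paper's much terser argument: properties 1--2 from the definition, property 3 by concatenating causal curves from $p$ to $q$ and from $q$ to $r$, and properties 4--5 by noting that causality forbids $p \precc p$ and forbids $p\precc q$ and $q\precc p$ holding together. The only point where you add substance is the converse of property 2, where you correctly invoke the standard deformation lemma (a causal curve of positive length forces $p\precc q$), a step the paper absorbs into ``immediate from the definition.''
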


\begin{proof}
1. and 2. are immediate from the definition. 3. comes from the fact that the curves from $p$ to $r$ passing through $q$ are particular cases of the curves from $p$ to $r$. When the space is causal, $p\precc p$ cannot hold, and $q\precc p$, $p\precc q$ cannot hold together.
\end{proof}

\begin{cor}\label{Idpq}
For all $p\in\M$, \mbox{$I^+(p) = \set{q\in\M : d(p,q) > 0}$} and  \mbox{$I^-(p) = \set{q\in\M : d(q,p) > 0}$}.\\                                                     
\end{cor}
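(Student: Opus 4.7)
The plan is to observe that this corollary is an essentially immediate rephrasing of item 2 of the proposition that precedes it, combined with the definitions of the chronological future and past given earlier. So there is no substantial new content to prove; the task is only to translate between two equivalent descriptions of the same sets.

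First I would fix $p\in\M$ and take $q\in I^+(p)$. By definition of the chronological future, this means $p\precc q$, i.e.~there is a future directed timelike curve from $p$ to $q$. By item 2 of the proposition this is equivalent to $d(p,q)>0$, which places $q$ in $\set{q\in\M : d(p,q) > 0}$. Conversely, if $d(p,q)>0$ then item 2 gives $p\precc q$, hence $q\in I^+(p)$. This establishes the first equality.

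The second equality is proved in exactly the same way, swapping the roles of $p$ and $q$: $q\in I^-(p)$ means $q\precc p$, which by item 2 of the proposition is equivalent to $d(q,p)>0$. There is no obstacle here since both directions are already contained in the biconditional of item 2; the only thing worth emphasizing is that one must be careful about the asymmetry of $d$ in the Lorentzian setting, so that the two characterizations really do produce the future cone and the past cone respectively and are not conflated.
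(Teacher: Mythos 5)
Your proposal is correct and follows exactly the route the paper intends: the corollary is an immediate restatement of item~2 of the preceding proposition ($d(p,q)>0$ if and only if $p\precc q$) combined with the definitions $I^+(p)=\set{q : p\precc q}$ and $I^-(p)=\set{q : q\precc p}$, which is why the paper states it without a separate proof. Your care about the asymmetry of $d$ distinguishing the future and past cones is exactly the right point to note.
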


The Definition \ref{lordistdef} does not avoid the Lorentzian distance function to be infinite at some points. However, if we impose the condition of global hyperbolicity, we have the following behaviour for the distance function:

\begin{prop}
For a globally hyperbolic spacetime $\M$, the Lorentzian distance function is finite and continuous on $\M\times\M$. In particular, \mbox{$d_p(\,\cdot\,) = d(p,\,\cdot\,) : \M \rightarrow [0,+\infty)$} is a continuous function.
\end{prop}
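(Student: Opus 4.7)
My first step would be to fix a Cauchy temporal function $T$ (which exists by Corollary \ref{corgeroch}), whose gradient $\nabla T$ is past-directed timelike everywhere. For any future-directed causal curve $\gamma$ from $p$ to $q$, the wrong-way Cauchy--Schwarz inequality (Proposition \ref{cs} and Corollary \ref{cscor}, applied to the pair $-\nabla T$ and $\dot\gamma$, which lie in the same time cone) yields $(T\circ\gamma)'(t) = -g(\nabla T,\dot\gamma) \geq \sqrt{-g(\nabla T,\nabla T)}\,\sqrt{-g(\dot\gamma,\dot\gamma)}$. Writing $|v| := \sqrt{-g(v,v)}$ for causal $v$ and integrating gives
$$l(\gamma) \eq \int_0^1 |\dot\gamma(t)|\,dt \ \leq\ \frac{T(q)-T(p)}{\displaystyle\inf_{K}|\nabla T|},$$
where $K = J^+(p)\cap J^-(q)$ is compact by global hyperbolicity, so $|\nabla T|$ attains a strictly positive minimum on $K$. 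Taking the supremum over causal curves shows $d(p,q) < \infty$.

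\textbf{Upper semicontinuity.} Next I would prove $\limsup_n d(p_n,q_n)\leq d(p,q)$ whenever $(p_n,q_n)\to(p,q)$. Discarding indices with $p_n\npreceq q_n$, I pick for each remaining $n$ a maximal causal geodesic $\gamma_n$ from $p_n$ to $q_n$, so $l(\gamma_n)=d(p_n,q_n)$. After reparametrizing each $\gamma_n$ by arclength of an auxiliary complete Riemannian metric, the curves eventually lie in a common compact subset (obtained by enlarging $K$ slightly, using Proposition \ref{propJJ} applied between two Cauchy surfaces bracketing all the $(p_n,q_n)$). A standard limit-curve theorem (Arzel\`a--Ascoli on this compact set) produces a subsequence converging uniformly to a continuous causal curve $\gamma$ from $p$ to $q$; combined with $C^0$ upper semicontinuity of the Lorentzian length functional, this gives $\limsup_n d(p_n,q_n) \leq l(\gamma) \leq d(p,q)$. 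If $p\npreceq q$, the limit curve would contradict $p\npreceq q$ (since $J^+$ is closed in globally hyperbolic spacetimes), so one concludes $d(p_n,q_n)\to 0 = d(p,q)$.

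\textbf{Lower semicontinuity and main obstacle.} For $\liminf_n d(p_n,q_n)\geq d(p,q)$, the nontrivial case is $d(p,q)>0$, i.e.~$p\precc q$ by Corollary \ref{Idpq}. Given $\varepsilon>0$, I would take a maximal geodesic $\gamma:[0,1]\to\M$ with $l(\gamma)=d(p,q)$ and interior points $p'=\gamma(\delta)$, $q'=\gamma(1-\delta)$ chosen so that $l(\gamma|_{[\delta,1-\delta]}) \geq d(p,q)-\varepsilon/2$. Since $I^\pm$ are open (and $p\precc p'$, $q'\precc q$), for $n$ large one has $p_n\precc p'$ and $q'\precc q_n$; inserting arbitrarily short causal curves at both ends of $\gamma|_{[\delta,1-\delta]}$ produces a causal curve from $p_n$ to $q_n$ of length at least $d(p,q)-\varepsilon$. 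Hence $\liminf_n d(p_n,q_n)\geq d(p,q)-\varepsilon$ for every $\varepsilon>0$. The main obstacle I anticipate is the limit-curve argument in the upper-semicontinuity step: it rests on a careful uniform parametrization of the $\gamma_n$ and on $C^0$ upper semicontinuity of the Lorentzian length, both of which rely essentially on global hyperbolicity through the compactness of causal diamonds (Proposition \ref{propJJ}).
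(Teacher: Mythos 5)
The paper offers no proof of this proposition at all --- it simply defers to \cite{Beem,Hawk,ONeill} --- and your sketch is essentially the standard argument found in those references: finiteness, lower semicontinuity via cutting a nearly maximal curve short at both ends and using openness of $I^\pm$, and upper semicontinuity via a limit-curve argument together with upper semicontinuity of the Lorentzian length functional under $C^0$ convergence, both of which do rest on global hyperbolicity (strong causality plus compact causal diamonds). Your finiteness step is a nice self-contained use of Corollary \ref{corgeroch} and Proposition \ref{cs}, but fix the sign: since $\nabla T$ is past directed and $\dot\gamma$ future directed, $(T\circ\gamma)'=g(\nabla T,\dot\gamma)>0$, and the wrong-way Cauchy--Schwarz bound should be applied to $\abs{g(\nabla T,\dot\gamma)}$; the conclusion $l(\gamma)\leq (T(q)-T(p))/\inf_K\abs{\nabla T}$ is unaffected. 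In the upper-semicontinuity step, the ``common compact set'' cannot be a slab between two Cauchy surfaces (such slabs are noncompact); the clean choice is $J^+(p_-)\cap J^-(q_+)$ with $p_-\precc p$ and $q\precc q_+$, which contains every $\gamma_n$ for $n$ large and is compact directly from the definition of global hyperbolicity (Proposition \ref{propJJ} is the variant with a Cauchy surface in place of one endpoint). The two remaining inputs --- the limit-curve theorem and the $C^0$ upper semicontinuity of length for causal curves --- are genuine theorems that you cite rather than prove; since they carry the real weight of the upper-semicontinuity half, this keeps your argument at the same ``quote the standard machinery'' level as the paper itself, which is acceptable here given that the paper cites the same sources for the whole statement.
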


The proof can be found in \cite{Beem,Hawk,ONeill}.\\

We need to introduce a barely known concept but which is a crucial element while dealing with distance functions, and which is called the cut locus. A large study of the cut locus for the Lorentzian distance function can be found in \cite{Beem}.\\

\vspace{1.5em}                                                      

Let us define the following elements:
\begin{itemize}
\item If $v\in T_p\M$ is a  future directed timelike vector, $c_v(t)$ denotes the unique geodesic in $\M$ such that $c_v(0)=p$ and $c_v'(0) = v$.
\item The {\bf\index{exponential map} exponential map} $\exp_p(v)$ of $v$ is given by \mbox{$\exp_p(v) = c_v(1)$} provided $c_v(1)$ is defined.
\item \mbox{$T_{-1}\M = \set{v\in T\M : g(v,v) = -1 \ \text{ and $v$ is future directed}}$} is the fiber bundle of unit future timelike vectors and $T_{-1}\M_{|p}$ is its fiber at $p$.
\item \mbox{$s : T_{-1}\M \rightarrow \setR \cup \set{\infty}$} is a function defined for $v\in T_{-1}\M_{|p}$ by
$$s(v) = \sup\set{t\geq 0 : d_p(c_v(t)) = t}.$$
\end{itemize}

We can remark that for \mbox{$0<t<s(v)$}, the geodesic \mbox{$c_v = [0,t] \rightarrow \M$} is the unique maximal geodesic (so the unique maximal timelike curve up to reparametrization) from $c_v(0)$ to $c_v(t)$. This leads to the fact that, if $s(v)<\infty$, the point $\exp(s(v)v)\in\M$ is the first point along the geodesic $c_v$ which can be reached by at least to different geodesics (or with infinitesimally neighboring geodesics which intersect at $p$ \cite{Hawk}). Such point is called the {\bf\index{cut point} cut point} of $c_v(0)$ along $c_v$.

\begin{defn}
If we define the following subset of future timelike tangent vectors:
$$\Gamma^+(p) = \set{s(v)v : v\in T_{-1}\M_{|p} \text{ and } 0 < s(v) < \infty}$$
then the {\bf\index{cut locus!timelike} future timelike cut locus} of $p$ is the set 
$$C_t^+(p) = \exp_p(\Gamma^+(p)) \subset I^+(p).$$
\end{defn}

To be complete, we must add the  {\bf\index{cut locus!null} future null cut locus} $C_N^+(p)$ of $p$ which is the set of points $\gamma(t_0)$ where $\gamma : [0,a) \rightarrow \M$ are future directed null  geodesics with $\gamma(0)=p$ and the extrema \mbox{$t_0 = \sup\set{ t \in [0,a) : d_p(\gamma(t)) = 0 }$} are considered only if $t_0< a$. The union \mbox{$C^+(p) = C_t^+(p) \cup C_N^+(p)$} is the {\bf\index{cut locus!nonspacelike} future nonspacelike cut locus} of $p$. The past cut locus can be defined dually.\\

In globally hyperbolic spacetimes, the cut locus corresponds to the domain of non-smoothness of the Lorentzian distance function. We have the following theorem from F. Erkeko\u{g}lu, E. Garc\'ia-R\'io and \mbox{D. N. Kupeli \cite{EGK}}, which we adapt here to globally hyperbolic spacetimes (the initial version is for strongly causal spacetimes, with no guarantee that the domain of smoothness extends to the whole chronological future):

\begin{thm}\label{graddp1}
If $\M$ is a globally hyperbolic spacetime, then $d_p$ is smooth on \mbox{$I^+(p) \setminus C^+(p)$} and satisfies the timelike eikonal equation \mbox{$g( \nabla d_p , \nabla d_p) = -1$}. Moreover, $\nabla d_p$ is a past directed vector field on \mbox{$I^+(p) \setminus C^+(p)$}.
\end{thm}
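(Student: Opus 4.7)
The plan is to localize the problem near a point $q \in I^+(p) \setminus C^+(p)$, establish smoothness by inverting the exponential map on a neighborhood, and then derive the eikonal equation together with the past-directedness from a Lorentzian Gauss lemma applied to variations of the unique maximal geodesic from $p$ to $q$. Fix $q \in I^+(p) \setminus C^+(p)$. Since $q$ lies inside $I^+(p)$ but outside $C_t^+(p)$, by definition of $s$ there is a unique future unit timelike vector $v \in T_{-1}\M_{|p}$ and $t_0 = d_p(q) > 0$ with $t_0 < s(v)$ such that $\exp_p(t_0 v) = q$ and such that $c_v|_{[0,t_0]}$ is the unique maximal timelike geodesic from $p$ to $q$.

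For smoothness, I first observe that conjugate points of $p$ along $c_v$ cannot occur strictly before any cut point; in particular, there is no conjugate point along $c_v|_{[0,t_0]}$, so $d(\exp_p)_{t_0 v}$ is an isomorphism. Hence $\exp_p$ is a local diffeomorphism between a neighborhood $\mathcal U$ of $t_0 v$ in the future timelike cone of $T_p\M$ and a neighborhood $\mathcal V$ of $q$ in $\M$. Writing the unique preimage as $\exp_p^{-1}(q') = \tau(q')\, w(q')$ with $w(q') \in T_{-1}\M_{|p}$ and $\tau(q') > 0$ smooth in $q'$, I claim that $d_p(q') = \tau(q')$ for $q'$ near $q$, so that $d_p$ is smooth on $\mathcal V$. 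The inequality $\tau(q') \leq d_p(q')$ is automatic. The reverse inequality, i.e.\ the fact that $c_{w(q')}|_{[0,\tau(q')]}$ remains maximal for $q'$ close to $q$, follows from $\tau(q') < s(w(q'))$ for $q'$ sufficiently close to $q$: here I use lower semicontinuity of $s$ (a consequence of the limit curve lemma in globally hyperbolic spacetimes) together with the compactness of $J^+(p) \cap J^-(q')$ from Proposition \ref{propJJ}, which prevents competing timelike curves of larger length from escaping to infinity.

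Once smoothness is secured on $\mathcal V$, the identity $d_p(c_v(t)) = t$ valid for $t$ near $t_0$ yields by differentiation
\[
g\!\left(\nabla d_p|_{q},\, c_v'(t_0)\right) \eq 1.
\]
To obtain collinearity, I apply a Lorentzian Gauss lemma. For any $w \in T_q\M$ with $g(w, c_v'(t_0)) = 0$, I build a one-parameter family of future unit timelike vectors $\sigma \mapsto v_\sigma \in T_{-1}\M_{|p}$ such that the induced variation $\sigma \mapsto \exp_p(t_0 v_\sigma)$ of $q$ has velocity $w$ at $\sigma = 0$; this is possible because $d(\exp_p)_{t_0 v}$ restricted to the tangent space of $t_0 \cdot T_{-1}\M_{|p}$ surjects onto the $g$-orthogonal complement of $c_v'(t_0)$ (which is essentially the content of the Lorentzian Gauss lemma for the exponential map of a unit hyperboloid). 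Each curve $c_{v_\sigma}|_{[0,t_0]}$ is a unit speed timelike geodesic from $p$ of length exactly $t_0$, and by the argument of the preceding paragraph it is maximal for $\sigma$ small, so $d_p(\exp_p(t_0 v_\sigma)) = t_0$ identically in $\sigma$. Differentiating at $\sigma = 0$ gives $g(\nabla d_p|_q, w) = 0$. Hence $\nabla d_p|_q = \alpha\, c_v'(t_0)$ for some $\alpha \in \setR$, and combining with $g(\nabla d_p|_q, c_v'(t_0)) = 1$ and $g(c_v', c_v') = -1$ forces $\alpha = -1$. Therefore $\nabla d_p|_q = -c_v'(t_0)$, which is past directed timelike and satisfies $g(\nabla d_p, \nabla d_p) = -1$.

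The main obstacle is the smoothness step, and specifically the control of the maximal geodesic under small perturbations of the endpoint: the openness of the set on which $\tau < s \circ w$ (equivalently, the stability of being outside the cut locus under moving the endpoint) is the nontrivial global ingredient, and it is exactly here that global hyperbolicity is used via Proposition \ref{propJJ} and the associated limit curve argument. The eikonal equation and the sign determination are, by comparison, direct consequences of differentiating along the maximal geodesic together with the Gauss lemma, which is itself a standard computation once the exponential map is a diffeomorphism on the relevant neighborhood.
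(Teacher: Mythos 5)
Your proof is correct and follows essentially the same route as the paper's: smoothness via the exponential map being a diffeomorphism off the cut locus (with $d_p$ identified with the radial coordinate $\tau$), the identity $d_p(c_v(t))=t$ along maximal geodesics, and the Gauss lemma to pin down $\nabla d_p = -\dot c_v$, hence the eikonal equation and past-directedness. The only differences are presentational: you establish the local diffeomorphism and the identification $d_p=\tau$ by hand (no conjugate points before the cut point, lower semicontinuity of $s$ in globally hyperbolic spacetimes), where the paper cites the corresponding global facts from \cite{Beem}, and you obtain collinearity from a variation through equal-length maximal geodesics rather than from the equality case of the wrong-way Cauchy--Schwarz inequality.
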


\begin{proof}
First, we can remark that, if we define the set 
$$ \tilde \Gamma^+(p) = \set{ tv : v\in T_{-1}\M_{|p} \text{ and } 0 < t < s(v)},$$
then  \mbox{$\exp_p( \tilde \Gamma^+(p) ) \cup C_t^+(p)= I^+(p)$} since by global hyperbolicity every point $q\in I^+(p)$ can be reached by a maximal geodesic beginning at $p$. The set \mbox{$I^+(p) \setminus C^+(p) = I^+(p) \setminus C_t^+(p) \subset \exp_p( \tilde \Gamma^+(p) )$} is an open set in globally hyperbolic spacetimes (see \cite{Beem}) and corresponds to the interior of $\exp_p( \tilde \Gamma^+(p) )$.

The exponential \mbox{$\exp_p : \text{int } \tilde \Gamma^+(p) \rightarrow I^+(p) \setminus C^+(p) $} is a diffeomorphism, so its inverse \mbox{$\exp_p^{-1} :  I^+(p) \setminus C^+(p)  \rightarrow \text{int } \tilde \Gamma^+(p) $} is smooth, and the distance function which can be expressed as \mbox{$d_p(q)= \prt{-g_p\prt{ \exp_p^{-1}(q) , \exp_p^{-1}(q)}}^{\frac 12}$} is also smooth on this open set.

Then let us take $v\in T_{-1}\M_{|p}$ and let \mbox{$c_v : [0,s(v)) \rightarrow I^+(p) \setminus C^+(p)$} be a unit future directed timelike geodesic. For \mbox{$t\in [0,s(v))$} we have the relation \mbox{$d_p(c_v(t)) = t$} which gives
\begin{equation}\label{gdpcv}
 \dv{}{t}\prt{d_p \circ c_v}(t) = 1 \implies g_{c_v(t)}\prt{ \,\prt{\nabla d_p \circ c_v}(t) \,,\,\dot c_v(t) \,} = 1.
 \end{equation}
Since $c_v(t)$ is orthogonal to the level sets of $d_p$ (this is a consequence of  Gauss Lemma saying that the exponential map is a radial isometry, whose Lorentzian version can be found in \cite{Beem}), $\dot c_v(t)$ must be collinear to $\prt{\nabla d_p \circ c_v}(t)$. We are in the equality case of the Cauchy--Schwartz inequality (Proposition \ref{cs}), and since \mbox{$g\prt{\dot c_v,\dot c_v} = -1$} we must have \mbox{$g\prt{\nabla d_p,\nabla d_p} = -1$}. Moreover, since $\dot c_v$ is future directed, the equation (\ref{gdpcv}) implies that $\nabla d_p$ must have the opposite orientation.
\end{proof}

\vspace{0.5em}                                                      

We have then the following theorem from V. Moretti:

\begin{thm}\label{graddp2}
Let $\M$ be a globally hyperbolic spacetime and take \mbox{$p\in\M$}, then the sets $J^+(p) \setminus I^+(p)$ and \mbox{$C^+(p)$} are closed with measure zero.
\end{thm}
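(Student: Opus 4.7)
The plan is to treat the two claims separately, since although $J^+(p)\setminus I^+(p)$ and $C^+(p)$ are both essentially ``codimension-one'' objects, the structural reasons differ.

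For the set $J^+(p)\setminus I^+(p)$, closedness is essentially free: in a globally hyperbolic spacetime, the causal future $J^+(p)$ is closed (this follows from strong causality and the compactness of causal diamonds, via the standard limit-curve argument), while $I^+(p)$ is open in any spacetime, so the difference is closed. For the measure-zero statement, I would identify $J^+(p)\setminus I^+(p)$ as the topological boundary of $J^+(p)$, and observe that every such boundary point is reached from $p$ by a future-directed null geodesic which admits no timelike deformation, i.e.~by a null geodesic segment before its first null cut point. Hence $J^+(p)\setminus I^+(p)$ lies in the image under $\exp_p$ of a subset of the future null cone in $T_p\M$. The future null cone is an $(n-1)$-dimensional subset of $T_p\M$, and $\exp_p$ is smooth (so locally Lipschitz) where defined; smooth maps send sets of Hausdorff dimension at most $n-1$ to sets of $n$-dimensional Lebesgue measure zero, which gives the conclusion.

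For $C^+(p)=C_t^+(p)\cup C_N^+(p)$, the key technical step is to control the cut function $s:T_{-1}\M_{|p}\to[0,\infty]$. I would first prove that $s$ is upper semi-continuous under global hyperbolicity: given a sequence $v_k\to v$ with $s(v_k)\to L$, the compactness of intersections $J^+(p)\cap J^-(q)$ together with the limit-curve lemma produces a maximal timelike geodesic from $p$ to $\exp_p(Lv)$, so $s(v)\geq L$. Upper semi-continuity of $s$ yields that the set $\Gamma^+(p)=\{s(v)v:0<s(v)<\infty\}$ is the graph of an upper semi-continuous function over a subset of the unit hyperboloid $T_{-1}\M_{|p}$; such a graph is a Borel set of Hausdorff dimension at most $n-1$ inside $T_p\M$, and its image $C_t^+(p)=\exp_p(\Gamma^+(p))$ is therefore of measure zero in $\M$ by the same Lipschitz argument as above. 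Closedness of $C_t^+(p)$ in $I^+(p)$ follows from the same semi-continuity, combined with the observation that cut points cannot accumulate on non-cut points in globally hyperbolic spacetimes. The null cut locus $C_N^+(p)$ is handled analogously, starting from the $(n-1)$-dimensional cone of future null directions, and finally one checks that adding the null part does not spoil closedness, because every limit of future null cut points is itself a cut point or already lies in $J^+(p)\setminus I^+(p)$.

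The main obstacle is the closedness of $C^+(p)$, i.e.~the upper semi-continuity of $s$: the corresponding Riemannian argument relies on completeness and compactness of the unit sphere, neither of which is directly available here, so one has to substitute the global-hyperbolicity tools (compact causal diamonds, Proposition \ref{propJJ}, and the limit-curve construction) to force the supremum defining $s(v)$ to be realized as a limit of maximal geodesics. Once this semi-continuous structural description is in place, the measure-zero conclusions are a routine application of the fact that smooth maps send lower-dimensional sets to null sets.
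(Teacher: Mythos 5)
Note first that the paper does not actually prove this theorem: the closedness is delegated to Beem--Ehrlich--Easley \cite{Beem} and the measure-zero statement to Moretti \cite{Mor}, so your sketch is in effect a reconstruction of those external arguments rather than an alternative to anything in the text. For the measure-zero halves your outline is essentially the right one: $J^+(p)\setminus I^+(p)$ is contained in $\set{p}\cup\exp_p(\text{null cone}\cap\mathrm{dom}\,\exp_p)$, and $C_t^+(p)=\exp_p(\Gamma^+(p))$ with $\Gamma^+(p)$ the ``graph'' of the cut function $s$ over a subset of $T_{-1}\M_{|p}$. One repair is needed, though: the graph of a (semi)continuous function over an $(n-1)$-dimensional base need \emph{not} have Hausdorff dimension at most $n-1$ (graphs of continuous functions can already have strictly larger dimension), so that intermediate claim is false as stated. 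What you should say instead is that the graph of a Borel function is Lebesgue-null by Fubini, and that $\exp_p$, being smooth hence locally Lipschitz, maps Lebesgue-null sets to Lebesgue-null sets; note also that for this half only Borel measurability of $s$ is used, so upper semicontinuity suffices.

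The genuine gap is the closedness of $C^+(p)$, which you rightly flag as the main obstacle but then dispose of too quickly. The phrase ``cut points cannot accumulate on non-cut points in globally hyperbolic spacetimes'' is literally the statement to be proven, so that step is circular, and the final alternative ``every limit of future null cut points is itself a cut point or already lies in $J^+(p)\setminus I^+(p)$'' does not yield closedness unless the second alternative is shown to consist of (null) cut points. The real difficulty, untouched in the sketch, is the behaviour of sequences $q_k=\exp_p(s(v_k)v_k)\in C_t^+(p)$ whose directions $v_k$ leave every compact subset of the hyperboloid $T_{-1}\M_{|p}$, i.e.\ degenerate towards null directions, so that the limit may land on the horismos; the standard treatment in \cite{Beem} handles this by defining one cut function on the whole sphere of future \emph{causal} directions, proving its continuity there under global hyperbolicity, and deducing closedness of the nonspacelike cut locus from that unified statement. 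Relatedly, you invoke the limit-curve machinery for the easy half of the semicontinuity: the inequality $s(v)\geq\lim_k s(v_k)$ follows directly from the continuity of the Lorentzian distance on a globally hyperbolic spacetime, since $d(p,c_{v_k}(t))=t$ for $t<s(v_k)$ passes to the limit; the half that genuinely needs compact causal diamonds and limit curves --- producing in the limit a conjugate point or a second maximal geodesic, so that the cut cannot occur strictly later than $\lim_k s(v_k)$ --- is the lower semicontinuity, and it is precisely this half, extended to null directions, on which closedness rests. As it stands, your argument delivers the measure-zero conclusions but not the closedness of $C^+(p)$.
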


\begin{proof}
The proof of the closeness can be found in \cite{Beem}. The proof of the measure zero of those sets is in \cite{Mor}.
\end{proof}

\begin{cor}\label{graddp3}
Let $\M$ be a globally hyperbolic spacetime and take \mbox{$p\in\M$}, then the Lorentzian distance function $d_p$ is a.e.~differentiable on $\M$, and \mbox{$g( \nabla d_p , \nabla d_p) = -1$} with $\nabla d_p$ past directed where $d_p>0$ and $\nabla d_p$ is defined.\\
\end{cor}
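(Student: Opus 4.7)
The plan is to partition $\M$ according to the behavior of $d_p$ and invoke the two preceding theorems directly; no real analysis has to be done from scratch, since all the hard work sits in Theorems \ref{graddp1} and \ref{graddp2}. Concretely, I would write $\M$ as the disjoint union
\[
\M \,=\, \bigl(\M \setminus J^+(p)\bigr) \,\cup\, \bigl(J^+(p) \setminus I^+(p)\bigr) \,\cup\, \bigl(I^+(p) \setminus C^+(p)\bigr) \,\cup\, \bigl(I^+(p) \cap C^+(p)\bigr).
\]

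First I would handle the two open pieces on which $d_p$ is already smooth. By Corollary \ref{Idpq}, $d_p$ vanishes identically on the complement of $I^+(p)$; since $J^+(p)$ is closed in a globally hyperbolic spacetime, $\M\setminus J^+(p)$ is open, and there $d_p \equiv 0$ is smooth with $\nabla d_p = 0$. On the open set $I^+(p)\setminus C^+(p)$ (open by Theorem \ref{graddp2}, which asserts $C^+(p)$ is closed, together with the openness of $I^+(p)$), Theorem \ref{graddp1} gives smoothness of $d_p$, the eikonal identity $g(\nabla d_p,\nabla d_p) = -1$, and past-directedness of $\nabla d_p$.

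Next I would check that the remaining, potentially-bad set has measure zero. The two leftover pieces are contained in $\bigl(J^+(p)\setminus I^+(p)\bigr)\cup C^+(p)$, which is exactly the union of the two sets declared to be closed and of measure zero in Theorem \ref{graddp2}. Hence $d_p$ is differentiable off a closed set of measure zero, i.e.\ almost everywhere on $\M$.

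Finally, to obtain the claim in the precise form stated, I would restrict attention to the set where both $d_p > 0$ and $\nabla d_p$ is defined. By Corollary \ref{Idpq} this forces the point to lie in $I^+(p)$, and by the previous paragraph, up to a null set this point lies in $I^+(p)\setminus C^+(p)$, where Theorem \ref{graddp1} directly gives $g(\nabla d_p,\nabla d_p) = -1$ and $\nabla d_p$ past directed. Since there is essentially no computation beyond applying the two prior theorems and using Corollary \ref{Idpq}, the only subtle point — and the one I would take care to state cleanly — is the openness of $I^+(p)\setminus C^+(p)$, needed so that ``smooth'' makes pointwise sense rather than merely a.e.\ sense; this follows at once from the openness of $I^+(p)$ in a strongly causal spacetime combined with the closedness of $C^+(p)$ from Theorem \ref{graddp2}.
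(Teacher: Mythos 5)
Your proposal is correct and follows essentially the same route as the paper, which leaves the corollary as an immediate consequence of Theorems \ref{graddp1} and \ref{graddp2}: $d_p$ vanishes (hence is smooth) off $J^+(p)$, is smooth with the eikonal identity and past-directed gradient on $I^+(p)\setminus C^+(p)$ by Theorem \ref{graddp1}, and the leftover sets $J^+(p)\setminus I^+(p)$ and $C^+(p)$ are closed of measure zero by Theorem \ref{graddp2}. Your reading of the final clause (interpreting ``where $\nabla d_p$ is defined'' up to the null set $C^+(p)$) matches how the corollary is actually used later in Lemma \ref{lemmaprincipal}, so the argument is fine as written.
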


\vspace{1em}                                                      

\subsection{General form of a noncommutative Lorentzian distance function}\label{secgenform}

\vspace{0.5em}                                                      

As a first approach to generalize Connes' Riemannian distance function to Lorentzian spaces, we will try to propose a conceptual formulation of such a distance function.\\

Let us present a quick review of the concepts intervening for the Riemannian case. The Riemannian distance function
$$ d(p,q) = \sup\set{\abs{f(p) - f(q)} : f\in C(\M), \norm{[D,f]} \leq 1}$$
comes from the following equivalences: 
$$\norm{[D,f]} \leq 1\lequi \text{$f$ is Lipschitz with best constant 1}  $$
$$\lequi  \abs{f(x) - f(y)} \leq d(x,y)  \qquad \forall x,y $$
which leads to the following inequality:
$$ \sup\set{\abs{f(p) - f(q)} : f\in C(\M), \norm{[D,f]} \leq 1}\ \leq\ d(p,q).$$

Equality is obtained  by using the usual distance function \mbox{$f(z) = d_p(z)$} which is well Lipschitz thanks to the triangle inequality.\\

\vspace{0.5em}                                                      

In order to build a Lorentzian counterpart, we must adapt this function to the conditions for a Lorentzian distance:
\begin{itemize}
\item $d(x,x) = 0$
\item $d(x,y) > 0 \implies d(y,x) = 0$
\item If $x \prec y \prec z$, then $d(x,z) \geq d(x,y) + d(y,z)$\\                                                    
\end{itemize}

There are several ingredients we could replace or adapt:
$$ d(p,q) = \underbrace{\sup}_{\text{supremum?}}\set{\underbrace{\abs{f(p) - f(q)}}_{\text{difference?}} : \underbrace{\norm{[\underbrace{D}_\text{operator?},f]}}_{\text{Lipschitz condition?}} \underbrace{\leq 1}_{\text{constraint?}}} \cdot $$

\vspace{1em}                                                      

Our main idea is to create a conceptual Lorentzian counterpart whose form is as similar as possible to this formula. To make our work easier, we will add two hypotheses, which are the following ones:
\begin{itemize}
\item Work hypothesis 1: We suppose that there exists a way  to reproduce a Lipschitz-like condition in a Lorentzian framework.
\item Work hypothesis 2:  We will only consider the distance between two causally connected points.\\                                               
\end{itemize}

What does happen if we want to conserve the same Lipschitz condition as in the Riemannian case? We should have a function on this form, with $d_\sC$ some distance  on $\sC$:
\begin{equation}\label{hyp1}
 d(p,q) = \sup\set{d_\sC\!\prt{f(p), f(q)} : \text{\it $f$ Lipschitz}}
 \end{equation}
 where {\it $f$ Lipschitz} stands for a condition on $f$ implying that $f$ is Lipschitz with best Lipschitz constant $1$.\\

The main problem of this formulation is that the usual Lorentzian distance function $f(z) = d_q(z)$ is not a Lipschitz function any more, because of the wrong way triangle inequality:
\begin{eqnarray*}
x \prec y \prec z \ &\implies&\   d(x,z) \geq d(x,y) + d(y,z) \\
&\implies&\  \abs{d(x,z) - d(y,z)} \geq d(x,y)
\end{eqnarray*}
so the usual distance function cannot be considered among the class of {\it $f$ Lipschitz} functions.\\

Another concern is on the function $d_\sC$. Let $f$ be an arbitrary Lipschitz function with best Lipschitz constant 1 and assume that $d(x,y) > 0$, then if (\ref{hyp1}) is valid we have:
$$d_\sC\!\prt{f(x), f(y)} \leq  \ d(x,y).$$
By the conditions on Lorentzian distance, if we switch $x$ and $y$ we must obtain:
$$d_\sC\!\prt{f(y), f(x)} \leq  \ d(y,x) = 0 $$
$$\limplie d_\sC\!\prt{f(y), f(x)} = 0.$$

The last equation shows that, if we have a distance $d_\sC$ on $\setC$ such that $d_\sC\!\prt{a, b} = d_\sC\!\prt{b, a}$, then every Lorentzian distance of the form (\ref{hyp1}) will automatically be  a null function. This is an indication that, in the Lorentzian case, the old distance $d_\sC\!\prt{f(x), f(y)} = \abs{f(x)-f(y)}$ in $\setC$ should be replaced by a non-symmetric one.\\

The introduction of a non-symmetric distance between the values of the functions has further implications. An easy way would be to consider the sign of the difference $f(x)-f(y)$, but this only works if we restrict our set of functions to real-valued ones instead of complex-valued ones. Moreover in this case, we would like that all functions $f$  give by their range an information on the causal structure of the spacetime. So the set of functions should be restricted to causal functions, which are functions that do not decrease along every causal future-directed curve. We will see later that the causal functions are an important element in the generalization process.\\

Let us now try a different way to generalize the distance function, by introducing a {\it co-Lipschitz} condition instead of a {\it Lipschitz} condition.

\begin{defn}
A function $f$ is {\bf\index{function!co-Lipschitz} co-Lipschitz} if there exists a constant $M>0$ such that\footnote{The constant can also be defined as $\frac 1M$ instead of $M$, in order to have a correspondence with bi-Lipschitz maps respecting \mbox{$\frac 1M \; d(x,y) \leq \abs{f(x) - f(y)} \leq M\; d(x,y)\ \forall x,y$}, but this is not relevant here.}
$$  \abs{f(x) - f(y)} \geq M\; d(x,y)  \qquad \forall x,y.$$
The greatest $M$ is called the best co-Lipschitz constant.
\end{defn}

The main idea is to replace the supremum with an upper bound as in the Riemannian distance by an infimum with an equivalent lower bound.\\

We suggest the following general form:
\begin{equation}\label{genform}
 d(p,q) = \inf\set{\abs{f(p) - f(q)} : \text{\it $f$ co-Lipschitz}} \quad \text{for }p \prec q
 \end{equation}
 where {\it $f$ co-Lipschitz} stands for a condition on $f$ implying that $f$ is co-Lipschitz with best co-Lipschitz constant $1$.\\
 
Of course if we want to enlarge to any points $p$ and $q$ which do not necessarily respect the condition $p \prec q$, then the expression $\abs{f(p) - f(q)}$ must be replaced by a non-symmetric one.\\

There are several advantages of using a co-Lipschitz condition instead of a Lipschitz one in the Lorentzian case:
\begin{itemize}
\item We have trivially that
\begin{equation}\label{colin}
\inf \set{\abs{f(p) - f(q)} : \text{\it $f$ co-Lipschitz}} \ \geq\ d(p,q)
\end{equation}
\item Such kind of constraint -- with a lower bound -- seems more natural in space with indefinite inner product
\item The usual distance function $f(z) = d_p(z)$ respects the co-Lipschitz condition on the chronological future thanks to the wrong way triangle inequality: 
$$\abs{f(x)-f(y)} = \abs{d(p,x)-d(p,y)} \geq 1\  d(x,y)\quad\text{for } p\prec x \prec y$$
\end{itemize}

However, the usual distance function is not sufficient in order to obtain the equality case of (\ref{colin}) since the co-Lipschitz condition is not respected globally but only inside the chronological future.\\

We propose here an easy conceptual way to obtain the equality under some conditions. Let $p$ and $q$ be two causally related points with $p\prec q$. First, we will suppose the space to be globally hyperbolic. In this case we know that there exists at least one smooth Cauchy surface $C$ containing $p$. Our second hypothesis will be that this Cauchy surface $C$ can be chosen in such way that the distance between $p$ and $q$ corresponds to the distance between $C$ and $q$, i.e.~such that  \mbox{$d(C,q) = d(p,q)$} where $d(C,z) = \sup_{t\in C} \ d(t,z)$ denotes the distance between the Cauchy surface $C$ and a point $z$. Actually, this condition is similar to require that the maximal geodesic from $C$ to $q$ has its starting point at $p$.\\

Under those hypotheses,  we suggest the following equality function:
$$ f(z) = d(z,C) - d(C,z).$$

We can check that this function gives the requested equality:
$$\abs{f(p)-f(q)} = |\underbrace{d(p,C)}_{0} - \underbrace{d(C,p)}_{0} - \underbrace{d(q,C)}_{0} + \underbrace{d(C,q)}_{d(p,q)}| = d(p,q).$$

We must check that this function is co-Lipschitz with best co-Lipschitz constant 1. We will separate the proof in 3 different cases (with two being similar), depending on the localization of the Cauchy surface.

\begin{itemize}
\item Let $x,y$ be two points such that $x \preceq C\preceq y$, then:
$$\abs{f(x)-f(y)} = |\underbrace{d(x,C)}_{\geq0} - \underbrace{d(C,x)}_{0} - \underbrace{d(y,C)}_{0} + \underbrace{d(C,y)}_{\geq0}| $$
$$ \geq d(x,t) + d(t,y) = d(x,y)$$
 with $t$ being the point at the intersection between $C$ and the maximal geodesic from $x$ and $y$ (the existence of this geodesic is guaranteed by global hyperbolicity).
\item Let $x,y$ be two points such that  $C\preceq x\preceq y$ (the remaining case  $x\preceq y \preceq C$ is similar):
$$\abs{f(x)-f(y)} = |\underbrace{d(x,C)}_{0} - \underbrace{d(C,x)}_{\geq0} - \underbrace{d(y,C)}_{0} + \underbrace{d(C,y)}_{\geq0}| $$
$$ = \underbrace{d(C,y)}_{\geq d(t,y)} - \underbrace{d(C,x)}_{=d(t,x)} \geq d(t,y) - d(t,x) \geq d(x,y)$$
 with $t\in C$ being a point such that $d(t,x) = d(C,x)$ (whose existence is guaranteed by global hyperbolicity) and where we use at the end the wrong way triangle inequality \mbox{$d(t,x) + d(x,y) \leq d(t,y)$}.
\end{itemize}

So we have shown here that a function in the general form (\ref{genform}) could be a good candidate for a Lorentzian distance function. In the forthcoming sections, we will construct a technical formulation of such function which could be used as a good starting point for a noncommutative generalization. We will retain the very useful condition of global hyperbolicity, but we will propose a more complicated construction of the equality function in order to guarantee its existence for every globally hyperbolic spacetime.\\

\subsection{Global timelike eikonal inequality condition}\label{secglobalfunction}

The construction of the noncommutative Riemannian distance function in the Section \ref{riemdist} was done in three steps:
\begin{enumerate}
\item The construction of a path independent formulation of the Riemannian distance for commutative spaces based on a Lipschitz condition
\item The creation of an operatorial formulation of the Lipschitz condition
\item The generalization to noncommutative spaces
\end{enumerate}

In this section, we present a complete adaptation of the step 1 to the Lorentzian case, so we show that it is possible to construct a path independent formulation of the Lorentzian distance based on a co-Lipschitz condition. The obtained formula could be considered as a good starting point for further generalization, as a Lorentzian operatorial formulation which could be extended to noncommutative spaces.\\

Our strategy will be to follow a similar construction to the one in the Section \ref{riemdist}, and we will see that some Lorentzian elements as the wrong way Cauchy--Schwartz inequality will be of great help.\\

Let us consider a strongly causal time-orientable Lorentzian manifold $(\M,g)$ and two points $p$ and $q$ on it such that $p \precc q$. Let us choose an arbitrary future directed timelike piecewise smooth curve $\gamma : [0,1] \rightarrow \M$ with $\gamma(0) = p$ and $\gamma(1) = q$ (there must exist at least one such curve because $p \precc q$). Then, for each function $f \in C^\infty(\M,\setR)$:
\begin{eqnarray}
f(q) - f(p) &=& f(\gamma(1)) - f(\gamma(0)) \nonumber\\
&=& \int_0^1 \dv{}{t} f(\gamma(t)) \, dt \label{lordist0} \\
&=& \int_0^1 df_{\gamma(t)}(\dot\gamma(t)) \, dt \nonumber\\
&=& \int_0^1  g_{\gamma(t)}( \nabla f_{\gamma(t)},\dot\gamma(t)) \, dt. \nonumber
\end{eqnarray}

Since $\gamma$ is future directed, $\dot\gamma(t)$ is everywhere a future directed timelike vector. If we suppose that $\nabla f$ is everywhere timelike with constant orientation (we will take past directed), then $g_{\gamma(t)}( \nabla f_{\gamma(t)},\dot\gamma(t))$ is of constant sign (hence positive), and we have:
\begin{equation}\label{lordist1}
f(q) - f(p) = \int_0^1  g_{\gamma(t)}( \nabla f_{\gamma(t)},\dot\gamma(t)) \, dt = \int_0^1 \abs{ g_{\gamma(t)}( \nabla f_{\gamma(t)},\dot\gamma(t)) }\, dt.
\end{equation}

\vspace{1em}                                                      

We can remark here that we use for the Lorentzian case the set of real-valued smooth functions $C^\infty(\M,\setR)$ instead of the complex-valued ones $C^\infty(\M)$. This choice is dictated by the time-orientability of the manifold. Indeed, since $f$ is real-valued, if we suppose $\nabla f$ to be timelike then the orientation of $\nabla f$ is relevant. In the complex case there is no way to separate the timelike vectors into the two classes of future and past oriented vectors since \mbox{$g( \nabla f ,T)$} for any timelike vector field $T$ would be complex-valued.\\

Since both $\nabla f$ and $\dot\gamma(t)$ are everywhere timelike, we can apply the wrong way Cauchy--Schwartz inequality (Proposition \ref{cs}):
\begin{eqnarray}\label{lordist2}
& &\hspace{-1cm} \int_0^1  \abs{g_{\gamma(t)}( \nabla f_{\gamma(t)},\dot\gamma(t))} \, dt   \nonumber\\
& &\geq \int_0^1  \sqrt{-g_{\gamma(t)}(\nabla f_{\gamma(t)},\nabla f_{\gamma(t)})}\; \sqrt{-g_{\gamma(t)}(\dot\gamma(t),\dot\gamma(t))} \, dt  \nonumber\\
& &\geq  \inf\set{{\sqrt{-g(\nabla f,\nabla f)}}}\; l(\gamma).
\end{eqnarray}

Now we can see that this construction also holds if $\gamma$ is causal instead of just timelike. Indeed, $g( \nabla f,\dot\gamma(t))$ is non-negative for any future directed null vector $\dot\gamma(t)$ and the wrong way Cauchy--Schwartz inequality still holds by the Corollary \ref{cscor}. So we can extend our construction to the case $p \prec q$.

\begin{thm}\label{lorcontinuous}
If  $f \in C^\infty(\M,\setR)$ satisfies the following conditions:
\begin{itemize}
\item  $\sup\, g( \nabla f, \nabla f ) \leq -1$
\item $\nabla f$ is past directed
\end{itemize}
then for each $p,q\in\M$ such that $p \prec q$, we have 
$$f(q) - f(p) \geq d(p,q).$$
\end{thm}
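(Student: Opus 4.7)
The plan is to combine the calculation already carried out in equations (\ref{lordist0})--(\ref{lordist2}) with the hypotheses of the theorem and then pass to the supremum over causal curves. Since $p \prec q$ there exists at least one future directed causal piecewise smooth curve $\gamma : [0,1] \to \M$ from $p$ to $q$, so the set over which the supremum defining $d(p,q)$ is taken is non-empty.

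First I would fix such a curve $\gamma$ and write, by the fundamental theorem of calculus applied on each smooth piece,
\[
f(q) - f(p) = \int_0^1 g_{\gamma(t)}\!\left(\nabla f_{\gamma(t)},\dot\gamma(t)\right)\, dt,
\]
which is valid for $f\in C^\infty(\M,\setR)$. The key point is that the integrand has a definite sign: since $\nabla f$ is past directed and timelike (from $\sup g(\nabla f,\nabla f)\leq -1<0$) and $\dot\gamma(t)$ is future directed causal almost everywhere, the two vectors lie in opposite time cones and so $g(\nabla f,\dot\gamma)\geq 0$ wherever $\dot\gamma$ is defined. Hence the integrand equals its own absolute value and no sign cancellations occur.

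Next I would apply the wrong way Cauchy--Schwarz inequality. For a.e.~$t$ both $\nabla f_{\gamma(t)}$ and $\dot\gamma(t)$ are causal, so by Proposition \ref{cs} together with its extension to causal vectors (Corollary \ref{cscor}),
\[
\bigl|g_{\gamma(t)}(\nabla f_{\gamma(t)},\dot\gamma(t))\bigr|
\ \geq\ \sqrt{-g_{\gamma(t)}(\nabla f_{\gamma(t)},\nabla f_{\gamma(t)})}\,\sqrt{-g_{\gamma(t)}(\dot\gamma(t),\dot\gamma(t))}.
\]
The global hypothesis $\sup g(\nabla f,\nabla f)\leq -1$ gives $\sqrt{-g(\nabla f,\nabla f)}\geq 1$ pointwise, so integrating yields
\[
f(q)-f(p) \ \geq\ \int_0^1 \sqrt{-g_{\gamma(t)}(\dot\gamma(t),\dot\gamma(t))}\, dt \ =\ l(\gamma).
\]

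Finally, since this lower bound holds for every future directed causal piecewise smooth curve from $p$ to $q$, I would take the supremum on the right hand side and invoke the Definition \ref{lordistdef} of the Lorentzian distance to conclude $f(q)-f(p)\geq d(p,q)$. I do not foresee a serious obstacle: each step is an application of an earlier result, and the only mild subtlety is justifying that at points where $\dot\gamma$ is null the wrong way Cauchy--Schwarz inequality still applies (handled by Corollary \ref{cscor}) and that the integrand is non-negative a.e., which follows from $\nabla f$ being a globally past directed timelike field and $\dot\gamma$ being future directed causal.
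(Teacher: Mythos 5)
Your proposal is correct and follows essentially the same route as the paper: the fundamental theorem of calculus along a causal curve, constancy of sign of the integrand from the opposite time orientations of $\nabla f$ and $\dot\gamma$, the wrong way Cauchy--Schwarz inequality (including Corollary \ref{cscor} for null tangent vectors), the pointwise bound $\sqrt{-g(\nabla f,\nabla f)}\geq 1$, and finally the supremum over all future directed causal curves. No gaps to report.
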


\begin{proof}
$\sup\, g( \nabla f, \nabla f ) \leq -1$ implies that $\nabla f$ is a timelike vector field. From its past directed orientation, (\ref{lordist1}) holds. The inequality (\ref{lordist2}) becomes
$$f(q) - f(p)  = \int_0^1  \abs{g_{\gamma(t)}( \nabla f_{\gamma(t)},\dot\gamma(t))} \, dt \geq l(\gamma).$$
The result is obtained by taking the supremum on all future directed causal piecewise smooth curves from $p$ to $q$.
\end{proof}

This theorem shows that the counterpart of (\ref{rdist8}) for Lorentzian manifolds is:
\begin{equation}\label{ineqlor}
d(p,q) \leq \inf\set{ f(q)-f(p) :
\begin{array}{c}
 f \in C^\infty(\M,\setR),\\\
\sup g( \nabla f, \nabla f ) \leq -1,\\
 \ \nabla f \text{ is past directed}
\end{array}
}\cdot
\end{equation}

The condition $\sup\, g( \nabla f, \nabla f ) \leq -1$ is the global eikonal timelike inequality condition. It corresponds to the co-Lipschitz condition expressed in the Section \ref{secgenform}. This condition is totally independent of any path consideration, so it could give rise to a translation into an algebraic framework in order to be extended to noncommutative spaces. However, we do not know at this time how to create an operatorial formulation.\\

The condition on the orientation of $\nabla f$ for smooth functions $f$ is actually not a restrictive condition but a simple choice of time orientation, which can be easily translated into a noncommutative framework. Indeed, smooth functions obeying the condition $\sup\, g( \nabla f, \nabla f ) \leq -1$ can easily be separated in two distinct sets. In order to do that, we can just take two fixed points $p_0$ and $q_0$ (or their corresponding states in a noncommutative algebra) such that $\inf_f \abs{ f(q_0) -f(p_0) } > 0$ (which corresponds to the fact that $p_0$ and $q_0$ are causally related). Our two sets are
$$
\set{ f : \sup g( \nabla f, \nabla f ) \leq -1,\; f(q_0) - f(p_0) > 0}  
$$
and
$$
\set{ f : \sup g( \nabla f, \nabla f ) \leq -1,\; f(q_0) - f(p_0) < 0}.
$$
The choice of one of those sets corresponds to the choice of a particular time orientation.\\

In a similar way to the Riemannian case, we would like to obtain the equality case of (\ref{ineqlor}) by using in some way the usual Lorentzian distance function $d_p$. So once more we want to extend the set of functions $f \in C^\infty(\M,\setR)$ to a larger one which could include the functions $d_p$. In the Riemannian case, this was done by considering the set of bounded Lipschitz continuous functions. However, the Lorentzian distance $d_p$ does not belong to the set of Lipschitz continuous functions because of the wrong way triangle inequality, and also is not a bounded function any more since the manifold is not compact. Actually, it is clear that there exists no bounded smooth function respecting the condition $\sup g( \nabla f, \nabla f ) \leq -1$.\\

The problem is the following: since the construction (\ref{lordist0}) relies on the second fundamental theorem of calculus, we must enlarge the set of functions to a.e.~differentiable functions for which this theorem is still valid. However, this theorem is not valid for every a.e.~differentiable functions. The best known counterexample is the Cantor function, which is differentiable with null derivative except on a fractal set of measure zero. The set of a.e.~differentiable functions which still respect the second fundamental theorem of calculus is the set of absolute continuous functions, which are functions respecting the following condition, in the case of functions of one variable:
\begin{defn}
A function $f: I \rightarrow \setR$ is {\bf\index{function!absolutely continuous} absolutely continuous} on the closed bounded interval $I\subset\setR$ if for every $\epsilon>0$, there exists $\delta>0$ such that, for every finite collection of disjoint open intervals $(a_1b_1)\dots (a_k,b_k)$ of $I$ such that
$$\sum_{i=1}^k (b_i-a_i)  < \delta,$$
we have
$$\sum_{i=1}^k (f(b_i)-f(a_i))  < \epsilon.$$
\end{defn}

This definition of absolute continuous functions is actually equivalent to the fact that the second fundamental theorem of calculus holds. Informations about absolute continuous functions can be found in many books on real analysis, Lebesgue integration or measure theory, as e.g.~\cite{Gordon,Nielsen,Royden,Rudin}. See also \cite{Barcenas} for an alternative proof.\\

However we have to deal here with functions of several variables, even if the application of the second fundamental theorem is done on a single variable restriction (actually, the minimal condition would be to impose that the functions are absolutely continuous on causal paths). We can wonder  how to translate such kind of condition to an algebraic framework. One lead could be the use of Sobolev spaces, since Sobolev norms can be used to characterize absolute continuity \cite{Ziemer}. We can cite here the works of J. Maly \cite{Mal} and S. Hencl \cite{Hen} for a characterization of absolute continuity for functions of several variables  with the use of Sobolov spaces.\\

We can also suggest another way to solve this problem. Indeed, the second fundamental theorem does not need to hold to guarantee (\ref{ineqlor}) since the final formula results in an inequality. A weaker formulation would be sufficient if it guarantees the conservation of this inequality. This could be done by using the Lebesgue differentiation theorem:
\begin{thm}[Lebesgue]
Every function $f : [a,b] \rightarrow \setR$ of bounded variation\footnote{A fonction $f : [a,b] \rightarrow \setR$ is said to be of bounded variation if its total variation \mbox{$\sup_{P\in\P} \sum_{i=1}^{n_P-1} \abs{f(x_{i+1}) -f(x_i)}$} is finite, with $\P$ being the set of all partitions \mbox{$P=\set{x_1,\dots,x_{n_P}}$} of the interval $[a,b]$. In particular, every finite monotone function is of bounded variation.} is a.e.~continuous and a.e.~differentiable. Moreover, if the function $f$ is non-decreasing, then
$$\int_a^b f'(s)ds \;\leq\; f(b) - f(a).$$
\end{thm}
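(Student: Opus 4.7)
The plan is to reduce the theorem to the case of non-decreasing functions, establish almost-everywhere differentiability via a Vitali-style covering argument, and then derive the integral inequality through Fatou's lemma.

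First I would invoke the Jordan decomposition: any function of bounded variation on $[a,b]$ can be written as $f=g-h$ where $g$ and $h$ are non-decreasing. This reduces both the a.e.~continuity and a.e.~differentiability statements to monotone functions. The a.e.~continuity is then immediate: a non-decreasing function $g$ on $[a,b]$ admits only jump discontinuities, and each jump $g(x^+)-g(x^-)>0$ contributes a positive and distinct amount to the total increment $g(b)-g(a)<\infty$, so the set of jumps is at most countable, hence of Lebesgue measure zero.

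The main work lies in showing a.e.~differentiability for a non-decreasing $g$. I would introduce the four Dini derivatives $D^+g$, $D_+g$, $D^-g$, $D_-g$ and prove that they coincide and are finite a.e. The heart of the argument is to establish that for every pair of rationals $0<r<s$ the set
\[ E_{r,s} = \set{x \in [a,b] : D_-g(x) < r < s < D^+g(x)} \]
has Lebesgue measure zero. The technique is a Vitali covering argument: every point of $E_{r,s}$ admits arbitrarily small intervals $(x-h,x)$ on which the average increment of $g$ is less than $r$, and arbitrarily small intervals $(x,x+h)$ on which it exceeds $s$. A double application of the Vitali covering lemma then bounds the same quantity---the total $g$-increment contributed by the selected right-hand intervals---from above by $r\,\abs{E_{r,s}}$ (using the left-hand intervals and monotonicity of $g$) and from below by $s\,\abs{E_{r,s}}$, forcing $\abs{E_{r,s}}=0$. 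Taking a countable union over rational pairs, and repeating the analysis for the other pairings of Dini derivatives, shows that all four Dini derivatives coincide a.e., while the set where $D^+g=+\infty$ has measure zero because $g(b)-g(a)<\infty$.

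For the final inequality, I would extend $g$ to $[a,b+1]$ by setting $g(x)=g(b)$ for $x>b$ and consider the non-negative sequence
\[ g_n(x) = n\prt{g\!\prt{x+\tfrac{1}{n}}-g(x)}, \]
which converges a.e.~to $g'(x)$ on $[a,b]$. Fatou's lemma gives
\[ \int_a^b g'(s)\,ds \ \leq\ \liminf_{n\to\infty}\int_a^b g_n(s)\,ds \ =\ \liminf_{n\to\infty} n\prt{\int_b^{b+\frac{1}{n}} g(s)\,ds - \int_a^{a+\frac{1}{n}} g(s)\,ds}, \]
and monotonicity yields $\int_b^{b+1/n}g \leq g(b)/n$ and $\int_a^{a+1/n}g \geq g(a)/n$, producing the bound $\int_a^b g'(s)\,ds \leq g(b)-g(a)$. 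The principal obstacle is the Vitali covering step: delicately matching the left-sided and right-sided covers, controlling their overlaps, and extracting from monotonicity the two-sided estimate on the same total increment is what makes the differentiability proof genuinely non-trivial; everything else is either a standard reduction or a routine application of Fatou.
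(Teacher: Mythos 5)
Your proof is correct, and it is the standard classical argument (Jordan decomposition, Dini derivatives with a double Vitali covering argument on the sets $E_{r,s}$, then Fatou's lemma applied to $g_n(x)=n\prt{g(x+\tfrac1n)-g(x)}$ after extending $g$ by the constant $g(b)$). The paper itself states this theorem without proof, as a classical result quoted from the real-analysis literature, so there is no in-paper argument to compare against; your outline matches the textbook proof (e.g.\ Royden), including the delicate point of nesting the right-hand Vitali intervals inside the left-hand ones, and the only step you gloss over slightly --- that the set where $D^+g=+\infty$ has measure zero --- is itself a short covering estimate of the same kind, so nothing essential is missing.
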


Let us suppose that we have a continuous function $f$ which is non-decreasing along every causal future-directed curve. If $\gamma : [0,1] \rightarrow \M$ is a future directed causal piecewise smooth curve, then \mbox{$f\circ \gamma : [0,1]  \rightarrow \setR$} is a one parameter non-decreasing continuous function of bounded variation on the interval $[0,1]$, so we have:
\begin{equation}\label{lordist5}
f(\gamma(1)) - f(\gamma(0)) \geq \int_0^1 \dv{}{t} f(\gamma(t)) \, dt  = \int_0^1  g_{\gamma(t)}( \nabla f_{\gamma(t)},\dot\gamma(t)) \, dt.
\end{equation}

\begin{defn}
The set of {\bf\index{function!causal} causal functions} $\C(\M)$ is the set of coutinuous functions $f \in C(\M,\setR)$ which are non-decreasing along every causal future-directed curve.
\end{defn}

One can easily check that the set of causal functions forms a convex cone.

\begin{thm}\label{lorinecausal}
We have the following inequality for every strongly causal  time-orientable Lorentzian manifold $\M$:
$$
d(p,q) \leq \inf\set{ f(q)-f(p) :  f \in \C(\M),\ \text{\rm ess } \sup g( \nabla f, \nabla f ) \leq -1}\cdot
$$
\end{thm}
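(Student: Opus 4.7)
The plan is to mirror, in the Lorentzian setting, the three--line argument that yields inequality (\ref{rdist8}) in the Riemannian case, with three substitutions: the fundamental theorem of calculus is replaced by Lebesgue's differentiation theorem for continuous monotone functions (equation (\ref{lordist5})); Cauchy--Schwartz is replaced by its wrong--way counterpart (Proposition \ref{cs}, Corollary \ref{cscor}); and the upper bound $\norm{\nabla f}_\infty \leq 1$ is replaced by the lower bound $\mathrm{ess}\sup g(\nabla f,\nabla f)\leq -1$. The case $p\not\preceq q$ is vacuous since then $d(p,q)=0$ and causality of $f$ gives no information on the sign of $f(q)-f(p)$, so I shall tacitly work in the regime $p\preceq q$, which is the intended scope of the statement (cf. the ``work hypothesis 2'' of Section \ref{secgenform}). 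The target is then: for each admissible $f$ and each future directed causal piecewise smooth curve $\gamma:[0,1]\to\M$ joining $p$ to $q$, $f(q)-f(p) \geq l(\gamma)$; taking supremum over $\gamma$ and infimum over $f$ closes the chain.

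Fix such an $f$ and $\gamma$. The composite $f\circ\gamma:[0,1]\to\setR$ is continuous (because $f$ is) and monotone non--decreasing (because $f\in\C(\M)$ and $\gamma$ is future directed causal), hence of bounded variation. Lebesgue's theorem applied to $f\circ\gamma$ gives, exactly as in (\ref{lordist5}),
\[f(q)-f(p) \;=\; (f\circ\gamma)(1)-(f\circ\gamma)(0) \;\geq\; \int_0^1 (f\circ\gamma)'(t)\,dt.\]
Off an exceptional null set in $[0,1]$ where $(f\circ\gamma)'$ or $\dot\gamma$ fail to exist or where the chain rule breaks, we identify $(f\circ\gamma)'(t) = g_{\gamma(t)}(\nabla f_{\gamma(t)},\dot\gamma(t))$.

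I then bound the integrand from below. Since $f$ is a causal function, $df_x(v) = g_x(\nabla f_x, v)\geq 0$ for every future directed causal $v$ at each point $x$ where $\nabla f$ exists; combined with $g(\nabla f,\nabla f)\leq -1$ a.e., which forces $\nabla f$ to be timelike a.e., this orients $\nabla f$ past directed almost everywhere. The wrong--way Cauchy--Schwartz inequality, extended to causal vectors by Corollary \ref{cscor}, then gives a.e.~along $\gamma$
\[g_{\gamma(t)}(\nabla f_{\gamma(t)},\dot\gamma(t)) \;\geq\; \sqrt{-g(\nabla f,\nabla f)}\,\sqrt{-g(\dot\gamma,\dot\gamma)} \;\geq\; \sqrt{-g_{\gamma(t)}(\dot\gamma(t),\dot\gamma(t))},\]
using $\mathrm{ess}\sup g(\nabla f,\nabla f)\leq -1$ in the last step. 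Integrating yields $f(q)-f(p)\geq l(\gamma)$; supremum over $\gamma$ then infimum over $f$ completes the argument.

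The main obstacle is analytic, not geometric: justifying the a.e.~chain rule $(f\circ\gamma)'(t)=g_{\gamma(t)}(\nabla f_{\gamma(t)},\dot\gamma(t))$. The a.e.~existence of $\nabla f$ on $\M$ does not automatically pass through $\gamma^{-1}$, since the preimage of an $\M$--null set under a single curve can have positive one--dimensional Lebesgue measure. One route is to interpret $\mathrm{ess}\sup g(\nabla f,\nabla f)\leq -1$ in a Sobolev sense, so that $f$ becomes locally Lipschitz along timelike cones and hence a.e.~differentiable along all causal piecewise smooth curves; a cheaper alternative is to approximate admissible $f$ by smooth functions satisfying slightly weaker inequalities and pass to the limit using the continuity of $d(p,\cdot)$ guaranteed by global hyperbolicity (Corollary \ref{graddp3}). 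Beyond this regularity question, everything else is a direct Lorentzian transcription of the Riemannian template.
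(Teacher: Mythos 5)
Your argument is essentially the paper's own proof: the paper establishes Theorem \ref{lorinecausal} exactly by substituting the monotone/bounded-variation inequality (\ref{lordist5}) for the fundamental theorem of calculus in the proof of Theorem \ref{lorcontinuous}, applying the wrong-way Cauchy--Schwarz inequality (Proposition \ref{cs}, Corollary \ref{cscor}), and remarking, as you do, that the past orientation of $\nabla f$ is automatically implied by causality of $f$ together with the eikonal bound. Your closing discussion of the a.e.\ chain rule along $\gamma$ flags a regularity point that the paper passes over silently in (\ref{lordist5}), but it is a refinement of the same route rather than a different approach.
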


\begin{proof}
We just have to use (\ref{lordist5}) instead of (\ref{lordist0}) in the proof of the Theorem \ref{lorcontinuous}. The condition on $\nabla f$ to be past directed is redundant since it is automatically implied by the two others.
\end{proof}

Now we can wonder whether the equality case of the Theorem \ref{lorinecausal} could be reached. We will prove in the Section \ref{equalitycase} that this is always the case if the space is globally hyperbolic.\\

We have to remark that, in the general case, the existence of functions in $\C(\M)$ respecting \mbox{$\text{\rm ess } \sup g( \nabla f, \nabla f ) \leq -1$} is not necessarily guaranteed, and so the equality case has no sense (except where the distance is infinite). However, when this set is non empty (as in globally hyperbolic spacetimes), defining \mbox{$d(p,q) = \inf\set{ f(q)-f(p)} $} for $p \prec q$ makes sense. Since we need a non-symmetric function, we can just define $d(p,q)$ to be zero when there exists a function $f\in\C(\M)$ respecting \mbox{$\text{\rm ess } \sup g( \nabla f, \nabla f ) \leq -1$} such that $f(q)-f(p)<0$. So we propose the following Lorentzian distance function:

\begin{thm}\label{maindefdist}
Let $(\M,g)$ be a globally hyperbolic spacetime and $d$ the Lorentzian distance function on $\M$, then:
\begin{equation}\label{maineqdist}
d(p,q) = \inf\set{  \langle f(q)-f(p) \rangle \ :\   f \in \C(\M),\ \text{\rm ess } \sup g( \nabla f, \nabla f ) \leq -1}
\end{equation}
where $\langle \alpha \rangle = \max\set{0,\alpha}$.
\end{thm}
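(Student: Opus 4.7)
The inequality $d(p,q) \leq \inf\set{\langle f(q)-f(p)\rangle}$ is already in hand: Theorem~\ref{lorinecausal} supplies it when $p \prec q$, and when $p \npreceq q$ the claim reduces to $0 = d(p,q) \leq \langle\,\cdot\,\rangle$. The task is therefore to exhibit, for every ordered pair $(p,q)$, admissible functions whose values at $p$ and $q$ realize or approach the claimed right-hand side.

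The central construction will be the signed Lorentzian distance to a smooth Cauchy surface $C\subset\M$,
\[
f_C(z) \eq d(C,z)-d(z,C).
\]
I plan first to verify that $f_C \in \C(\M)$ by splitting any future-directed causal segment $x\preceq y$ according to its position relative to $C$ (wholly in $I^+(C)$, wholly in $I^-(C)$, or crossing $C$) and applying the wrong-way triangle inequality of Proposition~\ref{cs} in each case. For the condition $\text{\rm ess}\sup g(\nabla f_C,\nabla f_C)\leq -1$, I extend Theorem~\ref{graddp1} and Corollary~\ref{graddp3} from the point-based distance $d_p$ to $d(C,\,\cdot\,)$ and $d(\,\cdot\,,C)$: each is smooth off the corresponding cut locus of $C$ with unit timelike gradient, whose signs combine so that $\nabla f_C$ is past-directed throughout; both cut loci together with $C$ itself have Lebesgue measure zero in $\M$, so $g(\nabla f_C,\nabla f_C) = -1$ holds almost everywhere.

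For the main case $p \prec q$, I will use an approximating sequence: for each $\epsilon>0$, construct a smooth Cauchy surface $C_\epsilon$ containing $p$ and satisfying $d(C_\epsilon,q)\leq d(p,q)+\epsilon$. Begin with a small local spacelike hypersurface through $p$ orthogonal to a maximal causal geodesic joining $p$ to $q$; the Lorentzian Gauss lemma makes $p$ a local maximizer of $d(\,\cdot\,,q)$ on this hypersurface, and strong causality together with the compactness of $J^+(p)\cap J^-(q)$ (Proposition~\ref{propJJ}) confines any competing near-maximizers to an arbitrarily small neighborhood of $p$. The Bernal--S\'anchez theorem (Theorem~\ref{thgeroch}, Corollary~\ref{corgeroch}) then provides a smooth Cauchy extension whose remote portion can be taken to lie outside $J^-(q)$, so that $d(C_\epsilon,q)$ is controlled solely by the local piece near $p$. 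With $p \in C_\epsilon$ this gives $f_{C_\epsilon}(p)=0$ and $f_{C_\epsilon}(q)\leq d(p,q)+\epsilon$, driving the infimum down to $d(p,q)$.

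The remaining cases, in which $p \npreceq q$, are less delicate: if $p=q$ the claim is immediate; if $q\prec p$, any Cauchy surface $C$ through $q$ gives an admissible $f_C$ with $f_C(q)=0 \leq f_C(p)$ and hence $\langle f_C(q)-f_C(p)\rangle=0$; if $p$ and $q$ are spacelike separated, the compact acausal set $\set{p,q}$ is contained in a single smooth Cauchy surface by the acausal-compact refinement of Bernal--S\'anchez, yielding $f_C(p)=f_C(q)=0$. The principal obstacle throughout will be the local-to-global Cauchy-surface construction in the case $p\prec q$: namely, extending the orthogonal hypersurface at $p$ globally without allowing any distant point of $C_\epsilon$ to creep back into $J^-(q)$ and spoil the bound $d(C_\epsilon,q)\leq d(p,q)+\epsilon$, which is where the interplay of strong causality, compactness of causal diamonds, and smooth Cauchy foliation theory will carry the weight.
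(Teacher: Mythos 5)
Your reduction of the problem to exhibiting admissible functions is sound, and the upper bound via Theorem~\ref{lorinecausal} is indeed already available; but the equality construction you propose is precisely the route the paper sketches in Section~\ref{secgenform} and then explicitly abandons at the start of Section~\ref{equalitycase}, and the two reasons it is abandoned are exactly the two places where your plan asserts rather than proves. First, admissibility of $f_C = d(C,\cdot)-d(\cdot,C)$ (needed in \emph{every} one of your cases) requires that $d(C,\cdot)$ and $d(\cdot,C)$ be a.e.\ differentiable with unit timelike gradient off a cut locus of $C$ of measure zero. You claim this by ``extending'' Theorem~\ref{graddp1} and Corollary~\ref{graddp3}, but Corollary~\ref{graddp3} rests on Theorem~\ref{graddp2} (Moretti), which is proved only for the point-based distance $d_p$; no analogue for the cut locus of a (generally non-compact) Cauchy surface is provided by the cited literature, and the paper states expressly that ``we have no information about the measure of this cut locus.'' This is not a cosmetic omission: it is the stated reason the actual proof builds its admissible functions exclusively out of point distances, as locally finite sums $\sum_{p\in P'} d_p$ (Lemma~\ref{lemmaprincipal}) plus a single $d_{p'}$, so that Corollary~\ref{graddp3} can be invoked term by term.

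Second, the surface $C_\epsilon$ through $p$ with $d(C_\epsilon,q)\le d(p,q)+\epsilon$ is not delivered by the tools you cite. The Bernal--S\'anchez results (Theorem~\ref{thgeroch}, Corollary~\ref{corgeroch}, and the acausal-compact extension of \cite{BS06}) extend a compact acausal spacelike piece to a smooth Cauchy surface, but give no control on where the remote part of that surface lies; you need it to avoid not merely $J^-(q)$ away from $p$, but the whole open past set $\set{z : d(z,q)>d(p,q)+\epsilon}$, because any point of the surface lying deep inside $I^-(q)$ has distance to $q$ exceeding --- possibly vastly --- $d(p,q)$, by the wrong-way triangle inequality. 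You yourself flag this local-to-global step as ``the principal obstacle,'' but no construction is given, and it is genuinely where the proposal currently fails. It is instructive to compare with the published proof, which sidesteps both difficulties: it never needs a surface through $p$ with distance control at all; it takes the Cauchy surface $S$ through $q$, uses $f_1-f_2$ (locally finite sums of $d_p$'s) only to enforce the eikonal bound globally while vanishing on $J^+(p)\cap J^-(q)$, lets the single term $d_{p'}$ with $p'\in I^-(p)$ carry all the distance information, and obtains the $\epsilon$-approximation from continuity of $d$ as $p'$ approaches $p$ (Proposition~\ref{equality}). To salvage your route you would have to prove the measure-zero cut-locus statement for Cauchy surfaces and the existence of the controlled surface $C_\epsilon$; otherwise the sum-of-point-distances mechanism is the safer path.
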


The proof of this theorem will be done in the Section \ref{equalitycase} by showing that the equality case of the Theorem \ref{lorinecausal} could be reached (or at least indefinitely approached). This theorem is also valid if one replaces the set of causal functions $\C(\M)$ by any set of absolute continuous functions containing the Lorentzian distance $d_p$ and re-enters the condition on $\nabla f$ to be past directed. However we have not at this time the proof that the Lorentzian distance function is absolute continuous, so we will favor the formulation in terms of causal functions.\\

We can check that the formula (\ref{maineqdist}) respects all the properties of a Lorentzian distance function:

\begin{itemize}
\item $d(p,p) = 0$ is trivial
\item $d(p,q) \geq 0$ for all $p,q\in\M$ is also trivial
\item $d(p,q) > 0 \implies d(q,p) = 0$ comes from the use of the same functions $f$ and from the non-symmetricity of  \mbox{$ \langle f(q)-f(p) \rangle $}
\item The wrong way triangle inequality is valid since for all $p,q,r$ such that $d(p,q)>0$ and $d(q,r)>0$ we have
\begin{eqnarray*}
d(p,r) &=& \inf_{f\in\C(\M)}\set{   \langle f(r)-f(p) \rangle  \ : \text{\rm ess } \sup g( \nabla f, \nabla f ) \leq -1 }\\
&=& \inf_{f\in\C(\M)}\{   \langle f(r)-f(q) \rangle + \langle f(q)-f(p) \rangle  \ :\\
&&\qquad\qquad\qquad\qquad\qquad  \text{\rm ess } \sup g( \nabla f, \nabla f ) \leq -1 \}\\
&&\quad(\text{since }f(r)-f(q) > 0 \text{ and } f(q)-f(p) > 0)\\
&\geq& \inf_{f\in\C(\M)}\set{   \langle f(r)-f(q) \rangle  \ : \text{\rm ess } \sup g( \nabla f, \nabla f ) \leq -1 }\\
&&+\ \inf_{f\in\C(\M)}\set{   \langle f(q)-f(p) \rangle  \ : \text{\rm ess } \sup g( \nabla f, \nabla f ) \leq -1 }\\
&=& d(p,q) + d(q,r)
\end{eqnarray*}
\end{itemize}

Our formula (\ref{maineqdist}) is  on some points similar to the Lorentzian distance function proposed by V. Moretti in \cite{Mor}. Indeed, V. Moretti has proved the following distance formula:
\begin{equation}\label{morfunction}
d(p,q) = \inf\set{  \langle f(q)-f(p) \rangle \ :\   f \in \C(\bar I) ,\ p,q\in\bar I,\  \norm{[f,[f,\frac {\Delta}{2}]]^{-1}}_I \leq 1}
\end{equation}
where $\Delta = \nabla_\mu \nabla^\mu$ is the Laplace--Beltrami--d'Alembert operator, where $I$ belongs to a family of open, causally convex regions of $\M$ and where the set of causal functions $\C(\bar I)$ and the operator norm $\norm{\,\cdot\,}_I$ are restricted to the region $I$. This approach is one step further since there is an operatorial formulation for the condition \mbox{$\text{\rm ess } \sup\set{ g_x( \nabla f_x, \nabla f_x ) : x\in\bar I} \leq -1$}.\\

However, the formula (\ref{morfunction}) is entirely based on local conditions on the causal regions $I$. From our point of view, these local considerations should be avoided if we want to generalize such formula to noncommutative spaces, since the concept of locality is a priori absent in algebraic theories. Our formula (\ref{maineqdist}) has the advantage to be  based only on global constraints instead of local ones, so it should present a better starting point for further generalization.\\

The next step of generalization of the function (\ref{maineqdist}) should be to find an operatorial formulation of the condition \mbox{$\text{\rm ess } \sup g( \nabla f, \nabla f ) \leq -1$}. The operatorial formulation from (\ref{morfunction}) cannot be directly transposed to (\ref{maineqdist})  since it relies on the existence of some local spaces \mbox{$L^2(\bar I)$} on which the functions in $\C(\bar I)$ act as multiplicative (bounded) operators. While dealing with the set of global causal functions $\C(\M)$, those functions do not belong to $L^2(\M)$ since there are unbounded for most of them and only locally integrable, so the usual norm is not available. The way to define a norm on the set of causal functions $\C(\M)$ will be approached in the Section \ref{seccaus}. \\

\subsection{Construction of the equality case}\label{equalitycase}

To conclude our section on the Lorentzian distance function, we must give a proof of the equality case of the Theorem \ref{maindefdist}. We will work with a globally hyperbolic spacetime $(\M,g)$ (necessarily time-orientable).\\

In the Section \ref{secgenform} we have suggested that, for $p \prec q$, the equality could be reached by a function of the form $ f(z) = d(z,C) - d(C,z)$ where  $C$ is a  Cauchy surface containing $p$ such that  \mbox{$d(C,q) = d(p,q)$} where $d(C,z) = \sup_{t\in C} \ d(t,z)$ denotes the distance between the Cauchy surface $C$ and a point $z$. For the case where $p$ and $q$ are not causally connected, $C$ should just be a Cauchy surface containing both $p$ and $q$.  \\

However, if such surfaces can easily be built for simple Lorentzian spaces (as e.g.~Minkowski space) we do not have the guarantee of their existence for every globally hyperbolic spacetimes. Moreover, the gradient of $ f(z) = d(z,C) - d(C,z)$ will be ill-defined on the cut locus of $C$ and we have no information about the measure of this cut locus (see \cite{EGK} for the study of the behaviour of the distance function related to a Cauchy surface and for the definition of the cut locus relative to such surface).\\

So the equality case we present here will be entirely based on the usual distance function $d_p$ which we already know some useful informations about its gradient (Theorem \ref{graddp1}, Theorem \ref{graddp2} and Corollary~\ref{graddp3}). We will only work with the set $\C(\M)$ but the whole construction will be valid for any set of absolute continuous functions containing the Lorentzian distance $d_p$, if one can prove the belonging of the function $d_p$ to such set. The belonging of the function $d_p$ to the set of causal functions $\C(\M)$ is obvious.\\

The proof will be divided in three cases, corresponding to the three following possible relations between the points $p$ and $q$:
\begin{itemize}\itemsep=0pt
\item $p \prec q$, where $p$ and $q$ are causally related, with $q$ in the future of $p$
\item $q \succ p$, where $p$ and $q$ are causally related, with $q$ in the past of $p$
\item $p \nprec q$ and $q \nprec p$, where $p$ and $q$ are not causally related
\end{itemize}

\vspace{1em}                                                      

In particular, with these three results, the proof of the Theorem \ref{maindefdist} will be completed: 
\begin{itemize}\itemsep=0pt
\item  (Proposition \ref{equality}) If $p \prec q$, then there exists a sequence of functions \mbox{$f_\epsilon \in\C(\M)$} ($\epsilon > 0$) respecting \mbox{$\text{\rm ess } \sup g( \nabla f, \nabla f ) \leq -1$} and such that \mbox{$d(p,q) \leq f_\epsilon(q) -f_\epsilon(p) < d(p,q) + \epsilon$}.
\item (Corollary \ref{equalityrev}) If $p \succ q$, then there exists a function $f \in\C(\M)$ respecting \mbox{$\text{\rm ess } \sup g( \nabla f, \nabla f ) \leq -1$} and such that \mbox{$f(q)-f(p) \leq 0$}.
\item (Proposition \ref{zero}) If $p \nprec q$ and $q \nprec p$, then there exists a sequence of functions $f_\epsilon \in\C(\M)$ ($\epsilon > 0$) respecting \mbox{$\text{\rm ess } \sup g( \nabla f, \nabla f ) \leq -1$} and such that $\abs{ f_\epsilon(q) -f_\epsilon(p) } < \epsilon$.
\end{itemize}

\vspace{1em}                                                      

The basic idea of our construction is to create some functions with a different behaviour in two regions:
\begin{itemize}\itemsep=0pt
\item The first region is a region containing the points $p$ and $q$ where these functions correspond to simple suitable distance functions.
\item The second region is the remaining of the manifold where these functions are locally finite sums of distance functions in order to have the eikonal condition respected in the whole space.
\end{itemize}

We begin the proof by an important lemma, which will give the way to make the construction of the second region.\\


\begin{lemma}\label{lemmaprincipal}
Let $S$ be a smooth spacelike Cauchy surface and two points $q \in J^+(S)$ $($resp.\ $q \in J^-(S))$ and $q' \in I^+(q)$ $($resp.\ $q' \in I^-(q))$. There exists a~function $f \in \C(\M)$ $($resp.\ $-f \in \C(\M))$ such that:
\begin{itemize}
\item $f \geq 0 $
\item $g( \nabla f, \nabla f ) \leq -1$ and $\nabla f$ is past directed (resp.\ future directed) where $f > 0$, except on a set of measure zero 
\item $f > 0$ on $J^+(S) \setminus I^-(q')$ $($resp.\ on $J^-(S) \setminus I^+(q'))$
\item $f = 0$ on $J^-(q)$ $($resp.\ on $J^+(q))$\\
\end{itemize}
\end{lemma}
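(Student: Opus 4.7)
My plan is to construct $f$ as a locally finite pointwise maximum of Lorentzian distance functions $d_{p_i}$ based at carefully chosen points $p_i \notin J^-(q)$. By the symmetry of the statement under time reversal, I would only treat the first case (with $q \in J^+(S)$ and $q' \in I^+(q)$); the other case follows by applying the result to the time-reversed spacetime.

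The first key geometric observation is that $J^+(S) \cap J^-(q) \subset I^-(q')$, because $w \preceq q \precc q'$ implies $w \precc q'$. Hence any $z \in J^+(S) \setminus I^-(q')$ lies in the complement of the closed set $J^-(q)$, and by the openness of chronological pasts one finds $p \in I^-(z) \setminus J^-(q)$. The second crucial property is that $p_i \notin J^-(q)$ together with the chronology of globally hyperbolic spacetimes forces $I^+(p_i) \cap J^-(q) = \emptyset$: otherwise $p_i \precc w \preceq q$ would give $p_i \in I^-(q) \subset J^-(q)$. Consequently each $d_{p_i}$ vanishes identically on $J^-(q)$.

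To build a locally finite family, I would fix a Cauchy temporal function $\tau$ (Corollary \ref{corgeroch}) and pick a sufficiently fine strictly increasing sequence $\tau_n$ above $\tau(S)$. On each slice $S_{\tau_n}$, select a discrete, locally finite subset $\{p_{n,j}\}_j \subset S_{\tau_n} \setminus J^-(q)$ such that for every $z \in J^+(S) \setminus I^-(q')$ with $\tau_n \leq \tau(z) \leq \tau_{n+1}$, some $p_{n,j}$ lies in $I^-(z)$; this is arranged by choosing the $p_{n,j}$ densely enough relative to the spatial scale and taking the slices close enough in time so that $I^-(z) \cap S_{\tau_n}$ is large enough to contain one. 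Setting
\begin{equation*}
 f(z) \;:=\; \sup_{n,j} d_{p_{n,j}}(z),
\end{equation*}
and using the compactness of $I^-(z) \cap S_{\tau_n}$ (Proposition \ref{propJJ}) together with the discreteness of $\{p_{n,j}\}_j$, the supremum reduces locally to a maximum of finitely many terms, so $f$ is continuous and finite everywhere. It is non-negative, vanishes on $J^-(q)$, is positive on $J^+(S)\setminus I^-(q')$ by the covering property, and is causal as a locally finite maximum of causal functions.

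The eikonal inequality is where the main obstacle resides. By Corollary \ref{graddp3}, each $d_{p_{n,j}}$ is a.e.\ differentiable on $\{d_{p_{n,j}}>0\}$ with past directed timelike gradient of norm one. At every point $z$ where a unique $d_{p_{n,j}}$ realizes the maximum on some neighborhood, $\nabla f = \nabla d_{p_{n,j}}$ and the desired condition holds. The delicate set is $E$, consisting of points where at least two $d_{p_{n,j}}$ tie at the maximum, which is contained in a locally finite union of closed coincidence sets $\{d_{p_{n,j}} = d_{p_{n',j'}}\}$. Establishing that $E$ has Lebesgue measure zero is the most technical step: it combines the measure-zero property of the cut loci and of $J^+(p)\setminus I^+(p)$ (Theorem \ref{graddp2}), the smoothness of each distance function off these loci (Theorem \ref{graddp1}), and an implicit-function-theorem argument for the coincidence sets of two smooth functions whose gradients are both unit past directed timelike but generically distinct. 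A careful countable union argument then yields the a.e.\ eikonal inequality, completing the construction of $f$.
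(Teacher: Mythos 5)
Your construction has two genuine problems. The first is the placement of the seed points. You put the $p_{n,j}$ on slices $S_{\tau_n}$ lying to the future of $S$, but a time function is strictly increasing along causal curves, so no such seed can belong to $I^-(z)$ for $z\in S$ or for $z$ in the slab between $S$ and the first slice; hence $f$ vanishes identically on $S\setminus I^-(q')$, contradicting the required positivity of $f$ on $J^+(S)\setminus I^-(q')$. Moreover the per-band covering property you ask for is unachievable as stated: for $z$ with $\tau(z)$ slightly above $\tau_n$, the set $I^-(z)\cap S_{\tau_n}$ is an arbitrarily small (eventually empty) neighbourhood in the slice, so it will miss any fixed locally finite discrete subset $\set{p_{n,j}}_j\subset S_{\tau_n}$. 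The seeds must sit strictly in the chronological past of the region they are supposed to cover; this is exactly how the paper proceeds, taking points $p\in I^-(S)\setminus J^-(q)$ with $d_R(I^+(p)\cap S)<1$ and extracting, via a compact exhaustion of $S$, a locally finite family whose chronological futures cover $J^+(S)\setminus I^-(q')$.

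The second problem is the eikonal inequality for your supremum. You correctly isolate the tie set as the difficulty, but the tool you invoke does not close it: the implicit function theorem controls the coincidence set $\set{d_{p_1}=d_{p_2}}$ only where $\nabla d_{p_1}\neq\nabla d_{p_2}$, and you must separately rule out a positive-measure set on which both the values and the gradients agree -- precisely where the IFT is silent. This can be repaired (off the cut loci, $\nabla d_p(z)$ is minus the unit tangent at $z$ of the unique maximal geodesic from $p$ to $z$ and $d_p(z)$ is its length, so equal value and equal gradient force $p_1=p_2$ by uniqueness of geodesics), but you neither state nor prove this, and it is the crux of your route. Note also that the paper sidesteps the whole issue by taking the locally finite \emph{sum} $f=\sum_{p\in P'}d_p$ rather than the maximum: wherever the (a.e.\ defined) gradients exist, $g(\nabla f,\nabla f)=\sum_p g(\nabla d_p,\nabla d_p)+\sum_{p\neq p'}g(\nabla d_p,\nabla d_{p'})\leq -1$, since all gradients are causal past directed (cross terms $\leq 0$) and at least one diagonal term equals $-1$ where $f>0$; only countable additivity of null sets is needed, with no coincidence-set analysis at all. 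So your maximum-based variant is salvageable in principle, but as written it has a real gap there, in addition to the seed-placement error.
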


\begin{proof}
To make the understanding of the proof easier, we will make some illustrations of the form of a $2$-dimensional slice of a flat manifold, as in the Figure \ref{L1}. We will leave the general $n$-dimensional curved case to the imagination of the reader.
\begin{figure}[!ht]
\vspace{1em}                                                      
\begin{center}
\includegraphics[width=7cm]{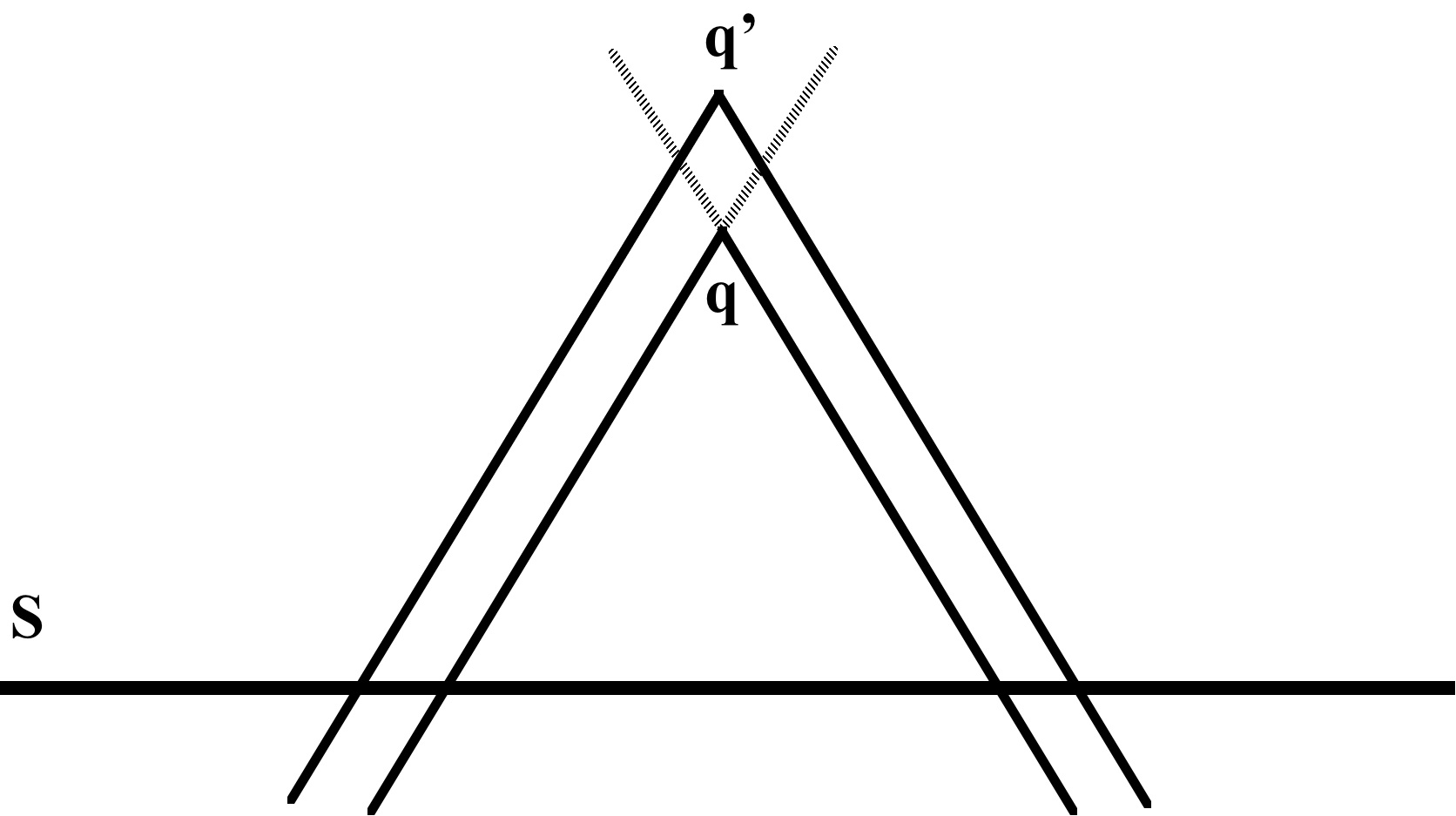}
\end{center}
\caption{}\label{L1}
\vspace{1em}                                                      
\end{figure}

At first, let us notice that $J^-(q) \subset  I^-(q')$. Indeed, $q \precc q'$, so $I^-(q) \subset  I^-(q')$. We have that $J^-(q) = \overline{I^-(q) }\subset  I^-(q')$ by considering that $I^-(q') = \set{ p\in \M : d(p,q') > 0}$ (Corollary \ref{Idpq})  and that for every point $ z\in J^-(q)$, $d(z,q') \geq d(z,q) + d(q,q') > 0$.

\vspace{1em}                                                      

On the smooth spacelike surface $S$ we can consider the Riemannian metric $g_R$ which is the restriction of the global metric $g$ to $S$, with a Riemannian distance $d_R$ and a topology associated. If we take a point $p \in I^-(S)$, the intersection $I^+(p) \cap S$ is an open subset of $S$ (in the topology of $S$) with a finite diameter $d_R(I^+(p) \cap S) < \infty $  because \mbox{$J^+(p) \cap J^-(S)$} is compact (Proposition \ref{propJJ}). We will work with points $p$ closed to the surface $S$ such that $d_R(I^+(p) \cap S)$ is small. Let us define (Figure \ref{L2}):
$$
P = \set{ p \in I^-(S) \setminus J^-(q) : d_R(I^+(p) \cap S) < 1}.
$$
\begin{figure}[!ht]
\vspace{1em}                                                      
\begin{center}
\includegraphics[width=7cm]{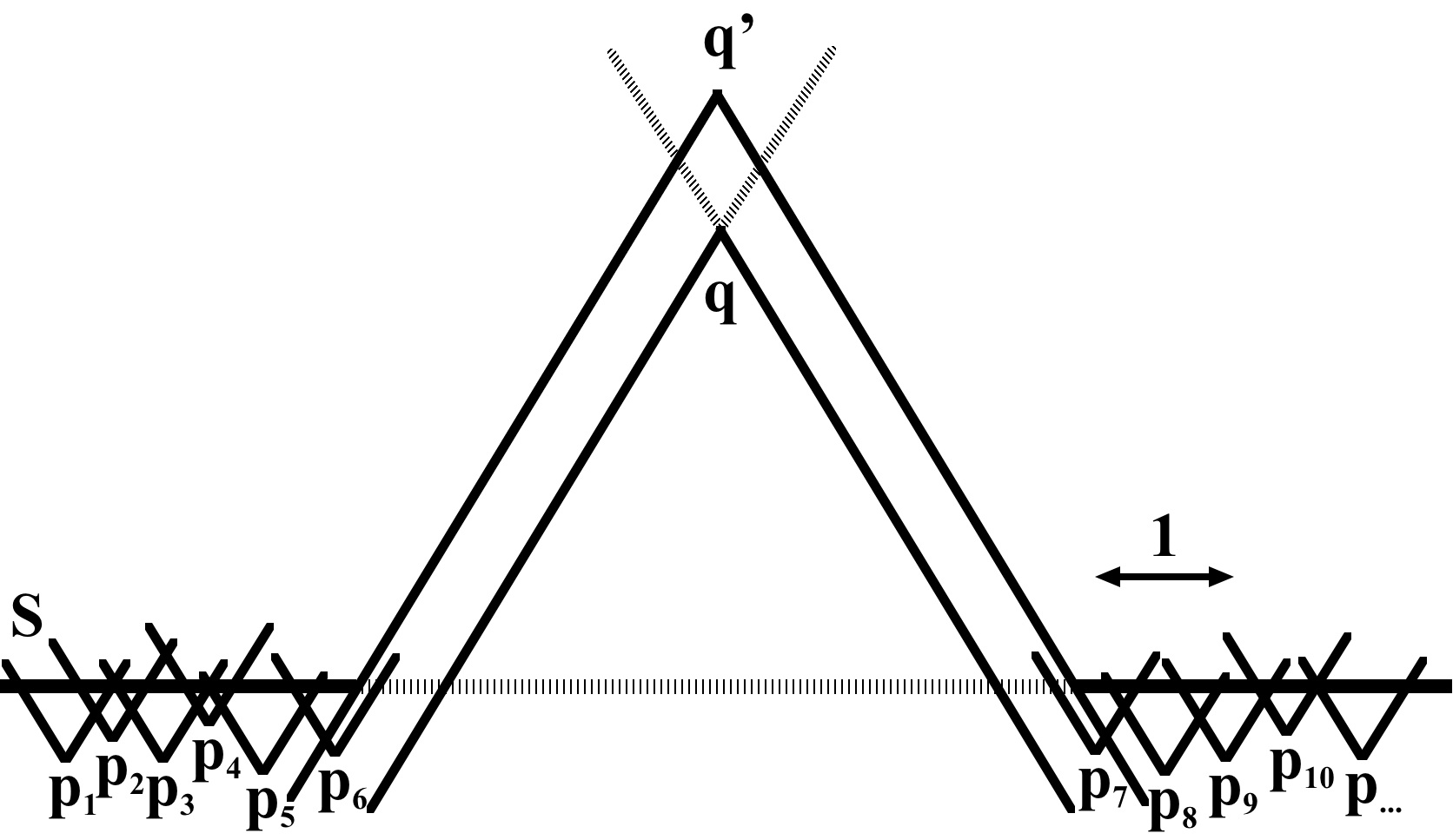}
\end{center}
\caption{}\label{L2}
\vspace{1em}                                                      
\end{figure}

Then the collection
$$
W = \set{ I^+(p) \cap S : p \in P}
$$
is an open covering of the closed set $S \setminus I^-(q')$ (because we have \mbox{$I^-(q') \supset J^-(q)$}).

We will show that there exists a locally finite subcovering of $W$ by a method similar to one used in \cite{BS03}. Let us fix $s \in S$ and let us consider the open and closed balls $B_s(r)$ and $\bar B_s(r)$ in~$S$ of center $s$ and radius $r$ for the distance~$d_R$. The following subsets are compact in $S$:
$$
S_n = \bar B_s(n) \setminus \parenthese{ B_s(n-1) \cup I^-(q')} \quad n \in \setN.
$$
Their union is a compact covering of $S$, except for the intersection of $S$ and $I^-(q')$ (Figure \ref{L3}):
$$
\bigcup_{n\in \setN} S_n = S \setminus I^-(q').
$$
\begin{figure}[!ht]
\vspace{1em}                                                      
\begin{center}
\includegraphics[width=5cm]{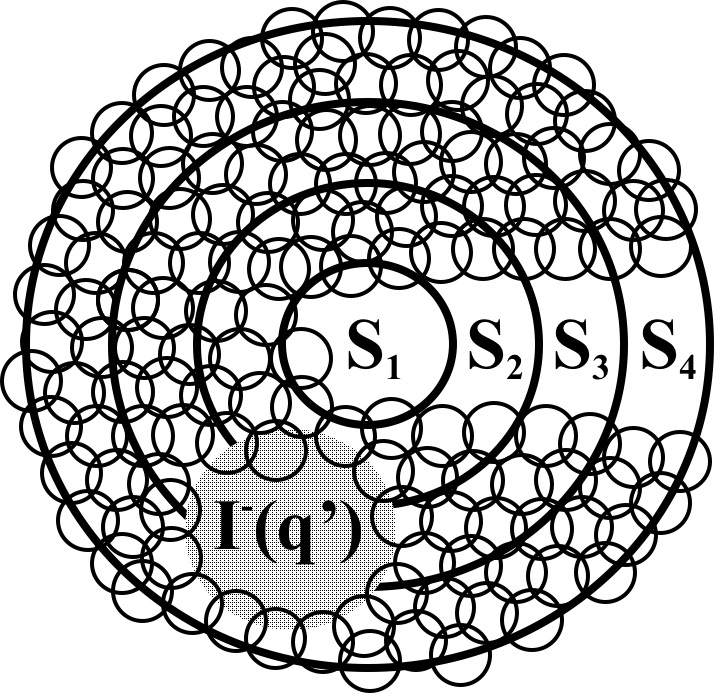}
\end{center}
\caption{}\label{L3}
\vspace{1em}                                                      
\end{figure}

For each $S_n$ we can find a finite subset $\set{W_{1n}, \dots, W_{k_n n}} \subset W$ which covers $S_n$ since the subsets $S_n$ are compact. Then
$$
W' = \set{W_{kn} : n\in \setN , k = 1, \dots, k_n}
$$
 is a locally finite subcovering of $S \setminus I^-(q')$ because every $W_{kn}$ has diameter smaller than $1$ (i.e.~every $W_{kn}$ intersects at most two subsets $S_n$). This shows that we can extract a subset of points $P' \subset P$ such that $W' = \set{ I^+(p) \cap S : p \in P'}$ is locally finite.

From that, we can show that $\set{I^+(p) : p \in P'}$ is a locally finite covering of  $J^+(S) \setminus I^-(q')$. Indeed, let us take a point $z$ in $J^+(S) \setminus I^-(q')$. As a consequence of the Proposition \ref{propJJ}, the set $I^-(z) \cap S$ is an open set of finite diameter in $S$, and so it intersects only a finite number of $W_{kn}$ (and must intersect at least one). Hence $I^-(z)$ contains a non empty but finite subset of $P'$. The same reasoning can be done for any small neighbourhood of $z$.

\vspace{2em}                                                      

Now we can construct the non-negative following function:\vspace{1em}                                                      
$$
f(z) = \sum_{p \in P'} d(p,z) = \sum_{p \in P'} d_p(z).\vspace{1em}                                                      
$$

This function is well defined because the sum is pointwise finite and it is continuous by continuity of the distance function. $f$ is null on $J^-(q)$ because no $p \in P'$ belongs to $J^-(q)$ and it is positive on \mbox{$J^+(S) \setminus I^-(q')$} because every point in $J^+(S) \setminus I^-(q')$ is  inside the chronological future of at least one $p \in P'$. For every $z \in \M$ we can find a neighbourhood where $f$ is  a finite sum $\sum d_p$ of distance functions, so $f\in\C(\M)$.

The following is a direct consequence of the Corollary~\ref{graddp3}. Because $\nabla d_p$ is well defined except on a set of measure zero, by countability of the measure, the locally finite sum $\nabla f = \sum \nabla d_p$ is well defined except on a~set of measure zero. Then where $\nabla f$ is well defined and $f$ positive, we have that $\nabla f$ is timelike past directed (because it is the sum of null or timelike past directed vectors), and\vspace{1em}                                                      
$$
g( \nabla f, \nabla f ) =  \sum_{p\in P'} g( \nabla d_p, \nabla d_p ) +  \sum_{p,p'\in P' \atop p \neq p'} g( \nabla d_p, \nabla d_{p'} ) \leq -1,\vspace{1em}                                                      
$$
where the first sum contains terms equal to $-1$ or $0$, with at least one term equal to $-1$, and the second sum contains terms negative or null because all $\nabla d_p$ are null or timelike past directed (so they have the same orientation).

In the reverse case where $q \in J^-(S)$ and $q' \in I^-(q)$, we can do an identical proof by reversing future and past sets and by taking \mbox{$f(z) = \sum_{p \in P'} d(z,p)$} as a function with null or timelike future oriented gradient which is non-increasing along every causal future-directed curve.\\
\end{proof}

\vspace{\fill}

\newpage\vspace{1em}                                                      

\begin{cor}\label{corollaryprincipal}
Let $S$ be a smooth spacelike Cauchy surface and four points $q_1,q_2 \in J^+(S)$ and $q_1' \in I^+(q_1)$, $q_2' \in I^+(q_2)$ $($resp.\ $q_1,q_2 \in J^-(S)$, $q_1' \in I^-(q_1)$, $q_2' \in I^-(q_2))$. There exists a~function $f \in \C(\M)$ $($resp. $-f \in \C(\M))$ such that:
\begin{itemize}
\item $f \geq 0 $
\item $g( \nabla f, \nabla f ) \leq -1$ and $\nabla f$ is past directed $($resp.\ future directed$)$ where $f > 0$, except on a~set of measure zero
\item $f > 0$ on $J^+(S) \setminus \parenthese{I^-(q_1') \cup I^-(q_2')}$ \vspace{-0.2cm}\begin{flushright}$($resp.\ on \mbox{$J^-(S) \setminus \parenthese{I^+(q_1') \cup I^+(q_2')})$}\end{flushright} \vspace{-0.2cm}
\item $f = 0$ on $J^-(q_1) \cup J^-(q_2)$ $($resp.\ on $J^+(q_1) \cup J^+(q_2))$\\
\end{itemize}
\end{cor}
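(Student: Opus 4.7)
The plan is to essentially repeat the construction from the proof of Lemma \ref{lemmaprincipal} with the obvious generalization from one point to two points, since nothing in that argument really relied on having a single ``forbidden'' past region. The only fact to verify before starting is the analogue of the inclusion $J^-(q) \subset I^-(q')$, namely $J^-(q_1) \cup J^-(q_2) \subset I^-(q_1') \cup I^-(q_2')$. This follows by applying the same argument pointwise: for any $z \in J^-(q_i)$ we have $d(z,q_i') \geq d(z,q_i) + d(q_i,q_i') > 0$ because $q_i \precc q_i'$, hence $z \in I^-(q_i')$ by Corollary \ref{Idpq}.

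I would then define the candidate set of ``source points'' as
\[
P = \set{ p \in I^-(S) \setminus \prt{J^-(q_1) \cup J^-(q_2)} : d_R(I^+(p) \cap S) < 1 },
\]
and consider the open cover $W = \set{I^+(p) \cap S : p \in P}$ of the closed subset $S \setminus \prt{I^-(q_1') \cup I^-(q_2')}$ of $S$. This set is indeed covered by $W$ by the inclusion just established. The paracompactness argument in the original proof -- building the compact pieces $S_n = \bar B_s(n) \setminus (B_s(n-1) \cup \prt{I^-(q_1') \cup I^-(q_2')})$ using the Riemannian distance $d_R$ induced on $S$ by $g$, and extracting a finite subcover of each -- goes through unchanged, yielding a locally finite subfamily $P' \subset P$ such that $\set{I^+(p) \cap S : p \in P'}$ is a locally finite cover of $S \setminus (I^-(q_1') \cup I^-(q_2'))$, and hence $\set{I^+(p) : p \in P'}$ is a locally finite cover of $J^+(S) \setminus (I^-(q_1') \cup I^-(q_2'))$ by Proposition~\ref{propJJ}.

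Setting $f(z) = \sum_{p \in P'} d(p,z)$, the four listed properties follow exactly as in Lemma \ref{lemmaprincipal}: pointwise finiteness and continuity come from local finiteness of $P'$; nullity on $J^-(q_1) \cup J^-(q_2)$ holds because no $p \in P' \subset P$ lies in this set; positivity on $J^+(S) \setminus (I^-(q_1') \cup I^-(q_2'))$ holds because each such point is in $I^+(p)$ for at least one $p \in P'$; and the eikonal inequality $g(\nabla f, \nabla f) \leq -1$ together with past-directedness of $\nabla f$ comes from Corollary~\ref{graddp3} plus the fact that a locally finite sum of past-directed causal gradients is past-directed timelike, with cross terms non-positive and at least one diagonal term equal to $-1$. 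Membership $f \in \C(\M)$ follows since $f$ is locally a finite sum of elements of $\C(\M)$. The reverse case where $q_1,q_2 \in J^-(S)$ and $q_i' \in I^-(q_i)$ is handled by the time-reversed construction $f(z) = \sum_{p \in P'} d(z,p)$, exactly as at the end of the previous proof.

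I do not expect any genuine obstacle: the only place that could conceivably fail is the paracompactness extraction, and it relies only on the fact that $I^-(q_1') \cup I^-(q_2')$ is open in $\M$, so $S \setminus (I^-(q_1') \cup I^-(q_2'))$ is closed in $S$ and the compact exhaustion argument applies verbatim. Consequently, the corollary is essentially a formal consequence of the method of proof of Lemma~\ref{lemmaprincipal}, not of its statement.
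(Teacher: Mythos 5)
Your proposal is correct and follows exactly the paper's own argument: the paper proves the corollary by rerunning the proof of Lemma \ref{lemmaprincipal} starting from the same set $P = \set{ p \in I^-(S) \setminus \parenthese{J^-(q_1) \cup J^-(q_2)} : d_R(I^+(p) \cap S) < 1}$, using the inclusion $J^-(q_1)\cup J^-(q_2) \subset I^-(q_1')\cup I^-(q_2')$ just as you do. Your verification of that inclusion and of the unchanged local-finiteness and gradient arguments supplies precisely the details the paper leaves implicit.
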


\begin{proof}
(Figure \ref{L4}) The proof is identical to the Lemma \ref{lemmaprincipal} except that we start with the set 
$$
P = \set{ p \in I^-(S) \setminus \parenthese{J^-(q_1) \cup J^-(q_2)} : d_R(I^+(p) \cap S) < 1}
$$
to create a locally finite covering of $J^+(S) \setminus \parenthese{ I^-(q_1') \cup I^-(q_2') }$.\\
\begin{figure}[!ht]
\begin{center}
\includegraphics[width=9cm]{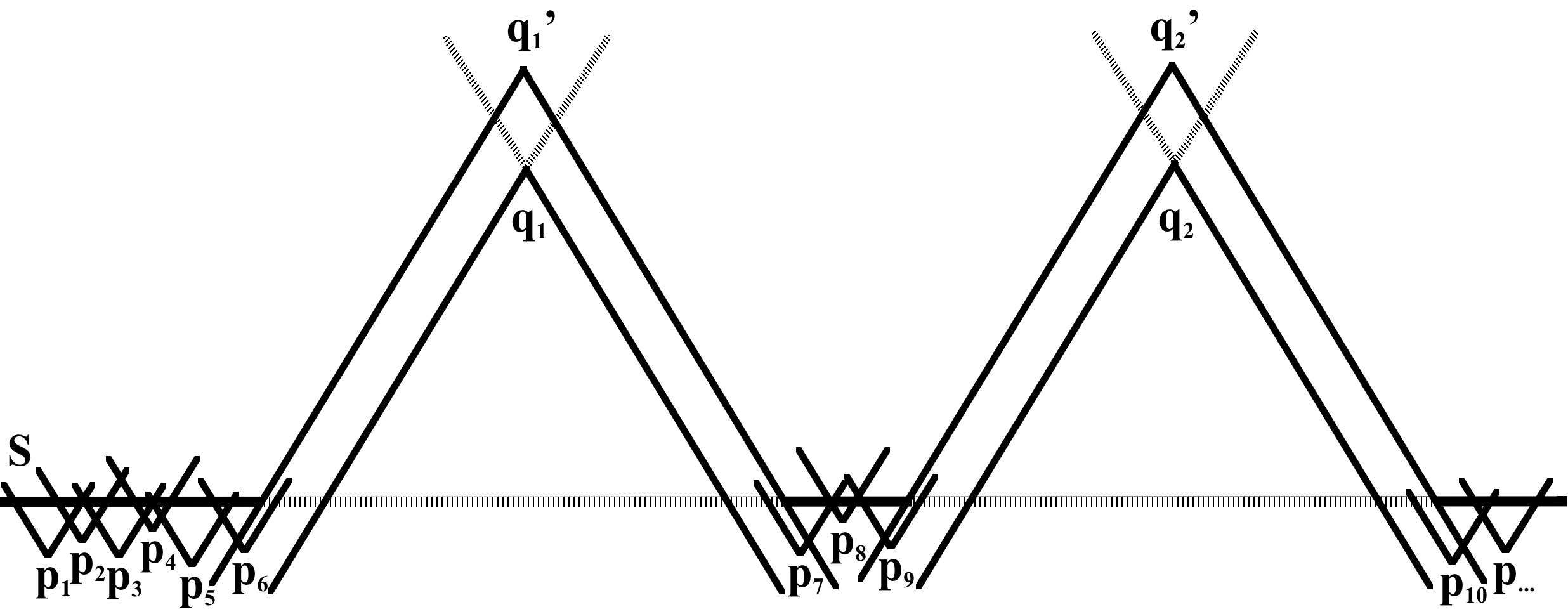}
\end{center}
\caption{}\label{L4}
\end{figure}
\end{proof}

\vspace{1em}                                                      

Now by the use of the Lemma \ref{lemmaprincipal} we can construct our different functions respecting the eikonal inequality condition on the whole manifold $\M$.\\


\begin{prop}\label{equality}
If $p \prec q$, then $\forall\, \epsilon > 0$ there exists a function \mbox{$f\in \C(\M)$} such that:
\begin{itemize}
\item $\text{\rm ess} \sup g( \nabla f, \nabla f ) \leq -1$
\item $\nabla f$ is past directed
\item $f(q)-f(p) \geq 0$
\item $\abs{(f(q)-f(p)) - d(p,q)} < \epsilon$
\end{itemize}
\end{prop}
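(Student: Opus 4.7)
The plan is to construct $f$ as a sum $f = d_p + h_+ - h_-$, where $d_p(\,\cdot\,) = d(p,\,\cdot\,)$ contributes the desired value jump $f(q) - f(p) \approx d(p,q)$ in the chronological future of $p$, and $h_+, h_-$ are auxiliary correction terms built from Lemma \ref{lemmaprincipal} and Corollary \ref{corollaryprincipal} that vanish near $\{p,q\}$ and supply past-directed causal gradient with eikonal inequality on the region where $\nabla d_p = 0$. The reason some $\epsilon$ tolerance is needed, rather than obtaining equality, is that the auxiliary regions where the correction terms vanish cannot be made to coincide exactly with $J^+(p)$, and we will allow the corrections to perturb $f(p), f(q)$ by at most $\epsilon/2$ each.

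First I would invoke Corollary \ref{corgeroch} to pick a smooth spacelike Cauchy surface $S$ containing $p$ (or slightly to its past, to leave room for correction). Then I would pick auxiliary points $p' \in I^+(p)$, $q' \in I^+(q)$, $r \in I^-(p)$, whose closeness to $p, q$ will be governed by $\epsilon$. Applying Corollary \ref{corollaryprincipal} with $q_1 = p$, $q_2 = q$, $q_1' = p'$, $q_2' = q'$ produces $h_+ \in \C(\M)$ with $h_+ \geq 0$, vanishing on $J^-(q) \supset J^-(p)$ (hence $h_+(p) = h_+(q) = 0$), positive on $J^+(S) \setminus (I^-(p') \cup I^-(q'))$, with past-directed gradient and $g(\nabla h_+, \nabla h_+) \leq -1$ a.e.\ where $h_+ > 0$. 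Dually, applying the reverse-time case of Lemma \ref{lemmaprincipal} with $q = p$, $q' = r$ produces $h_- \geq 0$, $-h_- \in \C(\M)$, vanishing on $J^+(p) \supset \{p,q\}$, positive on $J^-(S) \setminus I^+(r)$, with future-directed gradient of eikonal norm where positive. Setting $f = d_p + h_+ - h_-$, the function lies in $\C(\M)$ (the Lorentzian distance $d_p$ is a causal function by the wrong-way triangle inequality, $h_+ \in \C(\M)$, and $-h_- \in \C(\M)$), and $f(p) = 0$, $f(q) = d(p,q)$, so $f(q) - f(p) = d(p,q) \geq 0$ meeting the last two bullets.

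Eikonal verification proceeds region-wise. On $\{h_+ > 0\}$, $\nabla f = \nabla d_p + \nabla h_+$; both summands are past-directed causal (the first by Theorem \ref{graddp1} where $\nabla d_p \neq 0$, zero otherwise), so by same-orientation inequalities we obtain $g(\nabla f, \nabla f) \leq g(\nabla h_+, \nabla h_+) \leq -1$. On $\{h_- > 0\}$ we have $z \in J^-(S) \setminus I^+(r)$, hence $z \notin J^+(p)$ so $\nabla d_p = 0$, giving $\nabla f = -\nabla h_-$ which is past-directed with $g$-norm-squared $\leq -1$. On $(I^+(p)\setminus C^+(p)) \cap \{h_+ = h_- = 0\}$, $\nabla f = \nabla d_p$ satisfies eikonal with equality by Theorem \ref{graddp1}. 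Past-directedness of $\nabla f$ is preserved throughout.

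The main obstacle is the \emph{lateral region} $L \subset J^+(S) \cap J^-(q) \setminus J^+(p)$ consisting of points spacelike-separated from $p$ but still causally preceding $q$: on $L$ we have $\nabla d_p = 0$, $h_- = 0$, and the Corollary gives $h_+ = 0$ (since $L \subset J^-(q)$), so $\nabla f$ vanishes and the eikonal inequality fails there. The resolution, and the reason the result is stated with an $\epsilon$, is to either (a) strengthen the auxiliary construction by including additional distance functions $d_{s_i}$ for a locally finite set of points $s_i \in I^-(S)$ placed to the spacelike side of $p$, so that $L$ is covered by their chronological futures and thereby acquires a past-directed timelike contribution to $\nabla f$, at the price of introducing small nonzero values $\sum d_{s_i}(p)$ and $\sum d_{s_i}(q)$ at $p$ and $q$; or (b) shift $p$ slightly to the past to a point $\tilde p \in I^-(p)$ with $d(\tilde p, q) - d(\tilde p, p) < d(p,q) + \epsilon/2$ (by continuity of the Lorentzian distance under global hyperbolicity) so that $q$ and $p$ both lie in $I^+(\tilde p)$ and a neighborhood of each is contained in the smooth region of $d_{\tilde p}$. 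In either case, the contributions of the extra pieces to $f(p)$ and $f(q)$ can be jointly controlled by $\epsilon$ through the choice of source-point positions, yielding $|f(q) - f(p) - d(p,q)| < \epsilon$, which is the content of the proposition.
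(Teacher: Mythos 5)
You have assembled the right ingredients (the corrections from Lemma \ref{lemmaprincipal} and Corollary \ref{corollaryprincipal}, a distance term based at a point strictly in the past of $p$, the same-orientation argument for the cross terms), and you have correctly located the danger zone: the lateral region of points causally preceding $q$ but spacelike to $p$, where all your gradients vanish. The gap is that neither of your two proposed resolutions actually removes it. Resolution (a) destroys the $\epsilon$-estimate: any source point $s_i$ whose chronological future meets a lateral point $z\preceq q$ satisfies $s_i\in I^-(z)\subset I^-(q)$, and by the wrong-way triangle inequality $d(s_i,q)\geq d(z,q)$; for lateral points at small height above $S$ and spacelike separation of order $d(p,q)$ from $p$ (in Minkowski, $z=(t,x)$ with $t$ small, $\abs{x}\approx d(p,q)/2$, so $d(z,q)\approx\tfrac{\sqrt{3}}{2}d(p,q)$), this forces a contribution of order $d(p,q)$ to $f(q)$ with no compensating contribution at $p$ when $s_i$ is spacelike to $p$ (and an over-count of $d(p,q)$ when $s_i\in I^-(p)$). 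Resolution (b) is the right device, but not for the reason you give and not with your placement of the Cauchy surface: with $S$ through $p$, the correction $h_+$ is forced to vanish on all of $J^-(q)$, and $J^-(q)\cap J^+(S)$ contains an open set of points outside $I^+(\tilde p)$ however close $\tilde p$ is to $p$ (again in Minkowski: $t+\delta\leq\abs{x}<d(p,q)-t$ for $t<(d(p,q)-\delta)/2$); there $d_{\tilde p}$, $h_+$ and $h_-$ all vanish identically and the eikonal inequality fails on a set of positive measure, so the smoothness of $d_{\tilde p}$ near $p$ and $q$ was never the obstruction.

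The paper's proof fixes exactly this by three coupled choices: the Cauchy surface $S$ is taken through $q$ (so that $J^-(q)\cap J^+(S)$ degenerates to essentially $\set{q}$ by acausality of $S$, and the only uncovered part of $J^+(S)$ is $I^-(q')\cap J^+(S)$); one single point $p'\in I^-(p)$ serves both as the base of the distance term $d_{p'}$ and as the auxiliary point of the past-side correction $f_2$ (so that the hole $I^+(p')\cap J^-(S)$ left by $f_2$ is covered by $d_{p'}$); and $q'$ is chosen close enough to $q$ that $I^-(q')\cap J^+(S)\subset I^+(p')$. Then the supports of $\nabla f_1$, $\nabla f_2$ and $\nabla d_{p'}$ cover all of $\M$, $f(p)=d(p',p)$, $f(q)=d(p',q)$, the wrong-way triangle inequality gives $f(q)-f(p)\geq d(p,q)$, and continuity of the Lorentzian distance yields $\abs{(f(q)-f(p))-d(p,q)}<\epsilon$ by taking $p'$ close to $p$. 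If you redo your option (b) with $S\ni q$, with $r=\tilde p$, and with the containment condition $I^-(q')\cap J^+(S)\subset I^+(\tilde p)$, your argument becomes the paper's; as written, it does not establish the global eikonal inequality.
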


\begin{proof}
Let us choose a smooth spacelike Cauchy surface $S$ containing $q$, whose  existence is guaranteed by the Theorem \ref{thgeroch}.  Then let us choose two free points~$p'$ and~$q'$ such that $p' \in I^-(p)$ and $q' \in I^+(q)$.  Because $q \in I^+(p')$ and $q\in S$ we can choose $q'$ close to $q$ such that $I^-(q') \cap J^+(S) \subset I^+(p')$ (Figure \ref{L5}).
\begin{figure}[!ht]
\begin{center}
\includegraphics[width=7cm]{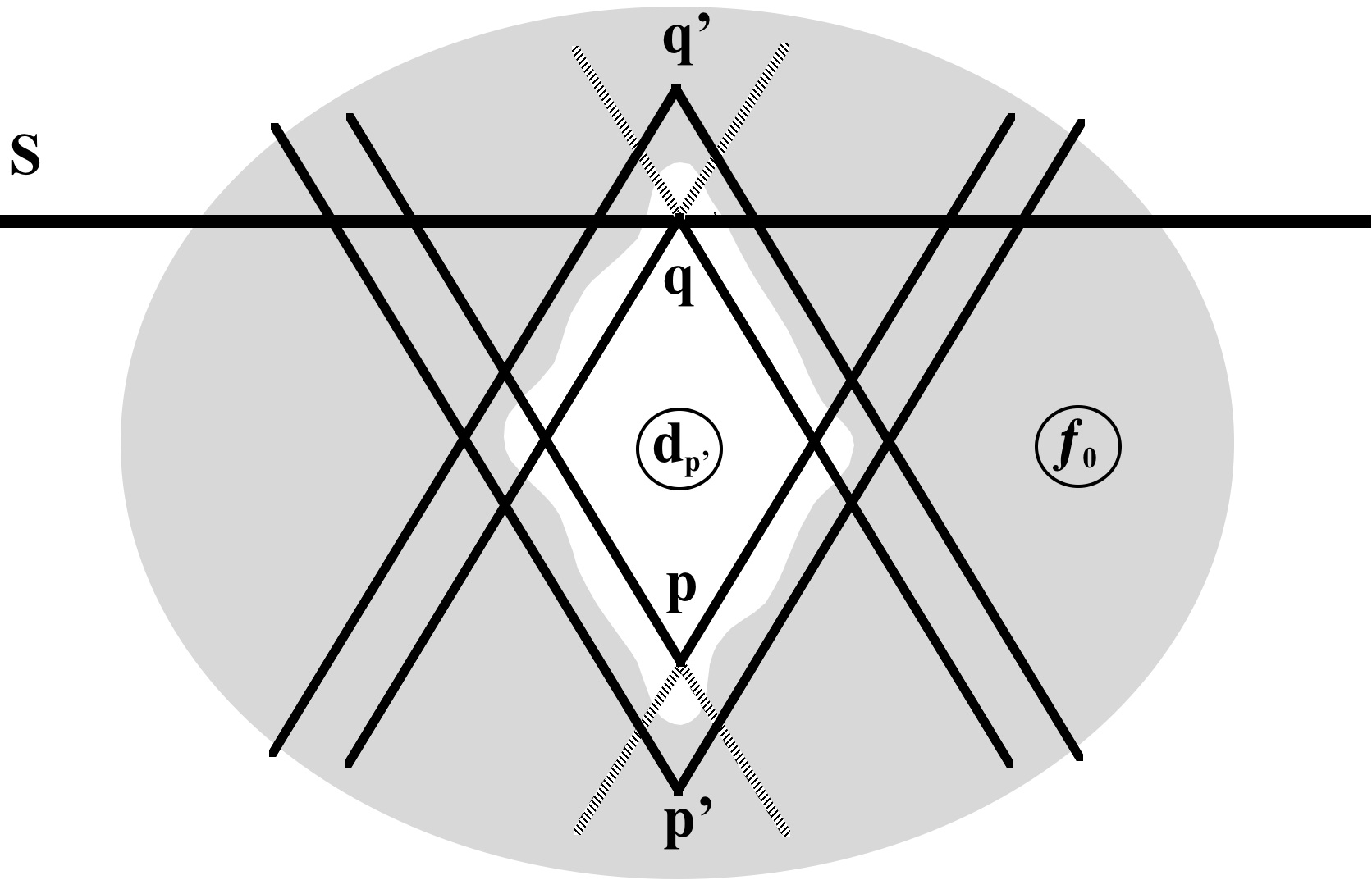}
\end{center}
\caption{}\label{L5}
\end{figure}

We can apply the Lemma \ref{lemmaprincipal} to $S$, $q$ and $q'$ to get a function $f_1$ respecting the properties:
\begin{itemize}\itemsep=0pt
\item $f_1 \in \C(\M)$
\item $f_1 \geq 0 $
\item $g( \nabla f_1, \nabla f_1 ) \leq -1$ and $\nabla f_1$ is past directed where $f_1 > 0$, except on a set of measure zero
\item $f _1> 0$ on $J^+(S) \setminus I^-(q')$
\item $f_1 = 0$ on $J^-(q)$
\end{itemize}
and then we can apply it to $S$, $p$ and $p'$ to get a function $f_2$ respecting the properties:
\vspace{1em}                                                      
\begin{itemize}\itemsep=2pt
\item $-f_2 \in \C(\M)$
\item $f_2 \geq 0 $
\item $g( \nabla f_2, \nabla f_2 ) \leq -1$ and $\nabla f_2$ is future directed where $f_2> 0$, except on a set of measure zero
\item $f_2 > 0$ on $J^-(S) \setminus I^+(p')$
\item $f_2 = 0$ on $J^+(p)$
\end{itemize}

\vspace{1.5em}                                                      

Then the function:
$$
f_0 = f_1 - f_2
$$
has the following properties:
\vspace{1em}                                                      
\begin{itemize}\itemsep=2pt
\item $f_0\in\C(\M)$
\item $f_0 = 0$ on the compact $J^-(q) \cap J^+(p)$
\item The support of $\nabla f_0$ includes the set 
$$\parenthese{J^+(S) \setminus I^-(q')} \cup  \parenthese{J^-(S) \setminus I^+(p')}$$
\item $g( \nabla f_0, \nabla f_0) \leq -1$ and $\nabla f_0$ is past directed on its support, except on a set of measure zero
\end{itemize}

\vspace{1em}                                                      

The last assertion comes from
$$
g( \nabla f_0, \nabla f_0 ) =  g( \nabla f_1, \nabla f_1 ) + g( \nabla f_2, \nabla f_2 ) -  2 g( \nabla f_1, \nabla f_2 ) \leq -1,
$$
where the first term is equal to $-1$ on the support of $f_1$ and is non-positive elsewhere, the second term is equal to~$-1$ on the support of $f_2$ and is non-positive elsewhere, and the last term is non-positive because $\nabla f_1$ and $\nabla f_2$ have not the same orientation.

\vspace{1em}                                                      

We can now define the function:
$$
f = f_0 + d_{p'}  \quad \Lequi \quad f(z) = f_0(z) + d(p',z)
$$
which has the following properties:
\vspace{1em}                                                      
\begin{itemize}\itemsep=2pt
\item $f\in\C(\M)$
\item $f(p)= d(p',p)$
\item $f(q)= d(p',q)$
\item The support of $\nabla f$ is $\M$
\item $g( \nabla f, \nabla f) \leq -1$ and $\nabla f$ is past directed on $\M$, except on a set of measure zero
\end{itemize}

\vspace{1em}                                                      

To verify that the support of $\nabla f$ is $\M$, we can see that
$$
\M \setminus J^+(p') \subset \parenthese{J^+(S) \setminus I^-(q')} \cup  \parenthese{J^-(S) \setminus I^+(p')}
$$
because we have the fact that $I^-(q') \cap J^+(S) \subset I^+(p')$ and that the support of $d_{p'}$ is $J^+(p')$.

\vspace{1em}                                                      

So now we have a function $f$ such that
$$f(q)-f(p) = d(p',q) - d(p',p) \geq d(p,q) \geq 0$$
 by the inverse triangle inequality.

\vspace{1em}                                                      

Let us set the function
$$
\alpha(p') = \parenthese{f(q) - f(p)} - d(p,q) = \parenthese{d(p',q) - d(p',p)} - d(p,q).
$$

We can remember that the points~$p'$ and~$q'$ were chosen freely under the conditions $p' \in I^-(p)$, $q' \in I^+(q)$ and \mbox{$I^-(q') \cap J^+(S) \subset I^+(p')$}. So the point $p'$ can be chosen arbitrarily closed to $p$ (and $q'$  arbitrarily closed to $q$).   $\alpha$~is a~continuous function because the distance function is continuous, and $\alpha(p) = 0$. Hence it is always possible to choose the initial point $p'$ such that $\abs{\alpha(p')} < \epsilon$, which implies $\abs{(f(q)-f(p)) - d(p,q)} < \epsilon$.\\
 \end{proof}

\vspace{\fill}


\newpage\begin{cor}\label{equalityrev}
If $p \succ q$, then there exists a function $f\in \C(\M)$ such that:
\begin{itemize}
\item $\text{\rm ess} \sup g( \nabla f, \nabla f ) \leq -1$
\item $\nabla f$ is past directed
\item $f(q)-f(p) \leq 0$
\end{itemize}
In particular, $\langle f(q)-f(p) \rangle = 0$.
\end{cor}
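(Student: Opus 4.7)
The plan is to derive this corollary immediately from Proposition \ref{equality} by exchanging the roles of $p$ and $q$. Since the hypothesis $p \succ q$ is equivalent to $q \prec p$, I can apply Proposition \ref{equality} to the ordered pair $(q,p)$ instead of $(p,q)$.

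First, I would invoke Proposition \ref{equality} with the two points $q \prec p$ and any fixed $\epsilon > 0$ (the precise value of $\epsilon$ plays no role here since we only need a sign statement, not an approximation). This yields a function $f \in \C(\M)$ such that $\text{ess sup}\, g(\nabla f, \nabla f) \leq -1$, such that $\nabla f$ is past directed almost everywhere, and such that $f(p) - f(q) \geq 0$. The first two properties are exactly the first two requirements of the corollary, and the inequality $f(p) - f(q) \geq 0$ is precisely equivalent to $f(q) - f(p) \leq 0$, which is the third requirement. The final observation $\langle f(q) - f(p) \rangle = 0$ follows immediately from the definition $\langle \alpha \rangle = \max\{0,\alpha\}$ applied to a non-positive quantity.

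There is essentially no obstacle here: the entire content of the corollary is packaged inside Proposition \ref{equality}, and the argument amounts to a relabeling. The only point worth noting is that the construction in Proposition \ref{equality} produces a function whose gradient is past directed (not future directed), regardless of whether we feed it $(p,q)$ or $(q,p)$; the choice of time orientation is built into the global eikonal condition. Thus the single function $f$ obtained from Proposition \ref{equality} applied to $(q,p)$ automatically satisfies all three conditions required, and the corollary follows without any additional construction.
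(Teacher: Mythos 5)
Your proposal is correct and coincides with the paper's own proof, which simply switches $p$ and $q$ in Proposition \ref{equality}; the observation that the gradient remains past directed (so that only the sign of $f(q)-f(p)$ flips) is exactly the point that makes the relabeling work.
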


\vspace{0.5em}                                                      

\begin{proof}
This is trivial by switching $p$ and $q$ in the Proposition \ref{equality}.\qedhere\\
\end{proof}

\vspace{1em}                                                      


\begin{prop}\label{zero}
If $p \nprec q$ and $q \nprec p$, then $\forall \, \epsilon > 0$ there exists a function $f\in \C(\M)$ such that:
\begin{itemize}
\item $\text{\rm ess} \sup g( \nabla f, \nabla f ) \leq -1$
\item $\nabla f$ is past directed
\item $\abs{f(q)-f(p)} < \epsilon$\\
\end{itemize}
\end{prop}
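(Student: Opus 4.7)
The plan is to follow the same scheme as Proposition~\ref{equality}, but exploiting the fact that $p$ and $q$ are achronal so that they can be placed simultaneously on a single Cauchy surface. More precisely, since $\M$ is globally hyperbolic and the set $\{p,q\}$ is achronal (by hypothesis), one may invoke the Bernal--S\'anchez extension theorem to produce a smooth spacelike Cauchy surface $S$ containing both $p$ and $q$. All subsequent constructions will be built relative to $S$.

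Next I would pick, in the set $I^-(p)\cap I^-(q)$, an auxiliary point $p'$ so that $d(p',p)$ and $d(p',q)$ are jointly small. The existence of such $p'$ comes from approaching a ``corner'' point $p^*\in \partial I^-(p)\cap \partial I^-(q)$ (which exists since $\{p,q\}$ is spacelike-separated and hence the common past has boundary meeting both past light cones): by continuity of the Lorentzian distance function (Section~\ref{Lorgeo}), $d(p^*,p)=d(p^*,q)=0$, so $p'$ close enough to $p^*$ inside $I^-(p)\cap I^-(q)$ will satisfy $|d(p',p)-d(p',q)|<\epsilon$. Having fixed $p'$, I choose $q_1'\in I^+(p)$, $q_2'\in I^+(q)$ sufficiently close to $p,q$, and $p_1'\in I^-(p)\cap I^+(p')$, $p_2'\in I^-(q)\cap I^+(p')$, all tight enough so that
\[
\bigl(I^-(q_1')\cup I^-(q_2')\bigr)\cap J^+(S)\ \subset\ I^+(p'),\quad
\bigl(I^+(p_1')\cup I^+(p_2')\bigr)\cap J^-(S)\ \subset\ I^+(p').
\]
This is possible because $p'\in I^-(p)\cap I^-(q)$ implies $q_i'\in I^+(p')$ and $p_i'\in I^+(p')$ for choices close enough to $p$ and $q$.

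I then apply Corollary~\ref{corollaryprincipal} twice: once with $(S,p,q,q_1',q_2')$ to produce $f_1\in\C(\M)$ vanishing on $J^-(p)\cup J^-(q)$, positive on $J^+(S)\setminus(I^-(q_1')\cup I^-(q_2'))$, with $\nabla f_1$ past directed and $g(\nabla f_1,\nabla f_1)\le -1$ a.e.\ on its support; and once with $(S,p,q,p_1',p_2')$ (in the past version) to produce $f_2\ge 0$ with $-f_2\in\C(\M)$, vanishing on $J^+(p)\cup J^+(q)$, positive on $J^-(S)\setminus(I^+(p_1')\cup I^+(p_2'))$, with $\nabla f_2$ future directed and $g(\nabla f_2,\nabla f_2)\le -1$ a.e.\ on its support. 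Setting
\[
f_0=f_1-f_2,\qquad f=f_0+d_{p'},
\]
one obtains $f\in\C(\M)$ as a sum of causal functions, and $f_0(p)=f_0(q)=0$, hence $|f(p)-f(q)|=|d(p',p)-d(p',q)|<\epsilon$. The covering conditions above ensure that the union of the supports of $\nabla f_1$, $\nabla f_2$ and $\nabla d_{p'}$ exhausts $\M$ up to a measure-zero set, and on this union all three contributions to $\nabla f$ are past directed (since $-\nabla f_2$ is past directed and $\nabla d_{p'}$ is past directed on $I^+(p')\setminus C^+(p')$ by Corollary~\ref{graddp3}), so the wrong-way Cauchy--Schwartz inequality (Proposition~\ref{cs}) combined with the cross terms being non-positive yields $g(\nabla f,\nabla f)\le -1$ a.e.\ exactly as in the computation at the end of the proof of Proposition~\ref{equality}.

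The main obstacle I expect is the existence of a smooth spacelike Cauchy surface containing both $p$ and $q$; once this is granted, the remaining arguments are close parallels of Proposition~\ref{equality}, the only genuinely new ingredient being the ``balanced'' choice of the fill-in point $p'$ near a common null-past corner so that the two residual distances cancel to within $\epsilon$. A secondary, more book-keeping obstacle is verifying that the auxiliary points $q_i'$, $p_i'$ can simultaneously be taken close enough to $p,q$ to satisfy the two inclusion conditions above, which is a continuity argument using that $p,q\in I^+(p')$.
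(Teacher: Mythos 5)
Your overall strategy is the right one and runs parallel to the paper's: put the points on (or next to) a Cauchy surface, apply Corollary \ref{corollaryprincipal} twice to build $f_0=f_1-f_2$ vanishing near $p$ and $q$ and satisfying the eikonal bound outside some ``holes'', then fill the holes with a Lorentzian distance function chosen so that its values at $p$ and $q$ are small. Taking a smooth spacelike Cauchy surface through both $p$ and $q$ via the Bernal--S\'anchez extension result \cite{BS06} is legitimate (note the hypothesis $p\npreceq q$, $q\npreceq p$ gives acausality, which is what that extension needs, not mere achronality); the paper instead takes $S\ni q$, assumes $p\in J^-(S)$ and introduces the extra point $p_+\in S\cap J^+(p)$, so your variant would simplify that bit of bookkeeping.

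The genuine gap is the single auxiliary point $p'\in I^-(p)\cap I^-(q)$. Nothing in global hyperbolicity guarantees that two causally unrelated points have a common chronological past: in de Sitter spacetime (globally hyperbolic, Cauchy surfaces $S^3$) two antipodal events at the same cosmological time are causally unrelated and satisfy $I^-(p)\cap I^-(q)=\emptyset$, so your ``corner'' point $p^*\in\partial I^-(p)\cap\partial I^-(q)$ and the point $p'$ approaching it simply do not exist, and the function $f=f_0+d_{p'}$ cannot be built (nor can the covering conditions $I^-(q_i')\cap J^+(S)\subset I^+(p')$ hold, since they force $p,q\in I^+(p')$). Even when the common past is nonempty, your claim that it can be approached near a point of $\partial I^-(p)\cap\partial I^-(q)$ where both distances vanish is asserted rather than proved. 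The paper's proof avoids this entirely: it takes two auxiliary points $p'\in I^-(p)$ and $q'\in I^-(q)$, each as close as desired to its own point and with $J^+(p')\cap S$ and $J^+(q')\cap S$ disjoint, and sets $f=f_0+d_{p'}+d_{q'}$; then $f(p)=d(p',p)<\epsilon/2$ and $f(q)=d(q',q)<\epsilon/2$, the two futures $J^+(p')$, $J^+(q')$ cover the respective holes, and all gradients remain past directed so the eikonal estimate goes through exactly as in Proposition \ref{equality}. Replacing your single $p'$ by this pair (and adjusting your inclusion conditions to use $I^+(p')$ and $I^+(q')$ separately) repairs the argument.
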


\begin{proof}
The case $p=q$ is trivial, so we will suppose $q\notin J^{\pm}(p)$.

Let us choose a smooth spacelike Cauchy surface $S$ containing $q$ and let us assume that $p \in J^-(S)$ (otherwise we exchange the role of $p$ and~$q$). Then we choose the following free points (Figure \ref{L6}):
\begin{itemize}
\item $p_+ \in S$ such that $p_+ \in J^+(p)$ (if $p\in S$, we just have $p_+ = p$)
\item $p' \in I^-(p)$ and $q' \in I^-(q)$ such that the sets $J^+(p') \cap S$ and $J^+(q') \cap S$ are disjoint (this is always possible because $q\in S$ and $q\notin J^{+}(p)$)
\item $p_+' \in I^+(p_+)$ and $q_+' \in I^+(q)$ such that $I^-(p_+') \cap J^+(S) \subset I^+(p')$ and $I^-(q_+') \cap J^+(S) \subset I^+(q')$
\end{itemize}
\begin{figure}[!ht]
\begin{center}
\includegraphics[width=9cm]{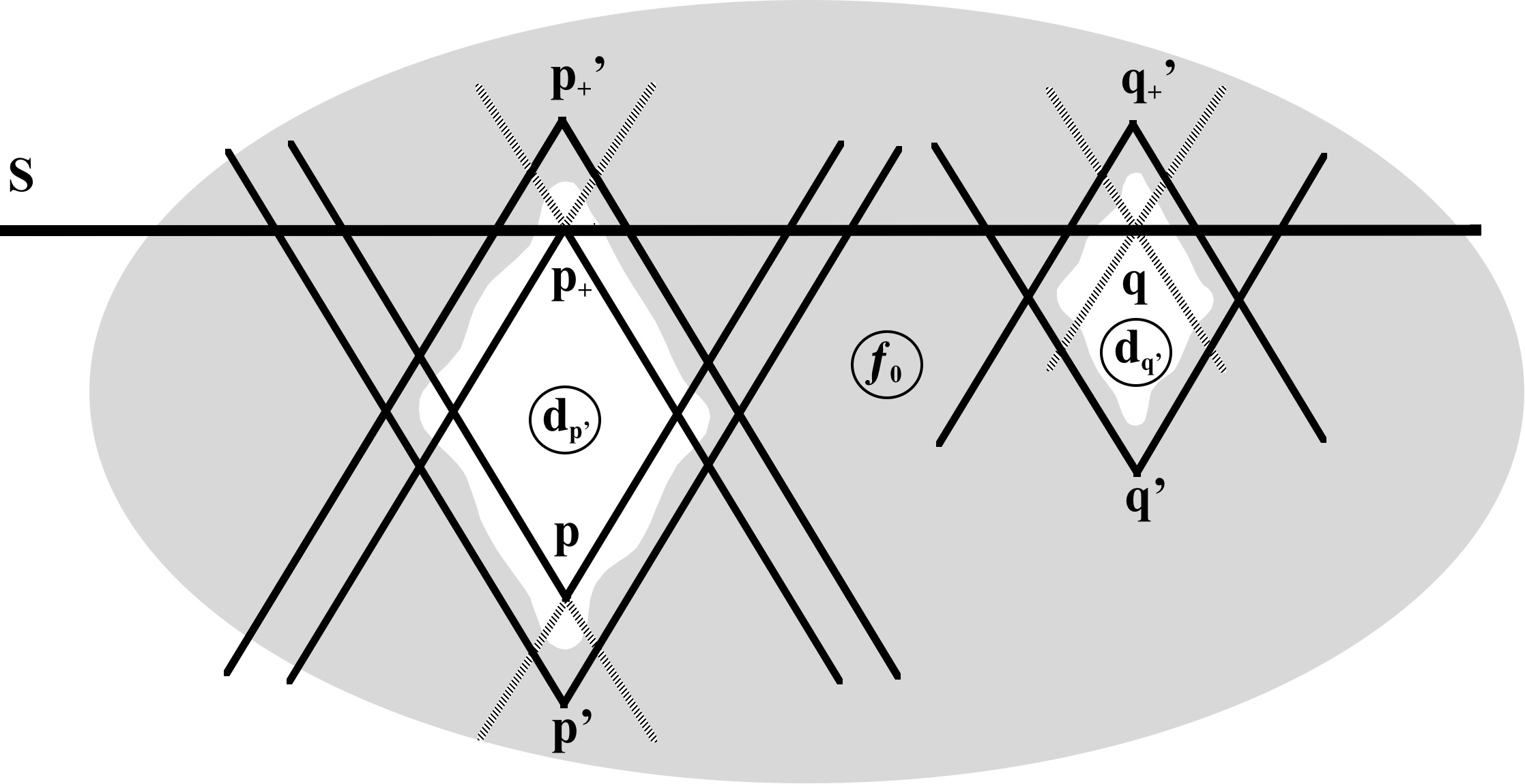}
\end{center}
\caption{}\label{L6}
\end{figure}
Then we have a similar situation to the Proposition \ref{equality} but with two disjoint sets
$$
\parenthese{J^+(S) \cap I^-(p_+')} \cup  \parenthese{J^-(S) \cap I^+(p')}
$$
and
$$
\parenthese{J^+(S) \cap I^-(q_+')} \cup  \parenthese{J^-(S) \cap I^+(q')}.
$$
The first one contains the compact $J^-(p_+) \cap J^+(p)$ and the second contains the point $q$.

We can  apply the Corollary \ref{corollaryprincipal} a first time to $S$, $p_+$, $q$, $p_+'$ and $q_+'$ to get a function $f_1$ with null or past directed gradient, and a second time to $S$, $p$, $q$, $p'$ and $q'$ to get a function $f_2$ with null or future directed gradient. In the same way as in the Proposition \ref{equality} we find a function:
$$
f_0 = f_1 - f_2
$$
with the following properties:
\begin{itemize}\itemsep=0pt
\item $f_0\in\C(\M)$
\item $f_0 = 0$ on the compact $J^-(p_+) \cap J^+(p)$ and on $q$
\item The support of $\nabla f_0$ includes the set
$$\parenthese{J^+(S) \!\setminus \! \parenthese{I^-(p_+') \cup I^-(q_+')}} \cup    \parenthese{J^-(S)\! \setminus \!\parenthese{I^+(p') \cup I^+(q') }}$$
\item $g( \nabla f_0, \nabla f_0) \leq -1$ and $\nabla f_0$ is past directed on its support, except on a set of measure zero
\end{itemize}

Finally we define the function $f$:
$$
f = f_0 + d_{p'} + d_{q'}  \quad \Lequi \quad   f(z) = f_0(z) + d(p',z) + d(q',z)
$$
which has the following properties:
\begin{itemize}\itemsep=0pt
\item $f\in\C(\M)$
\item $f(p)= d(p',p)$
\item $f(q)= d(q',q)$
\item The support of $\nabla f$ is $\M$
\item $g( \nabla f, \nabla f) \leq -1$ and $\nabla f$ is past directed on $\M$, except on a set of measure zero
\end{itemize}

Once more we can choose the points $p' \in I^-(p)$ and $q' \in I^-(q)$ such that $d(p',p) < \frac\epsilon2$ and $d(q',q) < \frac\epsilon2$, which implies  $\abs{f(q)-f(p)} < \epsilon$.\\
\end{proof}

The Proposition \ref{equality}, the Corollary \ref{equalityrev} and the Proposition \ref{zero} conclude the proof of the Theorem~\ref{maindefdist}.\\

\newpage

\section[Causality in noncommutative geometry]{Causality in noncommutative geometry}\label{seccaus}

In this last section, we turn our attention to the concepts of causality and time, and more precisely to some technical adaptations which should be introduced in order to take causality into account in noncommutative geometry.\\

We will begin by a  review of the works of F. Besnard \cite{Bes} on the noncommutative generalization of partially ordered spaces. We will see that once more the set of causal functions will be of a great importance in order to translate causal informations in noncommutative geometry. Unfortunately, there is no natural algebraic structure with finite norm which can accept those functions in the case of non-compact manifolds. In order to solve this problem, we will propose in the last part a new structure for which the causal functions are well defined, and which leads to the creation of Lorentzian spectral triples including some causal informations, as a particular time function.\\

\subsection{Noncommutative partially ordered spaces}\label{NCcausal}

As a first step in the translation of the concept of causality in noncommutative geometry, one can wonder what could be the translation of partial order. We make here a quick presentation of some elements introduced in \cite{Bes}, to which we refer for the proofs. We must remark that a first approach to generalize the causal order relation was presented by V.~Moretti in \cite{Mor}, with some similarities, but in the context of the family of local causal spaces $L^2(\bar I)$ (cf. the formula (\ref{morfunction}) and the discussion behind). We begin by a review of the basic notions of poset theory. A complete introduction on this subject can be found in the book of L.~Nachbin \cite{Nachbin}.

\begin{defn}
A {\bf\index{partially ordered set (poset)} partially ordered set (poset)} is a set $M$ admitting a partial order, i.e.~a binary relation $\leq$ defined on a subspace of $M\times M$ such that $\forall a,b,c\in M$:\vspace{-0.5em}                                                      
\begin{itemize}\itemsep=0pt
\item $a \leq a$
\item If $a \leq b$ and $b \leq a$, then $a = b$
\item If $a \leq b$ and $b \leq c$, then $a \leq c$
\end{itemize}

A {\bf\index{partially ordered set (poset)!space} partially ordered space} is a poset $(M,\leq)$ together with a structure of topological space.
\end{defn}

From this definition, it is obvious that any globally hyperbolic spacetime $\M$ is a  partially ordered space $(\M,\preceq)$ with $\preceq$ being the causal relation. Of course, partially ordered spaces cover a  larger class of spaces.

\begin{defn}
A function $f : M \rightarrow N$ between two posets $M$ and $N$ is called {\bf\index{function!isotone} isotone} if
$$\forall x,y\in M,\quad x\leq y \implies f(x) \leq f(y).$$
If $M$ is a partially ordered space, we can consider the set of continuous real isotone functions on $M$:
$$I(M) = \set{ f \in C(M,\setR) : f \text{ is isotone}}.$$
\end{defn}

\begin{defn}
A {\bf\index{partially ordered set (poset)!space!completely separated} completely separated ordered space} is a partially ordered space $M$  such that the order is completely determined by the continuous real isotone functions, i.e.
$$\forall x,y\in M,\quad x \leq y \lequi \forall f\in I(M),\ f(x) \leq f(y).$$
\end{defn}

\begin{prop}
Let $\M$ be a globally hyperbolic spacetime. Then, $\M$ is a completely separated ordered space, with the set of continuous real isotone functions being the set of causal functions $I(\M)=\C(\M)$.
\end{prop}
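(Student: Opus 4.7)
The plan is to establish the two claims in the proposition separately: first the identification $I(\M) = \C(\M)$, and then complete separation, namely $x \preceq y \Leftrightarrow \forall f \in I(\M),\ f(x) \leq f(y)$.

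The identification step is routine. If $f \in I(\M)$ and $\gamma : [a,b] \to \M$ is any causal future-directed curve, then $\gamma(s) \preceq \gamma(t)$ for $s \leq t$, so $f \circ \gamma$ is non-decreasing and $f \in \C(\M)$. Conversely, if $f \in \C(\M)$ and $x \preceq y$, then either $x = y$, giving $f(x) \leq f(y)$ trivially, or there is a causal future-directed curve from $x$ to $y$ along which $f$ is non-decreasing, yielding $f(x) \leq f(y)$. Since continuity is built into both definitions, $I(\M) = \C(\M)$.

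For complete separation, the forward implication is immediate from the definition of isotonicity. The content is in the converse, which I would prove by contrapositive: assuming $x \npreceq y$, I want to produce $f \in \C(\M)$ with $f(x) > f(y)$. The crucial input is the classical characterization in strongly causal spacetimes, namely $I^-(x) \subseteq I^-(y) \Leftrightarrow x \preceq y$; this follows from openness of the chronological past, the identity $\overline{I^-(y)} = J^-(y)$, and the closure of the relation $\preceq$ in globally hyperbolic spacetimes. From $x \npreceq y$ we therefore obtain a point $p \in I^-(x) \setminus I^-(y)$, and the natural candidate is $f(z) = d(p,z)$.

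It remains to verify that $f$ works. Continuity of $f$ is just continuity of the Lorentzian distance, which holds by global hyperbolicity. That $f \in \C(\M)$ is a short case analysis using the wrong way triangle inequality of Proposition \ref{cs}: for $a \preceq b$, if $p \precc a$ then $d(p,b) \geq d(p,a) + d(a,b) \geq d(p,a)$; if $p$ does not chronologically precede $a$ then $d(p,a) = 0 \leq d(p,b)$; and if moreover $p$ does not chronologically precede $b$, concatenating a timelike curve from $p$ to $a$ with a causal curve from $a$ to $b$ would give $p \precc b$, a contradiction, so both distances vanish. Finally, by Corollary \ref{Idpq}, $f(x) = d(p,x) > 0$ since $p \precc x$ and $f(y) = d(p,y) = 0$ since $p \notin I^-(y)$.

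The main obstacle is the order-theoretic input $I^-(x) \subseteq I^-(y) \Leftrightarrow x \preceq y$ in strongly causal spacetimes: it is a standard Lorentzian lemma, but it is the only step not reducible to the definitions and the reverse triangle inequality, and it depends on the good closure behaviour of the chronological past. Once this lemma is granted, the remainder is a direct verification along the lines sketched above.
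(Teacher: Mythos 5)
Your proof is correct, but it takes a genuinely different route from the one in the text. The paper separates the only non-trivial implication ($x\npreceq y \Rightarrow \exists f$ with $f(x)>f(y)$) into two cases and resolves both with \emph{time functions}: when $x\succ y$ any Cauchy time function (Corollary \ref{corgeroch}) separates, and when $x,y$ are causally unrelated it invokes the refinement of Bernal--S\'anchez \cite{BS06} producing a Cauchy time function whose zero level set passes through a prescribed compact spacelike subset, together with an auxiliary point $y^+\succ y$ still causally unrelated to $x$. You instead treat both cases uniformly: from $x\npreceq y$ you extract $p\in I^-(x)\setminus I^-(y)$ via the lemma ``$I^-(x)\subseteq I^-(y)\Leftrightarrow x\preceq y$'' and separate with the single causal function $f=d_p$, using continuity of the Lorentzian distance, the reverse triangle inequality, and Corollary \ref{Idpq}. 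This is more economical relative to the paper's own toolbox (all ingredients, including $J^-(y)=\overline{I^-(y)}$, already appear in the proof of Lemma \ref{lemmaprincipal}), and it avoids the heavier machinery of \cite{BS06}; the paper's route, on the other hand, produces separating functions that are genuine (smooth Cauchy) time functions, which is stronger than needed here but fits the thesis' emphasis on temporal elements. One caveat: your key lemma is not a consequence of strong causality alone --- the implication $I^-(x)\subseteq I^-(y)\Rightarrow x\preceq y$ requires the causal pasts $J^-$ to be closed (causal simplicity), which can fail in merely strongly causal spacetimes; since you in fact justify it through $\overline{I^-(y)}=J^-(y)$ and closedness of $\preceq$, both valid under the global hyperbolicity assumed in the proposition, the argument as applied is sound, but the attribution should be corrected. (The extra sub-case in your verification that $d_p$ is causal, where $p\not\precc a$ and $p\not\precc b$, is redundant: $d(p,a)=0\leq d(p,b)$ already settles it.)
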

\begin{proof}
By definition of the causal functions and the order on $\M$, the equivalence $\C(\M)=I(\M)$ is obvious.

To show that the causal functions completely determine the order, we just need to prove that, if $x\npreceq y$, then there exists a function $f\in\C(\M)$ such that \mbox{$f(x)>f(y)$}. If $x\succ y$, then \mbox{$\T(x)>\T(y)$} where $\T\in\C(\M)$ is a time function, whose existence is guaranteed by global hyperbolocity (Corollary \ref{corgeroch}). If $x$ and $y$ are not causaly related (i.e.~$x\npreceq y$ and $x\nsucceq y$), then we can use an extension of this result given in \cite{BS06} which guarantees the existence of a  time function $\T$ such that the Cauchy surface $\T^{-1}(0)$ contains a chosen compact spacelike subset with boundary. The time function is then constructed such that $\T(x)=0$ and $\T(y)<\T(y^+)=0$, where $y^+$ is a point $y\prec y^+$  such that $x\npreceq y^+$ and $x\nsucceq y^+$ still hold.
\end{proof}

By its definition, the set $I(M)$ is a convex cone. Moreover, if $M$ is compact, it is stable by two other operations called {\it meet} and {\it join}.

\begin{defn}
The {\bf\index{meet} meet} operation $a \wedge b$ is the unique greatest lower bound of $a$ and $b$, i.e.~$a \wedge b = c$ if and only if $c \leq a$, $c \leq b$, and for all $d $ such that $d \leq a$ and  $d \leq b$ we have $d \leq c$. In a similar way, the {\bf\index{join} join} operation $a \vee b$ is the unique least upper bound of $a$ and $b$. A poset which is stable by the meet and join operations is called a {\bf\index{lattice} lattice}.
\end{defn}

So if $M$ is compact, $I(M)$ is a sublattice. Actually, any sublattice cone which contains the constants and determines the order is a dense subspace of $I(M)$, as proven by L. Nachbin \cite{Nachbin}:

\begin{thm}
Let $M$ be a compact poset. If $J \subset C(M,\setR)$ is stable by addition, by the meet and join operations, and by multiplication by non-negative reals, contains the constant functions, separates the points and is closed, then $M$ is a completely separated ordered space with $I(M)=J$.
\end{thm}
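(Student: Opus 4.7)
The plan is to adapt the standard Stone--Weierstrass machinery to the ordered setting. First I would observe that, since every $f\in J$ is required to be compatible with the order (the natural reading in this context being $J\subset I(M)$, so that the conclusion $I(M)=J$ makes sense), the inclusion $J\subset I(M)$ is free; what has to be worked for is the reverse inclusion and the complete separation. Complete separation is almost immediate: if $x\not\leq y$ in $M$, then the hypothesis that $J$ separates points (upgraded, using that $J\subset I(M)$ and is closed under meet, join, and addition with constants of arbitrary sign, to the fact that $J$ separates the order in Nachbin's sense) produces some $h\in J$ with $h(x)>h(y)$; since such an $h$ lies in $I(M)$ this shows that $I(M)$ itself determines the order, giving the first conclusion.

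For the equality $I(M)=J$, fix $f\in I(M)$ and $\varepsilon>0$; the goal is to exhibit $g\in J$ with $\|g-f\|_\infty<\varepsilon$, after which closedness of $J$ will finish the argument. The key step is the construction, for each ordered pair $(x,y)\in M\times M$, of a function $g_{xy}\in J$ satisfying
\begin{equation*}
g_{xy}(x)\geq f(x)-\varepsilon\qquad\text{and}\qquad g_{xy}(y)\leq f(y)+\varepsilon.
\end{equation*}
When $x\leq y$, one has $f(x)\leq f(y)$ by isotonicity of $f$, so the constant function $g_{xy}\equiv f(x)$ (which lies in $J$) does the job. When $x\not\leq y$, the order-separation property furnishes $h\in J$ with $h(x)>h(y)$; after replacing $h$ by $h-h(y)$ (adding a constant) and then by $\lambda h$ for a sufficiently large $\lambda\geq 0$ (allowed by non-negative scaling), one can normalise so that $h(y)=0$ and $h(x)\geq f(x)-f(y)$; the candidate $g_{xy}=h+f(y)$ then belongs to $J$ and meets both inequalities.

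With these building blocks in hand, the remaining step is the usual compactness double sweep, exploiting the lattice stability of $J$. For fixed $x$, the open sets $U_y=\{z\in M:g_{xy}(z)<f(z)+\varepsilon\}$ cover $M$ by continuity of $g_{xy}$ and $f$; by compactness, finitely many $y_1,\ldots,y_n$ suffice, and $g_x:=g_{xy_1}\wedge\cdots\wedge g_{xy_n}\in J$ then satisfies $g_x<f+\varepsilon$ everywhere while still $g_x(x)\geq f(x)-\varepsilon$. Letting $x$ vary, the open sets $V_x=\{z\in M:g_x(z)>f(z)-\varepsilon\}$ cover $M$, so again finitely many $x_1,\ldots,x_m$ suffice, and $g:=g_{x_1}\vee\cdots\vee g_{x_m}\in J$ satisfies $|g-f|<\varepsilon$ uniformly; closedness of $J$ yields $f\in J$, so $I(M)\subset J$, completing the proof. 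The main obstacle is the point-separation step: the natural reading of ``separates the points'' must be strengthened to order-separation (using that $J\subset I(M)$ together with closure under meets, joins, constants, and non-negative scaling) before the Stone--Weierstrass scheme can even produce the elementary bricks $g_{xy}$; once that subtlety is unpacked, the compactness argument is routine.
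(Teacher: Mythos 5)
The paper itself does not prove this theorem (it is quoted from Nachbin's book), so your proposal can only be measured against the classical argument, and its analytic core is exactly that argument and is sound: the two-point bricks $g_{xy}$, the first compactness sweep giving $g_x=g_{xy_1}\wedge\cdots\wedge g_{xy_n}\in J$ with $g_x<f+\varepsilon$ and $g_x(x)\geq f(x)-\varepsilon$, the second sweep giving $g=g_{x_1}\vee\cdots\vee g_{x_m}$ with $\abs{g-f}<\varepsilon$, and closedness of $J$ to conclude $I(M)\subset J$. All of this is correct, provided the bricks exist for every pair with $x\not\leq y$.

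The genuine gap is precisely the step you flag as the main subtlety: the claimed upgrade from point-separation to order-separation. It is \emph{not} true that $J\subset I(M)$ together with stability under addition, meets, joins, non-negative scaling, the constants, and ordinary point-separation forces, for every $x\not\leq y$, some $h\in J$ with $h(x)>h(y)$. Counterexample: let $M=\set{a,b}$ be a two-point antichain (discrete order), so $I(M)=C(M,\setR)\cong\setR^2$, and let $J=\set{f : f(a)\leq f(b)}$. This $J$ is a closed sublattice cone containing all constants, separates points, and lies in $I(M)$, yet no $h\in J$ satisfies $h(a)>h(b)$ although $a\not\leq b$; moreover $I(M)\neq J$, so under your reading the theorem itself is false and no proof could close it. The statement is only correct under the reading announced in the sentence preceding it in the paper (``determines the order''): the partial order on $M$ is the one \emph{induced} by $J$, i.e.\ $x\leq y$ if and only if $f(x)\leq f(y)$ for all $f\in J$, point-separation serving only to make this relation antisymmetric. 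With that reading, both $J\subset I(M)$ and the order-separation you need are immediate from the definition of the induced order --- there is nothing to upgrade --- and your bricks and the double compactness sweep then complete the proof verbatim.
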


This theorem clearly shows that, in the compact case, the choice of a convex cone $I\subset \A = C(M)$ of Hermitian elements respecting some suitable conditions completely determines the order. In \cite{Bes}, the following noncommutative generalization is proposed:

\begin{defn}
Let $\A$ be a unital $C^*$-algebra. An {\bf\index{isocone} isocone} is a subset $I\subset \A$ respecting the following conditions:
\begin{itemize}
\item $I$ is composed of Hermitian elements, i.e.~\mbox{$\forall a\in I, a=a^*$}
\item $I$ is closed
\item $\forall \alpha, \beta \geq 0$ and $\forall a,b \in I$, $\alpha\,a + \beta\,b \in I$
\item $\forall a,b \in I$, $a\wedge b = \frac{a+b}{2} - \frac{\abs{a-b}}{2} \in I$ and $a\vee b = \frac{a+b}{2} + \frac{\abs{a-b}}{2} \in I$
\item $\forall \alpha \in\setR$, $\alpha 1 \in I$
\item $\overline{\text{span}(I)} = \A$
\end{itemize}
A couple $(I,\A)$ where  $\A$ is a unital $C^*$-algebra and $I$ an isocone is called an {\bf\index{algebra!$I^*$-algebra} $I^*$-algebra}.\\
\end{defn}

With this definition, the Gel'fand--Naimark theorem can be extended to compact completely separated ordered spaces:

\begin{thm}
Let $(I,\A)$ be an $I^*$-algebra such that $\A$ is commutative. Then $\Delta(\A)$ is a compact  completely separated ordered space under the partial order defined by $I$, and the Gel'fand transform \mbox{$\bigvee : (I,\A) \rightarrow (I(\Delta(\A)),C(\Delta(\A)))$} is an isometric ${}^*$-isomorphism.
\end{thm}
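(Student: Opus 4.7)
The plan is to reduce the statement to an application of the classical Gel'fand--Naimark theorem (Theorem \ref{GelNai}) combined with the Nachbin theorem on compact completely separated ordered spaces that was stated just above. The commutativity of $\A$ together with unitality gives immediately, by Theorem \ref{GelNai}, that $\Delta(\A)$ is a compact Hausdorff space and that $\bigvee : \A \to C(\Delta(\A))$ is an isometric ${}^*$-isomorphism. Since every $a \in I$ is Hermitian, its Gel'fand transform $\hat a$ is a real-valued continuous function, so $\hat I := \bigvee(I) \subset C(\Delta(\A),\setR)$.

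Next, I would define a relation on $\Delta(\A)$ by declaring $\chi \leq \chi'$ iff $\hat a(\chi) \leq \hat a(\chi')$ for every $a \in I$, and check that this is a partial order. Reflexivity and transitivity are immediate from the corresponding properties of $\leq$ on $\setR$. For antisymmetry: if $\chi \leq \chi'$ and $\chi' \leq \chi$, then $\hat a(\chi) = \hat a(\chi')$ for all $a \in I$, hence $\chi$ and $\chi'$ agree on $\text{span}(I)$ by linearity and on its closure by continuity; but $\overline{\text{span}(I)} = \A$ by the isocone axioms, so $\chi = \chi'$. This both makes $\Delta(\A)$ a partially ordered space and shows, by the same density argument, that $\hat I$ separates the points of $\Delta(\A)$.

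The main step is then to invoke Nachbin's theorem with $J := \hat I$. The remaining hypotheses transfer directly from the isocone axioms via the fact that $\bigvee$ is an isometric $*$-isomorphism: $J$ is closed (isometric image of the closed set $I$); stable under addition and multiplication by non-negative reals (convex cone property of $I$); stable under $\wedge$ and $\vee$, since $\bigvee$ intertwines the absolute value defined by $\abs{a} = (a^*a)^{1/2}$ with the pointwise modulus of real-valued continuous functions, giving $\widehat{a \wedge b} = \hat a \wedge \hat b$ and similarly for $\vee$; contains the constants because $\hat{\alpha 1} = \alpha$ for $\alpha \in \setR$; and separates points by the previous paragraph. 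Nachbin's theorem then yields that $\Delta(\A)$ is a completely separated ordered space with $I(\Delta(\A)) = \hat I$.

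Finally, by tautology of the construction of $\leq$, every element of $\hat I$ is continuous and isotone, so $\hat I \subseteq I(\Delta(\A))$; equality follows from Nachbin's theorem. The Gel'fand transform $\bigvee$, being an isometric ${}^*$-isomorphism from $\A$ onto $C(\Delta(\A))$ which by construction maps $I$ bijectively onto $I(\Delta(\A))$, is therefore an isometric ${}^*$-isomorphism of $I^*$-algebras $(I,\A) \to (I(\Delta(\A)), C(\Delta(\A)))$. I expect the only delicate point to be the verification that $\bigvee$ intertwines the algebraic meet and join defined on $I$ with the pointwise ones on $\hat I$; this hinges on showing $\widehat{\abs{a-b}} = \abs{\hat a - \hat b}$, which follows from continuous functional calculus in the commutative $C^*$-algebra $\A$ together with the fact that the Gel'fand transform intertwines functional calculus on Hermitian elements with composition by real continuous functions.
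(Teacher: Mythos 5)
Your argument is correct, and it is the natural route that the surrounding text sets up: the thesis itself gives no proof of this theorem (it is quoted from Besnard's paper \cite{Bes}, to which the text refers for all proofs), so there is nothing internal to compare against. Your reduction — Gel'fand--Naimark to get the compact spectrum and the isometric ${}^*$-isomorphism, the order on $\Delta(\A)$ induced by $\hat I$ with antisymmetry and point-separation from $\overline{\Span(I)}=\A$, and Nachbin's theorem applied to $J=\hat I$, with the only delicate point being that the Gel'fand transform intertwines the functional-calculus absolute value with the pointwise modulus so that $\widehat{a\wedge b}=\min(\hat a,\hat b)$ — is sound and matches the intended use of the Nachbin theorem stated just before.
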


By defining suitable morphisms, the class of compact completely separated ordered spaces and the class of commutative $I^*$-algebras form two categories. The definition of the set of continuous real isotone functions gives a contravariant functor \mbox{$(M,\leq) \leadsto ( I(M) ,C(M) )$} and the Gel'fand transform provides a reverse contravariant functor. Then we have the following result:

\begin{thm}
The category of compact completely separated ordered spaces and the category of commutative $I^*$-algebras are dually equivalent.
\end{thm}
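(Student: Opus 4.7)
The plan is to mimic closely the proof of Proposition \ref{equivgelf}, upgrading every step so that it respects the additional order/isocone structure. First I will define the two contravariant functors. On the geometric side, a morphism of compact completely separated ordered spaces should be a continuous isotone map $f\colon M\to N$; then I set $\C f\colon (I(N),C(N))\to (I(M),C(M))$, $h\leadsto h\circ f$. Since precomposition by a continuous isotone map sends continuous isotone functions to continuous isotone functions, $\C f$ restricts to a map $I(N)\to I(M)$ and is a unital ${}^*$-homomorphism, i.e.~a morphism of $I^*$-algebras. On the algebraic side, a morphism $\phi\colon (I,\A)\to(I',\A')$ of $I^*$-algebras is a unital ${}^*$-homomorphism with $\phi(I)\subset I'$; I then define $\Sigma\phi\colon \Delta(\A')\to\Delta(\A)$, $\chi\leadsto\chi\circ\phi$. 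Continuity in the Gel'fand topology is automatic, and the fact that $\chi_1\leq_{I'}\chi_2$ implies $\chi_1\circ\phi\leq_I\chi_2\circ\phi$ (evaluate on any $a\in I$, using $\phi(a)\in I'$) guarantees that $\Sigma\phi$ is isotone. Functoriality ($\C\,\mathrm{id}=\mathrm{id}$, $\C(g\circ f)=\C f\circ\C g$, and similarly for $\Sigma$) is formal.

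Next I construct the two natural transformations. On the algebraic side, the component at $(I,\A)$ is the Gel'fand transform $\bigvee_\A\colon(I,\A)\to (I(\Delta(\A)),C(\Delta(\A)))$. The preceding theorem already asserts that this is an isometric ${}^*$-isomorphism of $C^*$-algebras sending $I$ onto $I(\Delta(\A))$, so it is an isomorphism of $I^*$-algebras; naturality with respect to a morphism $\phi$ reduces to $\widehat{\phi(a)}(\chi')=\chi'(\phi(a))=\widehat{a}(\Sigma\phi(\chi'))$, which is immediate. On the geometric side, the component at $M$ is the evaluation map $\epsilon_M\colon M\to\Delta(C(M))$, $x\leadsto\epsilon_x$. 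By Lemma \ref{homoepsi} this is already a homeomorphism of compact Hausdorff spaces; what remains is the order-theoretic upgrade, and this is exactly where the hypothesis of being \emph{completely separated} enters: by the very definition of that notion,
\[
x\leq y \;\Longleftrightarrow\; \forall f\in I(M),\ f(x)\leq f(y) \;\Longleftrightarrow\; \epsilon_x\leq_{I(M)}\epsilon_y,
\]
so $\epsilon_M$ is an isomorphism of partially ordered spaces (and in particular $(\Delta(C(M)),\leq_{I(M)})$ is itself completely separated). Naturality follows from $\epsilon_{f(x)}(h)=h(f(x))=(\C f\,h)(x)$.

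Finally I have to check that $(\Delta(\A),\leq_I)$ is a genuine object of the geometric category, i.e.~that $\leq_I$ is a partial order on $\Delta(\A)$ making $\Delta(\A)$ completely separated. Reflexivity and transitivity are obvious; antisymmetry uses $\overline{\mathrm{span}(I)}=\A$: if $\chi_1\leq_I\chi_2$ and $\chi_2\leq_I\chi_1$ then $\chi_1$ and $\chi_2$ agree on $I$, hence on its linear span, hence on $\A$. Complete separation is then automatic since the continuous isotone functions on $(\Delta(\A),\leq_I)$ contain all Gel'fand transforms $\hat{a}$ with $a\in I$, which by construction of $\leq_I$ already determine the order. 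Combining the two natural isomorphisms $\bigvee$ and $\epsilon$ then gives the dual equivalence.

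The main obstacle I foresee is the compatibility of the isocone structure with the Gel'fand transform in both directions: one must verify not only that $\widehat{I}\subset I(\Delta(\A))$ but that equality holds, so that the transported order on $\Delta(\A)$ is exactly $\leq_I$. This is essentially the content of the preceding theorem (whose proof relies on Nachbin's approximation theorem, applied to the sublattice cone $\widehat{I}$ of $C(\Delta(\A))$), and once this identification is in hand all the naturality diagrams collapse to routine verifications.
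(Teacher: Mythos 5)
Your proposal is essentially correct, and you should know that the thesis itself does not prove this theorem: it is quoted from Besnard's work \cite{Bes}, to which the reader is referred for the proofs. What you have written is exactly the argument the text implicitly sketches, modelled on Proposition \ref{equivgelf}: the contravariant functors $M \leadsto (I(M),C(M))$ and $(I,\A)\leadsto(\Delta(\A),\leq_I)$, with the Gel'fand transform and the evaluation map of Lemma \ref{homoepsi} as the natural isomorphisms, and with the preceding theorem (the Gel'fand transform as an isomorphism of $I^*$-algebras, i.e.\ $\widehat{I}=I(\Delta(\A))$) correctly identified as the crucial input. Your checks that $\leq_I$ is a genuine partial order (antisymmetry via $\overline{\mathrm{span}(I)}=\A$ together with the automatic continuity of characters) and that $(\Delta(\A),\leq_I)$ is completely separated are sound, as are the naturality verifications.

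The one well-definedness point you pass over silently is on the geometric side: for $M\leadsto(I(M),C(M))$ to land in the category of $I^*$-algebras one must verify that $I(M)$ is an isocone, the only nontrivial axiom being $\overline{\mathrm{span}(I(M))}=C(M)$. This follows from a Kakutani--Stone (lattice Stone--Weierstrass) argument: $\mathrm{span}(I(M))$ is stable under $\vee$ and $\wedge$ because, for instance, $(f_1-g_1)\vee(f_2-g_2)=(f_1+g_2)\vee(f_2+g_1)-(g_1+g_2)$ with all the bracketed functions isotone; it contains the constants; and it separates points, since for $x\neq y$ antisymmetry gives (say) $x\npreceq y$ and complete separation then produces $f\in I(M)$ with $f(x)>f(y)$. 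Hence the real span is dense in $C(M,\setR)$ and its complexification is dense in $C(M)$. Since the paper asserts the existence of the two functors before stating the theorem, this is a minor omission rather than a gap, but a self-contained write-up should include it.
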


We can see that the concept of isocone allows us to translate a notion of partial order in noncommutative geometry, at least in the compact case. However, to obtain a complete translation of the notion of causal order, two problems remain:
\begin{itemize}
\item $I^*$-algebras are only defined for compact spaces, and we know that we must find a translation of causality for non-compact ones. One way to solve this problem is to consider the compactification of the Lorentzian manifolds (and in particular the Stone--\v Cech compactification that we will define afterwards). Some elements about such compactification are given in \cite{Bes}. The compactification seems not to be  very problematic here since it is used only to translate the information about the order, so once more this looks only like  a technical difficulty.
\item The non-commutative generalization is done for general partial orders, but we are only interested in partial orders resulting from a Lorentzian geometry. So one must find some extra conditions to guarantee that the order corresponds to a Lorentzian causal order. At this time, such conditions are still unknown.\\
\end{itemize}

\subsection{Temporal Lorentzian Spectral Triple}\label{secTLST}

We present here some unpublished research about an extension of the notion of pseudo-Riemannian spectral triple in order to take causality into account. This can be considered as a work in progress since there are some remaining technical points and since the set of axioms is far from being fixed. A definition of temporal Lorentzian spectral triples will be presented as a working basis. This new definition opens the door to a lot of new possibilities of development of spectral triples for Lorentzian manifolds.\\

The need for such research comes from the following remark: until now, every existing attempt to deal with causality in noncommutative geometry needs the use in some way of causal functions:
\begin{itemize}
\item The first approach by G.N. Partfionov and R. R. Zapatrin \cite{PZ} introduces the class of  time functions satisfying \mbox{$(\nabla f)^2 \geq 1$} (but without the knowledge whether this set is empty or not).
\item The first Lorentzian distance function proposed by V. Moretti \cite{Mor} is entirely based on the set of local causal functions $\C(\bar I)$.
\item The works by F. Besnard \cite{Bes} (cf. the Section \ref{NCcausal}) use the set of causal functions to completely determine the order in partially ordered spaces.
\item Our approach to create a global Lorentzian distance function \mbox{\cite{F2,F3}} uses the set of causal functions, at first as a way to break the symmetry of the distance function (Section \ref{secgenform}), and then as a way to solve the problem of the potential lack of absolute continuity (Section \ref{secglobalfunction}). Anyway, the conditions \mbox{$\text{\rm ess } \sup g( \nabla f, \nabla f ) \leq -1$} $(\nabla f \text{ is past directed})$ used to define the distance function are clearly only valid on a subset of $\C(\M)$.
\item Any time function or temporal function (Definition \ref{timefunction}) which could be defined for globally hyperbolic spacetimes is clearly a particular case of causal functions.
\end{itemize}

So it appears that the set of causal functions should play an important role in the generalization of noncommutative geometry to causal Lorentzian spaces. However, on non-compact spaces, most of those functions are unbounded, which means that the supremum norm (and a fortiori $L^p$ norms) cannot be used to define a Banach algebra.\\

Actually, those functions can be categorized in imbricated convex cones. We have the convex cone of causal functions which is the largest one, and contains the convex cone of time functions. If restricted to (a.e.)~differentiable functions, those cones contain the convex cone of temporal functions, which can be restricted to the convex cone of functions respecting \mbox{$\text{\rm ess } \sup g( \nabla f, \nabla f ) \leq -1$}. If we want to create a noncommutative generalization of either one of those cones, we need to find a structure of $C^*$-algebra, or at least a normed algebra, in which those cones are embedded.\\

We begin by a review of the common algebras of functions on a non-compact manifold $\M$. We will consider involutive algebras, since the real-valued functions can always be recovered by taking the Hermitian elements. We can extract four different ones, given by inclusion order:
\begin{itemize}
\item The algebra $C_c(\M)$ of continuous functions with compact support, which is a $C^*$-algebra with the supremum norm
\item The algebra $C_0(\M)$ of continuous functions vanishing at infinity, which is a $C^*$-algebra with the supremum norm
\item The algebra $C_b(\M)$ of continuous bounded functions, which is a unital $C^*$-algebra with the supremum norm
\item The algebra $C(\M)$ of continuous functions, which is unital and contains unbounded elements
\end{itemize}

Each of these algebras can  be restricted to its dense subalgebra of smooth functions (or even to Lipschitz continuous functions). In the compact case, all these algebras are equivalent, and the Gel'fand--Naimark theorem can be applied. In the non-compact case, the Gel'fand--Naimark theorem can only be applied on the $C^*$-algebra $C_0(\M)$ if we want to recover the non-compact space $\M$.\\

We can notice that the $C^*$-algebra $C_0(\M)$ is an ideal of the unital \mbox{$C^*$-algebra} $C_b(\M)$, but is not an ideal of the unital algebra $C(\M)$ (actually, $C_c(\M)$ is an ideal of $C(\M)$). The algebras $C_0(\M)$ and $C_b(\M)$ are in relationship with each other, with the notion of multiplier algebra introduced by S. Helgason \cite{Helgason}.

\begin{defn}
Let $\A$ be a $C^*$-algebra. An ideal $I \subset \A$ is said to be {\bf\index{ideal!essential} essential} if $I$ has non-zero intersection with every  non-zero  ideal of $\A$.
\end{defn}

\begin{defn}
Let $\A$ be a non-unital $C^*$-algebra. The {\bf\index{algebra!multiplier} multiplier algebra $M(\A)$} is the C*-algebra, unique up to isomorphism, which is the largest unital $C^*$-algebra that contains $\A$ as a two-sided essential ideal, largest in the sense that any other such algebra can be embedded in it.
\end{defn}

The norm on the multiplier algebra $M(\A)$ is usually the operator norm, since $M(\A)$ can be seen as an algebra of bounded multiplicative operators on $\A$ (hence its name). Since the $C^*$-algebra $\A$ is itself (by use of a representation) a normed algebra of bounded multiplicative operators on some Hilbert space $\H$, both norms are equivalent, and the multiplier algebra $M(\A)$ has a representation as bounded multiplicative operators on $\H$, with $\A$ as a two-sided essential ideal. The multiplier algebra is usually defined for complete algebras, but we can easily extend the concept to pre-$C^*$-algebra (in this case the multiplier is only a unital pre-$C^*$-algebra, whose closure corresponds to the multiplier of the closure of the algebra).

\begin{defn}
The {\bf\index{compactification!Stone--\v Cech} Stone--\v Cech compactification} $\beta X$ of a topological space $X$ is the largest compact Hausdorff space generated by $X$, with a continuous map \mbox{$i : X \rightarrow \beta X$}, largest in the sense that, for any compact Hausdorff space $Y$ and any continuous map $f :X \rightarrow Y$, there exists a unique continuous map \mbox{$g : \beta X \rightarrow Y$} such that \mbox{$f = g \circ i$}.
\end{defn}

\begin{prop}
If $\M$ is a locally compact Hausdorff space, then $M(C_0(\M)) = C_b(\M)$. Moreover, by the Gel'fand--Naimark theorem, $C_b(\M)$ is the unital $C^*$-algebra of continuous functions on some compact Hausdorff space, which corresponds to the Stone--\v Cech compactification of $\M$, i.e.~$C_b(\M)\cong C(\beta\M)$.
\end{prop}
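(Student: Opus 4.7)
The plan is to establish the two claims in sequence, using the $C^*$-algebra machinery developed earlier (in particular Theorem~\ref{GelNai} and Proposition~\ref{equivgelf}).

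For the first equality $M(C_0(\M)) = C_b(\M)$, I would proceed in two directions. The easy direction is to embed $C_b(\M)$ into $M(C_0(\M))$: pointwise multiplication by $g\in C_b(\M)$ sends $C_0(\M)$ into itself, since if $f$ vanishes outside a compact set up to $\varepsilon$, then so does $gf$ (with $\varepsilon$ replaced by $\|g\|_\infty\varepsilon$). This gives a unital isometric $*$-homomorphism \mbox{$C_b(\M)\hookrightarrow M(C_0(\M))$}, and $C_0(\M)$ sits inside $C_b(\M)$ as a two-sided ideal. Essentiality follows from local compactness: by Urysohn's lemma, for every $p\in\M$ there is a compactly supported $f\in C_0(\M)$ with $f(p)=1$, so any nonzero closed ideal $J\subset C_b(\M)$ intersects $C_0(\M)$ nontrivially (pick $h\in J$ with $h(p)\neq 0$ and a bump $f$ localized near $p$; then $hf\in J\cap C_0(\M)$, and is nonzero).

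For the reverse direction, I would show $C_b(\M)$ is maximal among such extensions by constructing, for each multiplier $T\in M(C_0(\M))$, an element $\tilde T\in C_b(\M)$ implementing it. Given $p\in\M$, choose $f\in C_0(\M)$ with $f(p)\neq 0$ (exists by local compactness) and set $\tilde T(p) = T(f)(p)/f(p)$. The multiplier identity $T(fg)=T(f)g=fT(g)$ gives independence of the choice of $f$, since for any other such $g$ one has $f(p)T(g)(p)=T(f)(p)g(p)$. Choosing locally a compactly supported $f$ equal to $1$ on a neighborhood of $p$ shows that $\tilde T$ coincides with $T(f)$ there, hence is continuous. The estimate $|\tilde T(p)|\leq\|T\|$ follows by taking $f$ with $f(p)=\|f\|_\infty$, yielding $\tilde T\in C_b(\M)$ with $\|\tilde T\|_\infty\leq\|T\|$. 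Injectivity and the fact that $\tilde T$ indeed implements $T$ on $C_0(\M)$ follow from $(\tilde T\cdot f)(p)=\tilde T(p)f(p)=T(f)(p)$ together with essentiality.

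For the second claim, $C_b(\M)$ is a commutative unital $C^*$-algebra, so by Gel'fand--Naimark (Theorem~\ref{GelNai}) it is isometrically $*$-isomorphic to $C(X)$ where $X=\Delta(C_b(\M))$ is a compact Hausdorff space. I would verify that $X$ satisfies the universal property of the Stone--\v Cech compactification. The inclusion $i:\M\hookrightarrow X$ sending $p$ to the evaluation character $\epsilon_p$ is continuous (by the Gel'fand topology) and injective (Urysohn-style bump functions separate points of $\M$ as characters of $C_b(\M)$). Given any continuous $f:\M\to Y$ into a compact Hausdorff $Y$, the pullback $f^*:C(Y)\to C_b(\M)$ is a unital $*$-homomorphism; by Proposition~\ref{equivgelf} it dualizes to a continuous map $\bar f:X\to\Delta(C(Y))\cong Y$. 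The compatibility $\bar f\circ i=f$ is immediate from the definitions, and uniqueness of $\bar f$ follows from density of $i(\M)$ in $X$ (since otherwise one could exhibit a nonzero element of $C(X)\cong C_b(\M)$ vanishing on $i(\M)$, which is absurd). Hence $X\cong\beta\M$.

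The main obstacle is the construction of $\tilde T$ in the first part: one must simultaneously handle well-definedness, continuity, and the fact that the formula glues consistently across points with different choices of $f$. This is ultimately controlled by the ideal identity $T(fg)=fT(g)$, but requires the local compactness of $\M$ to supply enough compactly supported functions nonvanishing at each point. Density of $i(\M)$ in $X$ in the second part is a parallel subtlety, again resolved by Urysohn-type separation on $\M$.
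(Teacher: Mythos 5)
Your proposal is correct, but note that the paper itself does not prove this proposition: it simply defers to Wegge-Olsen \cite{Wegge}. What you have written is essentially the standard self-contained argument, and it fits nicely with the tools already developed in the text: the concrete identification of a multiplier $T\in M(C_0(\M))$ with a function $\tilde T\in C_b(\M)$ via $\tilde T(p)=T(f)(p)/f(p)$, with well-definedness and gluing controlled by the centralizer identity $T(fg)=T(f)g=fT(g)$ and by Urysohn bump functions supplied by local compactness; and the identification $C_b(\M)\cong C(\beta\M)$ via Theorem~\ref{GelNai} together with the duality of Proposition~\ref{equivgelf}, checking the universal property exactly as the paper defines the Stone--\v Cech compactification (only a continuous map $i$ is required there, so your verification matches the definition; an embedding statement would additionally use complete regularity). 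One step worth tightening: when you check that $\tilde T$ implements $T$, the identity $(\tilde T f)(p)=T(f)(p)$ is immediate only where $f(p)\neq 0$; at a point with $f(p)=0$ you should argue separately that $T(f)(p)=0$, which follows by picking $g\in C_0(\M)$ with $g(p)\neq 0$ and using $T(f)g=T(fg)=T(g)f$, so that $T(f)(p)g(p)=T(g)(p)f(p)=0$. Likewise, the isometry of the inclusion $C_b(\M)\hookrightarrow M(C_0(\M))$ need not be checked by hand, since the paper has already shown that injective $*$-morphisms of $C^*$-algebras are isometric. Your density argument for $i(\M)\subset\Delta(C_b(\M))$ (a nonzero element of $C(X)$ vanishing on $i(\M)$ would be the Gel'fand transform of the zero function) is correct and is the right way to get uniqueness of the extension.
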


The proof of this result can be found in \cite{Wegge}. In fact, the unitization $C_0(\M)^+$ corresponds to the Alexandroff compactification $C(\M^+)$, and the multiplier $M(C_0(\M))$ corresponds to the Stone--\v Cech compactification $C(\beta\M)$. If we consider the pre-$C^*$-algebra $C^\infty_0(\M)$, then $M(C^\infty_0(\M)) = C^\infty_b(\M)$ since the set of smooth functions vanishing at infinity forms an ideal for the set of bounded smooth functions but not for the continuous ones.\\

The multiplier algebra \mbox{$M(C_0(\M))$} is a first possibility to extend the algebra to some causal functions, since the intersection \mbox{$M(C_0(\M)) \cap \C(\M) = C_b(\M) \cap \C(\M)$} is non empty and non trivial, and a cone of bounded causal functions can be taken into consideration. This cone contains also some time or temporal functions, which are functions with a "decreasing rate of growth" such that they remain bounded.\\

However, the condition \mbox{$\text{\rm ess } \sup g( \nabla f, \nabla f ) \leq -1$} cannot be respected by any bounded temporal function, and in particular any distance function $d_p$ cannot live in this cone. So we must find a way to create an algebra of bounded operators which contains functions respecting \mbox{$\text{\rm ess } \sup g( \nabla f, \nabla f ) \leq -1$}.\\

In order to answer this problem, one could wonder which algebraic structure, with a finite norm, can be given to the set $C(\M)$. Since continuous functions are locally integrable, we can use a structure defined for the space of locally integrable functions $L^1_{\text{loc}}(\M)$. Such structure is provided by the theory of partial inner product spaces and partial *-algebras. We give here few elements of this theory. A complete introduction, including the topological aspects, can be found in the book by J.-P. Antoine and C.~Trapani \cite{Antoine}.

\begin{defn}
A {\bf\index{linear compatibility relation} linear compatibility relation} on a vector space $V$ is a symmetric binary relation $f\# g$ which preserves linearity.
\end{defn}

For every subset $S\subset V$, we can define the vector subspace \mbox{$S^\#=\set{g\in V : g\# f,\ \forall f\in S} \subset V$} which respects the inclusion property \mbox{$S \subset \prt{S^\#}^\#$}. Vector subspaces such that \mbox{$S = \prt{S^\#}^\#$} are called {\bf\index{assaying subspaces} assaying subspaces}.\\

The family of assaying subspaces forms a lattice  with the inclusion order, and the meet and join operations given by \mbox{$S_1 \wedge S_2 =S_1 \cap S_2$} and  \mbox{$S_1 \vee S_2 = \prt{S_1 + S_2}^{\#\#}$}. Usually, it is enough to consider only an indexed sublattice \mbox{$\I = \set{S_r : r\in\I}$}  which covers $V$, with an involution defined on the index $I$ by $\prt{S_r}^\# = S_{\overline r}$.

\begin{defn}
A {\bf\index{inner product!partial} partial inner product} on $(V,\#)$ is a (possibly indefinite) Hermitian form $\scal{\,\cdot\,,\,\cdot\,}$ defined exactly on compatible pairs of vectors $f\# g$. A {\bf\index{partial inner product space} partial inner product space (PIP-space)} is a vector space $V$ equipped with a linear compatibility relation and a partial inner product. An {\bf\index{partial inner product space!indexed} indexed PIP-space} is a PIP-space with a fixed generating involutive indexed sublattice of assaying subspaces.
\end{defn}

The space \mbox{$L^1_{\text{loc}}(\M)$} is a PIP-space with the compatibility relation given by
$$f\# g\quad\lequi\quad \int_\M \abs{fg} d\mu_g \ <\  \infty$$
and the partial inner product
$$\scal{f,g} = \int_\M f^*g \;d\mu_g .$$

We can define an involutive indexed sublattice of assaying subspaces by using weight functions. If, for $r,r^{-1} \in L^2_{\text{loc}}(\M)$ with $r$ a.e.~positive Hermitian, we define the space $L^2(r)$ of measurable functions $f$ such that $fr^{-1}$ is square integrable, i.e.
$$L^2(r) = \set{f\in L^1_{\text{loc}}(\M) : \int_\M \abs{f}^2 r^{-2} \;d\mu_g \;<\; \infty},$$
then the family \mbox{$\I = \set{L^2(r)}_r$} respects $L^1_{\text{loc}} = \bigcup_{r} L^2(r)$ and is a generating sublattice of assaying subspaces, with an involution defined by $\overline{r} = r^{-1}$ since \mbox{$L^2(r)^\# = L^2(r^{-1})$}.\\

Actually, we have a realization of this indexed PIP-space as a lattice of Hilbert spaces \mbox{$\set{\H_r}_r$} where each \mbox{$\H_r = L^2(r)$} is endowed with the positive definite Hermitian inner product
$$\scal{f,g}_r = \int_\M f^*g \; r^{-2} \;d\mu_g .$$
In the following, we will denote such indexed PIP-space by $\H = \bigcup_{r} \H_r$, with the center space $\H_0$ being the space of square integrable functions.\\

The space $C_b(\M) = M\prt{C_0(\M)}$ is a unital $C^*$-algebra which acts as bounded multiplicative operators on $\H^0$, with the operator norm corresponding to the supremum norm \mbox{$\norm{f}_\infty = \sup_{x\in\M} \abs{f(x)}$}. We can wonder what happens if we make a similar modification to the supremum norm, by introducing a continuous weight function $r$. We denote the algebra of bounded continuous functions by $\A^M_0 = M\prt{C_0(\M)}$, where  $M$ here denotes the multiplier algebra. We define the following lattice of spaces:
$$\A^M_r = \set{ f\in C(\M) : \sup_{x\in\M} \abs{f(x)\;r^{-1}(x)} \;<\;\infty}\cdot$$

Except for the center space $\A^M_0$, each $\A^M_r $ is a vector space which has not a structure of algebra. Instead, those spaces have a structure of {\bf\index{algebra!*-algebra!partial} partial *-algebra}, which means that the product $fg\in \A^M_r$ is well defined only for a bilinear subset of $\A^M_r \times \A^M_r $. Actually, we have that $fg\in \A^M_r$ if $f\in \A^M_s$ and $g\in \A^M_t$, with $r=st$.\\

Each $\A^M_r $ is endowed with a norm $\norm{\,\cdot\,}_r = \norm{\,\cdot\ r^{-1}}_0$ , where  $\norm{\,\cdot\,}_0$ is the operator norm on $M\prt{C_0(\M)}$.
In the same way as the lattice of Hilbert spaces, we will define the partial\footnote{In fact, this is an abuse of language in order to say that each element of the lattice is a partial algebra, even if their union is an algebra.} *-algebra \mbox{$\A^M = \bigcup_{r} \A^M_r$} which obviously corresponds to the algebra of continuous functions $C(\M)$ .\\

It is trivial that \mbox{$\A^M = \bigcup_{r} \A^M_r$} acts on the PIP-space $\H = \bigcup_{r} \H_r$ as multiplicative operators. Moreover, we have that each $a\in A^M_s$ acts as a bounded operator \mbox{$a : \H_r \rightarrow \H_{rs}$}.\\

So we see that it is possible to consider $C(\M)$ as a partial *-algebra of bounded operators on some PIP-space,\footnote{We must remark that, under few hypotheses, it is possible to construct a generalization of the \mbox{GNS-representation} of partial *-algebras into PIP-spaces \cite{Antoine}.} with a set of norms defined on a generating sublattice. However, we do not need the whole space $C(\M)$. Indeed, causal functions are only unbounded "in the direction of the time", which means that we do not really need functions which are growing indefinitely on spacelike surfaces. So we propose the following idea: the set of weight functions can be restricted in order that the algebra \mbox{$\A^M = \bigcup_{r} \A^M_r$} contains some functions which are growing indefinitely only along causal curves. In that way, the chosen set of weight functions will determine a constraint on causality. A typical way is to restrict $C(\M)$ to the functions which satisfy a growth condition along causal curves.\\

\vspace{0.5em}                                                      

Once more, we will suppose that $\M$ is a globally hyperbolic spacetime admitting a spin structure. We know by the Corollary \ref{corgeroch} that $\M$ admits a Cauchy temporal function $\T$. Then we construct the following partial *-algebra:
\begin{itemize}
\item $\A^M_0 = M(C_0^\infty(\M))$ is the unital pre-$C^*$-algebra of smooth bounded functions on $\M$
\item $\A^M_\alpha = \set{ f\in C^\infty(\M) : \sup_{x\in\M} \abs{f(x)\; \prt{1+\T(x)^2}^{\frac \alpha2}} \;<\;\infty}$
\item \mbox{$\A^M = \bigcup_{\alpha \in\setR} \A^M_\alpha$} is the unital partial *-algebra of smooth  functions on $\M$ of polynomial growth which are bounded on each Cauchy surface $\T^{-1}(x)$, $x\in\M$.
\end{itemize}

\vspace{0.5em}                                                      

This partial *-algebra has a large number of interesting properties, easily derived from the definition:
\begin{itemize}
\item Each set $\A^M_\alpha$ is simply generated from $\A^M_0$ by $f\in\A^M_\alpha$ if and only if $f\prt{1+\T(x)^2}^{\frac \alpha2} \in\A^M_0$
\item Each set $\A^M_\alpha$ is endowed with a norm $\norm{\,\cdot\,}_\alpha = \norm{\,\cdot\ \prt{1+\T(x)^2}^{\frac \alpha2}}_0$ where $\norm{\,\cdot\,}_0$ is the norm on $\A^M_0$
\item We have the partial rule product 
$$ab\in\A^M_{\alpha} \ \lequi\ \exists \,\beta,\gamma\in\setR\ :\alpha = \beta + \gamma \ ,\ a\in\A^M_\beta, \ b\in\A^M_\gamma$$
\item $\T\in\A^M$, and more precisely $\T\in\A^M_{-1}$
\item All smooth causal functions with polynomial growth are included in the algebra $\A^M$
\item All a.e.~differentiable causal functions with polynomial growth are in the closure algebra \mbox{$\overline{\A^M} = \bigcup_{\alpha\in\setR} \overline{\A^M_\alpha}$}, where the closure is taken in each subset with respect to each norm $\norm{\,\cdot\,}_\alpha$. In particular, the Lorentzian distance function $d_p$ belongs to $\overline{\A^M}$, as well as the functions with polynomial growth respecting \mbox{$\text{\rm ess } \sup g( \nabla f, \nabla f ) \leq -1$}.
\end{itemize}

Of course, the polynomial growth is just a particular choice. For example, one can define a similar  partial *-algebra with causal functions of exponential growth by taking a weight function like \mbox{$e^{\alpha \abs{\T}}$}, $\alpha\in\setR$.\\

Now we can remember that the pre-$C^*$-algebra $\A_0 = C_0^\infty(\M)$, as well as its unitization $\A^M_0 = M(C_0^\infty(\M))$, act as multiplicative bounded operators on the Hilbert space $\H_0 = L^2(\M,S)$ of square integrable spinor sections over $\M$. From that, we can construct the indexed PIP-space $\H = \bigcup_{\alpha\in\setR} \H_\alpha$ of spinor sections over $\M$ which are square integrable under the weighted normed, and on which $\A^M = \bigcup_{\alpha \in\setR} \A^M_\alpha$ acts as a family of bounded operators.\\

Such space corresponds to a scale of Hilbert spaces generated by the self-adjoint operator \mbox{$\prt{1+\T^2}^{\frac 12} \geq 1$}. Indeed, \mbox{$\T : \text{Dom}\prt{\T}  \rightarrow \H_0$} is an unbounded self-adjoint operator acting on $\H_0= L^2(\M,S)$ with domain \mbox{$\text{Dom}\prt{\T}\subset \H_0$}, and \mbox{$\prt{1+\T^2}^{\frac 12}$} is an unbounded positive self-adjoint operator with similar domain \mbox{$\text{Dom}\prt{\T} = \text{Dom}\prt{\prt{1+\T^2}^{\frac 12}}$}. We can define the following Hilbert spaces:
$$\H_n = \bigcap_{k=0}^n \text{Dom}\prt{\prt{1+\T^2}^{\frac k2}} = \bigcap_{k=0}^n \text{Dom}(\T^k)\quad\forall n\in\setN.$$
The conjugate spaces \mbox{$\H_{\overline n} = \H_{-n}$}, $n\in\setN$ are just the topological duals of the spaces $\H_n$ (properly speaking, there are the identifications of the topological duals under the Hermitian inner product on $\H_0$). Then we have the discrete scale of Hilbert spaces:
$$ \bigcap_{n\in\setZ} \H_n \subset \dots \subset \H_2 \subset \H_1 \subset \H_0 \subset \H_{-1} \subset \H_{- 2} \subset \dots \subset  \bigcup_{n\in\setZ} \H_n = \H.$$

By interpolation theory, this discrete scale can be extended to a continuous scale $\set{\H_\alpha}_{\alpha\in\setR}$ (but the discrete scale is sufficient). Each $\H_\alpha$ is endowed with the positive definite Hermitian inner product:
$$(\psi,\phi)_\alpha = \prt{\psi,\prt{1+\T^2}^{ \alpha} \phi}_0 =  \int_\M \psi^*\phi \, \prt{1+\T^2}^{ \alpha}\,d\mu_g.$$

Then $\A^M = \bigcup_{\alpha \in\setR} \A^M_\alpha$ acts as a family of bounded multiplicative operators on $\H = \bigcup_{\alpha\in\setR} \H_\alpha$ with  \mbox{$a : \H_\alpha \rightarrow \H_{\alpha + \beta}$} if \mbox{$a\in\A^M_\beta$} and for each $\alpha \in\setR$. The norm \mbox{$\norm{\,\cdot\,}_\beta$} on \mbox{$a\in\A^M_\beta$} corresponds to the operator norm on \mbox{$\H_\alpha \rightarrow \H_{\alpha + \beta}$} and it is independent of the chosen $\H_\alpha$.\footnote{We will give the proof of this property later for the general case, including noncommutative algebras.}\\

So in our construction, we have three elements $\A_0,\H_0,\T$ which generate a unital indexed partial *-algebra $\A^M_0 = \bigcup_{\alpha \in\setR} \A^M_\alpha$ acting on an indexed PIP-space $\H = \bigcup_{\alpha\in\setR} \H_\alpha$, where the index $\alpha$ can also be restricted to $\setZ$. The only thing which is missing in order to have a generalized construction of a commutative spectral triple is a Dirac operator $D$.\\

The Dirac operator \mbox{$D = -i(\hat c \circ \nabla^S)$} acts as an unbounded operator on $\H_0$ with dense domain, and we can consider its extension to the PIP-space $\H$. We know that the Dirac operator respects the following properties:
\begin{itemize}
\item For every $a\in\A^M$, $[D,a] = -i\,c(da)$, since the construction of the Theorem \ref{commudirac} in the Section \ref{spinsec} still holds for non-compact pseudo-Riemannian 
manifolds and every smooth function.
\item By the Proposition \ref{cpt1} in the Section \ref{Krein}, if there exists a spacelike reflection $r$ on $\M$ such that the Riemannian metric $g^r$ associated is complete, then $D$ is essentially Krein-self-adjoint in the Krein space defined by the associated fundamental symmetry $\J_r$.
\end{itemize}

Those properties lead to two important technical remarks. First, one could think -- as it is sometimes wrongly admitted in the literature -- that the condition that $[D,a]$ is bounded for each $a\in C^\infty_0(\M)$ is still valid. However, this is not the case for non-compact manifolds since there exist some functions vanishing at infinity whose derivatives are unbounded (as example, we have the functions decreasing at infinity with infinite oscillations of increasing frequency). If we want to conserve this property, we need to restrict the pre-$C^*$-algebra $\A_0$ and its unitization $\A^M_0$ to functions respecting the condition of boundedness of the commutator $[D,a]$. Typically, this can be done by restricting $\A_0$ and $\A^M_0$ to functions of bounded derivatives (as e.g. Schwartz space, cf.~\cite{Gayral} for a construction of such algebra on a particular noncommutative case). Those algebras are not necessarily dense in the previous ones, but their still fit our goal since causal functions with polynomial growth have obviously bounded derivatives.\\

Second, the Krein-self-adjointness condition of Dirac operator is only proven when the manifold is complete under the metric obtained from the spacelike reflection. This condition is trivial for compact manifolds, but not for non-compact ones. Nevertheless, this is not a so strong condition to work with Lorentzian manifolds which are complete for an associated Riemannian metric, which implies that every Cauchy surface must be complete and that every inextendible geodesic is complete. When it is not the case, the result still holds for a restriction on a complete subspace. Some characterizations of completeness for globally hyperbolic spacetimes can be found in \cite{Beem}.\\

Now we arrive at the last ingredient of our construction. We can suppose that we have a Dirac operator $D$ which is Krein-self-adjoint for a Krein space obtained by a modification of the PIP-space $\H$ under a fundamental symmetry (which is not necessarily unique, but every fundamental symmetry is sufficient to characterize the Krein space).\\

Since $\M$ is globally hyperbolic, from the Theorem \ref{thgeroch} and the Corollary \ref{corgeroch} we know that $\T$ is a smooth time function with past-directed timelike gradient everywhere whose level sets are smooth Cauchy surfaces. So the Lorentzian metric admits a globally defined  orthogonal\footnote{More precisely, the splitting \mbox{$g = -d\T^2 + g_\T$} is obtained from a conformal transformation of the metric, in order to get an orthonormal splitting. So the result is valid for the equivalence classes under conformal invariance.} splitting \mbox{$g = -d\T^2 + g_\T$}, where $g_\T$ is a Riemannian metric on each level set $S_\T$. This splitting on the metric induces a splitting on the tangent (and the cotangent) bundle \mbox{$T\M = T\M_- \oplus T\M_+$}, where the subbundle $T\M_-$ has a one dimensional fiber generated by the gradient $\nabla\T$. So the temporal function $\T$ defines a spacelike reflection, with the associated Riemannian metric being \mbox{$g^r = d\T^2 + g_\T$}. Since $\nabla\T$ is a generating element of $T\M_-$, $d\T$ is a generating element of $T\M_-^*$.\\

From the Proposition \ref{propslr}, we know that each spacelike reflection generates a fundamental symmetry $\J$ defined from the Clifford action $c$ on a local oriented orthonormal basis \mbox{$\set{e}$} of $T\M_-^{*}$ by $\J = ic(e)$. So $\J = ic(d\T) = i\gamma^0$ is a fundamental symmetry, and if the manifold $\M$ is complete under the metric \mbox{$g^r = d\T^2 + g_\T$}, then $D$ is Krein-self-adjoint for the Krein space defined by $\J$, which is equivalent to the fact that $D\J$ (or $\J D$) is a self-adjoint operator in $\H$.\\

Then, since we have  $[D,a] = -i\,c(da)$ $\forall a\in\A^M$, we can conclude that $\J = - [D,\T]$, or equivalently $\J = [D,\T]$, is a fundamental symmetry of the Krein space. The condition on the Krein-self-adjointness of $D$ becomes a condition on the self-adjointness of $D[D,\T]$ or equivalently $[D,\T]D$ on the PIP-space $\H$.\\

The particular construction of this fundamental symmetry forces the Dirac operator to correspond to a metric with Lorentzian signature, since its self-adjointness is recovered by the multiplication by a single Dirac matrix. So we have now enough elements to propose a generalization of the notion of pseudo-Riemannian spectral triple, with a guaranty on the Lorentzian signature. Moreover, by introducing the extension to PIP-spaces, we can create an extended algebra, with a set of norms defined on it, on which causal or temporal cones can be defined. In particular, the temporal function $\T$ giving the fundamental symmetry is an element of this extended algebra.\\

The following definition is a proposition of generalization of this construction to noncommutative algebras. Of course we need to impose some conditions on the commutation between particular elements in order to conserve similar properties, and we will explain our choices in the following remarks. This definition is presented as a working basis, since we do not know at this time which exact conditions must be imposed in order to properly define Lorentzian spectral triples.

\vspace{2em}                                                      

\begin{defn}
A {\bf\index{spectral triple!temporal Lorentzian} Temporal Lorentzian Spectral Triple} \mbox{$(\A_0,\H_0,D,\T)$} is the data of:
\begin{itemize}
\item A Hilbert space $\mathcal{H}_0$ with positive definite inner product \mbox{$(\,\cdot\,,\,\cdot\,)_0$}
\item A non-unital pre-$C^*$-algebra $\mathcal{A}_0$ with a representation as bounded multiplicative operators on $\mathcal{H}_0$  with operator norm \mbox{$\norm{\,\cdot\,}_0$} 
\item An unbounded self-adjoint operator $\T$ in $\H_0$ with domain $\text{Dom}(\T)\subset \H_0$ such that \mbox{$\prt{1+ \T^2}^{-\frac{1}{2}}\in \A^M_0$}  and commutes with all elements in $\A^M_0$, where $\A^M_0 = M(\A_0)$ is the unital multiplier algebra of $\A_0$
\begin{itemize}
\item $\T$ generates an indexed partial  *-algebra \mbox{$\A^M = \bigcup_{n\in\setZ} \A^M_n$} by:
$$a\in \A^M_{n+1}\quad\text{ if and only if }\quad  \prt{1+ \T^2}^{\frac{1}{2}} a \in \A^M_{n},$$ with a norm \mbox{$\norm{\,\cdot\,}_n = \norm{(1+\T^2)^{\frac n2}\ \cdot\,}_0$} defined on each $ \A^M_n$.
\item $\A^M$ has a representation as a family of bounded multiplicative operators on the indexed PIP-space:
$$\mathcal{H} = \bigcup_{n\in\setZ} \H_n\quad\text{with}\quad\H_n = \bigcap_{k=0}^n \text{Dom}(\T^k)\quad\forall n\geq 0$$
$$\text{and }\  \H_{-n} =  \H_{\overline n} \ \text{ the conjugate dual of }\ \H_n$$
with \mbox{$a : \H_n \rightarrow \H_{n+m}$} if \mbox{$a\in\A^M_m$} and with a positive definite inner product  \mbox{$(\,\cdot\,,\,\cdot\,)_{n}=(\,\cdot\,,\prt{1+ \T^2}^{n}\,\cdot\,)_0$} defined on each $\H_n$.
\end{itemize}
\item An unbounded operator $D$ densely defined on $\mathcal{H}$ such that:
\begin{itemize}
\item \mbox{$[D,\prt{1+\T^2}^{\frac 12}]$} and \mbox{$[D,\T]$} commute with all elements in $\A^M$
\item all commutators $[D,a]$ are bounded for every $a\in\A^M$
\item \mbox{$D [D,\T]$} is a self-adjoint operator in $\H$
\item \mbox{$\J=[D,\T]$} defines a Krein space structure for $\H$, with \mbox{$\J^*=\J$} and \mbox{$\J^2=1$}
\end{itemize}
\end{itemize}
\end{defn}

\begin{defn}
A temporal Lorentzian spectral triple \mbox{$(\A_0,\H_0,D,\T)$} is {\bf\index{spectral triple!temporal Lorentzian!finitely summable} finitely summable} (or \mbox{$n^+$-summable}) if there exists a positive integer $n$ such that $a\,{\Delta_\T}\!\!\!^{-n} \in \L^{1+}$ for all $a\in\A_0$, where
$$\Delta_\T = \prt{ 1 + [D]_\T^2 }^{\frac 12}$$
with
$$[D]_\T^2 = \frac{1}{2}\prt{D [D,\T] D [D,\T]+ [D,\T] D [D,\T] D}.$$
\end{defn}

\vspace{1em}                                                      

Let us add some comments et precisions about the definition of temporal Lorentzian spectral triples:
\begin{itemize}
\item The element $\T\in\A^M$ is the {\bf\index{temporal element} temporal element} of the temporal Lorentzian spectral triple. It represents a notion of global time for the spectral triple. Of course this time is global and not local, and so it is not sufficient to determine the causality by itself. We can notice that the whole definition is invariant under the modification $\T \rightarrow -\T$ which simply corresponds to a choice of time orientation.
\item The condition that \mbox{$\J = [D,\T]$} determines a Krein space, with $D$ being a Krein-self-adjoint operator for this Krein space, is extremely important. Indeed, this is the only indication that the spectral triple corresponds to a Lorentzian geometry, with signature \mbox{$(n-1,1)$}. If we remove this condition, we simply have a pseudo-Riemannian spectral triple with an extension of the algebra to a  larger class of functions. This condition can also be written as \mbox{$\J = -[D,\T]$}, \mbox{$\J = i[D,\T]$} or \mbox{$\J = -i[D,\T]$} (with for the last two cases the condition that \mbox{$iD [D,\T]$} is a self-adjoint operator). As we have already said, the addition of a minus sign corresponds to a change of time orientation, which is similar to a change of sign in the Krein inner product. The addition of a $i$ factor corresponds to a change in the Hermicity conditions of the Clifford representation, so is similar to a switch from a signature \mbox{$(-,+,+,+,\dots)$} to a signature \mbox{$(+,-,-,-,\dots)$}.
\item To the four elements $\A_0$, $\H_0$, $D$ and $\T$, a fifth one is missing if we want to create a spectral triple with a notion of causality. The additional element should be a cone defined among the Hermitian elements of $\A^M$ and representing a set of causal, time or temporal functions which should be sufficient to determine the causality. We can expect that such cone would have a definition similar to that in the Section \ref{NCcausal}, but with a link with the temporal element $\T$ in order to guarantee that the induced causal order corresponds to a Lorentzian manifold.
\item One could add to the definition of temporal Lorentzian spectral triples a real condition and an even condition, with the introduction of a $\setZ_2$-grading $\gamma$ and an antilinear isometry $J$ respecting suitable commutative conditions with the other elements. Conditions can also be added concerning the smoothness of the Dirac operator $D$, which we have not included here. Moreover, further conditions must certainly be added in order to guarantee a unique correspondence between commutative temporal Lorentzian spectral triples and (equivalence classes of) complete globally hyperbolic Lorentzian manifolds.
\item We use here the multiplier algebra $\A^M_0 = M(\A_0)$ as unitization of $\A_0$, since its definition can easily be transposed to non-unital noncommutative algebras. However, this maximal unitization is very large, and if one would like to construct such unitization for particular cases, it could be  best to consider another preferred unitization $\tilde\A_0 \subset \A_0^M$. An example of such preferred unitization can be found in \cite{Gayral} for the Moyal plane.
\item In the commutative case, smooth causal functions with polynomial growth are included in the algebra $\A^M$, but a.e.~differentiable ones -- as the distance function $d_p$ -- are not, at least those which do not respect the boundedness condition of \mbox{$[D,f]$}. In order to include those functions, we must consider the closure algebra
$$\overline{\A^M} = \bigcup_{n\in\setZ} \overline{\A^M_n}$$
with respect to each norm $\norm{\,\cdot\,}_n$.
\item The partial rule product
$$a\in\A^M_m, \ b\in\A^M_n\ \implies \ ab\in\A^M_{m+n}\ \text{ and }\ ba\in\A^M_{m+n}$$
is still valid even in the noncommutative case, since we have that $\prt{1+ \T^2}^{\frac{n}{2}}$ commutes with all elements in $\A^M$ for each $n\in\setZ$. Indeed, $\A^M$ is generated by multiple applications of $\prt{1+ \T^2}^{\frac{1}{2}}$ or $\prt{1+ \T^2}^{\frac{-1}{2}}$ on $\A^M_0$, and if \mbox{$a\in\A^M_0$}, we have already the condition \mbox{$[\prt{1+ \T^2}^{-\frac{1}{2}},a] = 0$}, which implies \mbox{$[\prt{1+ \T^2}^{\frac{1}{2}},a] = 0$} by multiplication on both sides by $\prt{1+ \T^2}^{\frac{1}{2}}$.
\item The indexed families of partial algebras and Hilbert spaces $\set{\A^M_n}_{n\in\setZ}$, $\set{\H_n}_{n\in\setZ}$ can be extended to continuous scales $\set{\A^M_\alpha}_{\alpha\in\setR}$, $\set{\H_\alpha}_{\alpha\in\setR}$ by using interpolation theory with the operator \mbox{$\prt{1+ \T^2}^{\frac{\alpha}{2}}$}.
\item Among the family of partial algebras $\A^M_n$, only $\A^M_0$ is a pre-Banach algebra. Indeed, for $a\in\A^M_m$ and $b\in\A^M_n$, \mbox{$ab\in\A^M_{m+n}$}  respects the relation \mbox{$\norm{ab}_{m+n} \leq \norm{a}_{m} \norm{b}_{n}$}. In the same way, the $C^*$-algebra condition looks like  \mbox{$\norm{a^*a}_{2n} = \norm{a}^2_{n}$}.
\item The condition that \mbox{$[D,\T]$} commutes with all elements in $\A^M$ ensures that the involution in $\A^M$ corresponds to the Krein-adjoint operation in the Krein space representation.
\item The norm \mbox{$\norm{\,\cdot\,}_n = \norm{(1+T^2)^{\frac n2}\ \cdot\,}_0$} for $ \A^M_n$ is still the operator norm. Indeed, if we consider $a\in\A^M_m$ as an operator \mbox{$a : \H_n \rightarrow \H_{n+m}$} for some $n\in\setZ$, then:
$$\norm{a}_{\text{op}} = \!\!\!\!\sup_{\phi\in\H_n,\;\phi\neq0}\!\!\!\! \frac{(a\,\phi , a\,\phi )_{n+m}}{ (\phi , \phi)_n} =  \!\!\!\!\sup_{\phi\in\H_n,\;\phi\neq0}\!\!\!\! \frac{(a\,\phi\ ,\  \prt{1+ \T^2}\!^{n+m} a\,\phi )_0}{ (\phi\ ,\  \prt{1+ \T^2}^{n} \phi)_0}$$
$$ = \!\!\!\!\sup_{\phi\in\H_n,\;\phi\neq0}\!\!\!\! \frac{\prt{\prt{1+ \T^2}\!\!^{\frac m2}a\,\prt{1+ \T^2}\!\!^{\frac n2}\phi\ ,\  \prt{1+ \T^2}\!\!^{\frac m2} a\,\prt{1+ \T^2}\!\!^{\frac n2}\phi }_0}{ \prt{\prt{1+ \T^2}^{\frac n2} \phi \ ,\  \prt{1+ \T^2}^{\frac n2} \phi}_0}$$
$$ = \!\!\!\!\sup_{\phi\in\H_0,\;\phi\neq0}\!\!\!\! \frac{\prt{\prt{1+ \T^2}\!\!^{\frac m2}a\,\phi\ ,\  \prt{1+ \T^2}\!\!^{\frac m2} a\,\phi }_0}{ \prt{\phi \ ,\   \phi}_0}$$
$$ = \norm{\prt{1+ \T^2}\!\!^{\frac m2}a}_0 = \norm{a}_m$$
where we use the facts that $\prt{1+ \T^2}\!^{\frac 12}$ and $a$ commute with each other and that $\prt{1+ \T^2}\!^{\frac 12}$ is self-adjoint. The result is clearly independent of the chosen $\H_n$.
\item The condition that \mbox{$[D,\prt{1+\T^2}^{\frac 12}]$} commutes with all elements in $\A^M$ is equivalent to the fact that, for all $a\in\A^M$, \mbox{$[D,a]$} commutes with $\prt{1+\T^2}^{\frac 12}$. Indeed, by using the fact that $\prt{1+\T^2}^{\frac 12}$ commutes with $a$:
\begin{eqnarray*}
&& [D,\prt{1+\T^2}^{\frac 12}] \;a - a \;[D,\prt{1+\T^2}^{\frac 12}]\\
&=& D\prt{1+\T^2}^{\frac 12} a - \prt{1+\T^2}^{\frac 12} D a\\
&&\qquad\qquad\qquad\qquad - a D \prt{1+\T^2}^{\frac 12} + a \prt{1+\T^2}^{\frac 12}D\\
&=& Da \prt{1+\T^2}^{\frac 12}  - \prt{1+\T^2}^{\frac 12} D a\\
&&\qquad\qquad\qquad\qquad - a D \prt{1+\T^2}^{\frac 12} + \prt{1+\T^2}^{\frac 12} aD\\
&=& [D,a] \prt{1+\T^2}^{\frac 12} - \prt{1+\T^2}^{\frac 12} [D,a].
\end{eqnarray*}
Then, by a reasoning similar to above, the operator norm of $[D,a]$ is independent of the chosen $\H_n$.\\
\end{itemize}

We have presented here a way to define noncommutative spectral triples corresponding to Lorentzian geometry and containing a time element. The axioms proposed form a working basis, since several other conditions must be added in order to obtain some even, reality or smoothness conditions. Those temporal Lorentzian spectral triples could be used as a support to define causal cones describing the causal order relation.\\

Temporal spectral triples contain four elements $\A_0$, $\H_0$, $D$ and $\T$, which can be interpreted in the following way:
\begin{itemize}
\item $\A_0$ is the algebra which represents the possibly noncommutative space.
\item $\H_0$ is just a support space on which the three other elements act, and in fact is independent of the signature of the space.
\item $D$ and $\T$ are two unbounded operators closely related to each other, mainly by the condition that $D[D,\T]$ is self-adjoint. $[D,\T]$ generates an "algebraic Wick rotation" on the support space $\H_0$ in order to create a Lorentzian signature for the metric which is represented by $D$.
\item Moreover, the self-adjoint operator $\T$ representing the time can be used to generate larger spaces $\A^M$ and $\H$ containing causal elements.
\end{itemize}

The next step should be the construction of particular examples of noncommutative temporal Lorentzian spectral triples, but this work is not simple since the construction of spectral triples with noncommutative non-unital algebras is far from trivial. In particular, the construction of a non-unital algebra $\A_0$ such that $[D,a]$ is bounded for every $a\in\A_0$ as well as for its unitization is a technical difficulty to take into consideration. Existing technics to create noncommutative non-unital spectral triples are Moyal deformations \cite{Gayral} and isospectral deformations \cite{Suij}.


\chapter*{Conclusion\markboth{Conclusion}{Conclusion}}
\addcontentsline{toc}{chapter}{Conclusion}

At the end of this walk in the noncommutative world, and especially in its Lorentzian aspects, it is a good time to think about the diverse challenges that this generalization is facing. We do not really want to talk here about a number of {\it perspectives}, since the establishment of a hyperbolic version of Connes' noncommutative geometry is at a such preliminary stage that any progression on this subject is an interesting perspective. Every newly open door leads to the emergence of dozens of new problems or questions. Our contributions have brought us two new elements -- a global path independent formulation of the Lorentzian distance and a set of axioms defining temporal Lorentzian spectral triples -- which are the first steps to possible further developments, and these two elements have their own remaining questions.\\

A formulation of the Lorentzian distance (\ref{maineqdist}) in term of the Dirac operator is still missing, and this is a need in order to generalize such distance to noncommutative spectral triples. One problem of course is the Lorentzian character of the condition \mbox{$\text{\rm ess } \sup g( \nabla f, \nabla f ) \leq -1$}. The musical isomorphism between $\nabla f$ and $df$ cannot be used to transpose the condition to the differential operators since this isomorphism is dependent of the signature of the metric, and the information relative to causality is lost in the process. An idea for this purpose could be the construction of an operatorial formulation not based on the usual operator norm on the Hilbert space $\H_0 = L^2(\M,S)$ but on a non necessarily positive definite inner product derived from a fundamental symmetry $\J$ (and possibly from a time element $\T$). Another problem is the fact that the functions respecting such condition cannot be considered as bounded multiplicative operators on $\H_0$, but the definition of temporal Lorentzian spectral triples brings us a new way to define a similar representation. The last question is the choice of the set of functions. We have seen that the cone of causal functions fits our needs, but in such a case some axioms defining the causal cone must be added to the definition of temporal Lorentzian spectral triples. Another possibility is to create a set of functions respecting the condition of absolute continuity, probably by using a Sobolev norm instead of the usual one, but in this case an additional condition on the constant orientation of the gradient is needed. Such condition could be related to the time element $\T$ which corresponds to a smooth timelike vector field.\\

The construction of temporal Lorentzian spectral triples provides also a lot of questions. This construction is actually a first attempt to combine the causal approach from V.~Moretti, F.~Besnard and N.~Franco to the Krein space approach from A.~Strohmaier and M.~Paschke. The purpose is to construct at the end a Lorentzian distance on spectral triples, with the introduction of a notion of causality in noncommutative geometry. The condition of existence of temporal Lorentzian spectral triple is a first concern. We know that commutative temporal Lorentzian spectral triples exist.  We just need a globally hyperbolic spacetime $\M$ with a spin structure and a metric splitting \mbox{$g = -d\T^2 + g_\T$}, possibly obtained after a conformal transformation on the metric, such that the manifold is complete under the Riemannian metric \mbox{$g^r = d\T^2 + g_\T$}. $D$ is the Dirac operator which is acting on $\H_0 = L^2(\M,S)$,  and the algebra \mbox{$\A_0\subset C(\M)^\infty$} is chosen such that \mbox{$[D,a]$} is bounded for all $a\in\A_0$ (so with functions of bounded derivatives). The existence of noncommutatives temporal Lorentzian spectral triples is less trivial, so the next challenge is the construction of particular noncommutative cases. Noncommutative Moyal planes \cite{Gayral} should provided a good basis for such construction. Of course the axioms of temporal Lorentzian spectral triples are subject to further development. In particular, the question of the smoothness of the Dirac operator (of more precisely of its elliptic modification $\Delta_\J$) should be taken into account, in order to have a similar condition to the admissible fundamental symmetries proposed in \cite{Stro}. Finally, if a Lorentzian distance can be defined on such spectral triples, the question of the reconstruction of Lorentzian manifolds must be taken into consideration.\\

If one day Lorentzian spectral triples and Lorentzian distance in noncommutative geometry can be set, then the number of challenges will just be increased. The introduction of discrete Lorentzian spaces will be possible, and the coupling between general relativity and the standard model will be conceivable. A great question will be the establishment of a Lorentzian spectral action with an effective formulation, in order to deal with real physical problems.


\backmatter


\cleardoublepage
\addcontentsline{toc}{chapter}{Index}
\printindex


\ifodd\thepage
\newpage \hbox{} \vspace*{\fill} \thispagestyle{empty} 
\newpage \hbox{} \vspace*{\fill} \thispagestyle{empty} 
\else 
\newpage \hbox{} \vspace*{\fill} \thispagestyle{empty} 
\newpage \hbox{} \vspace*{\fill} \thispagestyle{empty} 
\newpage \hbox{} \vspace*{\fill} \thispagestyle{empty} 
\fi

\end{document}